\renewcommand{\phi}{\varphi}
\renewcommand{\ker}{\Ker}
\newcommand{\mc}[1]{\mathcal{#1}}
\newcommand{\mf}[1]{\mathfrak{#1}}
\newcommand{\mb}[1]{\mathbb{#1}}
\newcommand{\id}{\mathbbm{1}}
\newcommand{\tint}{{\textstyle\int}}
\DeclareMathOperator{\Mat}{Mat}
\DeclareMathOperator{\Hom}{Hom}
\DeclareMathOperator{\End}{End}
\DeclareMathOperator{\tr}{tr}
\DeclareMathOperator{\res}{Res}
\DeclareMathOperator{\ad}{ad}
\DeclareMathOperator{\im}{Im}
\DeclareMathOperator{\Ker}{Ker}
\DeclareMathOperator{\Span}{Span}
\DeclareMathOperator{\rk}{rk}
\DeclareMathOperator{\Res}{Res}
\DeclareMathOperator{\Cond}{\Gamma}
\theoremstyle{plain}
\newtheorem{theorem}{Theorem}[section]
\newtheorem{lemma}[theorem]{Lemma}
\newtheorem{corollary}[theorem]{Corollary}
\theoremstyle{definition}
\newtheorem{definition}[theorem]{Definition}
\newtheorem{example}[theorem]{Example}
\theoremstyle{remark}
\newtheorem{remark}[theorem]{Remark}
\numberwithin{equation}{section}
\definecolor{light}{gray}{.9}
\begin{document}

\title[\texorpdfstring{$\mc W$}{W}-algebras and integrable systems for classical Lie algebras]{Classical affine \texorpdfstring{$\mc W$}{W}-algebras
and the associated integrable Hamiltonian hierarchies
for classical Lie algebras}

\author{Alberto De Sole}
\address{Dipartimento di Matematica, Sapienza Universit\`a di Roma,
P.le Aldo Moro 2, 00185 Rome, Italy}
\email{desole@mat.uniroma1.it}
\urladdr{www1.mat.uniroma1.it/\$$\sim$\$desole}

\author{Victor G. Kac}
\address{Dept of Mathematics, MIT,
77 Massachusetts Avenue, Cambridge, MA 02139, USA}
\email{kac@math.mit.edu}

\author{Daniele Valeri}
\address{Yau Mathematical Sciences Center, Tsinghua University, 100084 Beijing, China}
\email{daniele@math.tsinghua.edu.cn}



\begin{abstract}
We prove that any classical affine $W$-algebra $\mc W(\mf g,f)$,
where $\mf g$ is a classical Lie algebra and $f$ is an arbitrary 
nilpotent element of $\mf g$, 
carries an integrable Hamiltonian hierarchy of Lax type equations.
This is based on the theories of generalized Adler type operators
and of generalized quasideterminants, which we develop in the paper.
Moreover, we show that under certain conditions,
the product of two generalized Adler type operators
is a Lax type operator.
We use this fact to construct a large number of integrable
Hamiltonian systems,
recovering, as a special case,
all KdV type hierarchies constructed by Drinfeld and Sokolov.
\end{abstract}

\keywords{
Classical affine $\mc W$-algebra,
integrable Hamiltonian hierarchy,
Lax equation,
generalized Adler type pseudodifferential operator,
generalized quasideterminant.
}

\maketitle

\tableofcontents

\section{Introduction}\label{sec:0}

In our paper \cite{DSKVnew} we proposed a new method of constructing integrable
(bi)Hamiltonian hierarchies of PDE's of Lax type.
It combined two well-known approaches.
The first one is the Gelfand-Dickey fractional powers of pseudodifferential
operators technique, based on the Lax pair method \cite{GD76,Dic03}.
The second one is the classical Hamiltonian reduction technique,
combined with the Zakharov-Shabat dressing method,
as developed by Drinfeld and Sokolov \cite{DS85}.

The central notion of the paper \cite{DSKVnew}
is that of a matrix pseudodifferential operator of Adler type
over a Poisson vertex algebra (PVA) $\mc V$,
introduced in \cite{AGD}.
It was derived there starting from Adler's formula \cite{Adl79} 
for the second Poisson structure for the $M$-th KdV hierarchy.

One of the important properties of an Adler type operator $L(\partial)$
over a PVA $\mc V$ is that it provides a hierarchy of compatible equations
of Lax type in terms of its fractional powers:
\begin{equation}\label{eq:intro1}
\frac{dL(\partial)}{dt_{n,k}}
=
[L(\partial)^{\frac nk}_+,L(\partial)]
\,\,,\,\,\,\,
n,k\in\mb Z_{\geq1}
\,,
\end{equation}
where the subscript $+$ stands, as usual, for the differential part
of the pseudodifferential operator $L(\partial)^{\frac nk}$.
Furthermore, $L(\partial)$ provides an infinite set of conserved densities
$h_{n,k}$ for the hierarchy \eqref{eq:intro1}, defined by
\begin{equation}\label{eq:intro2}
h_{n,k}
=
-\frac{k}{n}\res_{\partial}\tr L(\partial)^{\frac nk}
\,,\,\,
n,k\in\mb Z_{\geq1}
\,,
\end{equation}
so that \eqref{eq:intro1} is a hierarchy of Hamiltonian equations
with the corresponding Hamiltonian functionals $\tint h_{n,k}$
in involution.
Moreover, if $L(\partial)+\epsilon\id$ is of Adler type for every $\epsilon\in\mb F$
(the base field),
then these conserved densities satisfy the generalized 
Lenard-Magri scheme \cite{Mag78}
and \eqref{eq:intro1} is a bi-Hamiltonian hierarchy.
See \cite{AGD} and \cite{DSKVnew} for details.

The ancestors of all Adler type operators constructed in \cite{DSKVnew}
are given by the following family of first order $N\times N$ matrix
differential operators:
\begin{equation}\label{eq:intro3}
A_S(\partial)
=
\id_N\partial+\sum_{i,j=1}^N e_{ji}E_{ij} +S
\,\in\Mat_{N\times N}\mc V[\partial]
\,,
\end{equation}
where $\mc V$ is the algebra of differential polynomials in the generators 
$e_{ij}$, $i,j=1,\dots,N$, and $S\in\Mat_{N\times N}\mb F$.
It is easy to see that the operator \eqref{eq:intro3}
is of Adler type over the affine PVA $\mc V(\mf{gl}_N)=S(\mb F[\partial]\mf{gl}_N)$
with the $\lambda$-bracket
\begin{equation}\label{eq:intro4}
\{a_\lambda b\}
=
[a,b]+\tr(ab)\lambda+\tr(S[a,b])
\,,\,\,
a,b\in\mf{gl}_N
\,.
\end{equation}

The key observation in \cite{DSKVnew}
is that any generalized quasideterminant of an Adler type operator over a PVA $\mc V$
is again of Adler type.
In particular, the ``ancestors'' $A_S(\partial)$ produce a large number of ``descendent''
Adler type operators by taking generalized quasideterminants 
(for the theory of quasideterminants see \cite{GGRW05},
and for the definition of generalized quasideterminant see \cite{DSKVnew}).

Recall that to a reductive Lie algebra $\mf g$
and its nilpotent element $f$,
one associates a PVA $\mc W(\mf g,f)$,
which is a subquotient of the affine PVA $\mc V(\mf g)$
(see e.g. Section \ref{sec:2} of the present paper).
The key observation of our paper \cite{DSKV16b}
is that for any nilpotent element $f$ of $\mf{gl}_N$
a certain generalized quasideterminant of the differential operator $A_S(\partial)$
produces a pseudodifferential operator $L(\partial)$
whose coefficients are elements of $\mc W(\mf{gl}_N,f)$.
Since $L(\partial)$ is an operator of Adler type,
we thus obtain an integrable hierarchy of (bi)Hamiltonian
Lax type equations \eqref{eq:intro1} over the PVA $\mc W(\mf{gl}_N,f)$,
with the infinitely many conserved densities \eqref{eq:intro2}.

This gave, for $\mf g=\mf{gl}_N$, 
an affirmative answer to the longstanding problem
whether any $\mc W$-algebra $\mc W(\mf g,f)$
carries an integrable Hamiltonian hierarchy.

In the present paper we solve the same problem for all classical Lie algebras $\mf g$
and all their nilpotent elements $f$.
Drinfeld and Sokolov solved this problem in their seminal paper \cite{DS85}
for an arbitrary reductive $\mf g$ and its principal nilpotent element $f$
and, extending a series of previous papers 
\cite{BdGHM93,dGHM92,DF95,FHM93,FGMS95,FGMS96} etc.,
we solved this problem in \cite{DSKV13}
for arbitrary reductive $\mf g$ and its nilpotent element $f$ of ``semisimple type''.

The method used in the present paper is a development of our previous papers
\cite{DSKVnew} and \cite{DSKV16b}.
First, we construct a generalization of the ``ancestor'' operator $A_S(\partial)$
as follows.
Given a reductive Lie algebra $\mf g$ and its faithful representation $\varphi$
in a finite-dimensional vector space $V$,
we chose a basis $\{u_i\}_{i\in I}$ of $\mf g$
and let $U^i=\varphi(u^i)$,
where $\{u^i\}_{i\in I}$ is the dual basis of $\mf g$
with respect to the trace form of $V$.
We then define the generalized ``ancestor'' operator by
\begin{equation}\label{eq:intro5}
A_{S,V}(\partial)
=
\partial\id_V+\sum_{i\in I} u_iU^i +S
\,\in\mc V(\mf g)[\partial]\otimes\End V
\,,
\end{equation}
where $S=\varphi(s)$, $s\in\mf g$.
Our first main result is Theorem \ref{thm:main}
which states that a certain generalized quasideterminant 
of the matrix $A_{S,V}(\partial)$
produces a pseudodifferential operator $L(\partial)$
with coefficients in $\mc W(\mf g,f)$.
Also, from formula \eqref{eq:Lw} in this theorem,
one can read off the generators of the differential algebra $\mc W(\mf g,f)$.

In order to apply the ideas of \cite{DSKVnew}
to arbitrary $\mc W$-algebras we need to introduce the notion
of a generalized Adler type operator for an arbitrary pair $(\mf g,V)$.
It seems, however, 
that this is possible only for classical reductive Lie algebras $\mf g$
and their standard representations $V$,
i.e. for the linear Lie algebras $\mf g=\mf{gl}_N,\mf{sl}_N,\mf{so}_N,\mf{sp}_N$.

We define a \emph{generalized Adler type} operator
as an element $L(\partial)\in\mc V((\partial^{-1}))\otimes\End V$,
where $\mc V$ is a PVA and $V$ is an $N$-dimensional vector space,
satisfying the following identity for some constants $\alpha,\beta,\gamma\in\mb F$:
\begin{equation}\label{eq:intro6}
\begin{array}{l}
\displaystyle{
\vphantom{\Big(}
\{L(z)_\lambda L(w)\}
=
\alpha
(\id\otimes L(w+\lambda+\partial))(z-w-\lambda-\partial)^{-1}
(L^*(\lambda-z)\otimes\id)\Omega
} \\
\displaystyle{
\vphantom{\Big(}
-\alpha
\Omega\,\big(L(z)\otimes(z-w-\lambda-\partial)^{-1}L(w)\big)
} \\
\displaystyle{
\vphantom{\Big(}
-\beta
(\id\otimes L(w+\lambda+\partial))
\Omega^\dagger(z+w+\partial)^{-1}(L(z)\otimes\id)
} \\
\displaystyle{
\vphantom{\Big(}
+\beta
(L^*(\lambda-z)\otimes\id)\Omega^\dagger(z+w+\partial)^{-1}(\id\otimes L(w))
} \\
\displaystyle{
\vphantom{\Big(}
+\gamma\big(\id\otimes \big(L(w+\lambda+\partial)-L(w)\big)\big)
(\lambda+\partial)^{-1}
\big(\big(L^*(\lambda-z)-L(z)\big)\otimes\id\big)
\,.}
\end{array}
\end{equation}
Here $\Omega=\sum_{i,j=1}^NE_{ij}\otimes E_{ji}$,
$\Omega^\dagger=\sum_{i,j=1}^NE_{ij}^\dagger\otimes E_{ji}$,
and we assume, if $\beta\neq0$,
that $V$ carries a non-degenerate symmetric or skewsymmetric bilinear form,
and denote by $A^\dagger$ the adjoint of $A\in\End V$ with respect to this form.
Also, $*:\,\mc V((\partial^{-1}))\to \mc V((\partial^{-1}))$ denotes the formal adjoint of
a scalar pseudodifferential operator,
and it is extended to $\mc V((\partial^{-1}))\otimes\End V$ by acting only on the first factor.
The examples when $A_{S,V}(\partial)$ is a generalized Adler type operator
that we know of correspond to
$(\mf g,V)=\mf{gl}_N,\mf{sl}_N,\mf{so}_N,\mf{sp}_N$,
for the values of $\alpha,\beta,\gamma$ given in Table \eqref{table}.
In the case of $\mf{gl}_N$ we recover the notion of an Adler type operator
studied in \cite{AGD,DSKVnew}.

In the present paper we develop a theory of generalized Adler type operators
and their applications to the theory of integrable Hamiltonian systems
along the lines described above.
First, we prove Theorem \ref{thm:main2}
which states, in particular, that the generalized quasideterminant
of an operator of generalized Adler type with parameters $\alpha,\beta,\gamma$
is again of generalized Adler type with the same parameters.
Second, we prove Theorem \ref{thm:hn},
which states that if $L(\partial)$ is an operator of generalized Adler type,
then the $h_{n,k}\in\mc V$ defined by \eqref{eq:intro2}
are densities of Hamiltonian functionals in involution,
defining a compatible hierarchy of Lax type Hamiltonian equations
\begin{equation}\label{eq:intro7}
\frac{dL(\partial)}{dt_{n,k}}
=
\big[\alpha L(\partial)^{\frac nk}_+
-\beta \big(L(\partial)^{\frac nk}\big)^{\star\dagger}_+
,L(\partial)\big]
\,.
\end{equation}
We also prove Theorem \ref{thm:main-bi}
which states that, if $L(\partial)+\epsilon\id$ is of generalized Adler type
for every constant $\epsilon\in\mb F$,
then the densities $h_{n,k}$ satisfy the generalized Lenard-Magri scheme,
and \eqref{eq:intro7} is a bi-Hamiltonian hierarchy.

In the present paper
we discover some new ways of constructing 
integrable Hamiltonian hierarchies using
generalized Adler type operators.
First, in Section \ref{sec:5.4}
we classify all scalar constant coefficients pseudodifferential operators
of generalized Adler type.
But, what is most remarkable,
it turns out that, under some conditions,
products of generalized Adler type operators
produce compatible hierarchies of Lax type Hamiltonian equations,
similar to \eqref{eq:intro7}, see Theorem \ref{thm:hn-product}.

In Section \ref{sec:8}
we list the resulting integrable hierarchies of Hamiltonian equations
associated to all $\mc W$-algebras for classical Lie algebras
and their pairwise tensor products.
In particular, we recover all Drinfeld-Sokolov integrable KdV type hierarchies
that they attach to a classical affine Lie algebra
(including the twisted ones)
and a node on its Dynkin diagram \cite{DS85}.

In Section \ref{sec:9} we describe a number of explicit examples
of Lax operators.
First, we compute them in the case of a principal nilpotent
of all classical Lie algebras recovering thereby the operators $P$, $Q$ and $R$
of Drinfeld and Sokolov \cite{DS85}.
Second, we compute the Lax operators in the case of a minimal nilpotent
of all classical Lie algebras,
recovering thereby our results from \cite{DSKV16b}
in the case of $\mf{gl}_N$.
The next most interesting case is the distinguished nilpotent element
in $\mf{so}_{4n}$ corresponding to the partition $(2n+1,2n-1)$.
The Lax operator in this case is given by formula \eqref{L_distinguished}.

Finally, in Section \ref{sec:10}
we write down explicitly in many cases
the first non-trivial equations of the constructed integrable hierarchies.
First, we consider all cases with one unknown function.
All possibilities for the Lax operator are the Lax operator $L(\mf g)$
for $\mf{g}=\mf{sp}_2$ or $\mf g=\mf{so_3}$
multiplied by $1$ or by $\partial^{\pm1}$.
All these Lax operators produce the KdV equation,
with the following two exceptions:
$L=L(\mf{sp}_2)\partial$, which produces the Sawada-Kotera equation,
and $L=L(\mf{so}_3)$, which produces the Kaup-Kupershmidt equation.
In conclusion of the section we treat the case of $\mf g=\mf{sl}_N$ and $\mf{sp}_N$
and the minimal nilpotent $f$.
For $\mf g=\mf{sl}_N$ we recover the equations obtained in \cite{DSKV14a,DSKV15-cor},
which, after Dirac reduction, produce the $N$-component Yajima-Oikawa equation,
of which $N=3$ corresponds to the classical Y-O equation discovered in \cite{YO76}.
For $\mf g=\mf{sp}_N$ there are three choices for the Lax operator:
$L(\mf{sp}_N,f_{\text{min}})$,
$L(\mf{sp}_N,f_{\text{min}})\partial$ 
and $L(\mf{sp}_N,f_{\text{min}})\partial^{-1}$.
The corresponding first non-trivial equation for the first Lax operator
was studied in \cite{DSKV14a},
while for the last two Lax operators we find, after Dirac reduction, 
some apparently new integrable system in $N-1$ unknown functions.
In the case $N=4$ these equations are:
$$
\frac{du}{dt}
=
\frac34(-v_1v_2^{\prime\prime}+v_2v_1^{\prime\prime})
\,\,,\,\,\,\,
\frac{d}{dt}
\Big(\begin{array}{l} v_1 \\ v_2 \end{array}\Big)
=
\Big(\begin{array}{l} 
v_1^{\prime\prime\prime}-uv_1^\prime \\ 
v_2^{\prime\prime\prime}-uv_2^\prime 
\end{array}\Big)
$$
for $L=L(\mf{sp}_4,f_{\text{min}})\partial$, and
$$
\frac{du}{dt}
=
u^{\prime\prime\prime}-6uu^\prime
+\frac34(-v_1v_2^{\prime\prime}+v_2v_1^{\prime\prime})
\,\,,\,\,\,\,
\frac{d}{dt}
\Big(\begin{array}{l} v_1 \\ v_2 \end{array}\Big)
=
\Big(\begin{array}{l} 
v_1^{\prime\prime\prime}-3(uv_1)^\prime \\ 
v_2^{\prime\prime\prime}-3(uv_2)^\prime 
\end{array}\Big)
$$
for $L=L(\mf{sp}_4,f_{\text{min}})\partial^{-1}$.

The quantum finite analogue of an operator of Adler type
is an operator of Yangian type, introduced in \cite{DSKV17}.
The defining identity for such operators is the same as the identity
defining the Yangian of $\mf{gl}_N$, \cite{Mol07}.
Such operators were used in \cite{DSKV17}
to describe the quantum finite $W$-algebras associated to $\mf{gl}_N$,
in a way similar to the description of the classical affine $W$-algebras 
for $\mf{gl}_N$ using Adler type operators.
In our subsequent paper \cite{DSKV18}
we shall use the twisted Yangian identity, similar to the generalized Adler identity,
and related to the theory of twisted Yangians \cite{Mol07},
to describe the quantum finite $W$-algebras associated to all classical Lie algebras.

Throughout the paper the base field $\mb F$ is a a field of characteristic $0$.

\subsubsection*{Acknowledgments} 

We would like to thank L\'aszl\'o Feh\'er for posing the problem of constructing
the Lax operator for the classical affine $\mc W$-algebra $\mc W(\mf{so}_{4N},f)$,
where $f$ is a nilpotent element associated to the partition $(2n+1,2n-1)$.
The first author would like to acknowledge
the hospitality of MIT, where he was hosted during the spring semester of 2017.
The second author would like to acknowledge
the hospitality of the University of Rome La Sapienza
during his visit in Rome in January 2017 and 2018.
The third author is grateful to the University of Rome La Sapienza
for its hospitality during his several visits in 2016, 2017 and 2018.
All three authors are grateful to SISSA, where this work has started,
and IHES, where the paper has been completed,
for their kind hospitality during the summers of 2016 and 2017 respectively.
The first author is supported by National FIRB grant RBFR12RA9W,
National PRIN grant 2015ZWST2C,
and University grant C26A158K8A,
the second author is supported by an NSF grant
and the Italian grant C26V1735TJ,
and the third author is supported by an NSFC 
``Research Fund for International Young Scientists'' grant
and Tsinghua University startup grant.

\section{Preliminaries on pseudodifferential operators and Poisson vertex algebras}\label{sec:1}

Let $\mc V$ be a differential algebra, i.e. a unital commutative associative algebra
with a derivation $\partial$.
As usual, we denote by $\tint:\,\mc V\to\mc V/\partial\mc V$ the canonical quotient map
of vector spaces.

We denote by $\mc V((\partial^{-1}))$ the algebra of scalar pseudodifferential operators
with coefficients in $\mc V$.
Given $a(\partial)=\sum_{n=-\infty}^Na_n\partial^n$, $a_n\in\mc V$,
we denote 
by $a^*(\partial)=\sum_n(-\partial)^n\circ a_n\in\mc V((\partial^{-1}))$ its 
formal adjoint,
by $a(\partial)_+=\sum_{n=0}^Na_n\partial^n\in\mc V[\partial]$ 
its differential part,
by $a(\partial)_-=\sum_{n=-\infty}^{-1}a_n\partial^n\in\mc V[[\partial^{-1}]]$
its singular part,
and by $a(z)=\sum_na_nz^{n}\in\mc V((z^{-1}))$ its symbol.

The following notation will be used throughout the paper:
given $a(\partial)\in\mc V((\partial^{-1}))$ as above
and $b,c\in \mc V$, we let:
\begin{equation}\label{eq:notation}
a(z+x)\big(\big|_{x=\partial}b)c
=\sum_{n=-\infty}^Na_n((z+\partial)^nb)c\,\in\mc V
\,,
\end{equation}
where in the RHS we expand, for negative $n$, in the domain of large $z$.
For example, with this notation, we have
$a^*(z)=\big(\big|_{x=\partial}a(-z-x)\big)$.
Furthermore, for $a(z)\in\mc V((z^{-1}))$, we call the coefficient of $z^{-1}$ its residue, 
and we denote it by $\Res_za(z)$.

Let $M$ be a unital associative algebra.
By an $M$-valued pseudodifferential operator over $\mc V$
we mean an element $A(\partial)\in\mc V((\partial^{-1}))\otimes M$.
We shall omit the tensor product sign for such operators:
for $a(\partial)\in\mc V((\partial^{-1}))$ and $A\in M$, we let
$a(\partial)A$ be the corresponding monomial in $\mc V((\partial^{-1}))\otimes M$.
The symbol $A(z)$ of an $M$-valued pseudodifferential operator over $\mc V$
is defined as above,
and its formal adjoint is defined by taking formal adjoint of the first factor:
if $A(\partial)=a(\partial) A$, then $A^*(\partial)=a^*(\partial)A$.
Its symbol, with the notation \eqref{eq:notation}, is
\begin{equation}\label{20170406:eq1}
A^*(z)=\big(\big|_{x=\partial}A(-z-x)\big)
\,.
\end{equation}
\begin{lemma}\label{20170404:lem1}
Given $A(\partial),B(\partial)\in\mc V((\partial^{-1}))\otimes M$,
we have:
\begin{enumerate}[(a)]
\item
$(AB)(z)=A(z+\partial)B(z)$ (in the RHS $\partial$ is applied to the coefficients of $B(z)$);
\item
$(AB)^*(z)=\big(\big|_{x=\partial}A^*(z)\big)B^*(z+x)$.
\end{enumerate}
\end{lemma}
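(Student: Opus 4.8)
The plan is to derive both identities from the generalized Leibniz rule for scalar pseudodifferential operators. Recall that for $a\in\mc V$ and $p\in\mb Z$ one has, in $\mc V((\partial^{-1}))$,
$$\partial^p\circ a=\sum_{k\in\mb Z_{\geq0}}\binom pk(\partial^k a)\,\partial^{p-k}\,,$$
a finite sum when $p\geq0$ and, when $p<0$, the expansion of $\partial^p\circ a$ in $\mc V[[\partial^{-1}]]\partial^p$; this matches the ``large $z$'' convention of \eqref{eq:notation}, since there $(z+\partial)^p$ is expanded as $\sum_{k\geq0}\binom pk z^{p-k}\partial^k$. Since products, symbols and formal adjoints of pseudodifferential operators are computed coefficientwise, it is enough to prove each identity on monomials and then extend by (coefficientwise) bilinearity; and since $M$ commutes with $\partial$ and with $\mc V$, a monomial may be taken of the form $a\partial^p X$ with $a\in\mc V$, $p\in\mb Z$, $X\in M$.

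First I would prove (a) in the scalar case $P(\partial)=a\partial^p$, $Q(\partial)=b\partial^q$. By the Leibniz rule $P(\partial)Q(\partial)=\sum_{k\geq0}\binom pk a(\partial^k b)\partial^{p+q-k}$, so $(PQ)(z)=\sum_{k\geq0}\binom pk a(\partial^k b)z^{p+q-k}$. On the other hand $P(z+\partial)Q(z)=a(z+\partial)^p(bz^q)=\sum_{k\geq0}\binom pk a\,z^{p-k}(\partial^k b)z^q$ (using $\partial z=0$), which is the same expression. By bilinearity $(PQ)(z)=P(z+\partial)Q(z)$ for all scalar $P,Q$, and then, writing general $M$-valued operators as $a\partial^pX$ and using $(a\partial^pX)(b\partial^qY)=\big((a\partial^p)(b\partial^q)\big)\otimes(XY)$, one gets (a) in general.

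For (b) I would again reduce to $A(\partial)=a\partial^pX$, $B(\partial)=b\partial^qY$, so that $A(\partial)B(\partial)=\big((a\partial^p)(b\partial^q)\big)\otimes(XY)$. Since the formal adjoint of an $M$-valued operator is obtained by taking the formal adjoint of the scalar factor only, and since $*$ is an anti-automorphism of $\mc V((\partial^{-1}))$, one has $(AB)^*=\big((b\partial^q)^*(a\partial^p)^*\big)\otimes(XY)$. Applying (a) to the scalar operators $(b\partial^q)^*$ and $(a\partial^p)^*$ gives $\big((b\partial^q)^*(a\partial^p)^*\big)(z)=(b\partial^q)^*(z+\partial)\,(a\partial^p)^*(z)$, and, since scalars commute, this equals $\sum_n\big((z+\partial)^n(a\partial^p)^*(z)\big)c_n$, where $(b\partial^q)^*(z)=\sum_n c_n z^n$; unpacking \eqref{eq:notation}, that is exactly $\big(\big|_{x=\partial}(a\partial^p)^*(z)\big)(b\partial^q)^*(z+x)$. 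Tensoring with $XY$, and noting that in the right-hand side of (b) the factor $A^*=(a\partial^p)^*\otimes X$ contributes $X$ on the left and $B^*=(b\partial^q)^*\otimes Y$ contributes $Y$ on the right, we obtain $(AB)^*(z)=\big(\big|_{x=\partial}A^*(z)\big)B^*(z+x)$; bilinearity finishes the proof.

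I expect the only delicate point to be bookkeeping rather than substance: unlike the scalar case, the formal adjoint of an $M$-valued operator does \emph{not} reverse the order of the $M$-factors (it touches only the $\mc V((\partial^{-1}))$ component), so one must check that the ordering implicit in the right-hand side of (b) produces $XY$, in agreement with $(AB)^*$; and one must keep the expansion domains (``large $z$'' versus ``large $\partial$'') consistent when manipulating negative powers. With these conventions fixed, everything reduces to the elementary binomial computation above.
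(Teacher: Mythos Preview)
Your proof is correct. For part (a) you spell out explicitly the monomial computation that the paper summarizes as ``follows from the definition of the product of pseudodifferential operators''; this is the same argument.

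For part (b) you take a genuinely different route from the paper. The paper applies the symbol formula for the adjoint, $C^*(z)=\big(\big|_{x=\partial}C(-z-x)\big)$, directly to $C=AB$, and then uses part (a) inside to expand $(AB)(-z-x)=A(-z-x+\partial)B(-z-x)$; a relabeling of the placeholder variables $x_1,x_2$ then yields the claim in one line, without ever splitting off the $M$-factor. Your argument instead reduces to monomials $A=a\partial^pX$, $B=b\partial^qY$, uses that $*$ is an anti-automorphism on the scalar part (so $(AB)^*=\big((b\partial^q)^*(a\partial^p)^*\big)\otimes XY$), applies (a) to the scalar product, and then uses commutativity of $\mc V$ to reorder into the form prescribed by the notation \eqref{eq:notation}. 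Your careful check that the $M$-ordering on the right-hand side of (b) is also $XY$ is exactly the point that makes this work. The paper's approach is shorter and stays at the level of $M$-valued operators throughout; yours is more hands-on and makes the bookkeeping of the $M$-factors and of the ``$x=\partial$ acts on what'' convention fully explicit, which has pedagogical value.
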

\begin{proof}
Part (a) follows from the definition of the product of pseudodifferential operators. 
For part (b), we have
$$
\begin{array}{l}
\displaystyle{
\vphantom{\Big(}
(AB)^*(z)
=
\big(\big|_{x=\partial} (AB)(-z-x)\big)
=
\big(\big|_{x=\partial} A(-z-x+\partial)B(-z-x)\big)
} \\
\displaystyle{
\vphantom{\Big(}
=
\big(\big|_{x_1=\partial} A(-z-x_1)\big)\big(\big|_{x_2=\partial}B(-z-x_1-x_2)\big)
=
\big(\big|_{x_1=\partial} A^*(z)\big)B^*(z+x_1)
\,.}
\end{array}
$$
\end{proof}

A Laurent series involving negative powers of $z\pm\partial$ or $z\pm\lambda$ is always considered 
to be expanded using geometric series expansion in the domain of large $z$,
and similarly for $w$.
On the other hand, for a series involving negative powers of $z\pm w$ we shall use the notation $\iota_z$
or $\iota_w$ to denote geometric series expansion in the domain of large $z$
or of large $w$ respectively.
For example, 
$\iota_z(z-w)^{-1}=\sum_{n\in\mb Z_+}z^{-n-1}w^n$.
For $a(z)\in\mc V((z^{-1}))$ as above, we have
\begin{equation}\label{eq:positive}
\Res_z a(z)\iota_z(z-w)^{-1}
=
a(w)_+
\,\,,\quad
\Res_z a(z)\iota_w(z-w)^{-1}
=
-a(w)_-
\,.
\end{equation}
\begin{lemma}\label{20170411:lem1}
Let $\dagger:\,M\to M$, $A\mapsto A^\dagger$ 
be an anti-involution of the associative algebra $M$, i.e. $(AB)^\dagger=B^\dagger A^\dagger$.
Then, for $A(\partial),B(\partial)\in\mc V((\partial^{-1}))\otimes M$,
we have
$(A^*(\partial))^\dagger(B^*(\partial))^\dagger=((BA)^*(\partial))^\dagger$.
\end{lemma}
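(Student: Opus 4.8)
The plan is to reduce to monomials and then simply track how the two tensor factors behave under $*$ and $\dagger$; the content of the lemma is precisely that, although the formal adjoint $*$ alone is not an anti-homomorphism of $\mc V((\partial^{-1}))\otimes M$ (it reverses the scalar factor but leaves the $M$-factor in place), its composition with the anti-involution $\dagger$ of $M$ is an anti-homomorphism.

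First I would invoke bilinearity of all the operations involved (the two products, $*$, and $\dagger$) to reduce to the case $A(\partial)=a(\partial)A_0$ and $B(\partial)=b(\partial)B_0$ with $a(\partial),b(\partial)\in\mc V((\partial^{-1}))$ and $A_0,B_0\in M$. The structural fact to use repeatedly is that the $M$-factor commutes with every scalar pseudodifferential operator inside $\mc V((\partial^{-1}))\otimes M$, so that, for instance, $B(\partial)A(\partial)=\big(b(\partial)a(\partial)\big)B_0A_0$ and, since $A^*(\partial)=a^*(\partial)A_0$ while $\dagger$ acts only on the $M$-factor, $(A^*(\partial))^\dagger(B^*(\partial))^\dagger=\big(a^*(\partial)b^*(\partial)\big)A_0^\dagger B_0^\dagger$.

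Then I would compute the right-hand side on the same monomials: using the anti-homomorphism property $(pq)^*=q^*p^*$ of the formal adjoint on $\mc V((\partial^{-1}))$ one gets $(BA)^*(\partial)=\big(a^*(\partial)b^*(\partial)\big)B_0A_0$, and applying $\dagger$ together with the anti-involution identity $(B_0A_0)^\dagger=A_0^\dagger B_0^\dagger$ yields $\big((BA)^*(\partial)\big)^\dagger=\big(a^*(\partial)b^*(\partial)\big)A_0^\dagger B_0^\dagger$, which coincides with the expression obtained for the left-hand side. Summing over monomials gives the claim.

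The only place requiring any care — and the nearest thing to an obstacle — is the bookkeeping of the two independent order-reversals: the one produced by $*$ on the scalar part and the one produced by $\dagger$ on the $M$-part must be seen to assemble into the single reversal present in $\big((BA)^*(\partial)\big)^\dagger$. Alternatively one could argue at the level of symbols by applying $\dagger$ to the identity in Lemma~\ref{20170404:lem1}(b) and using \eqref{20170406:eq1}, but the monomial computation above is the most transparent route and makes the role of the anti-involution hypothesis on $\dagger$ completely explicit.
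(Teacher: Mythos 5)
Your proposal is correct and follows essentially the same route as the paper: the paper's proof is the one-line observation that $*$ is an anti-involution of $\mc V((\partial^{-1}))$ and $\dagger$ is an anti-involution of $M$, from which the claim follows. Your monomial computation simply spells out that observation explicitly (correctly noting that $*$ reverses only the scalar factor while $\dagger$ reverses only the $M$-factor), so there is nothing to add.
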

\begin{proof}
Since $*$ is an anti-involution of $\mc V((\partial^{-1}))$
and $\dagger$ is an anti-involution of $M$, the claim follows.
\end{proof}
Note that in the present paper $M$ will be usually $\End V$,
where $V$ is a vector space.
In this case $A^*(\partial)$ is NOT the formal adjoint
of the matrix pseudodifferential operator $A(\partial)$,
which is, in fact, $A^*(\partial)^\dagger$.

Recall from \cite{BDSK09} that a $\lambda$-\emph{bracket} on the differential algebra $\mc V$ 
is a bilinear (over $\mb F$) map $\{\cdot\,_\lambda\,\cdot\}:\,\mc V\times\mc V\to\mc V[\lambda]$, 
satisfying the following
axioms ($a,b,c\in\mc V$):
\begin{enumerate}[(i)]
\item
sesquilinearity:
$\{\partial a_\lambda b\}=-\lambda\{a_\lambda b\}$,
$\{a_\lambda\partial b\}=(\lambda+\partial)\{a_\lambda b\}$;
\item
Leibniz rules (see notation \eqref{eq:notation}):
$$
\{a_\lambda bc\}=\{a_\lambda b\}c+\{a_\lambda c\}b
\,\,,\,\,\,\,
\{ab_\lambda c\}=\{a_{\lambda+x} c\} \big(\big|_{x=\partial}b\big)
+\{b_{\lambda+x} c\} \big(\big|_{x=\partial}a\big)
\,.
$$
\end{enumerate}
A \emph{Poisson vertex algebra} (PVA) $\lambda$-bracket on $\mc V$ 
satisfies the following additional axioms ($a,b,c\in\mc V$)
\begin{enumerate}[(i)]
\setcounter{enumi}{2}
\item
skewsymmetry:
$\{b_\lambda a\}=-\big(\big|_{x=\partial}\{a_{-\lambda-x} b\}\big)$;
\item
Jacobi identity:
$\{a_\lambda \{b_\mu c\}\}-\{b_\mu\{a_\lambda c\}\}
=\{\{a_\lambda b\}_{\lambda+\mu}c\}$.
\end{enumerate}
Recall that, if $\mc V$ is a Poisson vertex algebra,
then $\mc V/\partial\mc V$ carries a well defined Lie algebra structure given by
$\{\tint f,\tint g\}=\tint\{f_\lambda g\}|_{\lambda=0}$,
and we have a representation of the Lie algebra $\mc V/\partial\mc V$ on $\mc V$
given by $\{\tint f,g\}=\{f_\lambda g\}|_{\lambda=0}$.
A \emph{Hamiltonian equation} on $\mc V$ associated to a \emph{Hamiltonian functional} 
$\tint h\in\mc V/\partial\mc V$ is the evolution equation 
\begin{equation}\label{ham-eq}
\frac{du}{dt}=\{\tint h,u\}\,\,, \,\,\,\, u\in\mc V\,.
\end{equation}
An \emph{integral of motion} for the Hamiltonian equation \eqref{ham-eq}
is a local functional $\tint f\in\mc V/\partial\mc V$ such that $\{\tint h,\tint f\}=0$,
and two integrals of motion $\tint f,\tint g$ are \emph{in involution} if $\{\tint f,\tint g\}=0$.

Let $\mc V$ be a unital differential algebra with a $\lambda$-bracket $\{\cdot\,_\lambda\,\cdot\}$
and let $M$ be a unital associative algebra.
Given the $M$-valued pseudodifferential operators over $\mc V$
$A(\partial),B(\partial)\in\mc V((\partial^{-1}))\otimes M$,
we define the $\lambda$-bracket of their symbols 
$\{A(z)_\lambda B(w)\}$ as the element of $\mc V((z^{-1},w^{-1}))\otimes M\otimes M$
obtained by taking the $\lambda$-bracket of the first factors.
In other words, if $A(\partial)=a(\partial)A$ and $B(\partial)=b(\partial)B$,
with $a(\partial),b(\partial)\in\mc V((\partial^{-1}))$ and $A,B\in M$,
we have
\begin{equation}\label{20170404:eq2}
\{A(z)_\lambda B(w)\}=\{a(z)_\lambda b(w)\}\,A\otimes B
\,.
\end{equation}
(As usual, we omit the tensor product sign after the first factor.)
In the sequel we shall use the following properties of such $\lambda$-brackets
(which, in matrix element form, appeared in \cite[Eq.(2.12)--(2.15)]{DSKVnew}):
\begin{lemma}\label{20170404:lem2}
Let $A(\partial),B(\partial),C(\partial),A_\ell(\partial),B_\ell(\partial)\in\mc V((\partial^{-1}))\otimes M$,
$\ell=1,\dots,s$.
\begin{enumerate}[(a)]
\item
We have
$$
\begin{array}{l}
\displaystyle{
\vphantom{\Big(}
\{A(z)_\lambda (BC)(w)\}
} \\
\displaystyle{
\vphantom{\Big(}
=
\{A(z)_\lambda B(w+x)\}\big(1\otimes\big|_{x=\partial}C(w)\big)
+\big(1\otimes B(w+\lambda+\partial)\big)\{A(z)_\lambda C(w)\}
\,.}
\end{array}
$$
\item
We have
$$
\begin{array}{l}
\displaystyle{
\vphantom{\Big(}
\{(AB)(z)_\lambda C(w)\}
} \\
\displaystyle{
\vphantom{\Big(}
=
\{A(z+x)_{\lambda+x}C(w)\}\big(\big|_{x=\partial}B(z)\otimes1\big)
+\big(\big|_{y=\partial}A^*(\lambda-z)\otimes1\big)\{B(z)_{\lambda+y}C(w)\}
\,.}
\end{array}
$$
\item
We have
$$
\begin{array}{l}
\displaystyle{
\vphantom{\Big(}
\{A(z)_\lambda (B_1\dots B_s)(w)\}
} \\
\displaystyle{
\vphantom{\Big(}
=
\sum_{\ell=1}^{s}
\big(1\otimes (B_1\!\dots\! B_{\ell-1})(w\!+\!\lambda\!+\!\partial)\big)
\{A(z)_\lambda B_\ell(w\!+\!x)\}\big(1\otimes\big|_{x=\partial}(B_{\ell+1}\!\dots\! B_s)(w)\big)
\,.}
\end{array}
$$
\item
We have
$$
\begin{array}{l}
\displaystyle{
\vphantom{\Big(}
\{(A_1\dots A_s)(z)_\lambda B(w)\}
} \\
\displaystyle{
\vphantom{\Big(}
=\!
\sum_{\ell=1}^{s}
\!
\big(\big|_{y=\partial}\!(A_1\!\dots\! A_{\ell\!-\!1})^*(\lambda\!-\!z)\!\otimes\!1\big)
\{A_\ell(z\!+\!x)_{\lambda\!+\!x\!+\!y}B(w)\}
\big(\big|_{x=\partial}\!(A_{\ell\!+\!1}\!\dots\! A_s)(z)\!\otimes\!1\big)
\,.}
\end{array}
$$
\item
For every $n\in\mb Z_{\geq1}$, we have
$$
\begin{array}{l}
\displaystyle{
\vphantom{\Big(}
\{A(z)_\lambda B^n(w)\}
} \\
\displaystyle{
\vphantom{\Big(}
=
\sum_{\ell=0}^{n-1}
\big(1\otimes B^{n-\ell-1}(w+\lambda+\partial)\big)
\{A(z)_\lambda B(w+x)\}\big(1\otimes\big|_{x=\partial}B^\ell(w)\big)
\,.}
\end{array}
$$
\item
For every $n\in\mb Z_{\geq1}$, we have
$$
\begin{array}{l}
\displaystyle{
\vphantom{\Big(}
\{A^n(z)_\lambda B(w)\}
} \\
\displaystyle{
\vphantom{\Big(}
=
\sum_{\ell=0}^{n-1}
\big(\big|_{y=\partial}(A^\ell)^*(\lambda-z)\otimes1\big)
\{A(z+x)_{\lambda+x+y}B(w)\}
\big(\big|_{x=\partial}A^{n-1-\ell}(z)\otimes1\big)
\,.}
\end{array}
$$
\item
If $B(\partial)$ is invertible in $\mc V((\partial^{-1}))\otimes M$, then
$$
\begin{array}{l}
\displaystyle{
\vphantom{\Big(}
\{A(z)_\lambda B^{-1}(w)\}
} \\
\displaystyle{
\vphantom{\Big(}
=-\big(1\otimes B^{-1}(w+\lambda+\partial)\big)
\{A(z)_{\lambda}B(w+x)\}\big(1\otimes \big|_{x=\partial}B^{-1}(w)\big)
\,.}
\end{array}
$$
\item
If $A(\partial)$ is invertible in $\mc V((\partial^{-1}))\otimes M$, then
$$
\begin{array}{l}
\displaystyle{
\vphantom{\Big(}
\{A^{-1}(z)_\lambda B(w)\}
} \\
\displaystyle{
\vphantom{\Big(}
=
-\big(\big|_{y=\partial}(A^{-1})^*(\lambda-z)\otimes1\big)
\{A(z+x)_{\lambda+x+y}B(w)\}\big(\big|_{x=\partial}A^{-1}(z)\otimes1\big)
\,.}
\end{array}
$$
\item
If $B(\partial)$ is invertible in $\mc V((\partial^{-1}))\otimes M$ and $n\in\mb Z_{\leq-1}$, then
$$
\begin{array}{l}
\displaystyle{
\vphantom{\Big(}
\{A(z)_\lambda B^n(w)\}
} \\
\displaystyle{
\vphantom{\Big(}
=-\sum_{\ell=n}^{-1}
\big(1\otimes B^{n-1-\ell}(w+\lambda+\partial)\big)
\{A(z)_{\lambda}B(w+x)\}\big(1\otimes \big|_{x=\partial}B^\ell(w)\big)
\,.}
\end{array}
$$
\item
If $A(\partial)$ is invertible in $\mc V((\partial^{-1}))\otimes M$ and $n\in\mb Z_{\leq-1}$, then
$$
\begin{array}{l}
\displaystyle{
\vphantom{\Big(}
\{A^n(z)_\lambda B(w)\}
} \\
\displaystyle{
\vphantom{\Big(}
=
-\sum_{\ell=n}^{-1}
\big(\big|_{y=\partial}(A^\ell)^*(\lambda-z)\otimes1\big)
\{A(z+x)_{\lambda+x+y}B(w)\}
\big(\big|_{x=\partial}A^{n-1-\ell}(z)\otimes1\big)
\,.}
\end{array}
$$
\end{enumerate}
\end{lemma}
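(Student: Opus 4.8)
The plan is to reduce everything to the two Leibniz rules and the sesquilinearity of the $\lambda$-bracket, combined with the product-of-symbols identity of Lemma~\ref{20170404:lem1}(a). First I would observe that, by bilinearity of the $\lambda$-bracket, it is enough to prove (a) and (b) for pure tensors $A(\partial)=a(\partial)P$, $B(\partial)=b(\partial)Q$, $C(\partial)=c(\partial)R$ with $a,b,c\in\mc V((\partial^{-1}))$ and $P,Q,R\in M$; since $\mc V((\partial^{-1}))\otimes M$ is the tensor product of associative algebras, in this case both sides of (a) (resp.\ of (b)) split as a scalar identity in $\mc V((z^{-1},w^{-1}))[\lambda]$ tensored with a fixed ordered product of the $M$-factors, and one is left to check the scalar case $M=\mb F$ --- in which $A^*(\lambda-z)$ is just the scalar formal adjoint $a^*(\lambda-z)$ and the legs $1\otimes(-)$, $(-)\otimes1$ disappear. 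The scalar versions of (a) and (b) are the identities recorded, in matrix-element form, in \cite{DSKVnew}; parts (c)--(j) will then be formal consequences of (a) and (b), requiring no further reduction.

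For (a), I would write $(BC)(w)=B(w+\partial)C(w)=\sum_m B_m\big((w+\partial)^m C(w)\big)$ using Lemma~\ref{20170404:lem1}(a) and apply the first Leibniz rule, which makes $\{A(z)_\lambda\,\cdot\,\}$ act as a derivation on each product $B_m\cdot\big((w+\partial)^m C(w)\big)$. The terms in which the bracket hits $B_m$ recombine, via $B(w+x)=\sum_m B_m(w+x)^m$ and the notation \eqref{eq:notation}, into $\{A(z)_\lambda B(w+x)\}\big(\big|_{x=\partial}C(w)\big)$; in the terms in which it hits $(w+\partial)^m C(w)$ one commutes $B_m$ to the left and uses sesquilinearity in the second argument to convert the $\partial$'s coming from $(w+\partial)^m$ into $(w+\lambda+\partial)^m$ acting on $\{A(z)_\lambda C(w)\}$, yielding $B(w+\lambda+\partial)\{A(z)_\lambda C(w)\}$. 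Part (b) is treated dually: write $(AB)(z)=A(z+\partial)B(z)=\sum_m A_m\big((z+\partial)^m B(z)\big)$, apply the second Leibniz rule $\{ab_\lambda c\}=\{a_{\lambda+x}c\}\big(\big|_{x=\partial}b\big)+\{b_{\lambda+x}c\}\big(\big|_{x=\partial}a\big)$ to each term and collect. The pieces where the bracket hits $B(z)$ give $\{A(z+x)_{\lambda+x}C(w)\}\big(\big|_{x=\partial}B(z)\big)$ --- the shift from $(z+\partial)^m$ and the shift supplied by the Leibniz rule merge because $z$ commutes with $\partial$ --- while the pieces where it hits the $A_m$'s give $\big(\big|_{y=\partial}A^*(\lambda-z)\big)\{B(z)_{\lambda+y}C(w)\}$, the formal adjoint appearing precisely because each $A_m$ must be moved from the left to the right of the surrounding operator $(z+\partial)^m$, after which sesquilinearity in the first argument turns $(z+\partial)^m$ into the symbol of $a^*$ at $\lambda-z$.

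The remaining identities I would bootstrap from (a) and (b). Part (c) follows by induction on $s$ from $B_1\cdots B_s=B_1\cdot(B_2\cdots B_s)$ and (a), part (d) from $A_1\cdots A_s=(A_1\cdots A_{s-1})\cdot A_s$ and (b), and (e), (f) are the specializations to equal factors. For (g) one applies (a) to $B(\partial)B^{-1}(\partial)=\id$ --- whose symbol is the constant operator $\id$, hence has vanishing $\lambda$-bracket with $A(z)$ --- and solves the resulting linear relation for $\{A(z)_\lambda B^{-1}(w)\}$ by left multiplication by $1\otimes B^{-1}(w+\lambda+\partial)$; here the point to check is that $\big(1\otimes B^{-1}(w+\lambda+\partial)\big)\big(1\otimes B(w+\lambda+\partial)\big)=1$, which holds because in this expression $\partial$ acts on everything to its right, so this is genuine composition of the pseudodifferential operators in the variable $w+\lambda+\partial$, where $B^{-1}B=\id$. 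Dually (h) comes from applying (b) to $A(\partial)A^{-1}(\partial)=\id$ and left multiplication by $\big|_{y=\partial}(A^{-1})^*(\lambda-z)\otimes1$. Finally (i) and (j) for $n\le-1$ follow by downward induction on $-n$ with base cases (g) and (h), using $B^n=B^{-1}B^{n+1}$ together with (a) (resp.\ $A^n=A^{n+1}A^{-1}$ together with (b)) and the already-proved (e), (f). I expect the only real obstacle to be the careful bookkeeping in (b) (and hence (d), (f), (h), (j)) where the formal adjoint emerges from shuffling coefficients past differential operators, together with keeping the various shift conventions ($w+x$ with $x=\partial$ acting on a single factor versus $\lambda+\partial$ globally) straight; conceptually there is nothing deep.
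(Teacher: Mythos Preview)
Your proposal is correct and follows essentially the same route as the paper: (a) and (b) are the Leibniz rules written in the symbol notation, (c)--(d) follow by induction, (e)--(f) are specializations, (g)--(h) come from applying (a)--(b) to $BB^{-1}=\id$ and $AA^{-1}=\id$, and (i)--(j) are obtained by combining the positive-power formulas with (g)--(h). Your write-up supplies considerably more detail on the coefficient bookkeeping in (a) and (b) than the paper's one-line proof, but the underlying argument is the same.
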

\begin{proof}
Formulas (a) and (b) are just the Leibniz rules (ii) above 
in the notation \eqref{20170404:eq2}.
Formulas (c) and (d) follow from (a) and (b) respectively,
by induction and Lemma \ref{20170404:lem1}.
Formulas (e) and (f) are a special case of (c) and (d).
Formulas (g) and (h) follow from (a) and (b) respectively,
using the identity $AA^{-1}=A^{-1}A=1$.
Finally, formulas (i) and (j) are obtained by (e)-(g) and (f)-(h) respectively.
\end{proof}

\section{Classical affine \texorpdfstring{$\mc W$}{W}-algebras}\label{sec:2}

\subsection{Construction of the classical affine $\mc W$-algebras}\label{sec:def}

We review here the construction of the classical affine $\mc W$-algebra
following \cite{DSKV13}.
Let $\mf g$ be a reductive Lie algebra with a non-degenerate symmetric 
invariant bilinear form $(\cdot\,|\,\cdot)$,
and let $\{f,2x,e\}\subset\mf g$ be an $\mf{sl}_2$-triple in $\mf g$.
We have the corresponding $\ad x$-eigenspace decomposition
\begin{equation}\label{eq:grading}
\mf g=\bigoplus_{k\in\frac{1}{2}\mb Z}\mf g_{k}
\,\,\text{ where }\,\,
\mf g_k=\big\{a\in\mf g\,\big|\,[x,a]=ka\big\}
\,,
\end{equation}
so that $f\in\mf g_{-1}$, $x\in\mf g_{0}$ and $e\in\mf g_{1}$.
We let $d$ be the \emph{depth} of the grading, i.e. the maximal eigenvalue of $\ad x$.
For a subspace $\mf a\subset\mf g$ we denote by $\mc V(\mf a)$
the algebra of differential polynomials over $\mf a$,
i.e. $\mc V(\mf a)=S(\mb F[\partial]\mf a)$.

Consider the pencil of affine Poisson vertex algebras $\mc V_\epsilon(\mf g,s)$,
where $\epsilon\in\mb F$ and $s\in\mf g_d$, defined as follows.
The underlying differential algebra is the algebra $\mc V(\mf g)$ of differential polynomials over $\mf g$,
and the PVA $\lambda$-bracket is given by 
\begin{equation}\label{lambda}
\{a_\lambda b\}_\epsilon=[a,b]+(a| b)\lambda+\epsilon(s|[a,b])
\quad \text{ for }\quad a,b\in\mf g\,,
\end{equation}
and extended to $\mc V(\mf g)$ by the sesquilinearity axioms and the Leibniz rules.

The $\mb F[\partial]$-submodule
$\mb F[\partial]\mf g_{\geq\frac12}\subset\mc V(\mf g)$ 
is a Lie conformal subalgebra of $\mc V_\epsilon(\mf g,s)$
with the $\lambda$-bracket $\{a_\lambda b\}_\epsilon=[a,b]$, $a,b\in\mf g_{\geq\frac12}$
(it is independent of $\epsilon$).
Consider the differential subalgebra
$\mc V(\mf g_{\leq\frac12})$ of $\mc V(\mf g)$,
and denote by $\rho:\,\mc V(\mf g)\twoheadrightarrow\mc V(\mf g_{\leq\frac12})$,
the differential algebra homomorphism defined on generators by
\begin{equation}\label{rho}
\rho(a)=\pi_{\leq\frac12}(a)+(f| a),
\qquad a\in\mf g\,,
\end{equation}
where $\pi_{\leq\frac12}:\,\mf g\to\mf g_{\leq\frac12}$ denotes 
the projection with kernel $\mf g_{\geq1}$.
We have a representation of the Lie conformal algebra $\mb F[\partial]\mf g_{\geq\frac12}$ 
on the differential subalgebra $\mc V(\mf g_{\leq\frac12})\subset\mc V(\mf g)$,
defined by
$$
a_\lambda(g)=\rho\{a_\lambda g\}_\epsilon
\quad\text{ for }\quad a\in\mf g_{\geq\frac12}\,,\,\,g\in\mc V(\mf g_{\leq\frac12})
$$
(note that the RHS is independent of $\epsilon$ since, by assumption, $s\in\mf g_d$).

The \emph{classical} $\mc W$-\emph{algebra} $\mc W_\epsilon(\mf g,f,s)$ is, by definition,
the differential algebra
\begin{equation}\label{20120511:eq2}
\mc W=\mc W(\mf g,f)
=\big\{w\in\mc V(\mf g_{\leq\frac12})\,\big|\,\rho\{a_\lambda w\}_\epsilon=0\,
\text{ for all }a\in\mf g_{\geq\frac12}\}\,,
\end{equation}
endowed with the following pencil of PVA $\lambda$-brackets
\cite[Lemma 3.2]{DSKV13}
\begin{equation}\label{20120511:eq3}
\{v_\lambda w\}^{\mc W}_{\epsilon}=\rho\{v_\lambda w\}_\epsilon,
\qquad v,w\in\mc W\,.
\end{equation}
With a slight abuse of notation,
we shall denote by $\mc W(\mf g,f)$ also the $\mc W$-algebra $\mc W_\epsilon(\mf g,f,s)$
for $\epsilon=0$ (or, equivalently, $s=0$).
\subsection{Structure Theorem for classical affine $\mc W$-algebras}
\label{sec:3.2}

Fix a subspace $U\subset\mf g$ complementary to $[f,\mf g]$,
which is compatible with the grading \eqref{eq:grading}.
For example, we could take $U=\mf g^e$, as we did in \cite{DSKV13} and \cite{DSKV16a},
or a different, more convenient, choice for $U$ as we did for $\mf g=\mf{gl}_N$ in \cite{DSKV16b}.
Since $\ad f:\,\mf g_{j}\to\mf g_{j-1}$ is surjective for $j\leq\frac12$, 
we have $\mf g_{\leq-\frac12}\subset[f,\mf g]$.
In particular, we have the direct sum decomposition
\begin{equation}\label{eq:U}
\mf g_{\geq-\frac12}=[f,\mf g_{\geq\frac12}]\oplus U\,.
\end{equation}
Note that, by the non-degeneracy of $(\cdot\,|\,\cdot)$, the orthocomplement to $[f,\mf g]$
is $\mf g^f$, the centralizer of $f$ in $\mf g$.
Hence, the direct sum decomposition dual to \eqref{eq:U} is
\begin{equation}\label{eq:Uperp}
\mf g_{\leq\frac12}=U^\perp\oplus\mf g^f\,.
\end{equation}
As a consequence of \eqref{eq:Uperp}
we have the decomposition in a direct sum of subspaces
\begin{equation}\label{eq:decomp}
\mc V(\mf g_{\leq\frac12})=\mc V(\mf g^f)\oplus\langle U^\perp\rangle\,,
\end{equation}
where $\langle U^\perp\rangle$
is the differential algebra ideal of $\mc V(\mf g_{\leq\frac12})$ generated by $U^\perp$.
Let $\pi_{\mf g^f}:\,\mc V(\mf g_{\leq\frac12})\twoheadrightarrow\mc V(\mf g^f)$
be the canonical quotient map, with kernel $\langle U^\perp\rangle$.

\begin{theorem}[{\cite[Cor.4.1]{DSKV16a}, \cite[Rem.3.4]{DSKV16b}}]
\label{thm:structure-W}
The map $\pi_{\mf g^f}$ restricts to a differential algebra isomorphism
$$
\pi:=\pi_{\mf g^f}|_{\mc W}:\,\mc W\,\stackrel{\sim}{\longrightarrow}\,\mc V(\mf g^f)
\,,
$$
hence we have the inverse differential algebra isomorphism
$$
w:\,\mc V(\mf g^f)\,\stackrel{\sim}{\longrightarrow}\,\mc W
\,,
$$
which associates to every element $q\in\mf g^f$ the (unique) element $w(q)\in\mc W$
of the form $w(q)=q+r$, with $r\in\langle U^\perp\rangle$.
\end{theorem}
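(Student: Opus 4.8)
The plan is to establish the two claims in turn: first, that $\pi_{\mf g^f}$ restricted to $\mc W$ is injective with image $\mc V(\mf g^f)$; second, that the inverse map $w$ has the stated normal form $w(q)=q+r$ with $r\in\langle U^\perp\rangle$. I would set up the proof around the grading \eqref{eq:grading}, which induces a grading (by conformal weight) on all the relevant differential algebras, with $a\in\mf g_k$ having weight $1-k$ (so that $\partial$ raises weight by $1$). The key structural input is the decomposition \eqref{eq:decomp}, $\mc V(\mf g_{\leq\frac12})=\mc V(\mf g^f)\oplus\langle U^\perp\rangle$, together with the fact that $\ad f:\mf g_j\to\mf g_{j-1}$ is bijective for $j\leq\frac12$ on the complement of its kernel.

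First I would prove surjectivity of $\pi$ by constructing, for each $q\in\mf g^f$, an element $w(q)\in\mc W$ of the form $q+(\text{terms in }\langle U^\perp\rangle)$. The natural approach is an inductive (triangular) argument with respect to the conformal weight filtration: one builds $w(q)$ as $q+r_1+r_2+\cdots$ where $r_j$ has weight strictly greater than that of $q$, choosing at each stage $r_j\in\langle U^\perp\rangle$ to cancel the obstruction $\rho\{a_\lambda(q+r_1+\cdots+r_{j-1})\}_\epsilon$ for $a\in\mf g_{\geq\frac12}$. The point making this work is that the leading term of $\rho\{a_\lambda w\}_\epsilon$ for $a\in\mf g_{\geq\frac12}$ is governed by $[a,\cdot]$ composed with the projection $\pi_{\leq\frac12}$, and the surjectivity of $\ad f$ (equivalently \eqref{eq:U}) lets us solve the resulting linear equation modulo higher weight. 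Since the grading is bounded and the homogeneous components are finite dimensional, the induction terminates and produces a well-defined $w(q)\in\mc V(\mf g_{\leq\frac12})$; by construction $\rho\{a_\lambda w(q)\}_\epsilon=0$ for all $a\in\mf g_{\geq\frac12}$, so $w(q)\in\mc W$, and $\pi_{\mf g^f}(w(q))=q$. Here I would lean on the cited earlier works \cite{DSKV16a, DSKV16b} for the detailed bookkeeping, since this is essentially the existing structure theorem; the main task is to verify that the chosen $U$ (compatible with the grading, complementary to $[f,\mf g]$) does not affect the argument.

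Next, injectivity of $\pi$: suppose $w\in\mc W$ with $\pi_{\mf g^f}(w)=0$, i.e.\ $w\in\langle U^\perp\rangle$. I would again argue by the weight filtration that $w=0$: if $w\neq0$, look at its lowest-weight nonzero homogeneous component, and apply a suitable $a\in\mf g_{\geq\frac12}$ to derive $\rho\{a_\lambda w\}_\epsilon\neq0$, contradicting $w\in\mc W$. Concretely, an element of $\langle U^\perp\rangle$ of minimal weight is a linear combination of generators in $U^\perp\subset\mf g_{\leq\frac12}$; acting by the appropriate $\ad$-dual element in $\mf g_{\geq\frac12}$ and using that $(\cdot|\cdot)$ pairs $U^\perp$ nondegenerately against a complement whose image under $\ad f$ is all of $\mf g_{\leq-\frac12}$, one gets a nonzero constant term in $\rho\{a_\lambda w\}_\epsilon$ at that weight, which cannot be cancelled by higher-weight contributions. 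This forces the minimal component to vanish, and by induction $w=0$. Combined with surjectivity, $\pi$ is a bijective differential algebra homomorphism, hence a differential algebra isomorphism, and $w$ is its inverse; the normal form $w(q)=q+r$, $r\in\langle U^\perp\rangle$, and its uniqueness are then immediate from \eqref{eq:decomp}.

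The main obstacle I expect is not any single computation but making the inductive weight argument airtight for an arbitrary grading-compatible choice of $U$: one must check that at each step the linear equation for $r_j$ is genuinely solvable within $\langle U^\perp\rangle$ (not merely within $\mc V(\mf g_{\leq\frac12})$), which relies on a careful compatibility between the decompositions \eqref{eq:U}, \eqref{eq:Uperp} and the Leibniz-rule structure of the $\lambda$-bracket when $\rho$ is applied — in particular that $\rho\{a_\lambda\cdot\}_\epsilon$ respects the weight filtration and that its associated graded is controlled by $\ad f$. Since this is essentially a restatement of results already proved in \cite{DSKV16a} and \cite{DSKV16b}, the cleanest route is to reduce to those statements rather than redo the induction from scratch, pointing out only the (minor) modifications needed for the present, more general, choice of $U$.
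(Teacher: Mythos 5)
Your proposal is essentially the paper's own route: the paper does not reprove this statement but quotes it from \cite{DSKV16a} (Cor.~4.1) and \cite{DSKV16b} (Rem.~3.4), and your plan --- a triangular induction on the weight filtration exploiting \eqref{eq:U}--\eqref{eq:decomp} and the surjectivity of $\ad f$, ultimately reduced to the cited structure theorems with the observation that an arbitrary grading-compatible complement $U$ only changes the bookkeeping --- is exactly the argument carried out there. So the proposal is fine as it stands; the only caution is that your injectivity sketch ("a minimal-weight element of $\langle U^\perp\rangle$ is a linear combination of generators in $U^\perp$") is too loose, since the differential ideal contains nonlinear terms, and the cited proofs handle this with an additional filtration by polynomial degree.
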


\section{The pseudodifferential operator \texorpdfstring{$L_\epsilon(\partial)$}{L_eps(d)} for the \texorpdfstring{$\mc W$}{W}-algebra 
\texorpdfstring{$\mc W_\epsilon(\mf g,f,s)$}{W_eps(g,f,s)} associated to a \texorpdfstring{$\mf g$}{g}-module
\texorpdfstring{$V$}{V}}\label{sec:3}

Let $\varphi:\mf g\to\End V$ be a faithful representation of $\mf g$
on an $N$-dimensional vector space $V$.
Throughout the paper we shall often use the following convention:
we denote by lowercase Latin letters elements of the Lie algebra $\mf g$,
and by uppercase letters the corresponding elements of $\End V$.
For example, $F=\varphi(f)$ is a nilpotent endomorphism of $V$
and we denote by $p=(p_1\geq p_2\geq\dots\geq p_r>0)$
the corresponding partition of $N$.
Moreover, $X=\varphi(x)$ is a semisimple endomorphism of $V$ with half-integer eigenvalues.
The corresponding $X$-eigenspace decomposition of $V$ is
\begin{equation}\label{eq:grading_V}
V=\bigoplus_{k\in\frac12\mb Z}V[k]
\,,
\end{equation}
with largest eigenvalue $\frac D2$, where $D=p_1-1$.
The eigenspace associated to the largest eigenvalue
has dimension $\dim V[\frac D2]=r_1$, the multiplicity of $p_1$ in the partition $p$.
We also have the corresponding $\ad X$-eigenspace 
decomposition of $\End V$:
\begin{equation}\label{eq:grading_EndV}
\End V=\bigoplus_{k\in\frac12\mb Z}(\End V)[k]
\,,
\end{equation}
which has largest eigenvalue $D$.
\begin{lemma}\label{20170317:lem1}
\begin{enumerate}[(a)]
\item
For every $k\in\frac12\mb Z$ s.t. $- D\leq k\leq D$, we have a canonical isomorphism
$$
(\End V)[k]
\simeq
\bigoplus_{j\in\frac12\mb Z}\Hom(V[j],V[j+k])
\,,
$$
where the direct sum is over $j$ such that $-\frac D2\leq j,j+k\leq\frac D2$.
\item
In particular, we have a canonical isomorphism
$$
(\End V)[D]
\simeq
\Hom(V[-\frac D2],V[\frac D2])
\,.
$$
\item
Identifying $V[-\frac D2]$ with $V[\frac D2]$ 
via the isomorphism $F^D:\,V[\frac D2]\stackrel{\sim}{\rightarrow}V[-\frac D2]$,
we get the corresponding isomorphisms
$$
(\End V)[D]
\simeq\End(V[\frac D2])
\,\,,\,\,\,
A\mapsto A F^D\,\big|_{V[\frac D2]}
\,,
$$
and
$$
(\End V)[D]
\simeq\End(V[-\frac D2])
\,\,,\,\,\,
A\mapsto F^D A\,\big|_{V[-\frac D2]}
\,.
$$
\end{enumerate}
\end{lemma}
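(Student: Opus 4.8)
The plan is to prove each of the three statements by unwinding the definitions of the $\ad X$-grading on $\End V$ and the block structure of $\End V$ relative to the $X$-eigenspace decomposition \eqref{eq:grading_V} of $V$. First, I would observe that since $X=\varphi(x)$ is semisimple with half-integer eigenvalues, any $A\in\End V$ decomposes as $A=\sum_{j,l}A_{l,j}$, where $A_{l,j}=\pi_l A\pi_j\in\Hom(V[j],V[l])$ and $\pi_j$ is the projection onto $V[j]$. A direct computation gives $[X,A_{l,j}]=(l-j)A_{l,j}$, so $A_{l,j}\in(\End V)[l-j]$. Hence $(\End V)[k]=\bigoplus_{l-j=k}\Hom(V[j],V[l])$, and replacing $l$ by $j+k$ yields the isomorphism in part (a); the constraint $-\frac D2\le j,\,j+k\le\frac D2$ is just the requirement that both $V[j]$ and $V[j+k]$ be nonzero eigenspaces, using that the eigenvalues of $X$ range over $[-\frac D2,\frac D2]$ with $\frac D2=\frac{p_1-1}{2}$ the largest eigenvalue.

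For part (b), I would simply specialize part (a) to $k=D$: the only pair $(j,j+D)$ with both in $[-\frac D2,\frac D2]$ is $j=-\frac D2$, $j+D=\frac D2$, so the direct sum collapses to the single summand $\Hom(V[-\frac D2],V[\frac D2])$. For part (c), the key input is that $F^D:\,V[\tfrac D2]\to V[-\tfrac D2]$ is an isomorphism — this follows from $\mathfrak{sl}_2$-representation theory: $V$ decomposes into irreducible $\mathfrak{sl}_2$-submodules of dimensions $p_1\ge p_2\ge\dots$, the top eigenspace $V[\frac D2]$ is spanned by highest-weight vectors of the strings of length $p_1=D+1$, and $F^D=\varphi(f)^D$ maps each such highest-weight vector isomorphically onto the corresponding lowest-weight vector, which spans $V[-\frac D2]$. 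Once $F^D|_{V[\frac D2]}$ is known to be invertible, the two claimed isomorphisms are immediate: given $A\in(\End V)[D]\simeq\Hom(V[-\frac D2],V[\frac D2])$, the composite $AF^D|_{V[\frac D2]}\in\End(V[\frac D2])$ is bijective as a map precisely when $A$ is, and the assignment $A\mapsto AF^D|_{V[\frac D2]}$ is linear with inverse $B\mapsto B(F^D|_{V[\frac D2]})^{-1}$ (viewed inside $(\End V)[D]$ via (b)); similarly $A\mapsto F^DA|_{V[-\frac D2]}$ gives the isomorphism with $\End(V[-\frac D2])$, using that $A$ here is read as a map $V[-\frac D2]\to V[\frac D2]$.

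I expect the main obstacle to be purely bookkeeping rather than conceptual: namely, being careful about the two different ways $(\End V)[D]$ is identified with a $\Hom$-space (as $\pi_{D/2}A\pi_{-D/2}$ versus the element of $\Hom(V[-\frac D2],V[\frac D2])$ obtained by restriction), and correspondingly which side $F^D$ is composed on in the two formulas of part (c). The cleanest organization is to fix once and for all the identification of $A\in(\End V)[D]$ with its restriction $A|_{V[-D/2]}:\,V[-\tfrac D2]\to V[\tfrac D2]$ (the image automatically lies in $V[\tfrac D2]$ by the grading), after which $AF^D|_{V[D/2]}$ and $F^DA|_{V[-D/2]}$ are unambiguous compositions of three linear maps. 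No deep input beyond the invertibility of $F^D$ on the top eigenspace — which is standard $\mathfrak{sl}_2$-theory — is needed.
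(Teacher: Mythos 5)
Your proposal is correct and follows essentially the same route as the paper: part (a) is the canonical identification of $(\End V)[k]$ with the direct sum of the $\Hom(V[j],V[j+k])$ via restriction, (b) is the specialization $k=D$, and (c) is composition with the invertible map $F^D\big|_{V[\frac D2]}$. The only (harmless) differences are that you obtain surjectivity in (a) directly from the block decomposition $A=\sum_{j,l}\pi_l A\pi_j$ with $[X,\pi_l A\pi_j]=(l-j)\pi_l A\pi_j$, where the paper instead argues injectivity of the restriction map plus a dimension count, and that you spell out the standard $\mf{sl}_2$-theory reason why $F^D:\,V[\frac D2]\to V[-\frac D2]$ is an isomorphism, a fact the paper takes as given.
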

\begin{proof}
Denote by $r^k_j:\,(\End V)[k]\to\Hom(V[j],V[j+k])$ the restriction map.
Clearly, $r^k:=\oplus_j r^k_j:\,(\End V)[k]\to\bigoplus_j\Hom(V[j],V[j+k])$
is injective for every $k$,
and by dimension counting it is easy to see that it is surjective too.
This proves part (a).
Part (b) is the special case $k=D$ of (a),
and part (c) is an obvious consequence of (b).
\end{proof}
Recall that the trace form of the representation $V$ 
is, by definition,
\begin{equation}\label{20170317:eq1}
(a|b)=\tr_V(\varphi(a)\varphi(b))\,,
\qquad
a,b\in\mf g
\,,
\end{equation}
and we assume that it is non-degenerate.
Let $\{u_i\}_{i\in I}$ be a basis of $\mf g$ 
compatible with the $\ad x$-eigenspace decomposition \eqref{eq:grading},
i.e. $I=\sqcup_k I_k$ where $\{u_i\}_{i\in I_k}$ is a basis of $\mf g_k$.
We also denote $I_{\leq\frac12}=\sqcup_{k\leq\frac12}I_k$,
and similarly for $I_{\leq0}$, $I_{\geq\frac12}$, etc.
Moreover, we assume that $\{u_i\}_{i\in I}$
contains a basis $\{u_i\}_{i\in I_f}$ of $\mf g^f$.
Let $\{u^i\}_{i\in I}$ be the basis of $\mf g$ dual to $\{u_i\}_{i\in I}$ with respect 
to the form \eqref{20170317:eq1},
i.e. $(u_i|u^j)=\delta_{i,j}$.
According to our convention,
we denote by $U_i$ and $U^i$, $i\in I$, the corresponding endomorphisms of $V$.

Associated to the element $s\in\mf g_d$ we have the element $S=\varphi(s)\in(\End V)[d]$.
Let $T\in(\End V)[D]$.
If $D=d$, we can take $T=S$, but in general $d\leq D$, 
so it is not always possible to let $T$ and $S$ be the same endomorphism.
Consider the canonical decomposition
$T=I J$, 
where 
\begin{equation}\label{IJ}
J=T:V\twoheadrightarrow\im T
\,\,\text{ and }\,\,
I:\im T\hookrightarrow V
\,\,\text{ is the inclusion map}.
\end{equation}
Clearly, 
$\im T\subset V[\frac{D}{2}]$
and $\bigoplus_{k>-\frac D2} V[k]\subset\ker T$.
If $T$ is of maximal rank ($=r_1$), then both inclusions become equalities:
$\im T=V[\frac{D}{2}]$, $\ker T=\bigoplus_{k>-\frac D2} V[k]$.
We shall assume that $T$ satisfies the following condition:
\begin{equation}\label{20170317:eq2}
V=\ker T\oplus F^D(\im T)
\,.
\end{equation}
This is equivalent to say that the endomorphisms 
$T_+=T F^D\big|_{V[\frac D2]}\,\in\End(V[\frac D2])$
and/or $T_-=F^D T \big|_{V[-\frac D2]}\,\in\End(V[-\frac D2])$
(cf. Lemma \ref{20170317:lem1}(c)) are such that $\ker T_{\pm}\cap\im T_{\pm}=0$.
Of course, if $\rk(T)=r_1$, then $T_\pm$ are invertible and condition \eqref{20170317:eq2}
automatically holds.

Consider the following $\End V$-valued differential operator
with coefficients in $\mc V(\mf g)$ (depending on the parameter $\epsilon\in\mb F$):
\begin{equation}\label{eq:A}
A_\epsilon(\partial)=\partial\id_V+\sum_{i\in I}u_i U^i+\epsilon S
\,\,
\in\mc V(\mf g)[\partial]\otimes\End(V)
\,.
\end{equation}
Here and further, we drop the tensor product sign when writing an element of $\mc V\otimes\End V$.
If we apply the map $\rho$ ($=\rho\otimes 1$), defined by \eqref{rho}, to $A(\partial)$,
we get
$$
\rho(A_\epsilon(\partial))=\partial\id_V+F+\sum_{i\in I_{\leq\frac12}}u_i U^i+\epsilon S
\,\,\in\mc V(\mf g_{\leq\frac12})[\partial]\otimes\End V
\,.
$$
We shall consider its $(I,J)$-quasideterminant \cite{DSKVnew},
namely
\begin{equation}\label{eq:L}
L_\epsilon(\partial)=L_\epsilon(\mf g,f,s,V,T)(\partial)
:=
\Big(J\big(\partial\id_V+F+\sum_{i\in I_{\leq\frac12}}u_iU^i+\epsilon S\big)^{-1} I\Big)^{-1}
\,.
\end{equation}

Associated to the basis $\{u_i\}_{i\in I}$ we have the subspace
$$
U
=\Span\{u^i\mid i\in I_{f}\}
\,\subset\mf g_{\geq-\frac12}
\,,
$$
which is complementary to $[f,\mf g]$ in $\mf g$,
and to $[f,\mf g_{\geq\frac12}]$ in $\mf g_{\geq-\frac12}$,
and its orthocomplement in $\mf g_{\leq\frac12}$,
$$
U^\perp=\Span\{u_i\mid i\in I_{\leq\frac12}\setminus I_f\}\,\subset\mf g_{\leq\frac12}
\,,
$$
which is complementary to $\mf g^f$ in $\mf g_{\leq\frac12}$.
Recall that, by Theorem \ref{thm:structure-W},
we have the corresponding differential algebra isomorphism 
$w:\,\mc V(\mf g^f)\stackrel{\sim}{\rightarrow}\mc W(\mf g,f)$,
and let us denote by $w_i:=w(u_i),\,i\in I_f$,
the corresponding free generators of the the $\mc W$-algebra (as a differential algebra).
\begin{theorem}\label{thm:main}
$L_\epsilon(\partial)$ is well defined and 
\begin{equation}\label{eq:Lw}
L_\epsilon(\partial)
=
\Big(J\big(\partial\id_V+F+\sum_{i\in I_{f}}w_i U^i+\epsilon S\big)^{-1}I\Big)^{-1}
\,\in\mc W(\mf g,f)((\partial^{-1}))\otimes\End(\im T)
\,.
\end{equation}
\end{theorem}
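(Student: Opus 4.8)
The plan is to reduce the claim to two separate facts: first, that the $(I,J)$-quasideterminant in \eqref{eq:L} is well defined, i.e. the entry $J(\partial\id_V+F+\sum_{i\in I_{\leq\frac12}}u_iU^i+\epsilon S)^{-1}I$ is an invertible $\End(\im T)$-valued pseudodifferential operator; and second, that the resulting operator $L_\epsilon(\partial)$ actually has coefficients in the subalgebra $\mc W(\mf g,f)\subset\mc V(\mf g_{\leq\frac12})$, together with the explicit form \eqref{eq:Lw}. For well-definedness I would argue as in \cite{DSKVnew}: the operator $\rho(A_\epsilon(\partial))=\partial\id_V+F+\sum_{i\in I_{\leq\frac12}}u_iU^i+\epsilon S$ is a first-order differential operator whose leading term is $\partial\id_V$, hence invertible in $\mc V(\mf g_{\leq\frac12})((\partial^{-1}))\otimes\End V$; the key point is then that $J(\cdot)^{-1}I$ is invertible in $\mc V(\mf g_{\leq\frac12})((\partial^{-1}))\otimes\End(\im T)$. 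This should follow from condition \eqref{20170317:eq2}, $V=\ker T\oplus F^D(\im T)$, which guarantees that the symbol of $J(\cdot)^{-1}I$ has invertible leading coefficient: the leading term of $(\partial\id_V+F+\dots)^{-1}$ in the relevant filtration is governed by the nilpotent $F$, and sandwiching between $J$ and $I$ picks out precisely the block on which $F^D$ is an isomorphism from $\im T$; I would make the degree bookkeeping precise using the $\ad X$-grading \eqref{eq:grading_EndV} and Lemma \ref{20170317:lem1}(c), showing the top symbol of $J(\cdot)^{-1}I$ is (a nonzero multiple of) $T_-$ or $T_+$ acting on $\im T$, which is invertible on its image by the assumption $\ker T_\pm\cap\im T_\pm=0$.

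For the second and main part — that $L_\epsilon(\partial)$ lies in $\mc W(\mf g,f)((\partial^{-1}))\otimes\End(\im T)$ and equals \eqref{eq:Lw} — I would proceed in two steps. Step one: show each coefficient of $L_\epsilon(\partial)$ is annihilated by $\rho\{a_\lambda\cdot\}_\epsilon$ for all $a\in\mf g_{\geq\frac12}$, i.e. lies in $\mc W$. The standard trick (again following \cite{DSKVnew,DSKV16b}) is to compute $\rho\{a_\lambda \rho(A_\epsilon(w))\}_\epsilon$ directly from the affine $\lambda$-bracket \eqref{lambda}: using $\sum_i u_iU^i$ as a ``Casimir-like'' object one finds $\{a_\lambda \sum_i u_iU^i\}_\epsilon = \sum_i([a,u_i]U^i + (a|u_i)\lambda U^i)$, and because $\{U^i\}$ is dual to $\{U_i\}$ with respect to the trace form one can rewrite $\sum_i[a,u_i]U^i = -[\,\cdot\,, A]$-type expression, so that $\{a_\lambda \rho(A_\epsilon(w))\}_\epsilon = [\rho(A_\epsilon(w)), \varphi(a)] + (\text{terms killed by }\rho\text{ or involving }\lambda\text{ that cancel})$. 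Then by Lemma \ref{20170404:lem2}(g), $\{a_\lambda (\rho A_\epsilon)^{-1}(w)\} = -((\rho A_\epsilon)^{-1})(\dots)\{a_\lambda \rho A_\epsilon\}(\dots)(\rho A_\epsilon)^{-1}(\dots)$, and sandwiching by $J$ on the left and $I$ on the right, followed by another inversion (Lemma \ref{20170404:lem2}(g) once more), the commutator structure forces $\rho\{a_\lambda (L_\epsilon(w))_{\text{entry}}\}_\epsilon = 0$. The point is that $[\rho(A_\epsilon(w)),\varphi(a)]$ contributes $\varphi(a)$ on the outside, which is killed once we conjugate into the $\End(\im T)$ block because $\varphi(a)\in(\End V)[\geq\frac12]$ maps $\im T\subset V[\frac D2]$ out of $V[-\frac D2]$, using the grading — this is exactly the reason the quasideterminant construction lands in $\mc W$.

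Step two: identify the coefficients explicitly as \eqref{eq:Lw}. By Theorem \ref{thm:structure-W}, each element of $\mc W$ is uniquely of the form $q+r$ with $q\in\mf g^f$, $r\in\langle U^\perp\rangle$; applying $\pi_{\mf g^f}$ kills $\langle U^\perp\rangle$, hence kills all $u_i$ with $i\in I_{\leq\frac12}\setminus I_f$, and replaces $u_i$ ($i\in I_f$) by $w_i$ up to lower-order corrections inside $\langle U^\perp\rangle$. The cleanest argument is: since the entry $J(\rho A_\epsilon)^{-1}I$ and its inverse $L_\epsilon$ have coefficients in $\mc W$ (by step one), and $\pi_{\mf g^f}=\pi|_{\mc W}$ is a differential algebra isomorphism with inverse $w$, I can apply $w\circ\pi_{\mf g^f}$ coefficient-wise to $L_\epsilon(\partial)$ and it acts as the identity; on the other hand $\pi_{\mf g^f}$ applied to the defining expression $(J(\partial\id_V+F+\sum_{i\in I_{\leq\frac12}}u_iU^i+\epsilon S)^{-1}I)^{-1}$ — which is legitimate once we know the coefficients are in $\mc W$ — sends $u_i\mapsto 0$ for $i\notin I_f$ and $u_i\mapsto u_i$ for $i\in I_f$, then $w$ sends $u_i\mapsto w_i$, giving precisely \eqref{eq:Lw}. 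The subtlety here, and the main obstacle I anticipate, is justifying that $\pi_{\mf g^f}$ (a differential \emph{algebra} map, not obviously compatible with inversion of pseudodifferential operators) can be pulled inside the two inversions and the $J(\cdot)I$ sandwich: this requires knowing a priori that all intermediate coefficients — not just those of $L_\epsilon$ — already lie in a subalgebra on which $\pi_{\mf g^f}$ is a homomorphism, or else reorganizing the computation so that only genuine elements of $\mc W$ ever get hit by $\pi_{\mf g^f}$. I would handle this by first establishing step one for the entry $J(\rho A_\epsilon)^{-1}I$ \emph{before} inverting (so its coefficients are in $\mc W$), then using that $\mc W$ is a differential subalgebra closed under the relevant operations, so that $w\circ\pi_{\mf g^f}$ restricted to $\mc W((\partial^{-1}))\otimes\End(\im T)$ is a well-behaved homomorphism through which the whole formula can be pushed.
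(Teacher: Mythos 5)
Your proposal is correct and follows essentially the same route as the paper's proof: invertibility of $\rho(A_\epsilon(\partial))$ and of $J(\rho A_\epsilon(\partial))^{-1}I$ via the $\ad X$-grading bookkeeping with leading coefficient $JF^DI$ invertible by \eqref{20170317:eq2}; membership in $\mc W(\mf g,f)$ via the identity $\{a_\lambda A_\epsilon(z)\}_\epsilon=A_\epsilon(z+\lambda)\varphi(a)-\varphi(a)A_\epsilon(z)$, Lemma \ref{20170404:lem2}(g), and the vanishing of $J\varphi(a)$ and $\varphi(a)I$; and the explicit formula \eqref{eq:Lw} by applying $w\circ\pi_{\mf g^f}$. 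The subtlety you flag in the last step dissolves as the paper implicitly uses it: $w\circ\pi_{\mf g^f}$ is a differential algebra homomorphism on all of $\mc V(\mf g_{\leq\frac12})$ (not just on $\mc W$), so it can be applied coefficient-wise and pushed through both inversions and the $J(\cdot)I$ sandwich regardless of where the intermediate coefficients live, while it fixes $L_\epsilon(\partial)$ because its coefficients lie in $\mc W$ by the previous step.
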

The above theorem consists of three statements.
First, it claims that $L_\epsilon(\partial)$ is well defined, i.e. both inverses in formula
\eqref{eq:L} can be carried out in the algebra
of pseudodifferential operators with coefficients in $\mc V(\mf g_{\leq\frac12})$.
This is the content of Lemma \ref{20170303:prop1} below.
Next, it claims that, in fact, the coefficients of $L_\epsilon(\partial)$
lie in the $\mc W$-algebra $\mc W(\mf g,f)$,
which is proved in Lemma \ref{prop:L2_general} below.
Finally, it gives a formula, equation \eqref{eq:Lw}, 
for $L_\epsilon(\partial)$ in terms of the generators $w_i,\,i\in I_f$,
of the $\mc W$-algebra $\mc W(\mf g,f)$.
This formula is proved in Lemma \ref{lemma:last}.
\begin{remark}\label{20170317:rem}
Note that $U^\perp$, and hence $U$, depend on the choice of the basis elements 
$u_i$, for  $i\in I_{\leq\frac12}\backslash I_f$.
As a consequence, the map $w:\,\mc V(\mf g^f)\to\mc W(\mf g,f)$,
as well as the generators $w_i$, $i\in I_f$,
change when we change the basis elements $u_i$, $i\in I_{\leq\frac12}\backslash I_f$.
However, as a consequence of Theorem \ref{thm:main},
the RHS of formula \eqref{eq:Lw} is independent of the choice of the basis of $\mf g$.
\end{remark}
\begin{lemma}\label{20170303:prop1}
\begin{enumerate}[(a)]
\item $\rho(A_\epsilon(\partial))$ 
is invertible in $\mc V(\mf g_{\leq\frac12})((\partial^{-1}))\otimes\End(V)$.
\item $J(\rho(A_\epsilon(\partial)))^{-1} I$ 
is invertible in $\mc V(\mf g_{\leq\frac12})((\partial^{-1}))\otimes\End(\im T)$.
\end{enumerate}
\end{lemma}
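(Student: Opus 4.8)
The plan is to prove both invertibility statements by exhibiting the relevant leading terms and using a standard geometric-series argument for pseudodifferential operators, exploiting the grading \eqref{eq:grading_EndV}.

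For part (a), first I would write $\rho(A_\epsilon(\partial))=\partial\id_V + F + \text{(lower order)}$ and observe that, as an operator in $\mc V(\mf g_{\leq\frac12})((\partial^{-1}))\otimes\End V$, its symbol is $z\id_V + R(z)$ where $R(z)=F+\sum_{i\in I_{\leq\frac12}}U_i u^i+\epsilon S$ (abusing notation; here I mean the endomorphism part is fixed and the coefficients lie in $\mc V(\mf g_{\leq\frac12})$) has symbol-degree $\le 0$ in $z$. Formally, $\rho(A_\epsilon(\partial))=(\id_V+R(\partial)\partial^{-1})\partial$, and since $R(\partial)\partial^{-1}\in\mc V(\mf g_{\leq\frac12})[[\partial^{-1}]]\partial^{-1}\otimes\End V$ has no constant term, the operator $\id_V+R(\partial)\partial^{-1}$ is invertible by the geometric series $\sum_{n\ge0}(-R(\partial)\partial^{-1})^n$, which converges in the $\partial^{-1}$-adic topology. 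Hence $\rho(A_\epsilon(\partial))^{-1}=\partial^{-1}\sum_{n\ge0}(-R(\partial)\partial^{-1})^n$ is a well-defined element of $\mc V(\mf g_{\leq\frac12})((\partial^{-1}))\otimes\End V$, with leading term $\partial^{-1}\id_V$.

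For part (b), the point is that the naive leading term of $J\rho(A_\epsilon(\partial))^{-1}I$ is $\partial^{-1}JI=\partial^{-1}(J|_{\im T})$; but $J=T$ kills $V[k]$ for $k>-\tfrac D2$ and $I$ embeds $\im T\subset V[\tfrac D2]$, so on $V[\tfrac D2]$ the composite $JI$ reduces to $T$ restricted to... — more precisely $JI\colon \im T\to\im T$ is $v\mapsto J(I(v))$, and since $I(v)\in V[\tfrac D2]\subset\ker T$ we get $JI=0$. Thus the $\partial^{-1}$-coefficient vanishes and one must dig deeper into the expansion $\rho(A_\epsilon(\partial))^{-1}=\partial^{-1}-\partial^{-1}F\partial^{-1}+\partial^{-1}F\partial^{-1}F\partial^{-1}-\cdots+(\text{terms with }u_i\text{'s})$. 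Sandwiching between $J$ and $I$, the lowest surviving term comes from the power of $F$ that carries $V[\tfrac D2]$ down to $V[-\tfrac D2]$, namely $F^D$: the term $(-1)^D\,J F\partial^{-1}F\cdots\partial^{-1}F\,I\,\partial^{-(D+1)}$ with $D$ factors of $F$ has leading coefficient $(-1)^D\, T F^D|_{\im T}\,\partial^{-(D+1)}$ (using $JI(v)=0$ and $J=T$, $I$ the inclusion, so the surviving composite is $T\circ F^D$ acting on $\im T\subset V[\tfrac D2]$). By Lemma \ref{20170317:lem1}(c) and condition \eqref{20170317:eq2}, the restriction of $T F^D$ to $\im T$ — equivalently $T_+$ restricted appropriately — is invertible on $\im T$, since $\ker T_+\cap\im T_+=0$ forces $T_+$ to restrict to an isomorphism of $\im T_+=\im T$. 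Hence $J\rho(A_\epsilon(\partial))^{-1}I=\partial^{-(D+1)}\big((-1)^D T F^D|_{\im T}+\text{lower order in }\partial^{-1}\big)$, whose leading endomorphism coefficient is invertible in $\End(\im T)$; therefore the whole operator is invertible in $\mc V(\mf g_{\leq\frac12})((\partial^{-1}))\otimes\End(\im T)$ by the same geometric-series argument as in (a).

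The main obstacle is part (b): one has to argue carefully that no term of $\partial$-order higher than $-(D+1)$ survives the sandwich by $J=T$ and $I$ — this uses that every monomial $J B_1\partial^{-1}B_2\cdots\partial^{-1}B_k I$ with $B_i\in\{F\}\cup\{U_i: i\in I_{\leq\frac12}\}\cup\{S\}$ and $k\le D-1$ must vanish, because the only way to move $V[\tfrac D2]$ to a subspace not annihilated by $T$ (i.e. into $V[-\tfrac D2]$) is to apply the grading-lowering operators with total $\ad X$-degree exactly $-D$, and $F$ has degree $-1$ while the $U_i$ for $i\in I_{\leq\frac12}$ have degree $\le\tfrac12$ and $S$ has degree $d\le D$ but $S$ maps into $V[\tfrac D2]$-type spaces, not below; a clean way to organize this is to track the $\ad X$-weight of each term, noting that $J$ lives in weight $-\tfrac D2$ in the appropriate sense while $I$ injects from weight $\tfrac D2$, so a nonzero contribution requires the intervening product of endomorphisms to have $\ad X$-weight $-D$, and the $\partial^{-1}$'s contribute the $\partial$-order. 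Once this bookkeeping is in place, invertibility of the leading coefficient via \eqref{20170317:eq2} and the geometric series finish the proof.
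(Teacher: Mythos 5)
Your proof follows essentially the same route as the paper's: invert $\rho(A_\epsilon(\partial))$ by a geometric series in $\partial^{-1}$; then, by $\ad x$-weight bookkeeping, show that sandwiching the expansion between $J$ and $I$ kills every term with fewer than $D$ intermediate factors, that the term with exactly $D$ factors contributes $(-1)^D\,TF^D|_{\im T}\,\partial^{-(D+1)}$, and that this leading coefficient is invertible on $\im T$ by condition \eqref{20170317:eq2}, so a second geometric series gives part (b). This is the paper's argument.

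One slip should be corrected in the bookkeeping, since it is exactly the fact the argument hinges on. The endomorphism factors occurring in $\rho(A_\epsilon(\partial))$ besides $F$ and $\epsilon S$ are the $U^i=\varphi(u^i)$ with $i\in I_{\leq\frac12}$ (the coefficients $u_i$ are the elements of $\mc V(\mf g_{\leq\frac12})$), and since $u^i\in\mf g_{\geq-\frac12}$ these endomorphisms have $\ad x$-weight $\geq-\frac12$. What you need is precisely this \emph{lower} bound: every non-$F$ factor has weight $>-1$ (and $S$ has weight $d\geq0$), while $F$ has weight exactly $-1$; hence a product of $k$ factors has total weight $\geq-k$, so it can reach the weight $-D$ required to map $V[\tfrac D2]$ into $V[-\tfrac D2]$ only if $k\geq D$, and for $k=D$ only if all factors equal $F$. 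Your write-up instead asserts the upper bound that the ``$U_i$ for $i\in I_{\leq\frac12}$ have degree $\leq\tfrac12$'' (note the lower index): taken literally this is the wrong inequality and would not exclude a single factor of weight $-d$ bridging $V[\tfrac D2]$ to $V[-\tfrac D2]$ in one step. With the indices and the direction of the inequality fixed, your argument closes and coincides with the paper's proof, including the observation that \eqref{20170317:eq2} makes $TF^D|_{\im T}$ an automorphism of $\im T$.
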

\begin{proof}
The differential operator $\rho(A_\epsilon(\partial))$
is of order one with leading coefficient $\id_V$.
Hence it is invertible in the algebra $V(\mf g_{\leq\frac12})((\partial^{-1}))\otimes\End(V)$,
and its inverse can be computed by geometric series expansion:
\begin{equation}\label{eq:thm1-pr1}
\rho(A_\epsilon(\partial))^{-1}
=
\sum_{\ell=0}^\infty (-1)^\ell \partial^{-1}
\Big(\big(F+\sum_{i\in I_{\leq\frac12}}u_iU^i+\epsilon S\big)\partial^{-1}\Big)^\ell
\,.
\end{equation}
This proves part (a). 
Recall that $F\in(\End V)[-1]$, $U^i\in(\End V)[\geq-\frac12]$ for $i\in I_{\leq\frac12}$
and $S\in(\End V)[d]\subset(\End V)[\geq-\frac12]$.
Recall also that 
$\im I=\im T\subset V[\frac{D}{2}]$
and
$\ker J=\ker T\supset V[>-\frac D2]$.
Hence, by keeping track of the $X$-eigenvalues, we immediately get that
$$
J\partial^{-1}\Big(\big(F+\sum_{i\in I_{\leq\frac12}}u_iU^i+\epsilon S\big)\partial^{-1}\Big)^\ell I
\left\{\begin{array}{l}
\vphantom{\Big(}
=0\,\,\text{ if }\,\, \ell<D\,, \\
\vphantom{\Big(}
=JF^DI\,\partial^{-p_1}\,\,\text{ if }\,\, \ell=D\,, \\
\vphantom{\Big(}
\!\!\in\!\!\mc V(\mf g_{\leq\frac12})[[\partial^{-1}]]\partial^{\!\!-\!p_1\!-\!1}\!\!\otimes\End(\im T)
\text{ if } \ell>D.
\end{array}\right.
$$
It follows from \eqref{eq:thm1-pr1} that
\begin{equation}\label{eq1:20170303}
J\rho(A_\epsilon(\partial))^{-1} I=(-1)^{D}J F^{D} I\,\partial^{-p_1}\,+\,\text{ lower order terms}\,.
\end{equation}
By the assumption \eqref{20170317:eq2} on $T$,
we have that $JF^DI=TF^D|_{\im T}$ is an invertible endomorphism 
of $\im T$.
Hence, \eqref{eq1:20170303} can be inverted by geometric series expansion,
proving (b).
\end{proof} 
\begin{lemma}[{\cite[{Lem.3.1(b)}]{DSKV13}}]\label{lem:app2}
Consider the pencil of affine Poisson vertex algebras $\mc V=\mc V_\epsilon(\mf g,s)$
with $s\in\mf g_d$.
For $a\in\mf g_{\geq\frac12}$ and $g\in\mc V(\mf g)$, we have
$\rho\{a_\lambda \rho(g)\}_\epsilon=\rho\{a_\lambda g\}_\epsilon$.
\end{lemma}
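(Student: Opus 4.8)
\emph{Proposal.} The plan is to reduce the identity to a statement about $\ker\rho$. Since $\rho$ is a surjective differential algebra homomorphism that restricts to the identity on $\mc V(\mf g_{\leq\frac12})$ (indeed $\rho(b)=\pi_{\leq\frac12}(b)+(f|b)=b$ for $b\in\mf g_{\leq\frac12}$, as $(f|\,\cdot\,)$ vanishes on $\mf g_{\leq\frac12}$), every $g\in\mc V(\mf g)$ decomposes as $g=\rho(g)+(g-\rho(g))$ with $\rho(g)\in\mc V(\mf g_{\leq\frac12})$ and $g-\rho(g)\in\ker\rho$. Applying $\{a_\lambda\,\cdot\,\}_\epsilon$ and then $\rho$, and using bilinearity of the $\lambda$-bracket, the asserted identity $\rho\{a_\lambda\rho(g)\}_\epsilon=\rho\{a_\lambda g\}_\epsilon$ becomes equivalent to $\rho\{a_\lambda h\}_\epsilon=0$ for all $a\in\mf g_{\geq\frac12}$ and all $h\in\ker\rho$.

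Next I would use that, by the definition \eqref{rho} of $\rho$ together with $\ker\pi_{\leq\frac12}=\mf g_{\geq1}$, the space $\ker\rho$ is precisely the differential algebra ideal of $\mc V(\mf g)$ generated by the elements $m-(f|m)$, $m\in\mf g_{\geq1}$. Hence a general $h\in\ker\rho$ is a finite sum of terms $g_\alpha r_\alpha$ with $g_\alpha\in\mc V(\mf g)$ and $r_\alpha=\partial^{k_\alpha}(m_\alpha-(f|m_\alpha))$, $m_\alpha\in\mf g_{\geq1}$, $k_\alpha\in\mb Z_{\geq0}$. Extending $\rho$ to $\mc V(\mf g)[\lambda]$ by $\mb F[\lambda]$-linearity, it remains a homomorphism commuting with $\partial$ and with multiplication by $\lambda$. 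By the Leibniz rule, $\rho\{a_\lambda g_\alpha r_\alpha\}_\epsilon=\rho\{a_\lambda g_\alpha\}_\epsilon\,\rho(r_\alpha)+\rho\{a_\lambda r_\alpha\}_\epsilon\,\rho(g_\alpha)$; the first summand vanishes since $r_\alpha\in\ker\rho$, and by sesquilinearity $\rho\{a_\lambda r_\alpha\}_\epsilon=(\lambda+\partial)^{k_\alpha}\rho\{a_\lambda(m_\alpha-(f|m_\alpha))\}_\epsilon$. Thus everything reduces to checking that $\rho\{a_\lambda(m-(f|m))\}_\epsilon=0$ for $a\in\mf g_{\geq\frac12}$ and $m\in\mf g_{\geq1}$.

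Finally I would carry out this last computation by keeping track of $\ad x$-gradings. Since $(f|m)\in\mb F$, formula \eqref{lambda} gives $\{a_\lambda(m-(f|m))\}_\epsilon=\{a_\lambda m\}_\epsilon=[a,m]+(a|m)\lambda+\epsilon(s|[a,m])$. Writing $a=\sum_{j\geq\frac12}a_j$ and $m=\sum_{l\geq1}m_l$ with $a_j\in\mf g_j$, $m_l\in\mf g_l$, invariance of $(\cdot\,|\,\cdot)$ forces $(\mf g_p|\mf g_q)=0$ unless $p+q=0$. Hence $(a|m)=0$ (it would need $j=-l\leq-1$), and $(s|[a,m])=0$ since $s\in\mf g_d$ with $d\geq1$ while $[a_j,m_l]\in\mf g_{j+l}$ with $j+l\geq\frac32$, so $d+j+l$ is never $0$. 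Therefore $\{a_\lambda(m-(f|m))\}_\epsilon=[a,m]\in\mf g_{\geq\frac32}$, and applying $\rho$ gives $\rho([a,m])=\pi_{\leq\frac12}([a,m])+(f|[a,m])=0$, because $\mf g_{\geq\frac32}\subset\mf g_{\geq1}=\ker\pi_{\leq\frac12}$ and $(f|[a,m])=0$ (it would need $[a,m]\in\mf g_1$, which is impossible). This yields $\rho\{a_\lambda(m-(f|m))\}_\epsilon=0$ and completes the proof.

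The only real obstacle is keeping the grading bookkeeping straight: one must use that $d\geq1$ (equivalently, that the $\mf{sl}_2$-triple is nontrivial, so $e\in\mf g_1$) to kill the $\epsilon$-term, and that brackets $[a,m]$ land in degrees $\geq\frac32$, strictly above $1$, to kill both $\pi_{\leq\frac12}([a,m])$ and $(f|[a,m])$. Beyond this there is no deeper difficulty; the argument is purely formal once the reduction to $\ker\rho$ is in place.
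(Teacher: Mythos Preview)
Your proposal is correct. The paper does not supply its own proof of this lemma but cites \cite[Lem.3.1(b)]{DSKV13}; your argument---reducing by idempotency of $\rho$ to $\rho\{a_\lambda h\}_\epsilon=0$ for $h\in\ker\rho$, identifying $\ker\rho$ as the differential ideal generated by $m-(f|m)$ with $m\in\mf g_{\geq1}$, and then checking the claim on generators via the grading---is the standard direct verification and goes through without issue.
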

\begin{lemma}\label{lem:app1a}
For $a\in\mf g$, 
we have
$\displaystyle{\sum_{i\in I}[a,u_i] U^i=\sum_{i\in I}u_i [U^i,\varphi(a)]
\in\mc V(\mf g)\!\otimes\!\End V}$.
\end{lemma}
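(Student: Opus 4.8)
The plan is to reduce the identity to a statement purely about the bilinear form and structure constants, and then verify it by a direct computation with dual bases. First I would write $u_i = \sum_{j\in I} c_{ij} u^j$ for the coefficients $c_{ij} = (u_i|u_j)$, or equivalently expand $[a,u_i]$ in the dual basis: since $(u_k|u^j)=\delta_{k,j}$, we have $[a,u_i] = \sum_{j\in I} ([a,u_i]|u^j)\, u_j$. Substituting into the left-hand side gives
\begin{equation*}
\sum_{i\in I}[a,u_i]\,U^i
=
\sum_{i,j\in I} ([a,u_i]|u^j)\, u_j\, U^i
=
\sum_{j\in I} u_j \Big(\sum_{i\in I} ([a,u_i]|u^j)\, U^i\Big)
\,.
\end{equation*}
So it suffices to show, for each fixed $j$, that $\sum_{i\in I}([a,u_i]|u^j)\,U^i = [U^j,\varphi(a)]$ in $\End V$.

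Next I would use invariance of the trace form \eqref{20170317:eq1}: $([a,u_i]|u^j) = \tr_V(\varphi([a,u_i])\varphi(u^j)) = \tr_V([\varphi(a),U_i]U^j)$, and by the cyclicity of the trace this equals $-\tr_V(U_i[\varphi(a),U^j]) = \tr_V(U_i[U^j,\varphi(a)])$. Now the key point is the ``resolution of the identity'' with respect to dual bases: for any $B\in\End V$ lying in the image of $\varphi$, one has $\sum_{i\in I}\tr_V(U_i B)\,U^i = B$, because $\tr_V(U_i B) = (u_i|\varphi^{-1}(B))$ reads off the coordinates of $\varphi^{-1}(B)$ in the basis $\{u_i\}$, whose expansion in the dual basis is $\sum_i (u_i|\varphi^{-1}(B))\,u^i$; applying $\varphi$ gives the claim. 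Applying this with $B=[U^j,\varphi(a)]$, which indeed lies in $\varphi(\mf g)$ since $\mf g$ is a Lie subalgebra and $\varphi$ a representation, yields exactly $\sum_{i\in I}\tr_V(U_i[U^j,\varphi(a)])\,U^i = [U^j,\varphi(a)]$, completing the argument.

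The only genuine subtlety is the one just flagged: the resolution-of-identity step requires that $[U^j,\varphi(a)]$ belong to $\varphi(\mf g)$, not merely to $\End V$ — otherwise $\sum_i\tr_V(U_iB)U^i$ recovers only the projection of $B$ onto $\varphi(\mf g)$ along the orthogonal complement (with respect to $\tr_V(XY)$ restricted to $\varphi(\mf g)$, which by hypothesis is non-degenerate). Since $[U^j,\varphi(a)] = \varphi([u^j,a])$ this holds, and in fact the whole computation can be carried out inside $\mf g$ and only transported to $\End V$ at the end via the faithful embedding $\varphi$; I would phrase it that way to avoid any ambiguity about where the traces are taken. Everything else is a routine manipulation of dual bases and the cyclicity of the trace, and is independent of the grading, of $f$, and of $\epsilon$.
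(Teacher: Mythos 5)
Your argument is correct and is essentially the paper's proof: both expand $[a,u_i]$ via the completeness relation for the dual bases, use invariance of the trace form (your cyclicity-of-trace step is just this invariance written in $\End V$), and then apply completeness again to resum $\sum_i(u_i|[u^j,a])\,U^i=\varphi([u^j,a])=[U^j,\varphi(a)]$. Your flagged subtlety about the resolution of identity only recovering the projection onto $\varphi(\mf g)$ is handled implicitly in the paper by carrying out the whole computation with the form on $\mf g$, exactly as you suggest in your closing remark.
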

\begin{proof}
By the completeness relation for $\mf g$ and the invariance 
of the trace form $(\cdot\,|\,\cdot)$, we have
$$
\sum_{i\in I}[a,u_i] U^i
=
\sum_{i,j\in I}([a,u_i]|u^j)u_j\varphi(u^i)
=
\sum_{i,j\in I}([u^j,a]|u_i)u_j\varphi(u^i)
=
\sum_{j\in I}u_j[U^j,\varphi(a)]
\,.
$$
\end{proof}
\begin{lemma}\label{lem:app1b}
For $a\in\mf g$, we have
$$
\{a_\lambda A_\epsilon(z)\}_\epsilon=A_\epsilon(z+\lambda)\varphi(a)-\varphi(a)A_\epsilon(z)
\,,
$$
where $A_\epsilon(z)$ is the symbol of the pseudodifferential operator 
$A_\epsilon(\partial)\in\mc V(\mf g)[\lambda]\otimes\End(V)$ defined in \eqref{eq:A}.
\end{lemma}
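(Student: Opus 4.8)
The plan is to verify the identity directly from the definition of the $\lambda$-bracket \eqref{lambda} on $\mc V_\epsilon(\mf g,s)$ together with the Leibniz rules, working at the level of symbols. Recall that $A_\epsilon(z) = z\,\id_V + \sum_{i\in I} u_i U^i + \epsilon S$, so only the middle term $\sum_{i\in I} u_i U^i$ is a nonconstant element of $\mc V(\mf g)$; the terms $z\,\id_V$ and $\epsilon S$ are constants and hence have trivial $\lambda$-bracket with $a\in\mf g$. Thus it suffices to compute $\{a_\lambda \sum_{i\in I} u_i U^i\}_\epsilon$ and show it equals $(A_\epsilon(z+\lambda) - A_\epsilon(z))\varphi(a) + [\,\sum_i u_i U^i,\,\varphi(a)\,]$, since $(z+\lambda - z)\varphi(a) = \lambda\varphi(a)$ must match the $(a|\cdot)\lambda$ term and the commutator part must match the $[a,\cdot]$ term.

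First I would apply sesquilinearity to strip off the $z$-dependence (there is none inside the bracket) and then, treating $U^i\in\End V$ as a constant coefficient, compute
$$
\{a_\lambda u_i U^i\}_\epsilon
=
\{a_\lambda u_i\}_\epsilon\, U^i
=
\big([a,u_i] + (a|u_i)\lambda + \epsilon(s|[a,u_i])\big) U^i
\,.
$$
Summing over $i\in I$ gives three contributions. The middle one, $\lambda\sum_i (a|u_i) U^i$, equals $\lambda\,\varphi(a)$ by the completeness relation $\sum_i (a|u_i) u^i = a$ applied under $\varphi$ (here one must be slightly careful: $\sum_i(a|u_i)U^i = \varphi\big(\sum_i(a|u_i)u^i\big) = \varphi(a)$), which reproduces the $(A_\epsilon(z+\lambda)-A_\epsilon(z))\varphi(a) = \lambda\varphi(a)$ term on the right-hand side. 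The last one, $\epsilon\sum_i(s|[a,u_i])U^i = \epsilon\,\varphi([s,a])$ by the same completeness argument together with invariance of the form; and $\varphi([s,a]) = [S,\varphi(a)] = S\varphi(a) - \varphi(a)S$, which is exactly the contribution of the constant term $\epsilon S$ to $A_\epsilon(z+\lambda)\varphi(a) - \varphi(a)A_\epsilon(z)$.

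The remaining term is $\sum_{i\in I}[a,u_i] U^i$, and here I would invoke Lemma \ref{lem:app1a}, which gives $\sum_{i\in I}[a,u_i]U^i = \sum_{i\in I} u_i [U^i,\varphi(a)] = \big(\sum_i u_i U^i\big)\varphi(a) - \varphi(a)\big(\sum_i u_i U^i\big)$ — this is precisely the contribution of the middle term $\sum_i u_i U^i$ of $A_\epsilon$ to $A_\epsilon(z+\lambda)\varphi(a) - \varphi(a)A_\epsilon(z)$ (the $z$ and $\epsilon S$ pieces commute with $\varphi(a)$ only in the combination already accounted for, but note $z\,\id_V$ does commute, contributing nothing, consistent with the computation). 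Collecting the three pieces and comparing with the explicit expansion of $A_\epsilon(z+\lambda)\varphi(a) - \varphi(a)A_\epsilon(z)$ term by term finishes the proof.

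I do not anticipate a serious obstacle here; the only point requiring care is the bookkeeping between $u_i$ and its dual $u^i$ under $\varphi$ in the completeness relation, and the observation that Lemma \ref{lem:app1a} is exactly the identity needed to turn the $[a,u_i]$ sum into a commutator with $\varphi(a)$. The whole argument is a direct unwinding of definitions, so the "hard part" is merely to organize the three contributions so that they visibly assemble into $A_\epsilon(z+\lambda)\varphi(a) - \varphi(a)A_\epsilon(z)$.
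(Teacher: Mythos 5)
Your proposal is correct and follows essentially the same route as the paper: compute $\{a_\lambda A_\epsilon(z)\}_\epsilon$ term by term from the definition \eqref{lambda}, identify $\lambda\sum_i(a|u_i)U^i=\lambda\varphi(a)$ and $\epsilon\sum_i(s|[a,u_i])U^i=\epsilon[S,\varphi(a)]$ via completeness and invariance of the trace form, and convert $\sum_i[a,u_i]U^i$ into $\sum_i u_i[U^i,\varphi(a)]$ using Lemma \ref{lem:app1a}. The only difference is that you make the completeness-relation bookkeeping explicit where the paper leaves it implicit; no gap.
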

\begin{proof}
By definition \eqref{lambda}
of the $\lambda$-bracket in $\mc V(\mf g)$, we have
$$
\{a_\lambda A_\epsilon(z)\}_\epsilon
=
\sum_{i\in I}\{a_\lambda u_i\}_\epsilon U^i
=
\sum_{i\in I}[a,u_i]U^i+\lambda\varphi(a)+\epsilon\varphi([s,a])
\,.
$$
On the other hand, by \eqref{eq:A} we have
$$
A_\epsilon(z+\lambda)\varphi(a)-\varphi(a)A_\epsilon(z)
=
\lambda\varphi(a)+\sum_{i\in I}u_i[U^i,\varphi(a)]+\epsilon[S,\varphi(a)]
\,.
$$
The claim follows by Lemma \ref{lem:app1a} (recalling that $S=\varphi(s)$).
\end{proof}

\begin{lemma}\label{prop:L2_general}
We have
$\rho\{a_\lambda L_\epsilon^{-1}(z)\}_\epsilon=0$
for every $a\in\mf g_{\geq\frac12}$,
where $L_\epsilon^{-1}(z)$ is the symbol 
of the pseudodifferential operator $L_\epsilon^{-1}(\partial)=J\circ(\rho(A_\epsilon(\partial)))^{-1}\circ I$.
Equivalently, $L_\epsilon^{-1}(\partial)$, and hence $L_\epsilon(\partial)$,
has coefficients in the $\mc W$-algebra $\mc W(\mf g,f)$.
\end{lemma}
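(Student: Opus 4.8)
The plan is to work with the operator $B(\partial):=\rho(A_\epsilon(\partial))=\partial\id_V+C$, where $C:=F+\sum_{i\in I_{\leq\frac12}}u_iU^i+\epsilon S\in\mc V(\mf g_{\leq\frac12})\otimes\End V$, so that $L_\epsilon^{-1}(\partial)=J\circ B(\partial)^{-1}\circ I$ (both inverses exist by Lemma \ref{20170303:prop1}). Since $J$ and $I$ are constant maps, $\rho\{a_\lambda L_\epsilon^{-1}(w)\}_\epsilon=J\,\big(\rho\{a_\lambda B^{-1}(w)\}_\epsilon\big)\,I$, so it is enough to show this vanishes for $a\in\mf g_{\geq\frac12}$. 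I would first record two inputs. (i) Combining Lemma \ref{lem:app1b} with Lemma \ref{lem:app2} (applied coefficientwise, since the coefficients of $B(w)$ are $\rho$ of those of $A_\epsilon(w)$), one gets $\rho\{a_\lambda B(w)\}_\epsilon=B(w+\lambda)\varphi(a)-\varphi(a)B(w)=\lambda\varphi(a)+[C,\varphi(a)]$; note that this is independent of $w$. (ii) For $a\in\mf g_{\geq\frac12}$ the endomorphism $\varphi(a)\in(\End V)[\geq\frac12]$ raises $X$-degree by at least $\tfrac12$; since $\im I=\im T\subset V[\tfrac D2]$ is the top $X$-eigenspace and $\Ker J=\Ker T\supset\bigoplus_{k>-\frac D2}V[k]$, this forces $\varphi(a)\circ I=0$ and $J\circ\varphi(a)=0$.

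Next I would differentiate $B(\partial)\circ B^{-1}(\partial)=\id_V$ under $\rho\{a_\lambda\cdot\}_\epsilon$, using the Leibniz rule of Lemma \ref{20170404:lem2}(a) (equivalently, Lemma \ref{20170404:lem2}(g)); by (i) the $w$-independence of $\rho\{a_\lambda B(w)\}_\epsilon$ collapses the $x$-substitution and yields
\[
\rho\{a_\lambda B^{-1}(w)\}_\epsilon=-B^{-1}(w+\lambda+\partial)\,\big(\lambda\varphi(a)+[C,\varphi(a)]\big)\,B^{-1}(w)\,.
\]
To simplify $J(\cdot)I$ I would rewrite $\lambda\varphi(a)+[C,\varphi(a)]=\lambda\varphi(a)+B(\partial)\varphi(a)-\varphi(a)B(\partial)$ (using $C=B(\partial)-\partial\id_V$ and $[\partial\id_V,\varphi(a)]=0$), compose on the right with $B^{-1}(\partial)\circ I$, cancel $B(\partial)B^{-1}(\partial)=\id_V$, and discard $\varphi(a)\circ I=0$ from (ii). Passing to symbols with $\gamma(w):=(B^{-1}I)(w)$ this gives
\[
J\,\rho\{a_\lambda B^{-1}(w)\}_\epsilon\,I=-J\,B^{-1}(w+\lambda+\partial)\Big(B(w+\partial)\big(\varphi(a)\gamma(w)\big)+\lambda\,\varphi(a)\gamma(w)\Big)\,.
\]

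The final point is the cancellation of the two $\lambda$-terms. Since $B(v)=v\id_V+C$, one has $B(w+\partial)=B(w+\lambda+\partial)-\lambda\id_V$ as operations on functions of $w$, and since the product of pseudodifferential operators is associative (Lemma \ref{20170404:lem1}(a)), $B^{-1}(w+\lambda+\partial)\big(B(w+\lambda+\partial)h(w)\big)=h(w)$; hence $B^{-1}(w+\lambda+\partial)\big(B(w+\partial)h(w)\big)=h(w)-\lambda B^{-1}(w+\lambda+\partial)h(w)$. Taking $h(w)=\varphi(a)\gamma(w)$, the two $\lambda$-terms above cancel and what remains is $-J\,\varphi(a)\gamma(w)$, which is $0$ by (ii). This proves $\rho\{a_\lambda L_\epsilon^{-1}(w)\}_\epsilon=0$ for all $a\in\mf g_{\geq\frac12}$, i.e. $L_\epsilon^{-1}(\partial)$ has coefficients in $\mc W(\mf g,f)$; since its leading coefficient $(-1)^D J F^D I$ is an invertible constant endomorphism of $\im T$ (Lemma \ref{20170303:prop1}), so does $L_\epsilon(\partial)=(L_\epsilon^{-1}(\partial))^{-1}$.

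I expect the main obstacle to be exactly this last cancellation: one has to substitute $\rho\{a_\lambda B(w)\}_\epsilon$ in the commutator form $\lambda\varphi(a)+[C,\varphi(a)]$, re-express $C$ through $B(\partial)$, and insert the two vanishing relations $\varphi(a)I=0$ and $J\varphi(a)=0$ at the right spots, all while keeping straight the shifted arguments $w+\lambda+\partial$ versus $w+\partial$ in the symbols — the place where sign or shift errors are easiest to make.
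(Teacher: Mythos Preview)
Your proof is correct and follows essentially the same strategy as the paper (Lemmas \ref{lem:app2}, \ref{lem:app1b}, \ref{20170404:lem2}(g), and the vanishings $J\varphi(a)=0=\varphi(a)I$). The paper avoids your final cancellation step simply by keeping $\rho\{a_\lambda B(w)\}_\epsilon$ in the form $B(w+\lambda)\varphi(a)-\varphi(a)B(w)$ rather than collapsing it to $\lambda\varphi(a)+[C,\varphi(a)]$: after substituting into Lemma \ref{20170404:lem2}(g) the two $B$'s cancel directly against the flanking $B^{-1}(w+\lambda+\partial)$ and $B^{-1}(w)$, giving $-\varphi(a)B^{-1}(w)+B^{-1}(w+\lambda)\varphi(a)$ in one line.
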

\begin{proof}
By the definition \eqref{eq:L} of $L_\epsilon(\partial)$
and recalling that the map $\rho$ is a differential algebra homomorphism, we have
\begin{equation}\label{eq:app3_general}
\rho\{a_\lambda L_\epsilon^{-1}(z)\}_\epsilon
=
\rho\{a_\lambda J(\rho A_\epsilon)^{-1}(z)I\}_\epsilon
=
\rho J\{a_\lambda \rho(A_\epsilon^{-1}(z))\}_\epsilon I
\,.
\end{equation}
We then apply Lemma \ref{lem:app2} to rewrite the RHS of \eqref{eq:app3_general} as
$$
\rho J\{a_\lambda (A_\epsilon^{-1}(z))\}_\epsilon I
\,,
$$
we use Lemma \ref{20170404:lem2}(g) to rewrite it as
$$
-\rho J A_\epsilon^{-1}(z+\lambda+\partial)\{a_\lambda A_\epsilon(z+x)\}_\epsilon
\big(\big|_{x=\partial}A_\epsilon^{-1}(z)\big) I
\,,
$$
and finally we use Lemma \ref{lem:app1b} to equal it to
$$
-\rho( J \varphi(a) A_\epsilon^{-1}(z) I )
+\rho( J A_\epsilon^{-1}(z+\lambda)\varphi(a) I )
\,.
$$
To conclude we observe that, since $\varphi(a)\in(\End V)[\geq\frac12]$,
both $J\varphi(a)$ and $\varphi(a) I$ vanish.
\end{proof}
\begin{lemma}\label{lemma:last}
Equation \eqref{eq:Lw} holds.
\end{lemma}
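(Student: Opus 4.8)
The plan is to reduce the claimed formula \eqref{eq:Lw} to the defining formula \eqref{eq:L} for $L_\epsilon(\partial)$ by exploiting the characterization of the $\mc W$-algebra generators $w_i=w(u_i)$, $i\in I_f$, from Theorem \ref{thm:structure-W}. By that theorem each $w_i$ has the form $w_i=u_i+r_i$ with $r_i\in\langle U^\perp\rangle$, and more generally, since $L_\epsilon^{-1}(\partial)$ already has coefficients in $\mc W(\mf g,f)$ by Lemma \ref{prop:L2_general}, applying $\pi_{\mf g^f}$ to $L_\epsilon^{-1}(\partial)$ gives the image of $L_\epsilon^{-1}(\partial)$ in $\mc V(\mf g^f)((\partial^{-1}))\otimes\End(\im T)$, and then $w$ applied coefficientwise recovers $L_\epsilon^{-1}(\partial)$ itself. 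So the strategy is: compute $\pi_{\mf g^f}\big(L_\epsilon^{-1}(\partial)\big)$ explicitly, then apply $w$.

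First I would recall that $\pi_{\mf g^f}:\mc V(\mf g_{\leq\frac12})\twoheadrightarrow\mc V(\mf g^f)$ is the differential algebra homomorphism killing $\langle U^\perp\rangle$; on generators it sends $u_i\mapsto u_i$ for $i\in I_f$ and $u_i\mapsto 0$ for $i\in I_{\leq\frac12}\setminus I_f$. Applying $\pi_{\mf g^f}$ (acting only on the $\mc V$-factor) to
$$
L_\epsilon^{-1}(\partial)=J\Big(\partial\id_V+F+\sum_{i\in I_{\leq\frac12}}u_iU^i+\epsilon S\Big)^{-1}I
$$
and using that $\pi_{\mf g^f}$ commutes with the geometric series expansion \eqref{eq:thm1-pr1} of the inverse (it is a continuous differential algebra homomorphism, and each term of the expansion is a polynomial in the $u_i$ with $\End V$-coefficients), we get
$$
\pi_{\mf g^f}\big(L_\epsilon^{-1}(\partial)\big)
=J\Big(\partial\id_V+F+\sum_{i\in I_{f}}u_iU^i+\epsilon S\Big)^{-1}I
\,.
$$
Next, applying the inverse isomorphism $w$ coefficientwise and using that $w(u_i)=w_i$, together with the fact (again from Lemma \ref{prop:L2_general}) that $w\circ\pi$ is the identity on $L_\epsilon^{-1}(\partial)$, yields
$$
L_\epsilon^{-1}(\partial)
=J\Big(\partial\id_V+F+\sum_{i\in I_{f}}w_iU^i+\epsilon S\Big)^{-1}I
\,,
$$
which is exactly the reciprocal of \eqref{eq:Lw}; inverting in $\mc W(\mf g,f)((\partial^{-1}))\otimes\End(\im T)$ (legitimate by Lemma \ref{20170303:prop1}(b) applied within $\mc W$) gives the claim.

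The main obstacle I expect is justifying the interchange of $\pi_{\mf g^f}$ (or of $w$) with the inversion — i.e. that $\pi_{\mf g^f}$, which is only defined as a homomorphism of differential algebras, can be applied term-by-term to the pseudodifferential operator $(\rho(A_\epsilon(\partial)))^{-1}$ expanded as in \eqref{eq:thm1-pr1}, and likewise that $w$ can be. This requires checking that the infinite sum \eqref{eq:thm1-pr1} converges coefficientwise in $\mc V(\mf g_{\leq\frac12})((\partial^{-1}))$ (only finitely many $\ell$ contribute to each power of $\partial$, which follows from the grading bookkeeping already used in the proof of Lemma \ref{20170303:prop1}), and that $\pi_{\mf g^f}$ respects this. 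The subtle point is that $\pi_{\mf g^f}$ is \emph{not} a homomorphism of PVAs, only of differential algebras, but that is all that is needed here since we are only manipulating the (associative, differential-algebra-level) pseudodifferential operator structure, not the $\lambda$-bracket; so the argument goes through once one notes that each homogeneous component of \eqref{eq:thm1-pr1} is a finite $\mc V(\mf g_{\leq\frac12})$-linear combination of $\End V$-valued symbols and $\pi_{\mf g^f}\otimes\id$ is applied to finitely many terms at a time.
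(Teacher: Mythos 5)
Your proposal is correct and follows essentially the same route as the paper: it uses that $L_\epsilon(\partial)$ (equivalently $L_\epsilon^{-1}(\partial)$) has coefficients in $\mc W(\mf g,f)$ by Lemma \ref{prop:L2_general}, that $w\circ\pi_{\mf g^f}$ is the identity on $\mc W$ by Theorem \ref{thm:structure-W}, and that both maps are differential algebra homomorphisms, hence can be pushed through the quasideterminant, sending $u_i\mapsto 0$ for $i\in I_{\leq\frac12}\setminus I_f$ and $u_i\mapsto w_i$ for $i\in I_f$. Your extra care about termwise application to the geometric series expansion is just a more explicit justification of the step the paper summarizes by saying the maps are differential algebra homomorphisms.
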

\begin{proof}
By Lemma \ref{prop:L2_general} we have 
$L_\epsilon(\partial)\in\mc W(\mf g,f)((\partial^{-1}))\otimes\im T$. Hence,
by Theorem \ref{thm:structure-W} we have 
$L_\epsilon(\partial)=(w\circ\pi_{\mf g^f})L_\epsilon(\partial)$. 
The claim follows by the definition 
of $U$ and the fact that $w$ and $\pi_{\mf g^f}$ are differential algebra homomorphisms.
\end{proof}

\section{Generalized Adler identity}\label{sec:4}

\subsection{Some facts from linear algebra}\label{sec:4.1}

Given a vector space $V$ of dimension $N$, 
we denote by $\Omega_V\in\End V\otimes\End V$
the permutation map:
\begin{equation}\label{Omega}
\Omega_V(v_1\otimes v_2)=v_2\otimes v_1
\,\,\text{ for all }\,\,v_1,v_2\in V
\,.
\end{equation}
Often, if confusion may not arise, we shall drop 
the index $V$ and we shall denote $\Omega=\Omega_V$.
We shall also sometimes write $\Omega=\Omega^\prime\otimes\Omega^{\prime\prime}$
to denote, as usual, a sum of monomials in $\End V\otimes\End V$.
In fact, we can write an explicit formula:
$\Omega=\sum_{i,j=1}^NE_{ij}\otimes E_{ji}$,
where $E_{ij}$ is the ``standard'' basis of $\End V$ consisting of elementary matrices 
w.r.t. any basis of $V$
(obviously, $\Omega$ does not depend on the choice of this basis).
By the completeness relation, we have
\begin{equation}\label{20170404:eq1}
\tr(\Omega^\prime A)\Omega^{\prime\prime}=A\,.
\end{equation}
As an immediate consequence of \eqref{Omega} we have:
\begin{lemma}\label{20170322:lem1}
Let $U$ and $V$ be vector spaces,
and let $A,B\in U\to V$ be linear maps.
We have
\begin{equation}\label{eq:permut}
\Omega_V(A\otimes B)=(B\otimes A)\Omega_U
\end{equation}
\end{lemma}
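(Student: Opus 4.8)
The plan is to prove the identity by evaluating both sides on decomposable tensors, since $U\otimes U$ is spanned by elements of the form $u_1\otimes u_2$ with $u_1,u_2\in U$, and both $\Omega_V(A\otimes B)$ and $(B\otimes A)\Omega_U$ are $\mb F$-linear maps $U\otimes U\to V\otimes V$. Thus it suffices to check that these two maps agree on each such $u_1\otimes u_2$, and the whole argument is purely formal.

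First I would unwind the left-hand side: by definition of the tensor product of linear maps, $(A\otimes B)(u_1\otimes u_2)=Au_1\otimes Bu_2\in V\otimes V$, and then applying the permutation map $\Omega_V$ of \eqref{Omega} gives $\Omega_V(Au_1\otimes Bu_2)=Bu_2\otimes Au_1$. Next I would unwind the right-hand side: applying $\Omega_U$ first gives $\Omega_U(u_1\otimes u_2)=u_2\otimes u_1\in U\otimes U$, and then $(B\otimes A)(u_2\otimes u_1)=Bu_2\otimes Au_1$. The two outputs coincide, so the maps are equal. (Alternatively one could expand both $\Omega_U$ and $\Omega_V$ in the elementary-matrix form $\sum_{i,j}E_{ij}\otimes E_{ji}$ relative to chosen bases of $U$ and $V$ and multiply out, but the evaluation on decomposable tensors is cleaner and manifestly basis-free.)

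I do not anticipate any genuine obstacle: the only point that requires a little care is the bookkeeping of which permutation map appears where. Since $A\otimes B$ is a map between the distinct spaces $U\otimes U$ and $V\otimes V$ rather than an endomorphism, the swap on the source is $\Omega_U$ while the swap on the target is $\Omega_V$, and it is exactly this distinction that forces the factors to appear in the opposite order, $B\otimes A$, on the right-hand side.
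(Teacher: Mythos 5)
Your proof is correct and matches the paper's argument, which simply declares the identity an obvious consequence of the definition \eqref{Omega}; your evaluation on decomposable tensors $u_1\otimes u_2$ is exactly the computation that makes it obvious. Nothing further is needed.
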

\begin{proof}
It is an obvious consequence of \eqref{Omega}.
\end{proof}

Let $\langle\cdot\,|\,\cdot\rangle$ be a non-degenerate symmetric or skewsymmetric
bilinear form on $V$:
\begin{equation}\label{eq:epsilon}
\langle u|v\rangle=\epsilon\langle v|u\rangle
\,\,,\,\,\,\,u,v\in V
\,\,,\,\,\text{ where }\epsilon\in\{\pm1\}
\,.
\end{equation}
Let $\{v_k\}_{k=1}^N$ be a basis of $V$ and let $\{v^k\}_{k=1}^N$ be the dual basis 
with respect to $\langle\cdot\,|\,\cdot\rangle$:
$$
\langle v^k|v_h\rangle=\epsilon\langle v_k|v^h\rangle=\delta_{h,k}
\,\,\text{ for all }h,k=1,\dots,N
\,.
$$
By the symmetry of the inner product, we have
\begin{equation}\label{eq:epsbasis}
\sum_{k=1}^Nv^k\otimes v_k=\epsilon\sum_{k=1}^Nv_k\otimes v^k
\,.
\end{equation}
Recall also that we have the following completeness relations:
\begin{equation}\label{eq:completeness}
\sum_{k=1}^N \langle v^k|v\rangle v_k
=
\sum_{k=1}^N \langle v|v_k\rangle v^k
=v
\,\,\text{ for all } v\in V
\,.
\end{equation}

For $A\in\End V$ we denote by $A^\dagger$ its adjoint with respect to $\langle\cdot\,|\,\cdot\rangle$:
$$
\langle u|A^\dagger(v)\rangle
=
\langle A(u)|v\rangle
\,\,\text{ for all }\,\, u,v\in V
\,.
$$
It immediately follows from the completeness relation \eqref{eq:completeness}
and the definition of adjoint that, 
for every $A\in\End V$, we have
\begin{equation}\label{eq:abasis}
\sum_{k=1}^NA(v^k)\otimes v_k=\sum_{k=1}^Nv^k\otimes A^\dagger(v_k)
\,.
\end{equation}

We shall denote by $\Omega_V^\dagger$ (or simply $\Omega^\dagger$)
the element of $\End V\otimes\End V$ obtained taking the adjoint on the first factor of $\Omega$:
\begin{equation}\label{Omega-tau}
\Omega^\dagger
=
(\Omega^\prime)^\dagger\otimes\Omega^{\prime\prime}
\,\in\End V\otimes\End V
\,.
\end{equation}
The following Lemma gives an explicit formula for the action of $\Omega^\dagger$ on $V\otimes V$:
\begin{lemma}\label{20170322:lem2}
For every $v_1,v_2\in V$, we have
\begin{equation}\label{eq:omega+}
\Omega^\dagger(v_1\otimes v_2)=\langle v_1|v_2\rangle\sum_{k=1}^N v^k\otimes v_k
\,.
\end{equation}
\end{lemma}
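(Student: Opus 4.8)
The plan is to unwind both sides of \eqref{eq:omega+} against an arbitrary dual basis and use the completeness relations already recorded. First I would expand $\Omega = \sum_{i,j=1}^N E_{ij}\otimes E_{ji}$, but to make the bilinear form $\langle\cdot\,|\,\cdot\rangle$ enter cleanly I would instead choose to write $\Omega$ in terms of the pair of dual bases $\{v_k\},\{v^k\}$: since $\Omega$ is independent of the chosen basis of $V$, we have $\Omega = \sum_{k=1}^N (v^k \otimes \cdot\,)\,\langle v_k | \cdot\,\rangle$ in the sense that $\Omega(a\otimes b) = \sum_k \langle v_k | a\rangle\, v^k \otimes \langle \cdot\rangle$… more precisely, the permutation map satisfies $\Omega(v_1\otimes v_2) = v_2 \otimes v_1$, and by the completeness relation \eqref{eq:completeness} we may write $v_2 = \sum_k \langle v^k | v_2\rangle v_k$ and $v_1 = \sum_k \langle v_1 | v_k\rangle v^k$, giving a concrete decomposition of $\Omega$ as a sum of rank-one operators expressed through the form.

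Next I would compute $\Omega^\dagger(v_1\otimes v_2)$ directly from its definition \eqref{Omega-tau}, namely that $\Omega^\dagger$ is obtained by applying $\dagger$ to the first tensor factor of $\Omega$. Writing $\Omega = \sum \Omega'\otimes\Omega''$ with the rank-one decomposition above, $\Omega^\dagger(v_1\otimes v_2) = \sum (\Omega')^\dagger(v_1)\otimes \Omega''(v_2)$. The point is that the first factor of $\Omega$, acting on $v_1$, produces (after adjoining) an expression of the form $\langle v_1 | v_2\rangle$ times a fixed element, because the "second factor evaluated at $v_2$" couples to the "first factor adjoint-applied to $v_1$" precisely through the form. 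Concretely, using that $\Omega(a\otimes b)=b\otimes a$ means the second factor picks out $a$ via a pairing and the first factor outputs $b$: after taking $\dagger$ on the first factor and using $\langle A(u)|w\rangle = \langle u | A^\dagger(w)\rangle$, the two free slots $v_1, v_2$ get contracted into the scalar $\langle v_1 | v_2\rangle$, with the leftover tensor being $\sum_k v^k \otimes v_k$.

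The cleanest bookkeeping route: apply $\langle w_1 | \cdot\rangle \otimes \langle w_2 | \cdot\rangle$ to both sides of \eqref{eq:omega+} for arbitrary $w_1, w_2 \in V$ and check equality of scalars. The right-hand side pairs to $\langle v_1 | v_2\rangle \sum_k \langle w_1 | v^k\rangle\langle w_2 | v_k\rangle = \langle v_1|v_2\rangle \langle w_2 | w_1\rangle$ by \eqref{eq:completeness} (reading $\sum_k \langle w_2|v_k\rangle v^k = w_2$ then pairing with $w_1$, up to the symmetry factor $\epsilon$ which I must track carefully via \eqref{eq:epsilon}). The left-hand side pairs to $\langle w_1 | (\Omega')^\dagger(v_1)\rangle \langle w_2 | \Omega''(v_2)\rangle$ summed up; using $\langle w_1 | (\Omega')^\dagger(v_1)\rangle = \langle \Omega'(w_1) | v_1\rangle$ this becomes $\sum \langle \Omega'(w_1)|v_1\rangle\langle w_2|\Omega''(v_2)\rangle$, and recognizing $\sum \Omega'(w_1)\otimes \Omega''(v_2) = \Omega(w_1\otimes v_2) = v_2\otimes w_1$ via \eqref{Omega}, this collapses to $\langle v_2|v_1\rangle\langle w_2|w_1\rangle$. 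Equality then follows from the symmetry $\langle v_2|v_1\rangle = \epsilon\langle v_1|v_2\rangle$ and $\langle w_2|w_1\rangle = \epsilon \langle w_1|w_2\rangle$, the two factors of $\epsilon$ cancelling (or, depending on which convention one pairs with, appearing consistently on both sides).

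The main obstacle I anticipate is purely a matter of sign/symmetry bookkeeping: with a skewsymmetric form ($\epsilon = -1$) one must be scrupulous about which argument of $\langle\cdot\,|\,\cdot\rangle$ the dual basis sits in, since $\langle v^k|v_h\rangle = \delta_{h,k}$ but $\langle v_k|v^h\rangle = \epsilon\delta_{h,k}$, and likewise \eqref{eq:epsbasis} introduces a factor $\epsilon$ when swapping $v^k$ and $v_k$. The safest implementation is to fix once and for all the pairing convention used to test the identity, expand everything through \eqref{eq:completeness} and \eqref{eq:abasis}, and verify that every stray $\epsilon$ either cancels in pairs or is absorbed into the statement as written. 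No deep idea is needed beyond the completeness relations; it is the kind of computation where the only real risk is an off-by-$\epsilon$ error.
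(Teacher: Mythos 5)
Your proposal follows essentially the same route as the paper: test \eqref{eq:omega+} against an arbitrary $w_1\otimes w_2$, move the adjoint across the form via $\langle w_1|(\Omega')^\dagger v_1\rangle=\langle \Omega'(w_1)|v_1\rangle$, use $\Omega(w_1\otimes v_2)=v_2\otimes w_1$ together with the completeness relation \eqref{eq:completeness}, and let the two factors of $\epsilon$ cancel. The only wrinkle is your intermediate claim that the right-hand side pairs to $\langle v_1|v_2\rangle\langle w_2|w_1\rangle$; by \eqref{eq:completeness} it is in fact $\langle v_1|v_2\rangle\langle w_1|w_2\rangle$, which is precisely what matches the left-hand side $\langle v_2|v_1\rangle\langle w_2|w_1\rangle=\epsilon^2\langle v_1|v_2\rangle\langle w_1|w_2\rangle$, confirming the cancellation you anticipate.
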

\begin{proof}
It suffices to take inner product of both sides of \eqref{eq:omega+} with $v_3\otimes v_4$.
For the LHS we have
$$
\begin{array}{l}
\displaystyle{
\vphantom{\Big(}
\langle v_3\otimes v_4|\Omega^\dagger(v_1\otimes v_2)\rangle
=
\langle v_3|(\Omega^\prime)^\dagger(v_1)\rangle\langle v_4|\Omega^{\prime\prime}(v_2)\rangle
=
\langle \Omega^\prime(v_3)|v_1\rangle\langle v_4|\Omega^{\prime\prime}(v_2)\rangle
} \\
\displaystyle{
\vphantom{\Big(}
=
\langle v_2|v_1\rangle\langle v_4|v_3\rangle
=
\langle v_1|v_2\rangle\langle v_3|v_4\rangle
\,.}
\end{array}
$$
For the third equality we used \eqref{eq:permut}
and for the last equality we used \eqref{eq:epsilon} (and the fact that $\epsilon^2=1$).
Doing the same computation with the RHS of \eqref{eq:omega+} we get,
by \eqref{eq:completeness},
$$
\langle v_1|v_2\rangle
\sum_{k=1}^N\langle v_3\otimes v_4|v^k\otimes v_k\rangle
=
\langle v_1|v_2\rangle
\sum_{k=1}^N\langle v_3|v^k\rangle\langle v_4|v_k\rangle
=
\langle v_1|v_2\rangle
\langle v_3|v_4\rangle
\,.
$$
\end{proof}
\begin{lemma}\label{20170322:lem3}
For every $A\in\End V$, we have
\begin{equation}\label{eq:permut-tau}
(A\otimes\id)\Omega^\dagger=(\id\otimes A^\dagger)\Omega^\dagger
\,\,,\,\,\,\,
\Omega^\dagger(A\otimes\id)=\Omega^\dagger(\id\otimes A^\dagger)
\,.
\end{equation}
\end{lemma}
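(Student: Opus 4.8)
The plan is to verify both identities by pairing with arbitrary elements of $V\otimes V$ and using the explicit formula for $\Omega^\dagger$ from Lemma \ref{20170322:lem2}, together with the defining property of the adjoint $\dagger$. Since the two asserted identities are adjoint/transpose versions of one another (conjugating by the flip $\Omega$ and using that $\dagger$ is an anti-involution), it suffices to prove the first one, $(A\otimes\id)\Omega^\dagger=(\id\otimes A^\dagger)\Omega^\dagger$, and deduce the second formally; alternatively one just runs the same short computation twice. I will spell out the first and indicate the (symmetric) argument for the second.

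First I would apply both sides of $(A\otimes\id)\Omega^\dagger=(\id\otimes A^\dagger)\Omega^\dagger$ to a generic vector $v_1\otimes v_2\in V\otimes V$. Using \eqref{eq:omega+}, the left-hand side becomes
$$
(A\otimes\id)\Omega^\dagger(v_1\otimes v_2)
=\langle v_1|v_2\rangle\sum_{k=1}^N A(v^k)\otimes v_k
\,,
$$
while the right-hand side becomes
$$
(\id\otimes A^\dagger)\Omega^\dagger(v_1\otimes v_2)
=\langle v_1|v_2\rangle\sum_{k=1}^N v^k\otimes A^\dagger(v_k)
\,.
$$
These two expressions are equal by the identity \eqref{eq:abasis}, $\sum_k A(v^k)\otimes v_k=\sum_k v^k\otimes A^\dagger(v_k)$, which is exactly the basis-independent restatement of the definition of the adjoint. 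This proves the first identity.

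For the second identity, $\Omega^\dagger(A\otimes\id)=\Omega^\dagger(\id\otimes A^\dagger)$, I would pair with a generic $v_1\otimes v_2$ on the right: since $\Omega^\dagger(u\otimes v)=\langle u|v\rangle\sum_k v^k\otimes v_k$, the left side sends $v_1\otimes v_2\mapsto \langle A(v_1)|v_2\rangle\sum_k v^k\otimes v_k$ and the right side sends it to $\langle v_1|A^\dagger(v_2)\rangle\sum_k v^k\otimes v_k$; these agree by the very definition of $A^\dagger$. Alternatively, take the first identity, apply $\Omega$ on both sides, and use Lemma \ref{20170322:lem1} (in the form \eqref{eq:permut}) together with the fact that $\Omega\,\Omega^\dagger$ can be rewritten via \eqref{Omega-tau} and the anti-involutivity of $\dagger$ — but the direct pairing computation is cleaner. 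There is no real obstacle here: the only mild point to be careful about is the consistent use of \eqref{eq:omega+} (which already absorbs the symmetric/skewsymmetric sign $\epsilon$), so that no stray factors of $\epsilon$ appear; everything else is a one-line application of the completeness relations \eqref{eq:completeness} and the definition of the adjoint.
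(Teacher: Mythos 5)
Your proposal is correct and follows essentially the same route as the paper: evaluate both sides on a generic $v_1\otimes v_2$, use the explicit action of $\Omega^\dagger$ from Lemma \ref{20170322:lem2}, and conclude via the definition of the adjoint together with \eqref{eq:abasis}. The only cosmetic difference is that the paper packages the two identities into the single combined statement $(A\otimes\id)\Omega^\dagger(B\otimes\id)=(\id\otimes A^\dagger)\Omega^\dagger(\id\otimes B^\dagger)$, whereas you verify them separately, which amounts to the same computation.
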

\begin{proof}
We can combine both identities in \eqref{eq:permut-tau} in
\begin{equation}\label{eq:permut-tau2}
(A\otimes\id)\Omega^\dagger(B\otimes\id)=(\id\otimes A^\dagger)\Omega^\dagger(\id\otimes B^\dagger)
\,\,\text{ for }\,A,B\in\End V\,.
\end{equation}
If we apply the LHS of \eqref{eq:permut-tau2} to $v_1\otimes v_2$ we get
$$
\langle B(v_1)|v_2\rangle\sum_{k=1}^NA(v^k)\otimes v_k
\,,
$$
while if we apply the RHS of \eqref{eq:permut-tau2} to $v_1\otimes v_2$ we get
$$
\langle v_1|B^\dagger(v_2)\rangle\sum_{k=1}^Nv^k\otimes A^\dagger(v_k)
\,.
$$
Hence, \eqref{eq:permut-tau2} follows from the definition of adjoint operator 
and from equation \eqref{eq:abasis}.
\end{proof}

Fix an endomorphism $T\in\End V$ and let $T=IJ$,
with $I:\,\im T\hookrightarrow V$ and $J:\,V\twoheadrightarrow\im T$,
be its canonical decomposition given by \eqref{IJ}.
We shall assume that $T$ is either selfadjoint or skewadjoint 
with respect to the inner product $\langle\cdot\,|\,\cdot\rangle$:
\begin{equation}\label{eq:delta}
T^\dagger=\delta T
\,\,\text{ where }\,\,\delta\in\{\pm1\}
\,.
\end{equation}
\begin{lemma}\label{20170322:lem4}
We have a well defined non-degenerate bilinear form $\langle\cdot\,|\,\cdot\rangle^T$ on $\im T$,
depending on the endomorphism $T$,
given by the following formula
\begin{equation}\label{eq:productU}
\langle u_1|u_2\rangle^T
=
\langle J^{-1}(u_1)|I(u_2)\rangle
\,\,\text{ for all }\, u_1,u_2\in\im T
\,.
\end{equation}
The inner product $\langle\cdot\,|\,\cdot\rangle^T$
has the following parity, depending on the parity \eqref{eq:epsilon} of $\langle\cdot\,|\,\cdot\rangle$
and the self/skew-adjointness \eqref{eq:delta} of $T$:
\begin{equation}\label{eq:epsdelta}
\langle u_1|u_2\rangle^T
=
\epsilon\delta\,
\langle u_2|u_1\rangle^T
\,\,\text{ for all }\, u_1,u_2\in\im T
\,.
\end{equation}
\end{lemma}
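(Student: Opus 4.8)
The plan is to verify, in order, that the formula \eqref{eq:productU} is well defined, that it is non-degenerate, and that it has the claimed parity \eqref{eq:epsdelta}. For well-definedness the only point to check is that $J^{-1}(u_1)$ makes sense: but $J:V\twoheadrightarrow\im T$ is surjective, so $J^{-1}(u_1)$ is only defined up to $\ker J=\ker T$. The issue disappears once we observe that $I(u_2)\in\im T$ and, under the standing assumption \eqref{20170317:eq2} (equivalently \eqref{eq:delta} combined with the eigenspace bookkeeping), $\ker T$ is orthogonal to $\im T$ with respect to $\langle\cdot\,|\,\cdot\rangle$; more directly, if $v\in\ker T$ then $\langle v|I(u_2)\rangle=\langle v|T(w)\rangle=\pm\langle T(v)|w\rangle=0$ for any preimage $w$ of $u_2$, using \eqref{eq:delta}. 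Hence $\langle J^{-1}(u_1)|I(u_2)\rangle$ is independent of the choice of preimage, and \eqref{eq:productU} is a well-defined bilinear form on $\im T$.

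Next I would record the parity. Pick preimages: write $u_1=J(w_1)=T(w_1)$ and $u_2=J(w_2)=T(w_2)$ with $w_1,w_2\in V$. Then by definition $\langle u_1|u_2\rangle^T=\langle w_1|I(u_2)\rangle=\langle w_1|T(w_2)\rangle$. Now apply \eqref{eq:delta}: $\langle w_1|T(w_2)\rangle=\delta\langle w_1|T^\dagger(w_2)\rangle=\delta\langle T(w_1)|w_2\rangle=\delta\langle u_1|w_2\rangle$, and then \eqref{eq:epsilon} gives $\delta\langle u_1|w_2\rangle=\epsilon\delta\langle w_2|u_1\rangle=\epsilon\delta\langle w_2|T(w_1)\rangle=\epsilon\delta\langle u_2|u_1\rangle^T$, which is \eqref{eq:epsdelta}.

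Finally, for non-degeneracy: suppose $u_1\in\im T$ satisfies $\langle u_1|u_2\rangle^T=0$ for all $u_2\in\im T$, i.e. $\langle w_1|T(w_2)\rangle=0$ for all $w_2\in V$, where $u_1=T(w_1)$. Using \eqref{eq:delta} as above this says $\langle T(w_1)|w_2\rangle=\delta\langle w_1|T^\dagger(w_2)\rangle=\langle w_1|T(w_2)\rangle\cdot(\text{sign})=0$ for all $w_2$, hence $\langle u_1|w_2\rangle=0$ for all $w_2\in V$, so $u_1=0$ by non-degeneracy of $\langle\cdot\,|\,\cdot\rangle$. Thus $\langle\cdot\,|\,\cdot\rangle^T$ is non-degenerate on $\im T$.

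I expect the main (indeed the only) subtle point to be the well-definedness step: one must be careful that the ambiguity in $J^{-1}$ lives in $\ker T$ and that $\ker T\perp\im T$, which is exactly where the self/skew-adjointness hypothesis \eqref{eq:delta} (together with \eqref{20170317:eq2}) is used. Once that is in place the parity and non-degeneracy are short manipulations with \eqref{eq:epsilon} and \eqref{eq:delta}, essentially the same computation run twice; I would present them compactly rather than belaboring the index notation.
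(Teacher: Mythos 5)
Your proposal is correct and follows essentially the same route as the paper: well-definedness via $\ker T\perp\im T$ (a consequence of $T^\dagger=\delta T$), the parity by the same chain of manipulations with \eqref{eq:epsilon} and \eqref{eq:delta}, and non-degeneracy from that of $\langle\cdot\,|\,\cdot\rangle$. The only cosmetic difference is that the paper tests non-degeneracy in the second argument (using injectivity of $I$), whereas you test it in the first argument by moving $T$ across with \eqref{eq:delta}; both are valid, and your aside attributing the orthogonality to \eqref{20170317:eq2} is unnecessary since, as you then show, \eqref{eq:delta} alone suffices.
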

\begin{proof}
First, we need to show that the RHS of formula \eqref{eq:productU}
is well defined, i.e., for $u_1,u_2\in\im T$,
$\langle\tilde{u_1}|u_2\rangle$ does not depend on the choice 
of the representative $\tilde{u_1}\in J^{-1}(u_1)$.
This is the same as saying that
$\ker J\perp \im I$ with respect to $\langle\cdot\,|\,\cdot\rangle$.
But $\ker J=\ker T$ and $\im I=\im T$ and,
for a self or skew adjoint operator $T$, $\ker T$ and $\im T$ are automatically orthogonal.

Next, we prove that $\langle\cdot\,|\,\cdot\rangle^T$ is non-degenerate.
Indeed, if the RHS of \eqref{eq:productU} vanishes for all $u_1\in\im T$,
then $\langle v|I(u_2)\rangle=0$ for all $v\in V$,
which implies $u_2=0$, since $\langle\cdot\,|\,\cdot\rangle$ is non-degenerate
and $I$ is injective.

Finally, we prove the symmetry property \eqref{eq:epsdelta}.
Let $u_1,u_2\in\im T$ and let $\tilde{u_1},\tilde{u_2}\in V$
be their preimages via $J$: $J(\tilde{u_1})=u_1$ and $J(\tilde{u_2})=u_2$.
We have
$$
\begin{array}{l}
\displaystyle{
\vphantom{\Big(}
\langle u_1|u_2\rangle^T
=
\langle J^{-1}(u_1)|I(u_2)\rangle
=
\langle \tilde{u_1}|u_2\rangle
=
\epsilon\langle u_2|\tilde{u_1}\rangle
=
\epsilon\langle J(\tilde{u_2})|\tilde{u_1}\rangle
=
\epsilon\delta\langle \tilde{u_2}|J(\tilde{u_1})\rangle
} \\
\displaystyle{
\vphantom{\Big(}
=
\epsilon\delta\langle \tilde{u_2}|u_1\rangle
=
\epsilon\delta\langle u_2|u_1\rangle^T
\,.}
\end{array}
$$
\end{proof}
\begin{lemma}\label{20170411:lem2}
Let $T\in\End V$ satisfy condition \eqref{eq:delta},
and let $\langle\cdot\,|\,\cdot\rangle^T$ be the inner product on $\im T$ defined by \eqref{eq:productU}.
For $A\in\End V$, we have
$$
(JAI)^{\dagger_T}=\delta JA^\dagger I\,,
$$
where $\dagger$ denotes the adjoint in $\End V$ w.r.t. $\langle\cdot\,|\,\cdot\rangle$,
and $\dagger_T$ denotes the adjoint in $\End(\im T)$ w.r.t. $\langle\cdot\,|\,\cdot\rangle^T$.
\end{lemma}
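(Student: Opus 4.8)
The statement to prove is that for $A \in \End V$ and $T$ satisfying \eqref{eq:delta}, one has $(JAI)^{\dagger_T} = \delta\, J A^\dagger I$ as endomorphisms of $\im T$. The plan is to verify this directly from the defining property of the adjoint $\dagger_T$ with respect to the bilinear form $\langle\cdot\,|\,\cdot\rangle^T$ on $\im T$ constructed in Lemma \ref{20170322:lem4}. That is, I will show that for all $u_1, u_2 \in \im T$,
$$
\langle u_1 \,|\, (JAI)(u_2)\rangle^T = \langle (\delta\, J A^\dagger I)(u_1)\,|\, u_2\rangle^T \,,
$$
and then invoke non-degeneracy of $\langle\cdot\,|\,\cdot\rangle^T$ together with the uniqueness of the adjoint to conclude.

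First I would unwind the left-hand side. By \eqref{eq:productU}, $\langle u_1 \,|\, (JAI)(u_2)\rangle^T = \langle J^{-1}(u_1) \,|\, I(JAI)(u_2)\rangle$. Now $IJ = T$, so $IJAI = TAI$; thus the expression equals $\langle J^{-1}(u_1)\,|\, T A I(u_2)\rangle$. Pick a representative $\tilde u_1 \in J^{-1}(u_1) \subset V$; since $\ker J = \ker T$ is orthogonal to $\im T \ni TAI(u_2)$ (the argument used in Lemma \ref{20170322:lem4}), the value $\langle \tilde u_1 \,|\, TAI(u_2)\rangle$ is independent of the choice of $\tilde u_1$, and equals $\langle \tilde u_1 \,|\, TAI(u_2)\rangle$. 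Using $T^\dagger = \delta T$ and then the adjoint property of $\dagger$, I move $T$ across: $\langle \tilde u_1 \,|\, TAI(u_2)\rangle = \delta \langle T(\tilde u_1)\,|\, AI(u_2)\rangle = \delta \langle A^\dagger T(\tilde u_1)\,|\, I(u_2)\rangle$. But $T(\tilde u_1) = IJ(\tilde u_1) = I(u_1)$, so this is $\delta \langle A^\dagger I(u_1)\,|\, I(u_2)\rangle$.

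Next I compute the right-hand side in the same vein. Write $\delta J A^\dagger I$ applied to $u_1$, call it $u_1' := \delta (JA^\dagger I)(u_1) \in \im T$; I need a representative of $J^{-1}(u_1')$. A natural choice is $\tilde u_1' := \delta (A^\dagger I)(u_1) \in V$, since $J\tilde u_1' = \delta (JA^\dagger I)(u_1) = u_1'$. Then $\langle u_1' \,|\, u_2\rangle^T = \langle \tilde u_1'\,|\, I(u_2)\rangle = \delta \langle A^\dagger I(u_1)\,|\, I(u_2)\rangle$, which matches the left-hand side computed above. Finally, non-degeneracy of $\langle\cdot\,|\,\cdot\rangle^T$ (Lemma \ref{20170322:lem4}) forces $(JAI)^{\dagger_T} = \delta J A^\dagger I$.

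**Main obstacle.** There is no deep difficulty here; the one point requiring care is the well-definedness of every step that involves choosing a representative $\tilde u_1 \in J^{-1}(u_1)$ — i.e. that $\langle\cdot\,|\,\cdot\rangle$ pairs $\ker J$ trivially against $\im I$. This is exactly the orthogonality $\ker T \perp \im T$ for a self- or skew-adjoint $T$, already established inside the proof of Lemma \ref{20170322:lem4}, so I would simply cite it rather than re-prove it. A secondary bookkeeping point is to make sure the single factor of $\delta$ (and not $\delta^2 = 1$) appears: it enters exactly once, when commuting $T$ past the pairing, and the representatives on the right-hand side are chosen with a matching $\delta$ so the two sides agree on the nose.
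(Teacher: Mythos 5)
Your proof is correct and proceeds essentially as the paper's own: a direct verification against the defining formula \eqref{eq:productU}, using $T^\dagger=\delta T$, the adjoint property of $\dagger$, and the non-degeneracy established in Lemma \ref{20170322:lem4}. One small point to make explicit: your displayed identity $\langle u_1|(JAI)u_2\rangle^T=\langle(\delta JA^\dagger I)u_1|u_2\rangle^T$ exhibits $JAI$ as the $\dagger_T$-adjoint of $\delta JA^\dagger I$, and to flip this into the stated orientation $(JAI)^{\dagger_T}=\delta JA^\dagger I$ you need, besides non-degeneracy, the $\epsilon\delta$-symmetry \eqref{eq:epsdelta} of $\langle\cdot\,|\,\cdot\rangle^T$ (also part of Lemma \ref{20170322:lem4}, and used by the paper in its final step), so cite it rather than only "uniqueness of the adjoint".
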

\begin{proof}
By the definition \eqref{eq:productU} of $\langle\cdot\,|\,\cdot\rangle^T$,
the assumption \eqref{eq:epsilon} and condition \eqref{eq:epsdelta}, 
we have
$$
\begin{array}{l}
\displaystyle{
\vphantom{\Big(}
\langle u_1|(JAI)^{\dagger_T}u_2\rangle^T
=
\langle JAI u_1|u_2\rangle^T
=
\langle AI u_1| Iu_2\rangle
=
\langle I u_1| A^\dagger Iu_2\rangle
} \\
\displaystyle{
\vphantom{\Big(}
=
\epsilon\langle A^\dagger Iu_2 | I u_1 \rangle
=
\epsilon\langle JA^\dagger Iu_2 | u_1 \rangle^T
=
\epsilon^2\delta\langle u_1 | JA^\dagger Iu_2 \rangle^T
\,.}
\end{array}
$$
\end{proof}
\begin{lemma}\label{20170322:lem5}
Let  $\{t_h\}_{h=1}^M$ and $\{t^h\}_{h=1}^M$ be bases of $\im T$
dual with respect to $\langle\cdot\,|\,\cdot\rangle^T$.
We have:
\begin{equation}\label{eq:complete2}
\begin{array}{l}
\displaystyle{
\vphantom{\big(}
\sum_{h=1}^Mt^h\otimes I(t_h)
=
\sum_{k=1}^NJ(v^k)\otimes v_k
\,,} \\
\displaystyle{
\vphantom{\big(}
\sum_{h=1}^MI(t^h)\otimes t_h
=
\delta\sum_{k=1}^Nv^k\otimes J(v_k)
\,.}
\end{array}
\end{equation}
\end{lemma}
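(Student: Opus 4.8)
The plan is to prove each of the two identities in \eqref{eq:complete2} by pairing both sides against an arbitrary element $v_3 \otimes v_4 \in V \otimes V$ using the inner product $\langle \cdot\,|\,\cdot\rangle \otimes \langle\cdot\,|\,\cdot\rangle$ on $V \otimes V$, and checking equality of the resulting scalars; this reduces everything to the completeness relations \eqref{eq:completeness} for $\langle\cdot\,|\,\cdot\rangle$ on $V$, the completeness relation for $\langle\cdot\,|\,\cdot\rangle^T$ on $\im T$, and the compatibility relation \eqref{eq:productU} between the two forms. For the first identity, I would compute $\langle v_3 \otimes v_4 \,|\, \sum_h t^h \otimes I(t_h)\rangle = \sum_h \langle v_3 | t^h\rangle \langle v_4 | I(t_h)\rangle$. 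The key move is to observe that $\langle v_4 | I(t_h)\rangle = \langle v_4 | t_h\rangle^T$ — more precisely, writing $t_h = J(\tilde t_h)$ one has $\langle v_4|I(t_h)\rangle = \langle J(\tilde v_4)|I(t_h)\rangle^T$ when $v_4$ is decomposed along $\ker J \oplus (\text{complement})$, so only the $\im T$-part of $v_4$ survives; then the completeness relation for $\langle\cdot\,|\,\cdot\rangle^T$ collapses $\sum_h \langle\,\cdot\,|t^h\rangle\langle\,\cdot\,|t_h\rangle^T$-type sums appropriately.

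More cleanly, I would instead pair against $J(v_3) \otimes v_4$-type test vectors or, better, directly exploit that $\im T \perp \ker T = \ker J$ (established in the proof of Lemma \ref{20170322:lem4}): writing $v = v' + v''$ with $v' \in \ker J$ and $v''$ in a complement, we get $\langle v | I(u)\rangle = \langle J^{-1}(J v'') | I(u)\rangle = \langle v'' \,|\, u\rangle^T$ after identifying $v''$ appropriately, so that both sides of the first identity in \eqref{eq:complete2}, paired with $v_3 \otimes v_4$, reduce to $\langle v_3 | J^{\text{-part}}\rangle$-style expressions; applying \eqref{eq:completeness} for $V$ on the RHS $\sum_k J(v^k) \otimes v_k$ and the completeness relation for $\im T$ on the LHS, both collapse to the same bilinear expression in $v_3, v_4$. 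The second identity is then obtained from the first by applying Lemma \ref{20170322:lem2} or Lemma \ref{20170322:lem3} to move $\dagger$ across the tensor factors, or directly by the same pairing argument, where the factor $\delta$ enters precisely through the symmetry relation \eqref{eq:epsdelta}: swapping the roles of the two tensor slots in the first identity introduces $\epsilon$ from \eqref{eq:epsbasis} applied on $V$ and $\epsilon\delta$ from the analogous relation for $\langle\cdot\,|\,\cdot\rangle^T$ on $\im T$, and the two $\epsilon$'s cancel, leaving $\delta$.

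The main obstacle I anticipate is bookkeeping rather than conceptual: one must be careful that $J^{-1}$ is only well defined modulo $\ker J$ and that $\langle\cdot\,|\,\cdot\rangle$ restricted to a complement of $\ker T$ is exactly transported to $\langle\cdot\,|\,\cdot\rangle^T$ under $J$, which is where the orthogonality $\ker T \perp \im T$ (valid for self- or skew-adjoint $T$, \eqref{eq:delta}) is essential — without it the form $\langle\cdot\,|\,\cdot\rangle^T$ is not even well defined. So I would open the proof by recalling that orthogonality, then carry out the test-vector pairing for the first identity, and finally deduce the second either by symmetry or by the same computation, tracking the sign $\delta$ through \eqref{eq:epsdelta}.
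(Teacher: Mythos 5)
Your argument is correct, and for the second identity it coincides with the paper's: permute the two slots in the first identity and use $\sum_k v_k\otimes v^k=\epsilon\sum_k v^k\otimes v_k$ together with its analogue $\sum_h t_h\otimes t^h=\epsilon\delta\sum_h t^h\otimes t_h$ for $\langle\cdot\,|\,\cdot\rangle^T$, whose parity is $\epsilon\delta$ by \eqref{eq:epsdelta}; the two $\epsilon$'s cancel, leaving $\delta$ (your alternative of invoking Lemma \ref{20170322:lem2} or Lemma \ref{20170322:lem3} would not apply directly, since those concern $\Omega^\dagger$, but the swap-and-parity route you describe is the right one). For the first identity your route is genuinely, if mildly, different from the paper's. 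You test against all $v_3\otimes v_4\in V\otimes V$ with the ambient form; the correct form of your ``key move'' is $\langle v_4|I(t_h)\rangle=\langle J(v_4)|t_h\rangle^T$ --- as written, $\langle v_4|t_h\rangle^T$ is undefined for $v_4\notin\im T$, and your ``more precisely'' version garbles the slots --- and it holds because the ambiguity in $J^{-1}$ lies in $\Ker J=\Ker T\perp\im T$, exactly the orthogonality you flag. With this, dual-basis completeness for $(\im T,\langle\cdot\,|\,\cdot\rangle^T)$ collapses the left-hand pairing to $\langle v_3|IJ(v_4)\rangle=\langle v_3|T v_4\rangle$, while \eqref{eq:completeness} collapses the right-hand pairing to the same value; non-degeneracy of $\langle\cdot\,|\,\cdot\rangle$ and injectivity of $I\otimes\id$ then give the identity in $\im T\otimes V$. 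The paper instead first observes that $\sum_k J(v^k)\otimes v_k$ lies in $\im T\otimes\im T$ and pairs both sides against $u_1\otimes u_2\in\im T\otimes\im T$ with $\langle\cdot\,|\,\cdot\rangle^T$, both sides giving $\langle u_1|u_2\rangle^T$. The two computations are dual to each other: yours avoids the preliminary membership check but requires identifying the common value $\langle v_3|Tv_4\rangle$, which you left implicit (a one-line computation); the paper's stays entirely inside $\im T$. Either way the proof goes through, resting on the same fact you correctly isolate at the end: $\Ker T\perp\im T$ for self- or skew-adjoint $T$.
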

\begin{proof}
We first observe that $\sum_{k=1}^NJ(v^k)\otimes v_k\in\im T\otimes\im T$.
Indeed, if we pair the first factor with $v\in V$, we have
$$
\sum_{k=1}^N\langle J(v^k)|v\rangle \otimes v_k
=\delta\sum_{k=1}^N\langle v^k|J(v)\rangle \otimes v_k
=\delta J(v)\in\im T
\,.
$$
Next, 
if we take the inner product $\langle\cdot\,|\,\cdot\rangle^T$
of $\sum_{h=1}^Mt^h\otimes t_h$
with $u_1\otimes u_2\in\im T\otimes\im T$ we have,
by definition of dual bases, 
$$
\sum_{h=1}^M\langle t^h|u_1\rangle^T\langle t_h|u_2\rangle^T=\langle u_1|u_2\rangle^T
\,.
$$
On the other hand, 
if we take the same inner product 
of the RHS of the first equation in \eqref{eq:complete2} 
with $u_1\otimes u_2\in\im T\otimes\im T$, we have,
by \eqref{eq:completeness},
$$
\begin{array}{l}
\displaystyle{
\vphantom{\Big(}
\sum_{k=1}^N\langle J(v^k)|u_1\rangle^T\langle v_k|u_2\rangle^T
=
\sum_{k=1}^N\langle v^k|I(u_1)\rangle\langle J^{-1}(v_k)|I(u_2)\rangle
} \\
\displaystyle{
\vphantom{\Big(}
=
\langle J^{-1}(u_1)|I(u_2)\rangle
=
\langle u_1|u_2\rangle^T
\,.}
\end{array}
$$
This proves the first equation in \eqref{eq:complete2}.
If we permute the two factors in both sides of the first equation in\eqref{eq:complete2},
we get
$$
\sum_{h=1}^M I(t_h)\otimes t^h
=
\sum_{k=1}^Nv_k\otimes J(v^k)
\,.
$$
But $\sum_ht_h\otimes t^h=\epsilon\delta\sum_ht^h\otimes t_h$,
and $\sum_kv_k\otimes v^k=\epsilon\sum_kv^k\otimes v_k$.
The second equation in \eqref{eq:complete2} follows.
\end{proof}
The following result will be essential in Section \ref{sec:4.3}:
\begin{lemma}\label{20170322:lem6}
Consider the operator $\Omega_V^\dagger$
associated to the vector space $V$ and its inner product $\langle\cdot\,|\,\cdot\rangle$,
and the operator $\Omega_{\im T}^\dagger$
associated to the subspace $\im T$ and its inner product $\langle\cdot\,|\,\cdot\rangle^T$.
The following identity holds in $\Hom(V,\im T)\otimes\Hom(\im T,V)$:
\begin{equation}\label{permut-omega-dagger1}
(J\otimes\id_V)\Omega_V^\dagger(\id_V\otimes I)
=
(\id_{\im T}\otimes I)\Omega_{\im T}^\dagger(J\otimes\id_{\im T})
\,,
\end{equation}
and the following identity holds in $\Hom(\im T,V)\otimes\Hom(V,\im T)$:
\begin{equation}\label{permut-omega-dagger2}
(\id_V\otimes J)\Omega_V^\dagger(I\otimes\id_V)
=
(I\otimes\id_{\im T})\Omega_{\im T}^\dagger(\id_{\im T}\otimes J)
\,.
\end{equation}
\end{lemma}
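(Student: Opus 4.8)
The plan is to verify both identities by testing against arbitrary vectors, using the explicit description of $\Omega^\dagger$ from Lemma \ref{20170322:lem2} together with the completeness relations of Lemma \ref{20170322:lem5} that relate the dual bases $\{v_k\},\{v^k\}$ of $V$ to the dual bases $\{t_h\},\{t^h\}$ of $\im T$ (with respect to $\langle\cdot\,|\,\cdot\rangle^T$). Concretely, for \eqref{permut-omega-dagger1} I would apply both sides to a generic element $v\otimes u$ with $v\in V$, $u\in\im T$, and compute.

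For the left-hand side, $(\id_V\otimes I)(v\otimes u)=v\otimes I(u)\in V\otimes V$, so by Lemma \ref{20170322:lem2} we get $\Omega_V^\dagger(v\otimes I(u))=\langle v|I(u)\rangle\sum_k v^k\otimes v_k$, and then applying $J\otimes\id_V$ yields $\langle v|I(u)\rangle\sum_k J(v^k)\otimes v_k$. For the right-hand side, $(J\otimes\id_{\im T})(v\otimes u)=J(v)\otimes u\in\im T\otimes\im T$, so by the analogue of Lemma \ref{20170322:lem2} for $(\im T,\langle\cdot\,|\,\cdot\rangle^T)$ we get $\Omega_{\im T}^\dagger(J(v)\otimes u)=\langle J(v)|u\rangle^T\sum_h t^h\otimes t_h$, and applying $\id_{\im T}\otimes I$ gives $\langle J(v)|u\rangle^T\sum_h t^h\otimes I(t_h)$. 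Now the first equation of \eqref{eq:complete2} says $\sum_h t^h\otimes I(t_h)=\sum_k J(v^k)\otimes v_k$, so it only remains to check that the scalar prefactors agree: $\langle v|I(u)\rangle=\langle J(v)|u\rangle^T$. This is immediate from the definition \eqref{eq:productU} of $\langle\cdot\,|\,\cdot\rangle^T$: taking $J^{-1}(u)=\tilde u$ and using well-definedness, $\langle J(v)|u\rangle^T=\langle J^{-1}(J(v))|I(u)\rangle=\langle v|I(u)\rangle$ up to the same ambiguity in the preimage, which is harmless since $\ker J\perp\im I$. The identity \eqref{permut-omega-dagger2} follows by the same argument with the roles of the two factors swapped, now invoking the second equation of \eqref{eq:complete2} (which carries the extra factor $\delta$); one must track that this $\delta$ is compensated by the $\delta$ appearing when one moves $J$ past the inner product, exactly as in the displayed computation inside the proof of Lemma \ref{20170322:lem5}.

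Alternatively — and perhaps more cleanly for writing — one can avoid choosing bases entirely: write $\Omega_V^\dagger=(\Omega_V')^\dagger\otimes\Omega_V''$ and use the intertwining relations of Lemma \ref{20170322:lem3}, namely $(A\otimes\id)\Omega^\dagger=(\id\otimes A^\dagger)\Omega^\dagger$ and its mirror, to push $J$ and $I$ through $\Omega_V^\dagger$, then recognize the result as $\Omega_{\im T}^\dagger$ via Lemma \ref{20170411:lem2} (which computes $(JAI)^{\dagger_T}=\delta JA^\dagger I$). I expect the main obstacle to be purely bookkeeping: keeping straight on which tensor factor each of $I$, $J$, $\dagger$, $\dagger_T$ acts, and making sure the sign $\delta$ and the parity $\epsilon$ enter in the right places (they must cancel in \eqref{permut-omega-dagger1} and reappear consistently in \eqref{permut-omega-dagger2}), since a single misplaced adjoint would flip a sign. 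There is no conceptual difficulty — everything needed is already assembled in Lemmas \ref{20170322:lem2}, \ref{20170322:lem3}, \ref{20170411:lem2}, and \ref{20170322:lem5} — so the proof should be short, essentially a two-line computation for each identity once the prefactor bookkeeping is done.
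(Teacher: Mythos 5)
Your proposal is correct and follows essentially the same route as the paper's own proof: evaluate both sides of each identity on a decomposable tensor, use the explicit action of $\Omega^\dagger$ from Lemma \ref{20170322:lem2}, and conclude via the completeness relations \eqref{eq:complete2} together with the definition \eqref{eq:productU}, the second identity requiring exactly the relation $\langle u|J(v)\rangle^T=\delta\langle I(u)|v\rangle$ that absorbs the extra $\delta$ you flagged.
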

\begin{proof}
If we apply the LHS of \eqref{permut-omega-dagger1} to $v\otimes u\in V\otimes\im T$,
we get
$$
\langle v|I(u)\rangle \sum_{k=1}^N
J(v^k)\otimes v_k\,,
$$
while we apply the RHS of \eqref{permut-omega-dagger1} to $v\otimes u\in V\otimes\im T$,
we get
$$
\langle J(v)|u\rangle^T
\sum_{h=1}^M t^h\otimes I(t_h)
\,.
$$
Hence, equation \eqref{permut-omega-dagger1} follows by the definition \eqref{eq:productU}
of the inner product $\langle\cdot\,|\,\cdot\rangle^T$ and by the first equation
in  \eqref{eq:complete2}.

Next, let us apply the LHS of \eqref{permut-omega-dagger2} to $u\otimes v\in\im T\otimes V$.
As a result we get
$$
\langle I(u)|v\rangle
\sum_{k=1}^Nv^k\otimes J(v_k)
\,.
$$
On the other hand, 
if we apply the RHS of \eqref{permut-omega-dagger2} to $u\otimes v\in\im T\otimes V$,
we get
$$
\langle u|J(v)\rangle^T
\sum_{h=1}^M I(t^h)\otimes t_h
\,.
$$
Equation \eqref{permut-omega-dagger2} follows by the second equation in \eqref{eq:complete2}
and by the following identity, $\langle u|J(v)\rangle^T=\delta\langle I(u)|v\rangle$,
which is easily checked.
\end{proof}

\subsection{Formula for the $\lambda$-bracket of $A_\epsilon(z)$}\label{sec:4.2}

Consider the pencil of Poisson vertex algebras $\mc V_\epsilon(\mf g,s)$, $\epsilon\in\mb F$,
associated to the Lie algebra $\mf g$ and its element $s\in\mf g$,
with $\lambda$-bracket \eqref{lambda}.
Recall that, 
given a faithful representation $\varphi:\,\mf g\hookrightarrow\End V$ of $\mf g$,
we constructed the differential operator 
\begin{equation}\label{Aepsilon}
A_\epsilon(\partial)
=
\partial\id+\sum_{i\in I}u_iU^i+\epsilon S
\,\in\mc V(\mf g)[\partial]\otimes\End V
\,.
\end{equation}
Recall the definition \eqref{Omega} of $\Omega_V\in\End V\otimes\End V$.
We shall also denote by $\Omega^{\mf g}_V$ 
(or simply $\Omega^{\mf g}$, if confusion may not arise),
the following operator:
$$
\Omega^{\mf g}_V=\sum_{i\in I}U_i\otimes U^i\,\in\End V\otimes\End V
\,.
$$
We shall mainly be interested in the following three cases:
\begin{description}
\item[Case 1]
$\mf g=\mf{gl}_N$ and $V=\mb F^N$ is the defining representation.
In this case $\Omega^{\mf g}=\Omega$.
\item[Case 2]
$\mf g=\mf{sl}_N$ and $V=\mb F^N$ is the defining representation.
In this case $\Omega^{\mf g}=\Omega-\frac1N\id\otimes\id$.
\item[Case 3]
$\mf g=\{A\in\End V\,|\,A^\dagger=-A\}$, 
where $A^\dagger$ denotes the adjoint
w.r.t. a symmetric or skewsymmetric non-degenerate bilinear form 
$\langle\cdot\,|\,\cdot\rangle$ on $V$;
in other words, $\mf g\simeq\mf{so}_N$ if the form is symmetric, 
and $\mf g\simeq\mf{sp}_N$ if the form is skewsymmetric,
and $V\simeq\mb F^N$ is the defining representation of $\mf g$.
In this case $\Omega^\mf g=\frac12(\Omega-\Omega^\dagger)$.
\end{description}
\begin{lemma}\label{20170318:lem}
\begin{enumerate}[(a)]
\item
The following identity holds:
$$
\{A_\epsilon(z)_\lambda A_\epsilon(w)\}_\epsilon
=
\sum_{i\in I}u_i[\id\otimes U^i,\Omega^{\mf g}]
+\lambda\Omega^{\mf g}+\epsilon[\id\otimes S,\Omega^{\mf g}]
\,.
$$
\item
The following identity holds:
$$
\begin{array}{l}
\displaystyle{
\vphantom{\Big(}
(\id\otimes A_\epsilon(w+\lambda+\partial))(z-w-\lambda-\partial)^{-1}
(A_\epsilon^*(\lambda-z)\otimes\id)\Omega
} \\
\displaystyle{
\vphantom{\Big(}
-\Omega\,(A_\epsilon(z)\otimes(z-w-\lambda-\partial)^{-1}A_\epsilon(w))
} \\
\displaystyle{
\vphantom{\Big(}
=
\sum_{i\in I}u_i[\id\otimes U^i,\Omega]
+\lambda\Omega+\epsilon[\id\otimes S,\Omega]
\,.}
\end{array}
$$
\item
The following identity holds:
$$
\big(\id\otimes \big(A_\epsilon(w+\lambda+\partial)-A_\epsilon(w)\big)\big)
(\lambda+\partial)^{-1}
\big(\big(A_\epsilon^*(\lambda-z)-A_\epsilon(z)\big)\otimes\id\big)
=
-\lambda\id\otimes\id
\,.
$$
\item
If $\mf g$,$V$ are as in Case 3 above,
then the following identity holds:
$$
\begin{array}{l}
\displaystyle{
\vphantom{\Big(}
(\id\otimes A_\epsilon(w+\lambda+\partial))
\Omega^\dagger(z+w+\partial)^{-1}(A_\epsilon(z)\otimes\id)
} \\
\displaystyle{
\vphantom{\Big(}
-(A_\epsilon^*(\lambda-z)\otimes\id)\Omega^\dagger(z+w+\partial)^{-1}(\id\otimes A_\epsilon(w))
} \\
\displaystyle{
\vphantom{\Big(}
=
\sum_{i\in I}u_i[\id\otimes U^i,\Omega^\dagger]
+\lambda\Omega^\dagger+\epsilon[\id\otimes S,\Omega^\dagger]
\,.}
\end{array}
$$
Moreover, in this case we have 
$(A^*(\partial))^\dagger=-A(\partial)$.
\end{enumerate}
\end{lemma}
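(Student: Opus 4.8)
\textbf{Proof plan for Lemma \ref{20170318:lem}.}

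The plan is to verify each of the four identities by direct computation, reducing everything to the defining $\lambda$-bracket \eqref{lambda} and to the linear-algebra identities established in Section \ref{sec:4.1}. For part (a), I would start from $\{A_\epsilon(z)_\lambda A_\epsilon(w)\}_\epsilon=\sum_{i,j\in I}\{u_i{}_\lambda u_j\}_\epsilon\,U^i\otimes U^j$, expand $\{u_i{}_\lambda u_j\}_\epsilon=[u_i,u_j]+(u_i|u_j)\lambda+\epsilon(s|[u_i,u_j])$, and then rewrite the three resulting sums using the completeness relation $\sum_i (a|u_i) U^i = \varphi(a)$ together with Lemma \ref{lem:app1a} (in the form $\sum_i u_i[U^i,\varphi(a)]=\sum_i[a,u_i]U^i$) to recognize commutators with $\Omega^{\mf g}=\sum_i U_i\otimes U^i$. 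The key bookkeeping point is that the structure constants reorganize so that $\sum_{i,j}([u_i,u_j]|\cdot)\,U^i\otimes U^j$ becomes $\sum_i u_i[\id\otimes U^i,\Omega^{\mf g}]$; this is a completeness-relation manipulation analogous to Lemma \ref{lem:app1b}.

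For parts (b) and (d), I would use Lemma \ref{lem:app1b}, which gives $\{a_\lambda A_\epsilon(z)\}_\epsilon=A_\epsilon(z+\lambda)\varphi(a)-\varphi(a)A_\epsilon(z)$, and more importantly I would directly expand the products of symbols on the left-hand sides using Lemma \ref{20170404:lem1} (the symbol calculus for products and adjoints) and the permutation identities \eqref{20170404:eq1}, \eqref{eq:permut} (for $\Omega$) and \eqref{eq:permut-tau}, Lemma \ref{20170322:lem3} (for $\Omega^\dagger$). The strategy is that $A_\epsilon(z)=\partial\id+F$-type first-order operator, so after moving all the shift operators $(z-w-\lambda-\partial)^{-1}$ and $(z+w+\partial)^{-1}$ through using the symbol-composition rule, the differential-operator pieces (the $\partial\id_V$ summands in each copy of $A_\epsilon$) telescope, leaving exactly the zeroth-order terms $\sum_i u_iU^i+\epsilon S$ contracted against $\Omega$ (resp. $\Omega^\dagger$), which then match the right-hand side by the same completeness-relation identity used in part (a). For part (d) one additionally needs $\mf g$ as in Case 3, so that $U_i^\dagger=-U^i$ and hence $\Omega^{\mf g}=\frac12(\Omega-\Omega^\dagger)$; the supplementary claim $(A_\epsilon^*(\partial))^\dagger=-A_\epsilon(\partial)$ follows because $(\partial\id_V)^{*\dagger}=(-\partial)\id_V=-\partial\id_V$ wait — more carefully, $*$ sends $\partial\mapsto-\partial$ and each constant matrix $B\mapsto B$, then $\dagger$ sends $B\mapsto B^\dagger$, so $(A_\epsilon^*(\partial))^\dagger=-\partial\id_V+\sum_i u_i(U^i)^\dagger+\epsilon S^\dagger=-\partial\id_V-\sum_i u_iU_i-\epsilon S$, and reindexing $\sum_i u_iU_i=\sum_i u_iU^i$ (since $\{u_i\}$ and $\{u^i\}$ are dual and $\Omega^{\mf g}$ is symmetric under $\dagger$ up to sign) gives $-A_\epsilon(\partial)$; I would state this reindexing as a small separate observation.

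Part (c) is the easiest: both $A_\epsilon^*(\lambda-z)-A_\epsilon(z)$ and $A_\epsilon(w+\lambda+\partial)-A_\epsilon(w)$ are independent of the quadratic data and reduce to the $\partial\id_V$ terms, since the zeroth-order part $\sum_i u_iU^i+\epsilon S$ cancels in each difference (it is unaffected by the shifts and by $*$); one computes $A_\epsilon^*(\lambda-z)-A_\epsilon(z)=(-(\lambda-z)-z)\id_V=-\lambda\id_V$ (using $(\cdot)^*$ on the symbol: $A_\epsilon^*(\zeta)=-\zeta\id_V+\dots$) and $A_\epsilon(w+\lambda+\partial)-A_\epsilon(w)=(\lambda+\partial)\id_V$, so the left side becomes $(\lambda+\partial)(\lambda+\partial)^{-1}(-\lambda)\id\otimes\id=-\lambda\id\otimes\id$. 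I would present (c) first as a warm-up, then (a), then deduce (b) and (d) in parallel. \textbf{The main obstacle} I anticipate is the careful ordering of noncommuting symbol operations in (b) and (d): keeping track of which factor $\partial$ acts on when composing $(\id\otimes A_\epsilon(w+\lambda+\partial))$, the geometric expansion of $(z-w-\lambda-\partial)^{-1}$, and the adjoint-on-first-factor operation $*$, while correctly applying Lemma \ref{20170404:lem1}(b) and the $\Omega,\Omega^\dagger$ permutation lemmas — a sign or a misplaced shift there propagates through the whole identity. Everything else is routine once the zeroth-order terms are isolated and the completeness relation is invoked.
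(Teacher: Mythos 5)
Your overall route is the paper's: part (a) is proved by expanding the bracket \eqref{lambda} on generators and reorganizing with the completeness relation; part (c) is exactly the one-line computation you give; parts (b) and (d) are done by a direct symbol computation in which the first-order pieces of the four copies of $A_\epsilon$ produce the factor $(z-w-\lambda-\partial)$ (resp.\ $(z+w+\partial)$) cancelling the inverse, after which the permutation identities \eqref{eq:permut} and Lemma \ref{20170322:lem3} turn the leftover zeroth-order terms into the commutators with $\Omega$ and $\Omega^\dagger$ (the paper first establishes the $\Omega$-free tensor identity \eqref{20170320:eq5} and then multiplies by $\Omega$). One small point of wording: at that last stage of (b) it is the permutation property of $\Omega$, not the completeness relation, that converts $U^i\otimes\id-\id\otimes U^i$ into $[\id\otimes U^i,\Omega]$; the completeness relation only reappears in (d) when absorbing the $\epsilon(s|u_i)$ contributions into $[\id\otimes S,\Omega^\dagger]$.

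The one step that would fail as written is your justification of $(A_\epsilon^*(\partial))^\dagger=-A_\epsilon(\partial)$ in (d). In Case 3 the image $\varphi(\mf g)$ consists of skew-adjoint endomorphisms, so $(U^i)^\dagger=-U^i$ and $S^\dagger=-S$ directly, since $u^i,s\in\mf g$. You instead wrote $(U^i)^\dagger=-U_i$ and then proposed to repair this with the ``reindexing'' $\sum_i u_iU_i=\sum_i u_iU^i$, which is false for a general basis (it would hold only for an orthonormal one); what symmetry of the trace form actually gives is $\sum_i u_iU^i=\sum_i u^iU_i$, where \emph{both} slots are exchanged, and that is not the identity you invoked. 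Dropping the detour, the claim is immediate: $(A_\epsilon^*(\partial))^\dagger=-\partial\id_V+\sum_i u_i(U^i)^\dagger+\epsilon S^\dagger=-A_\epsilon(\partial)$, which is why the paper treats this assertion as obvious. With that correction, your plan matches the paper's proof in all four parts.
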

\begin{proof}
By the definition \eqref{Aepsilon} of $A_\epsilon(z)$ 
and the definition \eqref{lambda} of the $\lambda$-bracket  in $\mc V_\epsilon(\mf g,s)$,
we have
\begin{equation}\label{20170320:eq1}
\begin{array}{l}
\displaystyle{
\vphantom{\Big(}
\{A_\epsilon(z)_\lambda A_\epsilon(w)\}_\epsilon
=
\sum_{i,j\in I}\{{u_i}_\lambda{u_j}\}_\epsilon U^i\otimes U^j
} \\
\displaystyle{
\vphantom{\Big(}
=
\sum_{i,j\in I}
\big(
[u_i,u_j]+\lambda(u_i|u_j)
+\epsilon(s|[u_i,u_j])
\big)U^i\otimes U^j
}
\end{array}
\end{equation}
On the other hand, by the completeness relation in $\mf g$, we have
\begin{equation}\label{20170320:eq2}
\sum_{i,j\in I}
[u_i,u_j]
U^i\otimes U^j
=
\sum_{i,k\in I}
u_k
U^i\otimes [U^k,U_i]
=
\sum_{k\in I}
u_k
[\id\otimes U^k,\Omega^{\mf g}]
\,,
\end{equation}
we have
\begin{equation}\label{20170320:eq3}
\sum_{i,j\in I}
(u_i|u_j)
U^i\otimes U^j
=
\sum_{i\in I}
U^i\otimes U_i
=
\Omega^{\mf g}
\,,
\end{equation}
and we have
\begin{equation}\label{20170320:eq4}
\sum_{i,j\in I}
(s|[u_i,u_j])
U^i\otimes U^j
=
\sum_{i\in I}
U^i\otimes \varphi([s,u_i])
=
[\id\otimes S,\Omega^{\mf g}]
\,.
\end{equation}
Combining equations \eqref{20170320:eq1}--\eqref{20170320:eq4}, we get part (a).

Next, let us prove part (b).
We have, by a straightforward computation,
\begin{equation}\label{20170320:eq5}
\begin{array}{l}
\displaystyle{
\vphantom{\Big(}
A_\epsilon(w+\lambda+\partial)
\otimes(z-w-\lambda-\partial)^{-1}
A_\epsilon^*(\lambda-z)
} \\
\displaystyle{
\vphantom{\Big(}
-A_\epsilon(z)
\otimes(z-w-\lambda-\partial)^{-1}
A_\epsilon(w)
} \\
\displaystyle{
\vphantom{\Big(}
=
\lambda\id\otimes\id
+\sum_{i\in I}u_i(U^i\otimes\id-\id\otimes U^i)
+\epsilon(S\otimes\id-\id\otimes S)
\,.}
\end{array}
\end{equation}
Claim (b) is obtained multiplying both sides of \eqref{20170320:eq5} on the left by $\Omega$
and applying \eqref{eq:permut}.

Part (c) is immediate by definition \eqref{Aepsilon}, 
since $A_\epsilon(w+\lambda+\partial)-A_\epsilon(w)=(\lambda+\partial)\id$
and $A_\epsilon^*(\lambda-z)-A_\epsilon(z)=-\lambda\id$.

Finally, for part (d), we have,
denoting $\tilde{u_i}=u_i+\epsilon(s|u_i)$,
\begin{equation}\label{20170320:eq6}
\begin{array}{l}
\displaystyle{
\vphantom{\Big(}
(\id\otimes A_\epsilon(w+\lambda+\partial))
\Omega^\dagger(z+w+\partial)^{-1}(A_\epsilon(z)\otimes\id)
} \\
\displaystyle{
\vphantom{\Big(}
-(A_\epsilon^*(\lambda-z)\otimes\id)\Omega^\dagger(z+w+\partial)^{-1}(\id\otimes A_\epsilon(w))
} \\
\displaystyle{
\vphantom{\Big(}
=
(z(w+\lambda)-(z-\lambda)w)(z+w)^{-1}\Omega^\dagger
} \\
\displaystyle{
\vphantom{\Big(}
+\sum_{i\in I}\Big(
(w+\lambda+\partial)(z+w+\partial)^{-1}\tilde{u_i} \Omega^\dagger(U^i\otimes\id)
} \\
\displaystyle{
\vphantom{\Big(}
-(z-\lambda)(z+w+\partial)^{-1}\tilde{u_i} \Omega^\dagger(\id\otimes U^i)
} \\
\displaystyle{
\vphantom{\Big(}
+z(z+w)^{-1}\tilde{u_i} (\id\otimes U^i)\Omega^\dagger
-w(z+w)^{-1}\tilde{u_i} (U^i\otimes\id)\Omega^\dagger
\Big)
} \\
\displaystyle{
\vphantom{\Big(}
+\sum_{i,j\in I}
\tilde{u_i}(z+w+\partial)^{-1}\tilde{u_j}
\Big(
(\id\otimes U^i)\Omega^\dagger(U^j\otimes\id)
-(U^i\otimes\id)\Omega^\dagger(\id\otimes U^j)
\Big)
\,.}
\end{array}
\end{equation}
By equations \eqref{eq:permut-tau}, we have
$\Omega^\dagger(U^i\otimes\id)=-\Omega^\dagger(\id\otimes U^i)$,
$(U^i\otimes\id)\Omega^\dagger=-(\id\otimes U^i)\Omega^\dagger$,
and $(\id\otimes U^i)\Omega^\dagger(U^j\otimes\id)=(U^i\otimes\id)\Omega^\dagger(\id\otimes U^j)$.
Hence, the RHS of \eqref{20170320:eq6} becomes
$$
\lambda\Omega^\dagger
+\sum_{i\in I}\tilde{u_i} [\id\otimes U^i,\Omega^\dagger]
\,,
$$
proving the first assertion in claim (d). The last assertion in claim (d) is obvious.
\end{proof}
As a consequence of Lemma \ref{20170318:lem},
in the three cases 1-3 described above
we have the following formulas for $\{A_\epsilon(z)_\lambda A_\epsilon(w)\}_\epsilon$:
\begin{description}
\item[Case 1]
For $\mf g=\mf{gl}_N$ and $V=\mb F^N$, 
$A_\epsilon(\partial)$ satisfies the following \emph{Adler identity} (cf. \cite[Eq.(5.1)-(5.2)]{AGD}):
\begin{equation}\label{eq:adler-glN}
\begin{array}{c}
\displaystyle{
\vphantom{\Big(}
\{A_\epsilon(z)_\lambda A_\epsilon(w)\}_\epsilon
=
(\id\otimes A_\epsilon(w+\lambda+\partial))(z-w-\lambda-\partial)^{-1}
(A_\epsilon^*(\lambda-z)\otimes\id)\Omega
} \\
\displaystyle{
\vphantom{\Big(}
-\Omega\,(A_\epsilon(z)\otimes(z-w-\lambda-\partial)^{-1}A_\epsilon(w))
\,.}
\end{array}
\end{equation}
\item[Case 2]
For $\mf g=\mf{sl}_N$ and $V=\mb F^N$,
$A_\epsilon(\partial)$ satisfies the following \emph{modified Adler identity}:
\begin{equation}\label{eq:adler-slN}
\begin{array}{l}
\displaystyle{
\vphantom{\Big(}
\{A_\epsilon(z)_\lambda A_\epsilon(w)\}_\epsilon
=
(\id\otimes A_\epsilon(w+\lambda+\partial))(z-w-\lambda-\partial)^{-1}
(A_\epsilon^*(\lambda-z)\otimes\id)\Omega
} \\
\displaystyle{
\vphantom{\Big(}
-\Omega\,(A_\epsilon(z)\otimes(z-w-\lambda-\partial)^{-1}A_\epsilon(w))
} \\
\displaystyle{
\vphantom{\Big(}
+\frac1N\big(\id\otimes \big(A_\epsilon(w+\lambda+\partial)-A_\epsilon(w)\big)\big)
(\lambda+\partial)^{-1}
\big(\big(A_\epsilon^*(\lambda-z)-A_\epsilon(z)\big)\otimes\id\big)
\,.}
\end{array}
\end{equation}
\item[Case 3]
For $\mf g\simeq\mf{so}_N$ or $\mf{sp}_N$ and $V\simeq\mb F^N$,
$A_\epsilon(\partial)$ satisfies 
$(A^*(\partial))^\dagger=-A(\partial)$,
and the following \emph{twisted Adler identity}:
\begin{equation}\label{eq:adler-soN}
\begin{array}{l}
\displaystyle{
\vphantom{\Big(}
\{A_\epsilon(z)_\lambda A_\epsilon(w)\}_\epsilon
=
\frac12
(\id\otimes A_\epsilon(w+\lambda+\partial))(z-w-\lambda-\partial)^{-1}
(A_\epsilon^*(\lambda-z)\otimes\id)\Omega
} \\
\displaystyle{
\vphantom{\Big(}
-\frac12
\Omega\,(A_\epsilon(z)\otimes(z-w-\lambda-\partial)^{-1}A_\epsilon(w))
} \\
\displaystyle{
\vphantom{\Big(}
-\frac12
(\id\otimes A_\epsilon(w+\lambda+\partial))
\Omega^\dagger(z+w+\partial)^{-1}(A_\epsilon(z)\otimes\id)
} \\
\displaystyle{
\vphantom{\Big(}
+\frac12
(A_\epsilon^*(\lambda-z)\otimes\id)\Omega^\dagger(z+w+\partial)^{-1}(\id\otimes A_\epsilon(w))
\,.}
\end{array}
\end{equation}
\end{description}

\subsection{The generalized Adler identity}\label{sec:4.3}

The following notion is introduced to include all three 
equations \eqref{eq:adler-glN}-\eqref{eq:adler-soN} as special cases.
\begin{definition}\label{def:adler-general}
Let $A(\partial)\in\mc V((\partial^{-1}))\otimes \End V$
be an $\End V$-valued pseudodifferential operator
over the PVA $\mc V$.
We say that $A(\partial)$ is an operator of 
\emph{generalized Adler type} if
\begin{equation}\label{eq:adler-general}
\begin{array}{l}
\displaystyle{
\vphantom{\Big(}
\{A(z)_\lambda A(w)\}
=
\alpha
(\id\otimes A(w+\lambda+\partial))(z-w-\lambda-\partial)^{-1}
(A^*(\lambda-z)\otimes\id)\Omega
} \\
\displaystyle{
\vphantom{\Big(}
-\alpha
\Omega\,\big(A(z)\otimes(z-w-\lambda-\partial)^{-1}A(w)\big)
} \\
\displaystyle{
\vphantom{\Big(}
-\beta
(\id\otimes A(w+\lambda+\partial))
\Omega^\dagger(z+w+\partial)^{-1}(A(z)\otimes\id)
} \\
\displaystyle{
\vphantom{\Big(}
+\beta
(A^*(\lambda-z)\otimes\id)\Omega^\dagger(z+w+\partial)^{-1}(\id\otimes A(w))
} \\
\displaystyle{
\vphantom{\Big(}
+\gamma\big(\id\otimes \big(A(w+\lambda+\partial)-A(w)\big)\big)
(\lambda+\partial)^{-1}
\big(\big(A^*(\lambda-z)-A(z)\big)\otimes\id\big)
\,,}
\end{array}
\end{equation}
for some $\alpha,\beta,\gamma\in\mb F$,
where $\Omega$ is given by \eqref{Omega} (for the vector space $V$).
If $\beta\neq0$,
we assume that $V$ carries a symmetric or skewsymmetric 
non-degenerate bilinear form $\langle\cdot\,|\,\cdot\rangle$
and $\Omega^\dagger$ is given by \eqref{Omega-tau}.
Moreover, in this case we also assume that 
\begin{equation}\label{eq:adjoint}
(A^*(\partial))^\dagger=\eta A(\partial)
\,\,\text{ where }\,\, \eta\in\{\pm1\}
\,.
\end{equation}
\end{definition}
\begin{remark}\label{rem:regular}
In equation \eqref{eq:adler-general}, as well as in the analogous equations above in this Section,
we can expand all terms either using $\iota_z$ or using $\iota_w$ (but not both)
and the result is the same.
Indeed, the coefficient of $\alpha$ is clearly regular in $z-w-\lambda-\partial$,
and, thanks to the assumption \eqref{eq:adjoint},
the coefficient of $\beta$ is regular in $z+w+\partial$.
Note also that the last term of the RHS is regular in $\lambda+\partial$
(so no expansion is needed).
\end{remark}
For example, for the three cases listed in Section \ref{sec:4.2}
equations \eqref{eq:adler-glN}, \eqref{eq:adler-slN} and \eqref{eq:adler-soN}
for the operator $A_\epsilon(\partial)$ 
correspond to the following values of the parameters $\alpha,\beta,\gamma$:
\begin{equation}\label{table}
\begin{tabular}{ll|lll}
\vphantom{\Big(}
$\mf g$ & $V$ & \,\,$\alpha$\,\, & \,\,$\beta$\,\, & \,\,$\gamma$\,\, \\
\hline 
\vphantom{\Big(}
$\mf{gl}_N$ & $\mb F^N$ & 1 & 0 & 0 \\
\vphantom{\Big(}
$\mf{sl}_N$ & $\mb F^N$ & 1 & 0 & $\frac1N$ \\
\vphantom{\Big(}
$\mf{so}_N$ or $\mf{sp}_N$ & $\mb F^N$ & $\frac12$ & $\frac12$ & 0 \\
\end{tabular}
\end{equation}
\begin{theorem}\label{thm:main2}
Let $A(\partial)\in\mc V((\partial^{-1}))\otimes\End V$
be an $\End V$ valued pseudodifferential operator over the Poisson vertex algebra $\mc V$,
of generalized Adler type.
Then:
\begin{enumerate}[(a)]
\item
If $A(\partial)$ is invertible in $\mc V((\partial^{-1}))\otimes\End V$,
then $A^{-1}(\partial)$ satisfies the generalized Adler identity \eqref{eq:adler-general}
with the opposite values of $\alpha$ and $\beta$ (and the same value of $\gamma$).
Furthermore, if $\beta\neq0$, then $((A^{-1})^*(\partial))^\dagger=\eta A^{-1}(\partial)$.
\item
Let $T\in\End V$ and $I,J$ be as in \eqref{IJ}.
If $\beta\neq0$, we assume that $T^\dagger=\delta T$, $\delta\in\{\pm1\}$
and we consider the corresponding inner product $\langle\cdot\,|\,\cdot\rangle^T$ 
on $\im T$, defined by \eqref{eq:productU}.
Then, $JA(\partial)I\in\mc V\otimes\End(\im T)\otimes\End(\im T)$ 
satisfies the generalized Adler identity \eqref{eq:adler-general}
(with the same values of $\alpha,\beta,\gamma$),
and, for $\beta\neq0$, we have $(JA^*(\partial)I)^{\dagger_T}=\eta\delta JA(\partial)I$.
\item
If, moreover, the generalized quasideterminant 
$$
|A(\partial)|_{I,J}
:=
(JA^{-1}(\partial)I)^{-1}
\,,
$$
exists
(i.e. $A(\partial)$ is invertible in $\mc V((\partial^{-1}))\otimes\End V$,
and $JA^{-1}(\partial)I$ is invertible in $\mc V((\partial^{-1}))\otimes\End(\im T)$),
then $|A(\partial)|_{I,J}$ satisfies the generalized Adler identity \eqref{eq:adler-general}
(with the same values of $\alpha,\beta,\gamma$),
and, for $\beta\neq0$, we have $(|A|_{I,J}^*(\partial))^{\dagger_T}=\eta\delta |A(\partial)|_{I,J}$.
\end{enumerate}
\end{theorem}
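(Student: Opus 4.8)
The plan is to establish (a) and (b) directly and then deduce (c) as a formal consequence. For the latter: assume $A(\partial)$ is invertible and $JA^{-1}(\partial)I$ is invertible. By (a), $A^{-1}(\partial)$ is of generalized Adler type with parameters $-\alpha,-\beta,\gamma$, and (if $\beta\neq0$) satisfies $((A^{-1})^*)^\dagger=\eta A^{-1}$. Applying (b) to the operator $A^{-1}(\partial)$ we get that $JA^{-1}(\partial)I$ is of generalized Adler type with the same parameters $-\alpha,-\beta,\gamma$ and, for $\beta\neq0$, satisfies $(J(A^{-1})^*I)^{\dagger_T}=\eta\delta\,JA^{-1}I$. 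Since $JA^{-1}(\partial)I$ is invertible, we may now apply (a) once more, this time inside $\mc V((\partial^{-1}))\otimes\End(\im T)$ with the form $\langle\cdot\,|\,\cdot\rangle^T$ (which exists by Lemma \ref{20170322:lem4} when $T^\dagger=\delta T$): we conclude that $|A(\partial)|_{I,J}=(JA^{-1}(\partial)I)^{-1}$ is of generalized Adler type with parameters $-(-\alpha)=\alpha$, $-(-\beta)=\beta$, $\gamma$, and that $(|A|_{I,J}^*)^{\dagger_T}=(\eta\delta)|A|_{I,J}$, using $(\eta\delta)^2=1$.

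For part (a), I would compute directly. Using the Leibniz rules for $\lambda$-brackets of inverses --- Lemma \ref{20170404:lem2}(h) to handle $A^{-1}(z)$ in the first tensor slot and then Lemma \ref{20170404:lem2}(g) to handle $A^{-1}(w)$ in the second --- one rewrites $\{A^{-1}(z)_\lambda A^{-1}(w)\}$ as $\{A(z+x)_{\lambda+x+y}A(w+x')\}$ conjugated by the factors $(A^{-1})^*(\lambda-z)$, $A^{-1}(z)$ on the first component and $A^{-1}(w+\lambda+\partial)$, $A^{-1}(w)$ on the second, with the spectral shifts $x,x',y$ prescribed by those rules. Substituting the generalized Adler identity \eqref{eq:adler-general} for $\{A(z+x)_{\lambda+x+y}A(w+x')\}$ and simplifying the five resulting terms using repeatedly $A^{-1}A=AA^{-1}=\id$ (in symbol form via Lemma \ref{20170404:lem1}(a), which is what makes the shifts telescope), the permutation relations of Lemmas \ref{20170322:lem1} and \ref{20170322:lem3}, and the adjointness hypothesis \eqref{eq:adjoint}, one recovers \eqref{eq:adler-general} for $A^{-1}(\partial)$ with $\alpha,\beta$ replaced by $-\alpha,-\beta$ and $\gamma$ unchanged. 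The auxiliary claim $((A^{-1})^*)^\dagger=\eta A^{-1}$ is immediate: by Lemma \ref{20170411:lem1} the operation $B\mapsto(B^*)^\dagger$ is an anti-automorphism of $\mc V((\partial^{-1}))\otimes\End V$, so applying it to $AA^{-1}=\id$ and using \eqref{eq:adjoint} gives $(A^{-1})^{*\dagger}(\eta A)=\id$, i.e. $(A^{-1})^{*\dagger}=\eta A^{-1}$.

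For part (b), since $I$ and $J$ have constant coefficients, $\{(JAI)(z)_\lambda(JAI)(w)\}=(J\otimes J)\{A(z)_\lambda A(w)\}(I\otimes I)$, so it suffices to sandwich the right-hand side of \eqref{eq:adler-general} between $(J\otimes J)$ and $(I\otimes I)$ and recognize each term as the corresponding term of the generalized Adler identity for $JAI$. Here one uses, again because $I,J$ are constant, that $JA^*(\partial)I=(JAI)^*(\partial)$ and $JA(\zeta)I=(JAI)(\zeta)$ for every shift $\zeta$; that in the $\alpha$-term the operator $\Omega_V$ can be replaced by $\Omega_{\im T}$ after moving $I$ and $J$ across it via Lemma \ref{20170322:lem1}; that in the $\beta$-term $\Omega_V^\dagger$ is replaced by $\Omega_{\im T}^\dagger$ by means of Lemma \ref{20170322:lem6} (which requires $T^\dagger=\delta T$, so that $\langle\cdot\,|\,\cdot\rangle^T$ is defined); and that the $\gamma$-term, containing no $\Omega$, reduces at once. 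This yields \eqref{eq:adler-general} for $JAI$ with the same $\alpha,\beta,\gamma$. Finally, for $\beta\neq0$, writing $A^*(\partial)=\sum_k p_k^*(\partial)\otimes M_k$ and applying Lemma \ref{20170411:lem2} to each coefficient $M_k$ gives $(JA^*(\partial)I)^{\dagger_T}=\delta\,J(A^*)^\dagger(\partial)I=\eta\delta\,JA(\partial)I$, by \eqref{eq:adjoint}.

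The main obstacle is the computation in part (a): one must substitute all five terms of \eqref{eq:adler-general} into the conjugated expression for $\{A^{-1}(z)_\lambda A^{-1}(w)\}$ while keeping precise control of the spectral-parameter shifts produced by the Leibniz rules and of the ordering of the scalar resolvents $(z-w-\lambda-\partial)^{-1}$, $(z+w+\partial)^{-1}$, $(\lambda+\partial)^{-1}$ relative to the matrix factors, and verify that after all cancellations of the form $AA^{-1}$ the expression is once more of the shape \eqref{eq:adler-general}, with the precise sign pattern $(-\alpha,-\beta,\gamma)$. Part (b) is essentially bookkeeping with the linear-algebra lemmas of Section \ref{sec:4.1}, and part (c) is purely formal, as indicated above.
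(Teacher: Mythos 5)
Your proposal follows essentially the same route as the paper: part (a) by applying Lemma \ref{20170404:lem2}(g)--(h), substituting the generalized Adler identity and telescoping via Lemma \ref{20170404:lem1} (the paper carries out exactly the five-term computation you describe, arriving at the sign pattern $(-\alpha,-\beta,\gamma)$), part (b) by sandwiching between $J\otimes J$ and $I\otimes I$ and using Lemmas \ref{20170322:lem1} and \ref{20170322:lem6} (with Lemma \ref{20170411:lem2} for the $\dagger_T$-claim), and part (c) as the same formal composition (a)$\Rightarrow$(b)$\Rightarrow$(a). Your explicit derivation of $((A^{-1})^*)^\dagger=\eta A^{-1}$ from Lemma \ref{20170411:lem1} applied to $AA^{-1}=\id$ is exactly what the paper leaves implicit, so the proposal is correct and matches the paper's proof.
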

\begin{proof}
If we apply Lemma \ref{20170404:lem2}(g)-(h), we get
\begin{equation}\label{20170321:eq1}
\begin{array}{l}
\displaystyle{
\vphantom{\Big(}
\{A^{-1}(z)_\lambda A^{-1}(w)\}
=
\big(
\big(\big|_{x_1=\partial}(A^{-1})^*(\lambda-z)\big)\otimes A^{-1}(w+\lambda+x_1+x_2+y_2+u)
\big)
} \\
\displaystyle{
\vphantom{\Big(}
\times\big(\big|_{u=\partial}\{A(z+x_2)_{\lambda+x_1+x_2}A(w+y_2)\}\big)
\big(
\big(\big|_{x_2=\partial}A^{-1}(z)\big)\otimes\big(\big|_{y_2=\partial}A^{-1}(w)\big)
\big)
\,.}
\end{array}
\end{equation}
Here we are using of the notation \eqref{eq:notation}.
We then use the generalized Adler identity \eqref{eq:adler-general}
to rewrite the RHS of \eqref{20170321:eq1} as
\begin{equation}\label{20170321:eq2}
\begin{array}{l}
\displaystyle{
\vphantom{\Big(}
\{A^{-1}(z)_\lambda A^{-1}(w)\}
=
\alpha\,
(z-w-\lambda-y_2)^{-1}
\Omega
\big(
A^{-1}(z)\otimes\big(\big|_{y_2=\partial}A^{-1}(w)\big)
\big)
} \\
\displaystyle{
\vphantom{\Big(}
-\alpha
\big(
\big(\big|_{x_1=\partial}(A^{-1})^*(\lambda-z)\big)\otimes A^{-1}(w+\lambda+x_1)
\big)
\Omega
(z-w-\lambda-x_1)^{-1}
} \\
\displaystyle{
\vphantom{\Big(}
-\beta
\big((A^{-1})^*(\lambda-z)\otimes \id\big)
\Omega^\dagger(z+w+y_2)^{-1}
\big(\id\otimes\big(\big|_{y_2=\partial}A^{-1}(w)\big)\big)
} \\
\displaystyle{
\vphantom{\Big(}
+\beta
\big(\id\otimes A^{-1}(w+\lambda+x_2)\big)
\Omega^\dagger(z+w+x_2)^{-1}
\big(\big(\big|_{x_2=\partial}A^{-1}(z)\big)\otimes\id\big)
} \\
\displaystyle{
\vphantom{\Big(}
+\!\gamma
\big(
\id\!\otimes\! 
\big(
A^{-1}(w)
-
A^{-1}(w\!+\!\lambda\!+\!x)
\big)
\big)
(\lambda\!+\!x)^{-1}
\big(
\big(\big|_{x=\partial}
A^{-1}(z)-(\!A^{-1}\!)^*(\lambda\!-\!z)
\big)\!\otimes\!\id
\big)
.}
\end{array}
\end{equation}
Here we used the identities
$$
\begin{array}{l}
\displaystyle{
\vphantom{\Big(}
A^{-1}(w+\lambda+y)\big(\big|_{y=\partial}A(w+\lambda)\big)
=\id\,\,,\,\,\,\,
A(w+y)\big(\big|_{y=\partial}A^{-1}(w)\big)
=\id
} \\
\displaystyle{
\vphantom{\Big(}
\big(\big|_{x=\partial}(A^{-1})^*(\lambda-z)\big)A^*(\lambda+x-z)
=\id\,\,,\,\,\,\,
A(z+x)\big(\big|_{x=\partial}A^{-1}(z)\big)
=\id
\,,}
\end{array}
$$
which are a consequence of the identities $AA^{-1}=A^{-1}A=\id$ and Lemma \ref{20170404:lem1}.
Equation \eqref{20170321:eq2} is the generalized Adler identity \eqref{eq:adler-general} for $A^{-1}$,
with the opposite values of $\alpha$ and $\beta$.
The last assertion of claim (a) follows from Lemma \ref{20170411:lem1}.

Next, for part (b),
we have
$\{JA(z)I_\lambda JA(w)I\}
=(J\otimes J)\{A(z)_\lambda A(w)\}(I\otimes I)$.
Hence, by \eqref{eq:adler-general} we get
\begin{equation}\label{20170323:eq1}
\begin{array}{l}
\displaystyle{
\vphantom{\Big(}
\{JA(z)I_\lambda JA(w)I\}
} \\
\displaystyle{
\vphantom{\Big(}
=
\alpha
(J\otimes J)(\id_V\otimes A(w+\lambda+\partial))(z-w-\lambda-\partial)^{-1}
(A^*(\lambda-z)\otimes\id_V)\Omega_V(I\otimes I)
} \\
\displaystyle{
\vphantom{\Big(}
-\alpha
(J\otimes J)\Omega_V(A(z)\otimes(z-w-\lambda-\partial)^{-1}A(w))(I\otimes I)
} \\
\displaystyle{
\vphantom{\Big(}
-\beta
(J\otimes J)(\id_V\otimes A(w+\lambda+\partial))
\Omega_V^\dagger(z+w+\partial)^{-1}(A(z)\otimes\id_V)(I\otimes I)
} \\
\displaystyle{
\vphantom{\Big(}
+\beta
(J\otimes J)(A^*(\lambda-z)\otimes\id_V)\Omega_V^\dagger(z+w+\partial)^{-1}(\id_V\otimes A(w))(I\otimes I)
} \\
\displaystyle{
\vphantom{\Big(}
+\gamma
(J\!\otimes\! J)\big(\id_V\!\otimes\! \big(A(w\!+\!\lambda\!+\!\partial)\!-\!A(w)\big)\big)
(\lambda\!+\!\partial)^{-1}
\big(\big(A^*(\lambda\!-\!z)\!-\!A(z)\big)\!\otimes\!\id_V\big)(I\!\otimes\! I)
.}
\end{array}
\end{equation}
Obviously, we have $J\id_V=\id_{\im T}J$ and $\id_VI=I\id_{\im T}$.
We can then use Lemmas \ref{20170322:lem1} and \ref{20170322:lem6} 
to rewrite the RHS of \eqref{20170323:eq1}
as follows:
\begin{equation}\label{20170323:eq2}
\begin{array}{l}
\displaystyle{
\vphantom{\Big(}
\alpha
(\id_{\im T}\otimes JA(w+\lambda+\partial)I)(z-w-\lambda-\partial)^{-1}
(JA^*(\lambda-z)I\otimes\id_{\im T})\Omega_{\im T}
} \\
\displaystyle{
\vphantom{\Big(}
-\alpha
\Omega_{\im T}(JA(z)I\otimes(z-w-\lambda-\partial)^{-1}JA(w)I)
} \\
\displaystyle{
\vphantom{\Big(}
-\beta
(\id_{\im T}\otimes JA(w+\lambda+\partial)I)
\Omega_{\im T}^\dagger(z+w+\partial)^{-1}(JA(z)I\otimes\id_{\im T})
} \\
\displaystyle{
\vphantom{\Big(}
+\beta
(JA^*(\lambda-z)I\otimes\id_{\im T})\Omega_{\im T}^\dagger(z+w+\partial)^{-1}(\id_{\im T}\otimes JA(w)I)
} \\
\displaystyle{
\vphantom{\Big(}
+\gamma
\big(\id_V\!\otimes\! \big(JA(w\!+\!\lambda\!+\!\partial)I\!-\!JA(w)I\big)\big)
(\lambda\!+\!\partial)^{-1}
\big(\big(JA^*(\lambda\!-\!z)I\!-\!JA(z)I\big)\!\otimes\!\id_V\big)
.}
\end{array}
\end{equation}
Hence, $JA(\partial)I\in\mc V((\partial^{-1}))\otimes\End(\im T)$ satisfies the generalized Adler identity
\eqref{eq:adler-general} 
(with respect to the space $\im T$ with inner product $\langle\cdot\,|\,\cdot\rangle^T$).
The last assertion of claim (b) follows from Lemma \ref{20170411:lem2}.

By part (a), $A^{-1}(\partial)$ is an operator of generalized Adler type
with parameters $-\alpha,-\beta,\gamma$.
By part (b) $JA^{-1}(\partial)I$ is of generalized Adler type
with the same parameters $-\alpha,-\beta,\gamma$.
Finally, again by part (a),
$|A(\partial)|_{I,J}$ is of generalized Adler type
with parameters $\alpha,\beta,\gamma$,
proving (c).
\end{proof}
\begin{remark}\label{rem:general}
We could try to generalize the generalized Adler identity \eqref{eq:adler-general} more,
in a way that Theorem \ref{thm:main2} still holds.
This leads to an identity of the following type,
\begin{equation}\label{eq:adler-general2}
\begin{array}{l}
\displaystyle{
\vphantom{\Big(}
\{A(z)_\lambda A(w)\}
=
\big(
\big(\big|_{x=\partial}A^*(\lambda-z)\big)
\otimes
A(w+\lambda+x)
\big)
\alpha(z,w,\lambda+x)
} \\
\displaystyle{
\vphantom{\Big(}
+\beta(z+x,w+y,\lambda+x)
\big(
\big(\big|_{x=\partial}A(z)\big)
\otimes
\big(\big|_{y=\partial}A(w)\big)
\big)
} \\
\displaystyle{
\vphantom{\Big(}
+\big(\id\otimes A(w+\lambda+x)\big)
\gamma(z+x,w,\lambda+x)
\big(\big(\big|_{x=\partial}A(z)\big)\otimes\id\big)
} \\
\displaystyle{
\vphantom{\Big(}
+\big(\big(\big|_{x=\partial}A^*(\lambda-z)\big)\otimes\id\big)
\delta(z,w+y,\lambda+x)
\big(\id\otimes \big(\big|_{y=\partial}A(w)\big)\big)
\,,}
\end{array}
\end{equation}
where $\alpha,\beta,\gamma,\delta$ are functions with values in $\End V\otimes\End V$
satisfying some compatibility condition with $T\in\End V$.
However, despite our efforts, we were not able to find any interesting pair $(\mf g,V)$, 
other than the one listed in table \eqref{table},
for which the operator $A_\epsilon(\partial)\in\mc V(\mf g)((\partial^{-1}))\otimes\End V$
satisfies a generalized Adler identity of the form \eqref{eq:adler-general2}.
\end{remark}

\begin{remark}\label{rem:XYZ}
Let $A(\partial)$ be an operator of $(\alpha,\beta,\gamma,\delta)$-Adler type for the $\lambda$-bracket $\{\cdot\,_\lambda\,\cdot\}$,
in the sense of Remark \ref{rem:general}, 
i.e. we assume that equation \eqref{eq:adler-general2} holds.
Skewsymmetry for the $\lambda$-bracket $\{\cdot\,_\lambda\,\cdot\}$
translates to the condition:
\begin{equation}\label{eq:skew}
\{A(z)_\lambda A(w)=-\{A(w)_{-\lambda-\partial}A(z)\}^{\sigma}
\,\in (\mc V[\lambda])[[z^{-1},w^{-1}]][z,w]\otimes\End V\otimes\End V
\,,
\end{equation}
where $\sigma:\,\End V\otimes\End V\to \End V\otimes\End V$
is the transposition of the two factors.
One can check that \eqref{eq:skew} holds
provided that $\alpha,\beta,\gamma,\delta$ satisfy the following conditions:
\begin{equation}
\begin{split}\label{eq:cond1}
&\alpha(z,w,\lambda)=-\alpha^\sigma(w,z,-\lambda)
\,,
\qquad
\beta(z,w,\lambda)=-\beta^\sigma(w,z,-\lambda)
\,,
\\
&\gamma(z,w,\lambda)=-\delta^\sigma(w,z,-\lambda)
\,.
\end{split}
\end{equation}
Furthermore,
the Jacobi identity for the $\lambda$-bracket $\{\cdot\,_\lambda\,\cdot\}$
translates to the condition:
\begin{equation}\label{eq:jacobi}
\{A(z_1)_\lambda \{A(z_2)_\mu A(z_3)\}\}-\{A(z_2)_\mu \{A(z_1)_\lambda A(z_3)\}\}^{(12)}
=\{\{A(z_1)_\lambda A(z_2)\}_{\lambda+\mu} A(z_3)\}
\,,
\end{equation}
where $(1,2):\,\End V^{\otimes3}\to\End V^{\otimes3}$ is the transposition of the first two factors.
One can check that \eqref{eq:jacobi} holds
provided that $\alpha,\beta,\gamma,\delta$ satisfy the following conditions:
\begin{equation}
\begin{split}\label{eq:cond2}
\Cond(\alpha,\alpha,\alpha)=0\,,
\quad
\Cond(-\alpha,\delta,\delta)=0\,,
\quad
\Cond(\gamma,-\alpha,\gamma)=0\,,
\quad
\Cond(\delta,\gamma,-\alpha)=0\,,
\\
\Cond(\beta,\beta,\beta)=0\,,
\quad
\Cond(-\beta,\gamma,\gamma)=0\,,
\quad
\Cond(\delta,-\beta,\delta)=0\,,
\quad
\Cond(\gamma,\delta,-\beta)=0
\,,
\end{split}
\end{equation}
where
\begin{equation*}
\begin{split}
&\Cond(X,Y,Z)=
X_{12}(z_1,z_2-\lambda-\mu,\lambda)Y_{23}(z_2,z_3,\lambda+\mu)
-Y_{23}(z_2,z_3+\lambda,\mu)Z_{13}(z_1,z_3,\lambda) \\
& -Z_{13}(z_1+\mu,z_3,\lambda+\mu)X_{12}(z_1,z_2,-\mu)
+X_{12}(z_1-\lambda-\mu,z_2,-\mu)Z_{13}(z_1,z_3,\lambda+\mu)\\
&+Z_{13}(z_1,z_3+\mu,\lambda)Y_{23}(z_2,z_3,\mu)
-Y_{23}(z_2+\lambda,z_3,\lambda+\mu)X_{12}(z_1,z_2,\lambda)
\,.
\end{split}
\end{equation*}
Here we are using the standard notation $X_{12}=X\otimes\id\in(\End V)^{\otimes3}$,
and similarly for $X_{23}$ and $X_{13}$.
\end{remark}
\begin{remark}\label{rem:faddeev}
As a special case of Remarks \ref{rem:general} and \ref{rem:XYZ}, let 
$\alpha=-\beta$, and $\gamma=\delta=0$ in equation \eqref{eq:adler-general2}. 
Moreover, assume that $\alpha(z_1,z_2,\lambda)=\alpha(z_1-z_2-\lambda)$
is a function of $z_1-z_2-\lambda$.
Then, condition \eqref{eq:cond1} is equivalent to
\begin{equation}\label{eq:cond1b}
\alpha(z)=-\alpha(-z)^\sigma
\,,
\end{equation}
and condition \eqref{eq:cond2} is equivalent to
\begin{equation}
\begin{split}\label{eq:cond2b}
&\alpha_{12}(z-w+\lambda-\mu)\alpha_{23}(w+\mu)
+\alpha_{13}(z)\alpha_{23}(w)
\\
&+\alpha_{12}(z-w)\alpha_{13}(z)
-\alpha_{23}(w+\mu)\alpha_{13}(z+\lambda)
\\
&-\alpha_{13}(z+\lambda)\alpha_{12}(z-w+\lambda-\mu)
-\alpha_{23}(w)\alpha_{12}(z-w)
=0
\,.
\end{split}
\end{equation}
Equation \eqref{eq:cond1b} is the same as 
\cite[Ch.III, Eq.(1.40)]{faddeev}.
For $\lambda=\mu=0$, 
equation \eqref{eq:cond2b} reduces to \cite[Ch.III, Eq.(1.41)]{faddeev}, 
see also \cite[Eq.(1.4)]{BD82}. 
Moreover, in this case the identity \eqref{eq:adler-general2} reduces to 
the so-called \emph{fundamental Poisson bracket} given by \cite[Ch.III, Eq.(1.20)]{faddeev}.
\end{remark}
\begin{remark}\label{rem:BD}
Assuming that
$\alpha(z_1,z_2,\lambda)=\alpha(z_1-\lambda,z_2)$ is a function of $z_1-\lambda$ and $z_2$,
then, the first equation in \eqref{eq:cond1} can be rewritten as
$\alpha(u_1,u_2)=-\alpha(u_2,u_1)^\sigma$,
which is called unitary condition in \cite{BD82},
while the first equation in \eqref{eq:cond2}, i.e. $\Cond(\alpha,\alpha,\alpha)=0$, 
is equivalent to
\begin{equation*}
\begin{split}
&\alpha_{12}(u_1,u_2-\lambda)\alpha_{23}(u_2-\lambda,u_3)
+\alpha_{13}(u_1,u_3+\mu)\alpha_{23}(u_2,u_3)
\\
&+\alpha_{12}(u_1,u_2)\alpha_{13}(u_1-\mu,u_3)
-\alpha_{23}(u_2,u_3+\lambda)\alpha_{13}(u_1,u_3)
\\
&-\alpha_{13}(u_1,u_3)\alpha_{12}(u_1+\mu,u_2)
-\alpha_{23}(u_2,u_3)\alpha_{12}(u_1,u_2)
=0
\,.
\end{split}
\end{equation*}
which, for $\lambda=\mu=0$, reduces to \cite[Eq.(1.1)]{BD82}.
\end{remark}

\begin{corollary}\label{20170404:cor}
Let $\mf g,V$ and the parameters $\alpha,\beta,\gamma$ be as in table \eqref{table}.
Let $\{f,2x,e\}\subset\mf g$ be an $\mf{sl}_2$-triple,
consider the corresponding $\ad x$-eigenspace decompositions \eqref{eq:grading} and 
\eqref{eq:grading_EndV},
let $s\in\mf g_d$, $T\in(\End V)[D]$, and assume that condition \eqref{20170317:eq2} holds.
In Case 3 (i.e. $\mf g=\mf{so}_N$ or $\mf{sp}_N$ and $V=\mb F^N$),
assume also that $T^\dagger=\delta T$, $\delta\in\{=\pm1\}$,
and consider the corresponding inner product $\langle\cdot\,|\,\cdot\rangle^T$ on $\im T$
defined by \eqref{eq:productU}.
Consider the pencil of $W$-algebras $\mc W_\epsilon(\mf g,f,s)$, $\epsilon\in\mb F$,
and the $\End(\im T)$-valued pseudodifferential operator
$L_\epsilon(\partial)$, defined by \eqref{eq:Lw}, over the PVA $\mc W_\epsilon(\mf g,f,s)$.
Then
$L_\epsilon(\partial)$ is an operator of generalized Adler type
for every $\epsilon\in\mb F$.
\end{corollary}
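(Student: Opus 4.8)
The plan is to transport the generalized Adler identity from the affine PVA $\mc V_\epsilon(\mf g,s)$ down to the $\mc W$-algebra along the reduction homomorphism $\rho$, invoking Theorem~\ref{thm:main2} twice. First I would note that, by Lemma~\ref{20170318:lem} together with the values of $\alpha,\beta,\gamma$ recorded in Table~\eqref{table} (and, in Case~3, the relation $(A_\epsilon^*(\partial))^\dagger=-A_\epsilon(\partial)$ of Lemma~\ref{20170318:lem}(d)), the ancestor operator $A_\epsilon(\partial)$ is of generalized Adler type over $\mc V_\epsilon(\mf g,s)$. Since $A_\epsilon(\partial)$ is a monic first order differential operator it is invertible in $\mc V_\epsilon(\mf g,s)((\partial^{-1}))\otimes\End V$, so Theorem~\ref{thm:main2}(a)--(b), applied with the given $T$ and using $T^\dagger=\delta T$ in Case~3, shows that $M_\epsilon(\partial):=JA_\epsilon^{-1}(\partial)I$ is of generalized Adler type over $\mc V_\epsilon(\mf g,s)$ with parameters $-\alpha,-\beta,\gamma$ relative to the form $\langle\cdot\,|\,\cdot\rangle^T$ on $\im T$, and that $(M_\epsilon^*(\partial))^{\dagger_T}=\eta\delta\,M_\epsilon(\partial)$ in Case~3.

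Next I would push this forward by $\rho$. Since $\rho$ is a differential algebra homomorphism it extends to a homomorphism of the corresponding algebras of pseudodifferential operators which commutes with $\partial$, with the formal adjoint $*$, and with left and right multiplication by the constant maps $I,J,\Omega,\Omega^\dagger$; hence $\rho(M_\epsilon(\partial))=J(\rho A_\epsilon(\partial))^{-1}I=L_\epsilon^{-1}(\partial)$, which by Lemma~\ref{prop:L2_general} has coefficients in $\mc W:=\mc W_\epsilon(\mf g,f,s)$. Applying $\rho$ to the generalized Adler identity satisfied by $M_\epsilon(\partial)$ then turns its right-hand side into exactly the generalized Adler right-hand side for $L_\epsilon^{-1}(\partial)$, with the same parameters $-\alpha,-\beta,\gamma$ and the form $\langle\cdot\,|\,\cdot\rangle^T$, and likewise yields $((L_\epsilon^{-1})^*(\partial))^{\dagger_T}=\eta\delta\,L_\epsilon^{-1}(\partial)$. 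So what remains is to identify the $\mc W$-algebra $\lambda$-bracket $\{L_\epsilon^{-1}(z)_\lambda L_\epsilon^{-1}(w)\}^{\mc W}_\epsilon=\rho\{\rho M_\epsilon(z)_\lambda\rho M_\epsilon(w)\}_\epsilon$ with $\rho\{M_\epsilon(z)_\lambda M_\epsilon(w)\}_\epsilon$; equivalently, to prove that $\rho$ of the affine $\lambda$-bracket of two elements of $\mc V(\mf g)$ whose $\rho$-images lie in $\mc W$ is unchanged if the elements are first replaced by those $\rho$-images.

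This comparison is the step I expect to be the main obstacle. I would expand the difference $\{M_\epsilon(z)_\lambda M_\epsilon(w)\}_\epsilon-\{\rho M_\epsilon(z)_\lambda\rho M_\epsilon(w)\}_\epsilon$ by the Leibniz rule into three pieces, each carrying a factor $M_\epsilon-\rho M_\epsilon$ with coefficients in $\ker\rho$. In the two pieces where this $\ker\rho$-factor is paired against a factor whose $\rho$-image lies in $\mc W$, I would use the auxiliary identities $\rho\{v_\lambda g\}_\epsilon=\rho\{v_\lambda\rho(g)\}_\epsilon$ and $\rho\{g_\lambda v\}_\epsilon=\rho\{\rho(g)_\lambda v\}_\epsilon$ for $v\in\mc W$, $g\in\mc V(\mf g)$ --- these follow from the sesquilinearity and Leibniz axioms, skewsymmetry, Lemma~\ref{lem:app2}, and the defining property $\rho\{a_\lambda v\}_\epsilon=0$ of $\mc W$ for $a\in\mf g_{\geq\frac12}$ --- to replace that factor by $\rho(M_\epsilon-\rho M_\epsilon)=0$. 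The remaining piece, in which both factors lie in $\ker\rho$, vanishes under $\rho$ because the affine bracket \eqref{lambda} maps $\ker\rho\times\ker\rho$ into $\ker\rho[\lambda]$, which reduces, via the Leibniz and sesquilinearity axioms, to the elementary facts that $[\mf g_{\geq1},\mf g_{\geq1}]\subseteq\mf g_{\geq2}\subseteq\ker\rho$ and that the $\lambda$- and $\epsilon$-terms of \eqref{lambda} vanish on $\mf g_{\geq1}\times\mf g_{\geq1}$ by grading. This establishes that $L_\epsilon^{-1}(\partial)$ is of generalized Adler type over $\mc W_\epsilon(\mf g,f,s)$ with parameters $-\alpha,-\beta,\gamma$; applying Theorem~\ref{thm:main2}(a) once more, now over the PVA $\mc W_\epsilon(\mf g,f,s)$, to the invertible operator $L_\epsilon^{-1}(\partial)$ (invertible by Lemma~\ref{20170303:prop1}(b)), we conclude that $L_\epsilon(\partial)=(L_\epsilon^{-1}(\partial))^{-1}$ is of generalized Adler type over $\mc W_\epsilon(\mf g,f,s)$ with the original parameters $\alpha,\beta,\gamma$ (and adjointness constant $\eta\delta$ in Case~3), for every $\epsilon\in\mb F$.
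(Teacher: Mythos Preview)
Your argument is correct and follows essentially the same route as the paper's proof: establish that $A_\epsilon(\partial)$ is of generalized Adler type, pass to the quasideterminant via Theorem~\ref{thm:main2}, and then transport the identity through $\rho$ to the $\mc W$-algebra. The only difference is cosmetic---the paper invokes Theorem~\ref{thm:main2}(c) directly and cites \cite[Cor.3.3(d)]{DSKV13} for the compatibility $\rho\{\,\cdot\,_\lambda\,\cdot\,\}_\epsilon=\{\,\cdot\,_\lambda\,\cdot\,\}^{\mc W}_\epsilon$, whereas you unpack (c) into (a)+(b)+(a) and supply a self-contained proof of that compatibility via the three-piece Leibniz decomposition.
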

\begin{proof}
By \eqref{eq:adler-glN}, \eqref{eq:adler-slN} and \eqref{eq:adler-soN},
$A_\epsilon(\partial)$ satisfies \eqref{eq:adler-general}.
Hence, by Theorem \ref{thm:main2}(c)
so does the generalized quasideterminant $(J(A_\epsilon^{-1}(\partial)I)^{-1}$.
Applying the differential algebra homomorphism $\rho$ to both sides of 
the generalized Adler identity \eqref{eq:adler-general} 
for this generalized quasideterminant
(and applying \cite[Cor.3.3(d)]{DSKV13}),
we get the desired result. 
\end{proof}

\subsection{Scalar operators with constant coefficients of generalized Adler type}\label{sec:5.4}

It is natural to ask when a scalar operator with constant coefficients,
$A(\partial)=a(\partial)\id$,
satisfies the generalized Adler identity \eqref{eq:adler-general}.
In this case \eqref{eq:adler-general} reads:
\begin{equation}\label{eq:adler-scalar}
\begin{array}{l}
\displaystyle{
\vphantom{\Bigg(}
\alpha
\frac{a(z-\lambda)a(w+\lambda)-a(z)a(w)}{z-w-\lambda}
\Omega
-\beta
\frac{a(z)a(w+\lambda)-a(z-\lambda)a(w)}{z+w}
\Omega^\dagger
} \\
\displaystyle{
\vphantom{\Bigg(}
+\gamma
\frac{(a(z-\lambda)-a(z))
(a(w+\lambda)-a(w))}\lambda
\id\otimes\id
=
0
\,.}
\end{array}
\end{equation}
In order to make sense of equation \eqref{eq:adler-scalar},
we may assume that $a(\partial)$ lies in $\mb F((\partial^{-1}))$ or in $\mb F((\partial))$.
If $\dim V>1$, the operators $\Omega,\Omega^\dagger$ and $\id\otimes\id\in\End V\otimes\End V$
are linearly independent.
Hence, in this case, equation \eqref{eq:adler-scalar} implies $a(\partial)=a\in\mb F$.
Let us then consider the case when $V=\mb F$,
in which case $\Omega=\Omega^\dagger=\id\otimes\id$,
and equation \eqref{eq:adler-scalar} becomes
\begin{equation}\label{eq:adler-scalar1}
\begin{array}{l}
\displaystyle{
\vphantom{\Bigg(}
\alpha
\frac{a(z-\lambda)a(w+\lambda)-a(z)a(w)}{z-w-\lambda}
-\beta
\frac{a(z)a(w+\lambda)-a(z-\lambda)a(w)}{z+w}
} \\
\displaystyle{
\vphantom{\Bigg(}
+\gamma
\frac{(a(z-\lambda)-a(z))
(a(w+\lambda)-a(w))}\lambda
=
0
\,.}
\end{array}
\end{equation}
If we take the derivative at $\lambda=0$ of both sides of equation \eqref{eq:adler-scalar1},
we get
\begin{equation}\label{eq:adler-scalar2}
\alpha
\frac{a(z)a'(w)-a'(z)a(w)}{z-w}
-\beta
\frac{a(z)a'(w)+a'(z)a(w)}{z+w}
-\gamma a'(z)a'(w)
=
0
\,.
\end{equation}
We can then take the limit for $z\to w$ of both sides of \eqref{eq:adler-scalar2} to get 
\begin{equation}\label{eq:adler-scalar3}
\alpha
\big((a'(w))^2-a(w)a''(w)\big)
-\beta
\frac{a(w)a'(w)}{w}
-\gamma (a'(w))^2
=
0
\,.
\end{equation}
Letting $y(w)=\frac{a'(w)}{a(w)}$,
equation \eqref{eq:adler-scalar3} reduces to the following first order differential equation
for the function $y(w)$:
\begin{equation}\label{eq:adler-scalar4}
\alpha y'
+\beta\frac{y}{w}
+\gamma y^2
=
0
\,,
\end{equation}
which can be easily solved by the method of variation of constants.
The general solutions of equation \eqref{eq:adler-scalar4},
and, up to a multiplicative constant, of equation \eqref{eq:adler-scalar3},
are given in the following table:
\begin{equation}\label{table-2}
\begin{tabular}{l|l|l}
\vphantom{\Big(}
conditions on $\alpha,\beta,\gamma$ & $y(w)$ & $a(w)$ \\
\hline 
\vphantom{\Big(}
--- & $0$ & $1$ \\
$\alpha-\beta-n\gamma=0$ & $\frac{n}{w}$ & $w^n$ \\
$\alpha(n-1)+\beta=0,\,
\alpha\neq0,\,\gamma=0$ & $kw^{n-1}$ & $\exp({kw^n})$ \\
$\beta=\alpha,\,\gamma\neq0,\,\alpha-n\gamma=0$ & $\frac{n}{w(k+\log w)}$ 
& $(k+\log w)^n$
\end{tabular}
\,,
\end{equation}
where $k\in\mb F$.
To conclude, we need to see which of the solutions $a(w)$ listed in Table \eqref{table-2}
are indeed solutions of the generalized Adler identity \eqref{eq:adler-scalar}.
As a result, we get the following complete list of scalar operators of generalized Adler type
(in dimension $1$), up to a multiplicative constant:
\begin{equation}\label{table-3}
\begin{tabular}{l|l}
\vphantom{\Big(}
conditions on $\alpha,\beta,\gamma$ & $a(\partial)$ \\
\hline 
\vphantom{\Big(}
--- & $1$ \\
$\alpha-\beta-\gamma=0$ & $\partial$ \\
$\alpha-\beta+\gamma=0$ & $\partial^{-1}$ \\
$\alpha=-\beta=\gamma$ & $\partial^2$ \\
$\alpha=-\beta=-\gamma$ & $\partial^{-2}$ \\
$\alpha\neq0, \beta=\gamma=0$ & $e^{k\partial}$,\, $k\in\mb F$
\end{tabular}
\end{equation}

\section{Integrable hierarchies for generalized Adler type operators}\label{sec:6}

\begin{theorem}\label{thm:hn}
Let $A(\partial)\in\mc V((\partial^{-1}))\otimes\End V$
be an $\End V$-valued pseudodifferential operator over the Poisson vertex algebra $\mc V$.
Assume that $A(\partial)$ is an operator of generalized Adler type,
and that it is invertible in $\mc V((\partial^{-1}))\otimes\End V$.
For $B(\partial)\in\mc V((\partial^{-1}))\otimes\End V$
a $K$-th root of $A$ (i.e. $A(\partial)=B(\partial)^K$ for $K\in\mb Z\backslash\{0\}$)
define the elements $h_{n,B}\in\mc V$, $n\in\mb Z$, by ($\tr=1\otimes\tr$)
\begin{equation}\label{eq:hn}
h_{n,B}=
\frac{-K}{n}
\Res_z\tr(B^n(z))
\text{ for } n\neq0
\,,\,\,
h_0=0\,.
\end{equation}
Then: 
\begin{enumerate}[(a)]
\item
All the elements $\tint h_{n,B}$ are Hamiltonian functionals in involution:
\begin{equation}\label{eq:invol}
\{\tint h_{m,B},\tint h_{n,C}\}=0
\,\text{ for all } m,n\in\mb Z,\,
B,C \text{ roots of } A
\,.
\end{equation}
\item
The corresponding compatible hierarchy of Hamiltonian equations satisfies
\begin{equation}\label{eq:hierarchy}
\frac{dA(w)}{dt_{n,B}}
=
\{\tint h_{n,B},A(w)\}
=
[\alpha(B^n)_+-\beta((B^{n})^{*\dagger})_+,A](w)
\,,\,\,n\in\mb Z,\,
B \text{ root of } A
\end{equation}
(in the RHS we are taking the symbol of the commutator of matrix pseudodifferential operators),
and the Hamiltonian functionals $\tint h_{n,C}$, $n\in\mb Z_+$, $C$ root of $A$,
are integrals of motion of all these equations.
\end{enumerate}
\end{theorem}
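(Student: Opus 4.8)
The plan is to reduce the whole statement to the single \emph{master formula}
$$
\{\tint h_{n,B},A(w)\}
=
\big[\,\alpha (B^n)_+-\beta\big((B^{n})^{*\dagger}\big)_+\,,\,A\,\big](w)
\,,\qquad n\in\mb Z\backslash\{0\}
$$
(valid for every root $B$ of $A$; for $n=0$ there is nothing to prove as $h_0=0$), and to deduce everything else from it. Granting it, part~(b) is immediate: the first equality in \eqref{eq:hierarchy} is the definition of the flow $\frac{d}{dt_{n,B}}:=\{\tint h_{n,B},\cdot\}$, a well-defined evolutionary derivation of $\mc V$ by the PVA formalism of Section~\ref{sec:1}, and the second equality is the master formula. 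Next, writing $P_{n,B}:=\alpha(B^n)_+-\beta((B^{n})^{*\dagger})_+$, every root $C$ of $A$ evolves under this flow by the \emph{same} Lax equation $\frac{dC}{dt_{n,B}}=[P_{n,B},C]$: indeed $E:=\frac{dC}{dt_{n,B}}-[P_{n,B},C]$ satisfies $\sum_i C^iE\,C^{K_C-1-i}=\frac{d}{dt_{n,B}}(C^{K_C})-[P_{n,B},C^{K_C}]=\frac{dA}{dt_{n,B}}-[P_{n,B},A]=0$ by the master formula, whence $E=0$ on comparing leading orders (here $\mathrm{char}\,\mb F=0$). Finally the flows pairwise commute, since by the Jacobi identity and the representation of $\mc V/\partial\mc V$ on $\mc V$ recalled in Section~\ref{sec:1} one has $[\{\tint h_{m,B},\cdot\},\{\tint h_{n,C},\cdot\}]=\{\{\tint h_{m,B},\tint h_{n,C}\},\cdot\}$, which vanishes once part~(a) is known.

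For part~(a) and the conservation assertion of part~(b) I would then invoke the classical fact (Adler's theorem) that the residue of a commutator of matrix pseudodifferential operators is a total derivative. Using that roots of $A$ evolve by the same Lax equation, one computes
$$
\frac{d}{dt_{m,B}}h_{n,C}
=
\frac{-K_C}{n}\Res_z\tr\Big(\frac{dC^n(z)}{dt_{m,B}}\Big)
=
\frac{-K_C}{n}\Res_z\tr\big([P_{m,B},C^n](z)\big)
\in\partial\mc V
\,,
$$
so that $\{\tint h_{m,B},\tint h_{n,C}\}=\tint\{\tint h_{m,B},h_{n,C}\}=\tint\frac{d}{dt_{m,B}}h_{n,C}=0$, which is \eqref{eq:invol}; the same computation gives $\frac{d}{dt_{n,B}}\tint h_{m,C}=0$ for all $m\in\mb Z_+$, i.e. the $\tint h_{m,C}$ are integrals of motion of every equation of the hierarchy.

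The real content is the master formula, which I would prove as follows. From the generalized Adler identity \eqref{eq:adler-general} for $A$, writing $A=B^K$ and applying the Leibniz rules of Lemma~\ref{20170404:lem2}(a)--(d) (and their inverse-power versions when $n$ or $K$ is negative), one derives a closed form for the mixed bracket $\{A(z)_\lambda B^n(w)\}$ in terms of $A$, $B^n$, $\Omega$ and $\Omega^\dagger$ — in effect solving for the mixed bracket exponent by exponent — and hence, by the skewsymmetry axiom, for $\{B^n(z)_\lambda A(w)\}$ as well. Then applying $-\tfrac{K}{n}\Res_z$ and the first-factor trace $\tr\otimes\id$ and setting $\lambda=0$, one reads off $\{\tint h_{n,B},A(w)\}$: the residue identities \eqref{eq:positive} convert the geometric-series kernels $\iota_z(z-w-\lambda-\partial)^{-1}$ and $\iota_z(z+w+\partial)^{-1}$ into differential parts; the $\Omega$-contractions, carried out with Lemma~\ref{20170322:lem1} and \eqref{20170404:eq1}, produce the term $\alpha[(B^n)_+,A](w)$; the $\Omega^\dagger$-contractions, handled with \eqref{eq:permut-tau}, the adjointness hypothesis \eqref{eq:adjoint}, and Lemma~\ref{20170411:lem1} to reorganize the $*$ and $\dagger$ operations, produce $-\beta[((B^n)^{*\dagger})_+,A](w)$; and the $\gamma$-term, which is regular in $\lambda+\partial$ (Remark~\ref{rem:regular}), contributes nothing after the residue and partial trace — consistently with the absence of $\gamma$ from \eqref{eq:hierarchy} (cf. the $\mf{gl}_N$ and $\mf{sl}_N$ rows of Table~\eqref{table}, which give the same hierarchy).

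The hard part will be this last step, and in particular: deriving the closed form for $\{A(z)_\lambda B^n(w)\}$, which requires careful tracking of how the several $\partial$'s act across the shifted symbol arguments $w+\lambda+\partial$, $z-w-\lambda-\partial$ and $z+w+\partial$; matching the $\Omega$- and $\Omega^\dagger$-contractions with the $*$ and $\dagger$ operations so that the $\alpha$- and $\beta$-contributions genuinely assemble into the \emph{symbols of commutators} of pseudodifferential operators rather than into bookkeeping artifacts; and verifying that the $\gamma$-term, together with all would-be total-derivative remainders, disappears, so that the master formula holds as an exact identity in $\mc V((w^{-1}))\otimes\End V$. One must also carry the negative-$n$ and negative-$K$ cases in parallel throughout, systematically replacing Lemma~\ref{20170404:lem2}(e),(f) by their inverse-power analogues (i),(j) and using the standing invertibility of $A$.
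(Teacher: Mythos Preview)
Your overall architecture --- prove the master formula \eqref{eq:hierarchy} first, then deduce part~(a) from it via Adler's lemma that $\Res_z\tr[P,Q](z)\in\partial\mc V$, together with the observation that every root $C$ of $A$ inherits the same Lax evolution --- is sound and is in fact a slight streamlining of the paper, which proves~(a) and~(b) separately by parallel residue computations. (Your leading-order argument that $\sum_i C^iE\,C^{K_C-1-i}=0\Rightarrow E=0$ implicitly uses that the leading coefficient of $C$ is central and invertible, which holds in all the applications but should be stated.)

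The real gap is in your plan for the master formula. You propose to obtain a closed form for $\{A(z)_\lambda B^n(w)\}$ by ``solving for the mixed bracket exponent by exponent'' from the Adler identity for $\{A_\lambda A\}$ and the Leibniz rules. This cannot work as stated: Leibniz lets you pass from $\{A_\lambda B\}$ to $\{A_\lambda B^K\}$, not the other way, and the single relation $\{A_\lambda A\}=\{A_\lambda B^K\}$ does not determine $\{A_\lambda B\}$ (hence not $\{A_\lambda B^n\}$). The paper's key maneuver, which you are missing, is Lemma~\ref{lem:hn2}: one does \emph{not} compute $\{A_\lambda B^n\}$ at all, but first applies $\Res_z$, $\tr$, and $\lambda=0$, and uses a telescoping over the Leibniz sum (via Lemma~\ref{lem:hn1}(a) and cyclicity of trace) to obtain
\[
\{h_{n,B}{}_\lambda A(w)\}\big|_{\lambda=0}
=
-\Res_z(\tr\otimes1)\,\{A(z+x)_xA(w)\}\big(\big|_{x=\partial}B^{n-K}(z)\otimes\id\big).
\]
Only now does the generalized Adler identity for $\{A_\lambda A\}$ apply directly, and the $\alpha$- and $\beta$-terms collapse (via \eqref{eq:positive} and Lemma~\ref{lemma:29032017}) to the commutator in \eqref{eq:hierarchy}.

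A second, smaller point: the $\gamma$-term does not vanish because it is ``regular in $\lambda+\partial$'' --- regularity only ensures the expression is well defined. It vanishes because, after the reduction above, the $\gamma$-contribution is proportional to $\tr\Res_z\big(A^*(-z)B^{n-K}(z)-B^n(z)\big)$, and this is zero by the identity $\Res_z B^{n-K}(z)A^*(-z)=\Res_z B^{n-K}(z+\partial)A(z)=\Res_z B^n(z)$ (Lemma~\ref{lem:hn1}(a)). You should replace your heuristic with this computation.
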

\begin{remark}\label{rem:negative-n}
One can state the same Theorem \ref{thm:hn} without the assumption that $A$ 
(and therefore $B$) is invertible,
at the price of assuming that $K\geq1$,
and of restricting the sequence $h_n$ in \eqref{eq:hn} to $n\in\mb Z_+$.
Moreover, since the proof of \eqref{eq:invol} and \eqref{eq:hierarchy} 
is based on Lemma \ref{lem:hn2},
one needs to restrict equation \eqref{eq:invol} to $m\geq K$ and $n\geq L$,
where $B^K=C^L=A$,
and equation \eqref{eq:hierarchy} to $n\geq K$.
\end{remark}
In the remainder of the section we will give a proof of Theorem \ref{thm:hn}.
This theorem is an extension of \cite[Thm.5.1]{DSKVnew}
to the case of the generalized Adler identity \eqref{eq:adler-general}.
Its proof is based on the following Lemmas \ref{lem:hn1} and Lemma \ref{lem:hn2},
which are essentially the same as Lemmas 2.1 and 5.6 in \cite{DSKVnew} respectively,
but written in terms of endomorphisms instead of matrix elements.
\begin{lemma}\label{lemma:29032017}
Let $A$, $B$ be in $\End V\otimes\End V$. Then
\begin{enumerate}[(a)]
\item $(\tr\otimes 1)(\Omega A)=A'A''\in\End V$;
\item $(\tr\otimes \tr)(\Omega A)=\tr(A'A'')\in\mb F$;
\item $(\tr\otimes 1)(A\Omega^\dagger B)=A''(B'A')^\dagger B'')\in\End V$;
\item $(\tr\otimes \tr)(A\Omega^\dagger B)=\tr(A''(B'A')^\dagger B''))\in\mb F$.
\end{enumerate}
\end{lemma}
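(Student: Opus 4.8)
The plan is to reduce each identity to the explicit formula $\Omega = \sum_{i,j} E_{ij}\otimes E_{ji}$ together with the completeness relation \eqref{20170404:eq1}, namely $\tr(\Omega' A)\Omega'' = A$ for $A\in\End V$. Write $A = \sum_k A'_k\otimes A''_k$ and $B = \sum_l B'_l\otimes B''_l$ as sums of monomials; since everything in sight is $\mb F$-bilinear in the tensor factors, it suffices to prove each identity for single monomials $A = A'\otimes A''$ and $B = B'\otimes B''$, and then sum.

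For part (a): using $\Omega(A'\otimes A'') = (A''\otimes A')\Omega$ from Lemma \ref{20170322:lem1} (with $U=V$), we get $(\tr\otimes 1)(\Omega A) = (\tr\otimes 1)\big((A''\otimes A')\Omega\big)$. Now apply $\tr$ to the first factor: writing $\Omega = \sum_{i,j}E_{ij}\otimes E_{ji}$, the first tensor slot is $\sum_{i,j}\tr(A''E_{ij})\,(\cdots) = \sum_{i,j}(A'')_{ji}\,(\cdots)$, and collecting the second slot gives $\sum_{i,j}(A'')_{ji}A'E_{ji} = A'\big(\sum_{i,j}(A'')_{ji}E_{ji}\big) = A'A''$. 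Alternatively — and this is the slicker route I would actually write — use \eqref{20170404:eq1} directly: $(\tr\otimes 1)(\Omega A) = \tr(\Omega' A'A'')\Omega'' \cdot$(nothing)$= A'A''$ once one notes $(\tr\otimes 1)(\Omega\,(C\otimes 1)) = \tr(\Omega' C)\Omega'' = C$ for $C\in\End V$ and takes $C = A'A''$ after pushing $A''$ past $\Omega$. Part (b) is then immediate: apply $\tr$ to the second factor of part (a)'s output, giving $\tr(A'A'')$.

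For parts (c) and (d): by Lemma \ref{20170322:lem3}, $\Omega^\dagger(B'\otimes 1) = \Omega^\dagger(1\otimes (B')^\dagger)$, so $A\Omega^\dagger B = (A'\otimes A'')\Omega^\dagger(B'\otimes B'') = (A'\otimes A'')\Omega^\dagger(1\otimes (B')^\dagger B'')$. Next move $A'$ through: $(A'\otimes 1)\Omega^\dagger = \Omega^\dagger$ precomposed appropriately — more directly, use Lemma \ref{20170322:lem2}, which gives the explicit action $\Omega^\dagger(v_1\otimes v_2) = \langle v_1|v_2\rangle\sum_k v^k\otimes v_k$, equivalently $\Omega^\dagger = \sum_k (\text{rank-one pieces})$; the cleanest bookkeeping is to write $\Omega^\dagger = (\Omega')^\dagger\otimes\Omega''$ and compute $(\tr\otimes 1)\big((A'\otimes A'')\Omega^\dagger(B'\otimes B'')\big) = \tr\!\big(A'(\Omega')^\dagger B'\big)\,A''\Omega'' B''$. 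Using $\tr(A'(\Omega')^\dagger B') = \tr\big((B'A')(\Omega')^\dagger\big) = \tr\big(((\Omega')(B'A'))^\dagger\big)$... here one must insert the adjoint carefully: $\tr(X^\dagger) = \tr(X)$ with respect to the trace, so $\tr(A'(\Omega')^\dagger B') = \tr((B'A')(\Omega')^\dagger)$, and then the completeness relation \eqref{20170404:eq1} applied to the $\Omega$-slots collapses $\sum$ over the $\Omega$-index to give $A''(B'A')^\dagger B''$ — matching the claimed formula (with the understanding that $(B'A')'$ sits in the first slot of $\Omega^\dagger$ whose adjoint is taken). Part (d) follows from (c) by applying $\tr$ to the remaining factor.

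The main obstacle I expect is purely notational rather than conceptual: keeping straight which tensor slot the adjoint $\dagger$ acts on in $\Omega^\dagger = (\Omega')^\dagger\otimes\Omega''$, and correctly threading $A',A'',B',B''$ past it using Lemma \ref{20170322:lem3} and the cyclicity/adjoint-invariance of the trace, so that the order of factors in the final answer $A''(B'A')^\dagger B''$ comes out exactly right. I would double-check parts (c)–(d) by testing on rank-one $A,B$ via the explicit formula in Lemma \ref{20170322:lem2}, which should pin down the ordering unambiguously.
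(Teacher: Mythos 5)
Your proposal is correct and follows essentially the same route as the paper, which likewise deduces (a) and (c) directly from the completeness relation \eqref{20170404:eq1} and the cyclic property of the trace (reducing to monomials $A'\otimes A''$, $B'\otimes B''$), and obtains (b), (d) by applying the remaining trace. The only point to tighten in (c) is to drop the aborted equality $\tr\big((B'A')(\Omega')^\dagger\big)=\tr\big((\Omega'(B'A'))^\dagger\big)$, which is false termwise, and instead use $\tr(X^\dagger)=\tr(X)$ to rewrite $\tr\big((B'A')(\Omega')^\dagger\big)=\tr\big(\Omega'(B'A')^\dagger\big)$, after which \eqref{20170404:eq1} collapses the first slot to $(B'A')^\dagger$ and gives $A''(B'A')^\dagger B''$ exactly as you conclude.
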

\begin{proof}
Parts (a) and (c) are immediate consequences of \eqref{20170404:eq1}
and the cyclic property of the trace.
Parts (b) and (d) are obvious consequences of (a) and (c) respectively.
\end{proof}
\begin{lemma}{\cite[Lem.2.1]{DSKVnew}}\phantomsection\label{lem:hn1}
Given two operators $A(\partial),B(\partial)\in\mc V((\partial^{-1}))\otimes\End V$, we have
\begin{enumerate}[(a)]
\item
$\Res_z A(z)B^*(\lambda-z)=\Res_zA(z+\lambda+\partial)B(z)$;
\item
$\tint \Res_z \tr(A(z+\partial)B(z))=\tint \Res_z\tr(B(z+\partial)A(z))$.
\end{enumerate}
\end{lemma}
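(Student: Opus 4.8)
Both identities are standard manipulations with residues of pseudodifferential operators, so the plan is to reduce everything to the basic algebraic rules recorded in Lemma \ref{20170404:lem1} together with the fact that $\Res_z \partial(f) = 0$ for any $f\in\mc V((z^{-1}))$ (i.e.\ the residue kills total derivatives, which is why $\tint$ appears in part (b)). I will treat the scalar case first and then tensor with $\End V$, since the $\End V$-factor plays no active role: writing $A(\partial)=a(\partial)A$, $B(\partial)=b(\partial)B$ with $a,b$ scalar and $A,B\in\End V$, both sides of (a) and (b) are bilinear in $(a,b)$ and the endomorphism part passes through untouched (for (b) one also uses the cyclic property of the trace, $\tr(AB)=\tr(BA)$).

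\textbf{Step 1 (part (a)).} For scalar $a(z)=\sum_m a_m z^m$ I would compute $\Res_z a(z) b^*(\lambda-z)$. Using the notation \eqref{eq:notation} one has $b^*(\lambda-z)=\big(\big|_{x=\partial}b(-\lambda+z-x)\big)$; alternatively, write $b^*(\partial)=\sum_n(-\partial)^n\circ b_n$ so that $b^*(\lambda-z)$ expands into $\sum_n (\lambda-z+\partial)^n\dots$. The cleanest route is: $\Res_z a(z)b^*(\lambda-z)$, upon applying the substitution $z\mapsto z+\lambda+\partial$ (which shifts the residue variable but, crucially, does not change a residue once one is careful that $\partial$ acts only to the right), becomes $\Res_z a(z+\lambda+\partial)b(z)$ after using that $*$ is the formal adjoint and $b^*$ at the shifted argument collapses to $b$. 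I would spell this out by comparing coefficients of $z^{-1}$ on both sides: the left side contributes $\sum_{m} a_m \Res_z z^m b^*(\lambda-z)$, the right side $\sum_m a_m \Res_z (z+\lambda+\partial)^m b(z)$, and the claimed equality is then the single scalar identity $\Res_z z^m b^*(\lambda-z) = \Res_z (z+\lambda+\partial)^m b(z)$, which follows by expanding $b^*(\lambda-z)=\sum_n (\lambda-z+\partial)^n\! \cdot$ applied to the coefficients and matching. This is a finite bookkeeping computation once one fixes the expansion conventions stated before \eqref{eq:positive}.

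\textbf{Step 2 (part (b)).} Here I start from $\Res_z\tr\big(A(z+\partial)B(z)\big)$. By Lemma \ref{20170404:lem1}(a), $A(z+\partial)B(z)=(AB)(z)$, so the left side is $\Res_z\tr\big((AB)(z)\big)=\res_\partial\tr(A(\partial)B(\partial))$, i.e.\ the usual trace-residue of the product of two pseudodifferential operators. The statement to prove is then that $\res_\partial\tr\big(A(\partial)B(\partial)\big)$ and $\res_\partial\tr\big(B(\partial)A(\partial)\big)$ agree \emph{modulo} $\partial\mc V$. This is the classical Adler trace identity: $\res_\partial[P,Q]\in\partial\mc V$ for scalar pseudodifferential operators $P,Q$, extended to the matrix case by taking $\tr$ first (so that $\tr(AB-BA)$ still has residue in $\partial\mc V$). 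I would either cite this well-known fact or, to keep the paper self-contained, reduce it to the scalar statement $\res_\partial[a\partial^m, b\partial^n]\in\partial\mc V$, which one checks directly from the Leibniz expansion of $\partial^m\circ b$ and the telescoping identity $\res_\partial(a\partial^m b\partial^n) = \binom{?}{?}$-type sum; the antisymmetrized sum is a total $\partial$-derivative. Combining with Step's reduction via Lemma \ref{20170404:lem1}(a), part (b) follows.

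\textbf{Main obstacle.} The conceptual content is light; the real danger is purely notational — keeping the expansion conventions (always expand in the domain of large $z$, the role of $\iota_z$ versus $\iota_w$, and the placement of $\big|_{x=\partial}$) consistent so that the shift $z\mapsto z+\lambda+\partial$ in part (a) is legitimate and so that in part (b) the ``modulo $\partial\mc V$'' really is a total derivative and not merely formally one. I expect that the bulk of the write-up will be verifying that $\res_\partial$ of a commutator of scalar pseudodifferential operators lies in $\partial\mc V$ with the conventions in force; everything else is a one-line application of Lemma \ref{20170404:lem1}.
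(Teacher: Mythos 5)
Your proposal is correct. For part (a) you take essentially the paper's route: after expanding both operators in monomials, the statement reduces to the elementary residue identity $\Res_z z^m(z-\lambda)^n=\Res_z(z+\lambda)^m z^n$ for all $m,n\in\mb Z$ (expanded at large $z$), which is exactly the combinatorial identity the paper invokes; just note that your preliminary remark that the substitution $z\mapsto z+\lambda+\partial$ ``does not change the residue'' is precisely the content to be proved, so the coefficient comparison you then describe is the actual argument rather than a verification. For part (b) you go a genuinely different way. The paper deduces (b) from (a) at $\lambda=0$, the cyclic property of the trace, and one integration by parts: $\tint\Res_z\tr(A(z+\partial)B(z))=\tint\Res_z\tr(A(z)B^*(-z))=\tint\Res_z\tr(B^*(-z)A(z))=\tint\Res_z\tr(B(z+\partial)A(z))$. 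You instead rewrite the left-hand side as $\tint\Res_z\tr\big((AB)(z)\big)$ via Lemma \ref{20170404:lem1}(a) and appeal to the classical fact that the residue of a commutator of pseudodifferential operators lies in $\partial\mc V$, reduced to the scalar telescoping computation for $[a\partial^m,b\partial^n]$; this works, after the (easy, and worth stating) observation that $\tr(AB)-\tr(BA)$ is a sum of scalar commutators rather than itself a commutator. The trade-off is that your route re-proves combinatorics the paper's route gets for free: the binomial identity $(-1)^{s}\binom{n}{s}=\binom{m}{s}$ with $s=m+n+1$, which makes your antisymmetrized residue telescope into a total derivative, is the same identity underlying part (a), so deriving (b) from (a) as the paper does is shorter and keeps the combinatorial input in one place, whereas your argument has the merit of making (b) independent of (a) and of connecting it explicitly to the standard Adler trace property.
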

\begin{proof}
Part (a) is a consequence of the combinatorial identity
$\res_zz^m(z-\lambda)^n=\res_z(z+\lambda)^mz^n$,
which holds for every $m,n\in\mb Z$.
For part (b) we have
$$
\begin{array}{l}
\displaystyle{
\vphantom{\Big(}
\tint \Res_z \tr(A(z+\partial)B(z))
=
\tint \Res_z \tr(A(z)B^*(-z))
=
\tint \Res_z \tr(B^*(-z)A(z))
} \\
\displaystyle{
\vphantom{\Big(}
=
\tint \Res_z \tr(B(z+\partial)A(z))
\,.}
\end{array}
$$
In the first equality we used (a) (with $\lambda=0$),
in the second property we used the cyclic property of the trace,
and in the third equality we performed integration by parts.
\end{proof}
\begin{lemma}{\cite[Lem.5.6]{DSKVnew}}\label{lem:hn2}
Let $A(\partial)\in\mc V((\partial^{-1}))\otimes\End V$
and let $B(\partial)\in\mc V((\partial^{-1}))\otimes\End V$ be its $K$-th root,
i.e. $B^K(\partial)=A(\partial)$, for $K\in\mb Z\backslash\{0\}$.
Let $h_{n,B}\in\mc V$ be given by \eqref{eq:hn}.
Then, for $a\in\mc V$, $n\in\mb Z$, we have
\begin{equation}\label{eq:hn2}
\begin{split}
& \{{h_{n,B}}_\lambda a\}\big|_{\lambda=0}
=
-\Res_z\tr 
\{A(z+x)_x a\}\big(\big|_{x=\partial}B^{n-K}(z)\big)
\,,
\\
& \tint \{a_\lambda h_{n,B}\}\big|_{\lambda=0}
=
-\int \Res_w\tr 
\{a_\lambda A(w+x)\}\big|_{\lambda=0}\big(\big|_{x=\partial}B^{n-K}(w)\big)
\,.
\end{split}
\end{equation}
\end{lemma}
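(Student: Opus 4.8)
The plan is to derive both formulas in \eqref{eq:hn2} directly from the definition \eqref{eq:hn} of $h_{n,B}$ together with the sesquilinearity and Leibniz rules for the $\lambda$-bracket, exactly as in \cite[Lem.5.6]{DSKVnew} but keeping track of the $\End V$ factors via the trace $\tr = 1\otimes\tr$ and the identities collected in Lemma \ref{lemma:29032017}. First I would treat the case $n\neq0$ (the case $n=0$ is trivial since $h_0=0$). Writing $h_{n,B}=\frac{-K}{n}\Res_z\tr(B^n(z))$, I compute $\{{h_{n,B}}_\lambda a\}$ by moving the $\lambda$-bracket inside the residue and the trace. The key input is Lemma \ref{20170404:lem2}(f) (the $n$-th power rule), which expresses $\{B^n(z)_\lambda a\}$ — or rather, after using skewsymmetry, $\{a_\lambda B^n(w)\}$ — as a sum over $\ell=0,\dots,n-1$ of terms involving $\{B(z+x)_{\lambda+x+y}a\}$ sandwiched between powers $B^\ell$ and $B^{n-1-\ell}$. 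Since we will set $\lambda=0$ and take $\Res_z\tr$, the cyclic property of the trace together with Lemma \ref{lem:hn1}(a) (the residue shift $\Res_z A(z)B^*(-z)=\Res_z A(z+\partial)B(z)$) collapses the sum over $\ell$: all $n$ summands become equal after the trace and residue, producing a factor $n$ that cancels the $\frac1n$, and the powers $B^\ell\cdots B^{n-1-\ell}$ combine into $B^{n-1}$.

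Next I would convert $B^{n-1}$ into $B^{n-K}$ using that $B^K=A$: the point is that the $\lambda$-bracket term $\{B(z)_x a\}$ can be replaced by $\frac1K$ times a bracket involving $A(z)=B^K(z)$, again via the power rule \eqref{20170404:lem2}(f) applied to $B^K$ — this is where the remaining $K-1$ powers of $B$ get absorbed, turning $B^{n-1}$ into $B^{n-K}$ and the factor $\frac{-K}{n}\cdot n \cdot \frac1K$ into $-1$, and replacing $\{B(z+x)_xa\}$ by $\{A(z+x)_xa\}$. This yields the first formula in \eqref{eq:hn2}. The second formula is obtained from the first by the same manipulation read through skewsymmetry of the PVA $\lambda$-bracket: one uses axiom (iii) to rewrite $\tint\{a_\lambda h_{n,B}\}|_{\lambda=0}$ in terms of $\{{h_{n,B}}_\lambda a\}|_{\lambda=0}$ up to a total derivative (which dies under $\tint$), and then applies Lemma \ref{lem:hn1}(b) to move the $\partial$ from the $B^{n-K}$ factor and match the stated form with $\Res_w$ instead of $\Res_z$; the sign and the placement of $\big(\big|_{x=\partial}B^{n-K}(w)\big)$ come out after an integration by parts.

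The main obstacle, and the step requiring genuine care, is the bookkeeping of the formal variables $x,y$ and the expansion conventions in the iterated application of Lemma \ref{20170404:lem2}(f): one must verify that after taking $\Res_z$ and $\tr$ the $y$-dependence drops out correctly (using $\Res_z$ of a total $z$-derivative vanishes, or Lemma \ref{lem:hn1}(a) with the appropriate shift), and that the cyclic reindexing of the trace genuinely identifies the $n$ summands rather than merely up to lower-order ambiguities. This is precisely the content of the matrix-element computation in \cite[Lem.5.6]{DSKVnew}; here the only new feature is that $\tr$ acts on the second tensor factor only, so one invokes Lemma \ref{lemma:29032017}(a)-(b) to push traces through products of $\End V$-valued symbols. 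Everything else is a routine, if lengthy, substitution, so I would state the $n\neq0$ computation in detail for the first identity and then obtain the second by the skewsymmetry-plus-integration-by-parts argument, citing \cite[Lem.5.6]{DSKVnew} for the parallel structure.
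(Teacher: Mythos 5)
Your treatment of the first identity in \eqref{eq:hn2} is essentially the paper's own argument: you apply the power rule (Lemma \ref{20170404:lem2}(f),(j)) once to $B^n$ and once to $A=B^K$, collapsing the sums over $\ell$ via the cyclic property of the trace and Lemma \ref{lem:hn1}(a), which is exactly the comparison of \eqref{20170405:eq2} and \eqref{20170405:eq3}; note that no skewsymmetry is needed there, since the rule applies directly to $\{B^n(z)_\lambda a\}$. Where you genuinely diverge is the second identity: the paper does not deduce it from the first by skewsymmetry, but repeats the computation with the ``left'' power rule (Lemma \ref{20170404:lem2}(e),(i)) together with Lemma \ref{lem:hn1}(b), comparing \eqref{20170405:eq5} and \eqref{20170405:eq6}. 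Your skewsymmetry route is legitimate and shorter once set up, but the step you summarize as ``up to a total derivative'' needs more than the standard identity $\tint\{a_\lambda h_{n,B}\}|_{\lambda=0}=-\tint\{{h_{n,B}}_\lambda a\}|_{\lambda=0}$: after applying skewsymmetry inside $\Res_z\tr$, the contributions from positive powers of $-\lambda-\partial$ are not total derivatives on the nose, because those $\partial$'s act only on the coefficients of $\{a_\lambda A(z+x)\}$ while the factor $\big(\big|_{x=\partial}B^{n-K}(z)\big)$ stands to the right. One must integrate by parts under $\tint$ to move these $\partial$'s onto $B^{n-K}(z)$ and then observe that, for each power $k\geq1$, the resulting terms carry the coefficient $\sum_{j=0}^{k}\binom{k}{j}(-1)^{j}=0$; only then do all $\lambda$-dependent terms drop out, leaving precisely $-\tint\Res_w\tr\{a_\lambda A(w+x)\}|_{\lambda=0}\big(\big|_{x=\partial}B^{n-K}(w)\big)$. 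Lemma \ref{lem:hn1}(b), which you invoke at this point, is not quite the needed statement (it exchanges two operators under $\tint\Res\tr$, whereas here one needs plain integration by parts against the fixed factor $B^{n-K}$), so either spell out this binomial cancellation or follow the paper's parallel computation. The paper's route buys freedom from this bookkeeping; yours, once the cancellation is recorded, buys having to do the residue--trace computation only once.
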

\begin{proof}
By Lemma \ref{20170404:lem2}(f) and (j), we have, for every $n\in\mb Z$,
\begin{equation}\label{20170405:eq1}
\{B^n(z)_\lambda a\}
=
\frac{n}{|n|}\sum_{\ell}
\big(\big|_{y=\partial}(B^\ell)^*(\lambda-z)\big)
\{B(z+x)_{\lambda+x+y}a\}
\big(\big|_{x=\partial}B^{n-1-\ell}(z)\big)
\,,
\end{equation}
where the sum over $\ell$ is $\sum_{\ell=0}^{n-1}$ if $n\geq1$, and $\sum_{\ell=n}^{-1}$ if $n\leq-1$.
We let $n=K$ in \eqref{20170405:eq1},
replace $z$ by $z+\partial$ and $\lambda$ by $\partial$ acting on $B^{n-K}(z)$,
and take $\res_z$ and $\tr$, to get:
\begin{equation}\label{20170405:eq2}
\begin{array}{l}
\displaystyle{
\vphantom{\Big(}
\res_z\tr
\{A(z+x)_x a\}
\big(\big|_{x=\partial}B^{n-K}(z)\big)
} \\
\displaystyle{
\vphantom{\Big(}
=
\frac{K}{|K|}
\sum_{\ell}
\res_z\tr
\big(\big|_{y=\partial}(B^\ell)^*(-z)\big)
\{B(z+x_1+x_2)_{x_1+x_2+y}a\}
} \\
\displaystyle{
\vphantom{\Big(}
\qquad\qquad\qquad
\times\big(\big|_{x_1=\partial}B^{K-1-\ell}(z+x_2)\big)
\big(\big|_{x_2=\partial}B^{n-K}(z)\big)
} \\
\displaystyle{
\vphantom{\Big(}
=
\frac{K}{|K|}
\sum_{\ell}
\res_z\tr
\big(\big|_{y=\partial}(B^\ell)^*(-z)\big)
\{B(z+x)_{x+y}a\}
\big(\big|_{x_1=\partial}B^{n-1-\ell}(z)\big)
} \\
\displaystyle{
\vphantom{\Big(}
=
\frac{K}{|K|}
\sum_{\ell}
\res_z\tr
\{B(z+x)_{x+y}a\}
\big(\big|_{x=\partial}B^{n-1-\ell}(z)\big)
\big(\big|_{y=\partial}(B^\ell)^*(-z)\big)
} \\
\displaystyle{
\vphantom{\Big(}
=
\frac{K}{|K|}
\sum_{\ell}
\res_z\tr
\{B(z+x+y)_{x+y}a\}
\big(\big|_{x=\partial}B^{n-1-\ell}(z+y)\big)
\big(\big|_{y=\partial}B^\ell(z)\big)
} \\
\displaystyle{
\vphantom{\Big(}
=
\frac{K}{|K|}
\sum_{\ell}
\res_z\tr
\{B(z+x)_{x}a\}
\big(\big|_{x=\partial}B^{n-1}(z)\big)
} \\
\displaystyle{
\vphantom{\Big(}
=
K
\res_z\tr
\{B(z+x)_{x}a\}
\big(\big|_{x=\partial}B^{n-1}(z)\big)
\,.}
\end{array}
\end{equation}
In the first equality of \eqref{20170405:eq2} we used \eqref{20170405:eq1},
in the second equality we used Lemma \ref{20170404:lem1}(a),
in the third equality we used the cyclic property of the trace,
in the fourth equality we used Lemma \ref{lem:hn1}(a),
in the fifth equality we used Lemma \ref{20170404:lem1}(a) again,
and in the last equality we used the obvious identity $\sum_\ell=|K|$.
With the same line of reasoning,
we get, by the definition \eqref{eq:hn} of $h_{n,B}$ and equation \eqref{20170405:eq1},
\begin{equation}\label{20170405:eq3}
\begin{array}{l}
\displaystyle{
\vphantom{\Big(}
\{{h_{n,B}}_\lambda a\}\big|_{\lambda=0}
=
-\frac{K}{n}
\res_z\tr
\{B^n(z)_\lambda a\}
\big|_{\lambda=0}
} \\
\displaystyle{
\vphantom{\Big(}
=
-\frac{K}{|n|}
\sum_{\ell}
\res_z\tr
\big(\big|_{y=\partial}(B^\ell)^*(-z)\big)
\{B(z+x)_{x+y}a\}
\big(\big|_{x=\partial}B^{n-1-\ell}(z)\big)
} \\
\displaystyle{
\vphantom{\Big(}
=
-\frac{K}{|n|}
\sum_{\ell}
\res_z\tr
\{B(z+x)_{x+y}a\}
\big(\big|_{x=\partial}B^{n-1-\ell}(z)\big)
\big(\big|_{y=\partial}(B^\ell)^*(-z)\big)
} \\
\displaystyle{
\vphantom{\Big(}
=
-\frac{K}{|n|}
\sum_{\ell}
\res_z\tr
\{B(z+x+y)_{x+y}a\}
\big(\big|_{x=\partial}B^{n-1-\ell}(z+y)\big)
\big(\big|_{y=\partial}B^\ell(z)\big)
} \\
\displaystyle{
\vphantom{\Big(}
=
-\frac{K}{|n|}
\sum_{\ell}
\res_z\tr
\{B(z+x)_{x}a\}
\big(\big|_{x=\partial}B^{n-1}(z)\big)
} \\
\displaystyle{
\vphantom{\Big(}
=
-K
\res_z\tr
\{B(z+x)_{x}a\}
\big(\big|_{x=\partial}B^{n-1}(z)\big)
\,,}
\end{array}
\end{equation}
where this time $\sum_\ell=|n|$.
Comparing the RHS's of equations \eqref{20170405:eq2} and \eqref{20170405:eq3},
we get the first equation in \eqref{eq:hn2}.

By Lemma \ref{20170404:lem2}(e) and (i), we have, for every $n\in\mb Z$,
\begin{equation}\label{20170405:eq4}
\{a_\lambda B^n(w)\}
=
\frac{n}{|n|}\sum_\ell
B^{n-\ell-1}(w+\lambda+\partial)
\{a_\lambda B(w+x)\}
\big(\big|_{x=\partial}B^\ell(w)\big)
\,.
\end{equation}
where, as before, $\sum_\ell$ stands for 
$\sum_{\ell=0}^{n-1}$ if $n\geq1$, and for $\sum_{\ell=n}^{-1}$ if $n\leq-1$.
We let $n=K$ in \eqref{20170405:eq4}, let $\lambda=0$,
replace $w$ by $w+\partial$ where $\partial$ acts on $B^{n-K}(w)$,
and take $\tint$, $\res_w$ and $\tr$, to get:
\begin{equation}\label{20170405:eq5}
\begin{array}{l}
\displaystyle{
\vphantom{\Big(}
\tint\res_w\tr
\{a_\lambda A(w+x)\}\big|_{\lambda=0}
\big(\big|_{x_2=\partial}B^{n-K}(w)\big)
} \\
\displaystyle{
\vphantom{\Big(}
=
\frac{K}{|K|}\sum_\ell
\tint\res_w\tr
B^{K-\ell-1}(w+\partial)
\{a_\lambda B(w+x_1+x_2)\}\big|_{\lambda=0}
} \\
\displaystyle{
\vphantom{\Big(}
\qquad\qquad\qquad\times
\big(\big|_{x_1=\partial}B^\ell(w+x_2)\big)
\big(\big|_{x_2=\partial}B^{n-K}(w)\big)
} \\
\displaystyle{
\vphantom{\Big(}
=
\frac{K}{|K|}\sum_\ell
\tint\res_w\tr
B^{K-\ell-1}(w+\partial)
\{a_\lambda B(w+x)\}\big|_{\lambda=0}
\big(\big|_{x=\partial}B^{n+\ell-K}(w)\big)
} \\
\displaystyle{
\vphantom{\Big(}
=
\frac{K}{|K|}\sum_\ell
\tint\res_w\tr
\{a_\lambda B(w\!+\!x\!+\!y)\}\big|_{\lambda=0}
\big(\big|_{x=\partial}B^{n+\ell-K}(w\!+\!y)\big)
\big(\big|_{y=\partial}B^{K\!-\!\ell-1}(w)\big)
} \\
\displaystyle{
\vphantom{\Big(}
=
\frac{K}{|K|}\sum_\ell
\tint\res_w\tr
\{a_\lambda B(w+x)\}\big|_{\lambda=0}
\big(\big|_{x=\partial}B^{n-1}(w)\big)
} \\
\displaystyle{
\vphantom{\Big(}
=
K
\tint\res_w\tr
\{a_\lambda B(w+x)\}\big|_{\lambda=0}
\big(\big|_{x=\partial}B^{n-1}(w)\big)
\,.}
\end{array}
\end{equation}
In the first equality of \eqref{20170405:eq5} we used \eqref{20170405:eq4},
in the second equality we used Lemma \ref{20170404:lem1}(a),
in the third equality we used Lemma \ref{lem:hn1}(b),
in the fourth equality we used Lemma \ref{20170404:lem1}(a) again,
and in the last equality we used the obvious identity $\sum_\ell=|K|$.
With the same line of reasoning,
we get, by the definition \eqref{eq:hn} of $h_{n,B}$ and equation \eqref{20170405:eq4},
\begin{equation}\label{20170405:eq6}
\begin{array}{l}
\displaystyle{
\vphantom{\Big(}
\tint\{a_\lambda{h_{n,B}}\}\big|_{\lambda=0}
=
-\frac{K}{n}
\tint\res_w\tr
\{a_\lambda B^n(w)\}
\big|_{\lambda=0}
} \\
\displaystyle{
\vphantom{\Big(}
=
-\frac{K}{|n|}
\sum_\ell
\tint\res_w\tr
B^{n-\ell-1}(w+\partial)
\{a_\lambda B(w+x)\}\big|_{\lambda=0}
\big(\big|_{x=\partial}B^\ell(w)\big)
} \\
\displaystyle{
\vphantom{\Big(}
=
-\frac{K}{|n|}
\sum_\ell
\tint\res_w\tr
\{a_\lambda B(w+x+y)\}\big|_{\lambda=0}
\big(\big|_{x=\partial}B^\ell(w+y)\big)
\big(\big|_{y=\partial}B^{n-\ell-1}(w)\big)
} \\
\displaystyle{
\vphantom{\Big(}
=
-\frac{K}{|n|}
\sum_\ell
\tint\res_w\tr
\{a_\lambda B(w+x)\}\big|_{\lambda=0}
\big(\big|_{x=\partial}B^{n-1}(w)\big)
} \\
\displaystyle{
\vphantom{\Big(}
=
-K
\tint\res_w\tr
\{a_\lambda B(w+x)\}\big|_{\lambda=0}
\big(\big|_{x=\partial}B^{n-1}(w)\big)
\,.}
\end{array}
\end{equation}
Comparing the RHS's of equations \eqref{20170405:eq5} and \eqref{20170405:eq6},
we get the second equation in \eqref{eq:hn2}.
\end{proof}
\begin{proof}[Proof of Theorem \ref{thm:hn}]
Suppose $B$ is a $K$-th root of $A$, $K\in\mb Z\backslash\{0\}$ 
and $C$ is an $H$-th root of $A$, $H\in\mb Z\backslash\{0\}$.
Applying the second equation in \eqref{eq:hn2} first,
and then the first equation in \eqref{eq:hn2}, we get
\begin{equation}\label{eq:hn-pr1}
\begin{split}
& \{\tint h_{m,B},\tint h_{n,C}\}
= 
\int \Res_z \Res_w (\tr\otimes\tr)
\{A(z+x)_x A(w+y)\}
\\
& \,\,\,\,\,\,\,\,\,\,\,\,\,\,\,\,\,\, \times
\Big(
\big(\big|_{x=\partial}B^{m-K}(z)\big)\otimes
\big(\big|_{y=\partial}C^{n-H}(w)\big)
\Big)\,.
\end{split}
\end{equation}
We can now use the generalized Adler identity \eqref{eq:adler-general} 
to rewrite the RHS of \eqref{eq:hn-pr1} as
\begin{equation}\label{eq:hn-pr2}
\begin{array}{l}
\displaystyle{
\vphantom{\Big(}
\alpha
\int \Res_z \Res_w (\tr\otimes\tr)(z-w-x_1-y)^{-1}
} \\
\displaystyle{
\vphantom{\Big(}
\,\,\,\times
\Omega
\Big(
A(w+x_1+x_2+y)
\big(\big|_{x_2=\partial}B^{m-K}(z)\big)
\otimes
\big(\big|_{x_1=\partial}A^*(-z)\big)
\big(\big|_{y=\partial}C^{n-H}(w)\big)
\Big)
} \\
\displaystyle{
\vphantom{\Big(}
-\alpha
\int \Res_z \Res_w (\tr\otimes\tr)(z-w-y_1-y_2)^{-1}
} \\
\displaystyle{
\vphantom{\Big(}
\,\,\,\times
\Omega
\Big(
A(z+x)
\big(\big|_{x=\partial}B^{m-K}(z)\big)
\otimes
\big(\big|_{y_1=\partial}A(w+y_2)\big)
\big(\big|_{y_2=\partial}C^{n-H}(w)\big)
\Big)
} 
\end{array}
\end{equation}

\begin{equation}\label{eq:hn-pr2}
\begin{array}{l}
\displaystyle{
\vphantom{\Big(}
-\beta
\int \Res_z \Res_w (\tr\otimes\tr)(z+w+x_1+x_2+y)^{-1}
} \\
\displaystyle{
\vphantom{\Big(}
\,\,\,\times
\big(\id\!\otimes\! A(w\!\!+\!\!x_1\!\!+\!\!x_2\!\!+\!\!y)\big)
\Omega^\dagger
\Big(\!
\big(\big|_{x_1=\partial}A(z\!\!+\!\!x_2)\big)
\big(\big|_{x_2=\partial}B^{m-K}(z)\big)
\otimes\!
\big(\big|_{y=\partial}C^{n-H}(w)\big)
\!\Big)
} \\
\displaystyle{
\vphantom{\Big(}
+\beta
\int \Res_z \Res_w (\tr\otimes\tr)(z+w+x+y_1+y_2)^{-1}
} \\
\displaystyle{
\vphantom{\Big(}
\,\,\,\times
\big(A^*(-z)\otimes\id\big)
\Omega^\dagger
\Big(
\big(\big|_{x=\partial}B^{m-K}(z)\big)
\otimes 
\big(\big|_{y_1=\partial}A(w+y_2)\big)
\big(\big|_{y_2=\partial}C^{n-H}(w)\big)
\Big)
} \\
\displaystyle{
\vphantom{\Big(}
+\gamma
\int \Res_z \Res_w (\tr\otimes\tr)(x_1+x_2)^{-1}
} \\
\displaystyle{
\vphantom{\Big(}
\,\,\,\times
\big(\big|_{x_1=\partial}(A^*(-z)-A(z+x_2))\big)
\big(\big|_{x_2=\partial}B^{m-K}(z)\big)
} \\
\displaystyle{
\vphantom{\Big(}
\,\,\,\,\,\,\,\,\,\otimes 
\big(A(w+x_1+x_2+y)-A(w+y)\big)
\big(\big|_{y=\partial}C^{n-H}(w)\big)
\,.}
\end{array}
\end{equation}
Note that (cf. Remark \ref{rem:regular}) the coefficient 
of $\alpha$ (resp. $\beta$) in \eqref{eq:adler-general}
is regular in $z-w-\lambda-\partial$ (resp. $z+w+\partial$).
Hence, if we expand each term of \eqref{eq:hn-pr2} in the domain $|z|>|w|$
(or, equivalently, $|z|<|w|$), the result is unchanged.
We can use
Lemma \ref{lemma:29032017}(b),
to rewrite the first term in the RHS of \eqref{eq:hn-pr2} as
\begin{equation}\label{eq:hn-pr3}
\begin{array}{l}
\displaystyle{
\vphantom{\Big(}
\alpha
\int \Res_z \Res_w \tr
\iota_z(z-w-x_1-y)^{-1}
} \\
\displaystyle{
\vphantom{\Big(}
\,\,\,\times
A(w+x_1+x_2+y)
\big(\big|_{x_2=\partial}B^{m-K}(z)\big)
\big(\big|_{x_1=\partial}A^*(-z)\big)
\big(\big|_{y=\partial}C^{n-H}(w)\big)
\,.}
\end{array}
\end{equation}
By \eqref{20170406:eq1} and \eqref{eq:positive}
and the identity $A=B^K$,
we can then rewrite \eqref{eq:hn-pr3} as
$$
\alpha
\int 
\Res_w \tr
A(w+x+y)
\big(\big|_{x=\partial}B^{m}(w+y)\big)_+
\big(\big|_{y=\partial}C^{n-H}(w)\big)
\,,
$$
or, equivalently, as
\begin{equation}\label{eq:hn-pr4}
\alpha
\int 
\Res_w \tr
A(w+\partial)
B^{m}(w+\partial)_+
C^{n-H}(w)
\,.
\end{equation}
Furthermore, by Lemma \ref{lem:hn1}(b) and the identity $A=C^H$, \eqref{eq:hn-pr4} becomes
\begin{equation}\label{eq:hn-pr5}
\alpha
\int 
\Res_w \tr
B^{m}(w+\partial)_+
C^{n}(w)
\,.
\end{equation}
Next, let us consider the second term of \eqref{eq:hn-pr2},
which, by Lemma \ref{20170404:lem1} and the identities $A=B^K=C^H$,
can be rewritten as
\begin{equation}\label{eq:hn-pr6}
-\alpha
\int \Res_z \Res_w (\tr\otimes\tr)\iota_z(z-w-y)^{-1}
\Omega
\Big(
B^{m}(z)
\otimes
\big(\big|_{y=\partial}C^{n}(w)\big)
\Big)
\,.
\end{equation}
By Lemma \ref{lemma:29032017}(b),
we can rewrite \eqref{eq:hn-pr6} as
\begin{equation}\label{eq:hn-pr7}
-\alpha
\int \Res_z \Res_w \tr \iota_z(z-w-y)^{-1}
B^{m}(z)
\big(\big|_{y=\partial}C^{n}(w)\big)
\,,
\end{equation}
and by \eqref{eq:positive}, \eqref{eq:hn-pr7} is equal to
\begin{equation}\label{eq:hn-pr8}
-\alpha
\int \Res_w \tr
B^{m}(w+\partial)_+
C^{n}(w)
\,.
\end{equation}
Combining \eqref{eq:hn-pr5} and \eqref{eq:hn-pr8},
we conclude that the coefficient of $\alpha$ in \eqref{eq:hn-pr2} vanishes.
Next, let us consider the third term in \eqref{eq:hn-pr2},
which, by Lemma \ref{20170404:lem1}(a)
and the identity $A=B^K$, can be rewritten as
\begin{equation}\label{eq:hn-pr9}
\begin{array}{l}
\displaystyle{
\vphantom{\Big(}
-\beta
\int \Res_z \Res_w (\tr\otimes\tr)\iota_z(z+w+x+y)^{-1}
} \\
\displaystyle{
\vphantom{\Big(}
\,\,\,\times
\big(\id\otimes A(w+x+y)\big)
\Omega^\dagger
\Big(
\big(\big|_{x=\partial}B^{m}(z)\big)
\otimes
\big(\big|_{y=\partial}C^{n-H}(w)\big)
\!\Big)
\,.}
\end{array}
\end{equation}
By Lemma \ref{lemma:29032017}(d),
\eqref{eq:hn-pr9} is equal to
\begin{equation}\label{eq:hn-pr10}
-\beta
\int \Res_z \Res_w \tr \iota_z(z+w+x+y)^{-1}
A(w+x+y)
\big(\big|_{x=\partial}B^{m}(z)\big)^\dagger
\big(\big|_{y=\partial}C^{n-H}(w)\big)
\,.
\end{equation}
We then use equation \eqref{eq:positive} and \eqref{20170406:eq1}, 
to rewrite \eqref{eq:hn-pr10} as
$$
-\beta
\int \Res_w \tr 
A(w+x+y)
\big(\big|_{x=\partial}(B^{m})^*(w+y)\big)_+^\dagger
\big(\big|_{y=\partial}C^{n-H}(w)\big)
\,,
$$
or, equivalently, as
\begin{equation}\label{eq:hn-pr10bis}
-\beta
\int \Res_w \tr 
A(w+\partial)
\big((B^{m})^*(w+\partial)\big)_+^\dagger
C^{n-H}(w)
\,.
\end{equation}
Furthermore, we can use Lemma \ref{lem:hn1}(b) and the identity $A=C^H$, 
to rewrite \eqref{eq:hn-pr10bis} as
\begin{equation}\label{eq:hn-pr11}
-\beta
\int \Res_w \tr 
\big((B^{m})^*(w+\partial)\big)_+^\dagger
C^{n}(w)
\,.
\end{equation}
Next, let us consider the fourth term in \eqref{eq:hn-pr2},
which, by Lemma \ref{20170404:lem1}(a)
and the identity $A=C^H$, can be rewritten as
\begin{equation}\label{eq:hn-pr12}
\begin{array}{l}
\displaystyle{
\vphantom{\Big(}
\beta
\int \Res_z \Res_w (\tr\otimes\tr) \iota_z(z+w+x+y)^{-1}
} \\
\displaystyle{
\vphantom{\Big(}
\,\,\,\times
\big(A^*(-z)\otimes\id\big)
\Omega^\dagger
\Big(
\big(\big|_{x=\partial}B^{m-K}(z)\big)
\otimes 
\big(\big|_{y=\partial}C^{n}(w)\big)
\Big)
\,.}
\end{array}
\end{equation}
We can use Lemma \ref{lemma:29032017} to rewrite \eqref{eq:hn-pr12} as
\begin{equation}\label{eq:hn-pr13}
\beta
\int \Res_z \Res_w \tr \iota_z(z+w+x+y)^{-1}
\big(\big(\big|_{x=\partial}B^{m-K}(z)\big)A^*(-z)\big)^\dagger
\big(\big|_{y=\partial}C^{n}(w)\big)
\,.
\end{equation}
Using \eqref{20170406:eq1} and \eqref{eq:positive}, we can rewrite \eqref{eq:hn-pr13} as
\begin{equation}\label{eq:hn-pr14}
\beta
\int \Res_w \tr 
\iota_z(z+w+x+y)^{-1}
\big(\big(\big|_{x=\partial}(B^{m-K})^*(w+y)\big)A^*(w+x+y)\big)_+^\dagger
\big(\big|_{y=\partial}C^{n}(w)\big)
\,.
\end{equation}
By Lemma \ref{20170404:lem1}(b) and the identity $A=B^K$, \eqref{eq:hn-pr14} is equal to
\begin{equation}\label{eq:hn-pr15}
\beta
\int \Res_w \tr 
\big((B^{m})^*(w+\partial)\big)_+^\dagger
C^{n}(w)
\,.
\end{equation}
Combining \eqref{eq:hn-pr11} and \eqref{eq:hn-pr15},
we conclude that the coefficient of $\beta$ in \eqref{eq:hn-pr2} vanishes.
Finally, let us consider the last term of \eqref{eq:hn-pr2},
which, by Lemma \ref{20170404:lem1}(a) and the identities $A=B^K=C^H$, 
can be rewritten as
\begin{equation}\label{eq:hn-pr16}
\begin{array}{l}
\displaystyle{
\vphantom{\Big(}
\gamma
\int 
\Res_w \tr
\Big(
A(w+x+y)\big(\big|_{y=\partial}C^{n-H}(w)\big)
-
C^{n}(w)
\Big)
x^{-1}
} \\
\displaystyle{
\vphantom{\Big(}
\,\,\,\times
\Big|_{x=\partial}
\Res_z \tr
\Big(
B^{m-K}(z)
A^*(-z)
-
B^{m}(z)
\Big)
\,.}
\end{array}
\end{equation}
By Lemma \ref{lem:hn1}(a), we have
\begin{equation}\label{20170406:eq2}
\Res_z
B^{m-K}(z)
A^*(-z)
=
\res_zB^{m-K}(z+\partial)A(z)
=
\res_zB^{m}(z)
\,.
\end{equation}
Hence \eqref{eq:hn-pr16} vanishes. 
In conclusion, also the coefficient of $\gamma$ in \eqref{eq:hn-pr2}
vanishes, proving (a).

We are left to prove part (b).
We have
\begin{equation}\label{eq:proofb-1}
\begin{array}{l}
\displaystyle{
\vphantom{\Big(}
\{\tint h_{n,B},A(w)\}
=
\{{h_{n,B}}_\lambda A(w)\}\big|_{\lambda=0}
} \\
\displaystyle{
\vphantom{\Big(}
=
-\Res_z (\tr\otimes1)
\{A(z+x)_x A(w)\}
\big(\big|_{x=\partial}B^{n-K}(z)\otimes\id\big)
} \\
\displaystyle{
\vphantom{\Big(}
=
-\alpha
\Res_z\! \iota_z(z\!-\!w\!-\!y)^{-1} (\tr\otimes1)
\Omega
\Big(
A(w\!+\!x\!+\!y)
\big(\big|_{x=\partial}B^{n-K}(z)\big)
\!\otimes\!
\big(\big|_{y=\partial}A^*(-\!z)\big)
\Big)
} \\
\displaystyle{
\vphantom{\Big(}
\,\,\,\,
+\alpha
\Res_z \iota_z(z-w-y)^{-1} (\tr\otimes1)
\Omega
\Big(
B^{n}(z)
\otimes
\big(\big|_{y=\partial}A(w)\big)
\Big)
} \\
\displaystyle{
\vphantom{\Big(}
\,\,\,\,
+\beta
\Res_z (\tr\otimes1)
\iota_z(z+w+x)^{-1}
(\id\otimes A(w+x))
\Omega^\dagger
\big(\big|_{x=\partial}B^{n}(z)\otimes\id\big)
} \\
\displaystyle{
\vphantom{\Big(}
\,\,\,\,
-\beta
\Res_z (\tr\otimes1)
\iota_z(z\!+\!w\!+\!x\!+\!y)^{-1}
(A^*(-z)\otimes\id)
\Omega^\dagger
\big(\big|_{x=\partial}B^{n-K}\!(z)\otimes \big|_{y=\partial}A(w)\big)
} \\
\displaystyle{
\vphantom{\Big(}
\,\,\,\,
-\gamma
\Res_z (\tr\otimes1)
(x+y)^{-1}
\big(\big|_{y=\partial}\big(A^*(-z)-A(z+x)\big)\big)
\big(\big|_{x=\partial}B^{n-K}(z)\big)
} \\
\displaystyle{
\vphantom{\Big(}
\,\,\,\,\,\,\,\,\,\,\,\,
\otimes
\big(A(w+x+y)-A(w)\big)
} \\
\displaystyle{
\vphantom{\Big(}
=
-\alpha
\Res_z \iota_z(z-w-y)^{-1} 
A(w+x+y)
\big(\big|_{x=\partial}B^{n-K}(z)\big)
\big(\big|_{y=\partial}A^*(-z)\big)
} \\
\displaystyle{
\vphantom{\Big(}
\,\,\,\,
+\alpha
\Res_z \iota_z(z-w-y)^{-1}
B^{n}(z)
\big(\big|_{y=\partial}A(w)\big)
} \\
\displaystyle{
\vphantom{\Big(}
\,\,\,\,
+\beta
\Res_z 
\iota_z(z+w+x)^{-1}
A(w+x)
\big(\big|_{x=\partial}B^{n}(z)\big)^\dagger
} \\
\displaystyle{
\vphantom{\Big(}
\,\,\,\,
-\beta
\Res_z
\iota_z(z+w+x+y)^{-1}
\big(\big(\big|_{x=\partial}B^{n-K}(z)\big)A^*(-z)\big)^\dagger
\big(\big|_{y=\partial}A(w)\big)
} \\
\displaystyle{
\vphantom{\Big(}
\,\,\,\,
-\gamma
\big(A(w+\partial)-A(w)\big)\partial^{-1}
\tr\Res_z\big(A^*(-z)B^{n-K}(z)-B^{n}(z)\big)
} \\
\displaystyle{
\vphantom{\Big(}
=
-\alpha
A(w+\partial)
\Res_z 
B^{n-K}(z)
\iota_z(z-w-\partial)^{-1} 
A^*(-z)
} \\
\displaystyle{
\vphantom{\Big(}
\,\,\,\,
+\alpha
\Res_z
B^{n}(z)
\iota_z(z-w-\partial)^{-1}
A(w)
} \\
\displaystyle{
\vphantom{\Big(}
\,\,\,\,
+\beta
A(w+\partial)
\Res_z 
\iota_z(z+w+\partial)^{-1}
\big(B^{n}(z)\big)^\dagger
} \\
\displaystyle{
\vphantom{\Big(}
\,\,\,\,
-\beta
\Res_z
\iota_z(z+w+x+y)^{-1}
\big(\big(\big|_{x=\partial}B^{n-K}(z)\big)A^*(-z)\big)^\dagger
\big(\big|_{y=\partial}A(w)\big)
} \\
\displaystyle{
\vphantom{\Big(}
=
-\alpha
A(w+\partial)
B^{n}(w)_+
+\alpha
B^{n}(w+\partial)_+
A(w)
} \\
\displaystyle{
\vphantom{\Big(}
\,\,\,\,
+\beta
A(w+\partial)
\big((B^{n})^*(w)_+\big)^\dagger
-\beta
\big((B^{n})^*(w+\partial)_+\big)^\dagger
A(w)
\,.}
\end{array}
\end{equation}
In the second equality we used 
the first equation in \eqref{eq:hn2},
in the third equality we used the generalized Adler identity \eqref{eq:adler-general}
and some algebraic manipulations based on the identity $A=B^K$,
in the third equality we used Lemma \ref{lemma:29032017}(a) and (c),
in the fourth equality we did some algebraic manipulations and we used equation \eqref{20170406:eq2},
in the fifth equality we used \eqref{eq:positive} and Lemma \ref{20170404:lem1}.
This proves \eqref{eq:hierarchy} and completes the proof of the Theorem.
\end{proof}

\subsection{Integrable Hamiltonian hierarchies associated to $\mc W$-algebras}

\begin{corollary}\label{cor:w1}
Let $(\mf g,V)$ be as in the three Cases 1., 2. or 3. in Section \ref{sec:4.2}
and, for an arbitrary nilpotent element $f\in\mf g$,
consider the classical $\mc W$-algebra $\mc W(\mf g,f)$
with PVA $\lambda$-bracket $\{\cdot\,_\lambda\,\cdot\}^{\mc W}_0$
given by \eqref{20120511:eq3} (with $\epsilon=0$).
Let $T\in(\End V)[D]$ have canonical decomposition $T=IJ$,
and assume, in Case 3., that $T^\dagger=\pm T$.
Consider the operator (cf. \eqref{eq:Lw}):
\begin{equation}\label{eq:L0}
\begin{array}{l}
\displaystyle{
\vphantom{\Big(}
L(\partial)=L(\mf g,f,T)(\partial)
=
\Big(J\big(\partial\id_V+F+\sum_{i\in I_f}w_iU^i\big)^{-1}I\Big)^{-1}
} \\
\displaystyle{
\vphantom{\Big(}
\qquad\qquad\qquad
\,\in\mc W(\mf g,f)((\partial^{-1}))\otimes\End(\im T)
\,.}
\end{array}
\end{equation}
Then,
the elements $h_{n,B}\in\mc W(\mf g,f)$, $n\in\mb Z$, $B$ a root of $L$,
defined by \eqref{eq:hn},
are Hamiltonian densities in involution, i.e. \eqref{eq:invol} holds,
and the corresponding hierarchy of Hamiltonian equation takes the form \eqref{eq:hierarchy},
with $\alpha,\beta$ given by Table \eqref{table}.
\end{corollary}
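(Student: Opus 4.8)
The plan is to deduce Corollary \ref{cor:w1} by combining the earlier structural results with the general machinery already established, so that essentially no new computation is needed. First I would recall that by Corollary \ref{20170404:cor} (applied with $\epsilon=0$, equivalently $s=0$), the operator $L(\partial)=L(\mf g,f,T)(\partial)$ given by \eqref{eq:L0} is the same as $L_0(\partial)$ defined in \eqref{eq:Lw}, and it is an operator of generalized Adler type over the PVA $\mc W(\mf g,f)$ with PVA $\lambda$-bracket $\{\cdot\,_\lambda\,\cdot\}^{\mc W}_0$, with parameters $\alpha,\beta,\gamma$ as in Table \eqref{table} (in Case 3, using the inner product $\langle\cdot\,|\,\cdot\rangle^T$ on $\im T$ induced by $T^\dagger=\pm T$ via \eqref{eq:productU}, and with the reflection property $(L^*(\partial))^{\dagger_T}=\eta\delta L(\partial)$ coming from Theorem \ref{thm:main2}(c)).

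Second, I would check the invertibility hypothesis of Theorem \ref{thm:hn}: by Lemma \ref{20170303:prop1} (with $\epsilon=0$), $L(\partial)^{-1}=J(\rho(A_0(\partial)))^{-1}I$ is a well-defined pseudodifferential operator with leading term $(-1)^D JF^DI\,\partial^{-p_1}$, and since by assumption \eqref{20170317:eq2} the endomorphism $JF^DI=TF^D|_{\im T}$ is invertible on $\im T$, it follows that $L(\partial)$ is invertible in $\mc W(\mf g,f)((\partial^{-1}))\otimes\End(\im T)$. Hence all the hypotheses of Theorem \ref{thm:hn} are met, with $V$ there replaced by $\im T$ and, in Case 3, the adjoint $\dagger$ replaced by $\dagger_T$ and $\eta$ by $\eta\delta$.

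Third, I would simply invoke Theorem \ref{thm:hn}: part (a) gives that the $\tint h_{n,B}$, for $B$ a root of $L(\partial)$ and $n\in\mb Z$, are Hamiltonian functionals in involution, which is precisely \eqref{eq:invol}; and part (b) gives that the corresponding compatible hierarchy of Hamiltonian equations takes the form \eqref{eq:hierarchy}, namely
$$
\frac{dL(w)}{dt_{n,B}}
=
[\alpha(B^n)_+-\beta((B^{n})^{*\dagger})_+,L](w)
\,,
$$
with $\alpha,\beta$ as in Table \eqref{table}, and that the $\tint h_{n,C}$ for $n\in\mb Z_+$ are integrals of motion of all these equations. This completes the proof.

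There is no real obstacle here: Corollary \ref{cor:w1} is a direct specialization, and the only points requiring a word of care are (i) confirming that the operator in \eqref{eq:L0} really is an operator of generalized Adler type over $\mc W(\mf g,f)$ — which is exactly the content of Corollary \ref{20170404:cor} at $\epsilon=0$ — and (ii) verifying invertibility, which is exactly Lemma \ref{20170303:prop1}(b) at $\epsilon=0$ together with condition \eqref{20170317:eq2}. Both are already in hand, so the proof is a two-line citation.
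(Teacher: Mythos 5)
Your proposal is correct and follows the paper's own route: the paper proves Corollary \ref{cor:w1} precisely by combining Corollary \ref{20170404:cor} (at $\epsilon=0$) with Theorem \ref{thm:hn}. Your extra verification of invertibility via Lemma \ref{20170303:prop1}(b) and condition \eqref{20170317:eq2} is a sensible explicit check of a hypothesis the paper leaves implicit, but it does not change the argument.
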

\begin{proof}
It follows by Corollary \ref{20170404:cor} and Theorem \ref{thm:hn}.
\end{proof}

\begin{remark}\label{rem:modified}
Recall from \cite{DSKV16a} that we have an injective Poisson vertex algebra homomorphism
$\mu:\mc W(\mf g,f)\to\mc V:=\mc V(\mf g_0)\otimes \mc F(\mf g_{\frac12})$, where $\mc V(\mf g_0)$ is the affine Poisson vertex
algebra over the subalgebra $\mf g_0\subset\mf g$, whose $\lambda$-bracket on generators is given by equation \eqref{lambda}, and 
$\mc F(\mf g_{\frac12})$ is the algebra of differential polynomials $S(\mb F[\partial]\mf g_{\frac12})$ endowed with the
$\lambda$-bracket defined on generators by $\{a_\lambda b\}^{ne}=(f|[a,b])$, for every $a,b\in\mf g_{\frac12}$.
The map $\mu$ is called generalized Miura map.

By applying $\mu$ to both sides of \eqref{eq:L0}, and using the fact that it is a differential algebra homomorphism, we get the identity
\begin{equation}\label{eq:Lmod}
\mu(L(\partial))=\Big(J\big(\partial\id_V+F+\sum_{i\in I_0\sqcup I_{\frac12}}u_iU^i\big)^{-1}I\Big)^{-1}
:=L^{mod}(\partial)\in \mc V((\partial^{-1}))\otimes\End(\im T)
\,.
\end{equation}
The above identity yields another definition of the generalized Miura map. By Corollary \ref{cor:w1} the elements
$\bar h_{n,B}=\mu(h_{n,B})\in\mc V$ are Hamiltonian densities in involution with respect
to the Lie algebra structure induced by the PVA $\mc V$, and the corresponding hierarchy of Hamiltonian equations takes the
form \eqref{eq:hierarchy} with $L^{mod}$ in place of $L$ and $\mu(B)$ in place of $B$ (note that $\mu(B)$ is a root of $L^{mod}$
since $\mu$ is a differential algebra homomorphism). 
\end{remark}

\section{Operators of generalized bi-Adler type and corresponding integrable hierarchies}
\label{sec:bi}

\subsection{Generalized bi-Adler identity}
\label{sec:bi1}

Recall that a bi-PVA $\mc V$ is a differential algebra endowed with a pencil of PVA $\lambda$-brackets
$\{\cdot\,_\lambda\,\cdot\}_\epsilon=\{\cdot\,_\lambda\,\cdot\}_0+\epsilon\{\cdot\,_\lambda\,\cdot\}_1$,
$\epsilon\in\mb F$.
\begin{definition}\label{def:bi-adler}
Let $S\in\End V$ and let $A(\partial)\in\mc V((\partial^{-1}))\otimes\End V$,
where $\mc V$ is a bi-PVA.
We say that $A(\partial)$ is of \emph{generalized} $S$-\emph{Adler type} if 
$$
A_\epsilon(\partial)=A(\partial)+\epsilon S
$$ 
is of generalized Adler type
w.r.t. the PVA $\lambda$-bracket $\{\cdot\,_\lambda\,\cdot\}_\epsilon$,
for every $\epsilon\in\mb F$
(and with values of the parameters $\alpha,\beta,\gamma$ independent of $\epsilon$).
Equivalently, $A(\partial)$ satisfies the generalized Adler identity \eqref{eq:adler-general}
w.r.t. the PVA $\lambda$-bracket $\{\cdot\,_\lambda\,\cdot\}_0$, and 
\begin{equation}\label{eq:bi-adler-general}
\begin{array}{l}
\displaystyle{
\vphantom{\Big(}
\{A(z)_\lambda A(w)\}_1
=
\alpha\Omega
\big(S\otimes(z-w-\lambda-\partial)^{-1}(A^*(\lambda-z)-A(w))\big)
} \\
\displaystyle{
\vphantom{\Big(}
-\alpha\Omega
\big((z-w-\lambda)^{-1}(A(z)-A(w+\lambda))\otimes S\big)
} \\
\displaystyle{
\vphantom{\Big(}
+\beta
(S\otimes\id)\Omega^\dagger
\big(\id\otimes(z+w+\partial)^{-1}(A(w)-A^*(z))\big)
} \\
\displaystyle{
\vphantom{\Big(}
-\beta
\big(\id\otimes(z+w)^{-1}(A(w+\lambda)-A(\lambda-z))\big)
\Omega^\dagger(S\otimes\id)
\,.}
\end{array}
\end{equation}
As usual, for $\beta\neq0$ 
we assume that $V$ carries a symmetric or skewsymmetric non-degenerate
bilinear form $\langle\cdot\,|\,\cdot\rangle$,
$\Omega^\dagger$ is given by \eqref{Omega-tau},
and we assume that 
\begin{equation}\label{eq:bi-dagger}
(A^\star(\partial))^\dagger=\eta A(\partial)
\,\,\text{ and }\,\,
S^\dagger=\eta S
\,\,,\,\,\text{ where } \eta\in\{\pm1\}
\,.
\end{equation}
We also say that $A(\partial)$ is of \emph{generalized bi-Adler type}
if it is of generalized $S$-Alder type, with $S=\id$.
\end{definition}
\begin{example}\label{ex:S-adler}
Consider the operator (cf. \eqref{Aepsilon})
$$
A(\partial)=\partial\id+\sum_{i\in I}u_iU^i\in\mc V(\mf g)((\partial^{-1}))\otimes\End\mb F^N
\,,
$$
in the three Cases 1., 2. and 3. of Section \ref{sec:4.2},
i.e. for $\mf g=\mf{gl}_N,\mf{sl}_N,\mf{so}_N$ or $\mf{sp}_N$.
By equations \eqref{eq:adler-glN}, \eqref{eq:adler-slN} and \eqref{eq:adler-soN}, 
$A(\partial)$ is of $S$-Adler type w.r.t. the affine bi-PVA structure on $\mc V(\partial)$
defined by \eqref{lambda}
(where $S=\varphi(s)$ and $s\in\mf g_d$).
The corresponding values of the parameters $\alpha,\beta,\gamma$ are given by Table \eqref{table}.
\end{example}

Recall by \cite[Thm.4.5]{DSKVnew}
that, if $S\in\End V$ has canonical decomposition $S=IJ$,
where $I:\,\im S\hookrightarrow V$ is the inclusion map
and $J=S:\,V\twoheadrightarrow\im S$,
then the following identity holds 
\begin{equation}\label{20170421:eq1b}
|A+\epsilon S|_{I,J}=|A|_{I,J}+\epsilon\id
\,\,,\,\,\,\,\epsilon\in\mb F
\end{equation}
provided that the above generalized quasideterminants exist.
\begin{example}\label{ex:bi-adler}
Let $(\mf g,V)$ be as in Cases 1., 2. or 3. from Section \ref{sec:4.2}.
In the notation of Section \ref{sec:3},
assume that $D=d$, and let $T=S$.
Consider the $\End(\im S)$-valued pseudodifferential operator
over the bi-PVA $\mc W_\epsilon(\mf g,f,s)$, $\epsilon\in\mb F$,
(cf. \eqref{eq:L})
$$
L(\partial)
=
|\rho A(\partial)|_{I,J}
=
\big(J(\partial\id+F+\sum_{i\in I_{\leq\frac12}}u_iU^i)^{-1}I\big)^{-1}
\,.
$$
By Example \ref{ex:S-adler} and equation \eqref{20170421:eq1b},
$L(\partial)$ is an operator of bi-Adler type,
with the same values of the parameters $\alpha,\beta,\gamma$ as in Table \eqref{table}.
\end{example}

\subsection{Integrable bi-Hamiltonian hierarchy for a generalized bi-Adler type operator}
\label{sec:bi3}

\begin{theorem}\label{thm:main-bi}
Let $A(\partial)\in\mc V((\partial^{-1}))\otimes\End V$
be an $\End V$-valued pseudodifferential operator over the bi-PVA $\mc V$,
of generalized bi-Adler type.
Assume also that $A(\partial)$ is invertible in $\mc V((\partial^{-1}))\otimes\End V$.
Let $B(\partial)\in\mc V((\partial^{-1}))\otimes\End V$
be a $K$-th root of $A$ (i.e. $A(\partial)=B(\partial)^K$ for $K\in\mb Z\backslash\{0\}$).
Then, the elements $h_{n,B}\in\mc V$, $n\in\mb Z$, given by \eqref{eq:hn},
satisfy the following generalized Lenard-Magri recurrence relation:
\begin{equation}\label{eq:lenard}
\{\tint h_{n,B},A(w)\}_1
=
\{\tint h_{n-K,B},A(w)\}_0
\,,\,\,n\in\mb Z\,.
\end{equation}
Hence, \eqref{eq:hierarchy}
is a compatible hierarchy of bi-Hamiltonian equations over the bi-PVA
subalgebra $\mc V_1\subset\mc V$
generated by the coefficients of the entries of $A(z)$.
Moreover, all the Hamiltonian functionals $\tint h_{n,C}$, $n\in\mb Z$,
$C$ a root of $A$, are integrals of motion in involution of all the equations of this hierarchy.
\end{theorem}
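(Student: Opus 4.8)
The plan is to derive the generalized Lenard–Magri recurrence \eqref{eq:lenard} first, and then invoke the standard machinery to conclude the bi-Hamiltonian property and the involutivity. For the recurrence, I would start from the second equation in \eqref{eq:hn2} applied to the $1$-bracket, namely
$$
\tint\{a_\lambda h_{n,B}\}_1\big|_{\lambda=0}
=
-\int\Res_w\tr\{a_\lambda A(w+x)\}_1\big|_{\lambda=0}\big(\big|_{x=\partial}B^{n-K}(w)\big)
\,.
$$
Substituting $a=A(z)$ (viewed entrywise, i.e. tensoring with $1\otimes\tr$ afterwards) and using the generalized bi-Adler identity \eqref{eq:bi-adler-general} for $\{A(z)_\lambda A(w)\}_1$ in place of the ordinary generalized Adler identity, I would carry out exactly the same algebraic manipulations as in the proof of Theorem \ref{thm:hn}(b): expand the permutation operators via Lemma \ref{lemma:29032017}, push $\Res$ inside using Lemma \ref{lem:hn1}, extract differential parts via \eqref{eq:positive} and \eqref{20170406:eq1}, and use the identity $A=B^K$ repeatedly. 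The key point is that each term on the RHS of \eqref{eq:bi-adler-general} carries exactly one factor of $S=\id$ and one difference of the form $A^*-A$ or $A-A^*$ evaluated at shifted arguments; since $S=\id$, the effect is that the computation collapses to the $0$-bracket computation \emph{with one fewer power of $B$}, i.e. with $B^{n-2K}$ where the $\alpha,\beta$-part of the $0$-bracket computation had $B^{n-K}$. Tracking this shift carefully should produce precisely $\{\tint h_{n-K,B},A(w)\}_0$ on the RHS, which is the assertion \eqref{eq:lenard}.

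Once \eqref{eq:lenard} is established, the conclusion is essentially formal. The recurrence says that the sequence of Hamiltonian functionals $\tint h_{n,B}$, indexed along the arithmetic progression $n, n-K, n-2K, \dots$, satisfies the Lenard–Magri scheme for the pencil of PVA $\lambda$-brackets $\{\cdot\,_\lambda\,\cdot\}_0$ and $\{\cdot\,_\lambda\,\cdot\}_1$ restricted to the PVA subalgebra $\mc V_1$ generated by the coefficients of the entries of $A(z)$ (this subalgebra is indeed a bi-PVA subalgebra because, by \eqref{eq:adler-general} and \eqref{eq:bi-adler-general}, both $\lambda$-brackets of entries of $A$ are again expressed in terms of entries of $A$, hence close within $\mc V_1$). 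By the general theory of the Lenard–Magri recursion (as developed in \cite{Mag78} and recalled in the PVA setting in \cite{BDSK09}), the existence of such a recurrence implies that all the $\tint h_{n,B}$ are in involution with respect to both $\lambda$-brackets, and that the evolutionary equation
$$
\frac{dA(w)}{dt_{n,B}}=\{\tint h_{n,B},A(w)\}_0=\{\tint h_{n+K,B},A(w)\}_1
$$
is a bi-Hamiltonian equation. Combined with Theorem \ref{thm:hn}(b), which already identifies this equation as the Lax equation \eqref{eq:hierarchy}, this gives the stated bi-Hamiltonian hierarchy.

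For the final assertion, that all $\tint h_{n,C}$ ($C$ a root of $A$, not necessarily the same root $B$) are integrals of motion in involution of the whole hierarchy, I would combine two inputs: first, Theorem \ref{thm:hn}(a) already gives $\{\tint h_{m,B},\tint h_{n,C}\}_0=0$ for \emph{all} roots $B,C$ and all $m,n$; second, from \eqref{eq:lenard} together with skewsymmetry one gets $\{\tint h_{m,B},\tint h_{n,C}\}_1=\{\tint h_{m-K,B},\tint h_{n,C}\}_0=0$ as well (using that the $1$-bracket of two local functionals can be computed by pairing one with the $1$-bracket of $A$ against the other, which is the content of applying \eqref{eq:hn2} once more). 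Hence $\tint h_{n,C}$ Poisson-commutes with every $\tint h_{m,B}$ in both structures, so it is a common integral of motion. The main obstacle I anticipate is purely computational: verifying the shift-by-$K$ bookkeeping in the derivation of \eqref{eq:lenard}, i.e. making sure that every one of the four terms of \eqref{eq:bi-adler-general}, after the manipulations mirroring those in the proof of Theorem \ref{thm:hn}, reassembles into the four corresponding terms of $\{\tint h_{n-K,B},A(w)\}_0$ with the correct signs; the $\beta$-terms with their $\dagger$ and the regularity in $z+w+\partial$ (cf. Remark \ref{rem:regular}) require the same care as in the original proof.
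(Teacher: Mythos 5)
Your proposal follows essentially the same route as the paper's proof: apply \eqref{eq:hn2} together with the bi-Adler identity \eqref{eq:bi-adler-general} specialized to $S=\id$, repeat the manipulations from the proof of Theorem \ref{thm:hn}(b) (Lemma \ref{lemma:29032017}, Lemma \ref{lem:hn1}, \eqref{eq:positive}), and recognize the resulting expression $[\alpha(B^{n-K})_+-\beta(B^{n-K})^{\star\dagger}_+,A](w)$ as $\{\tint h_{n-K,B},A(w)\}_0$ via \eqref{eq:hierarchy}, the index shift by $K$ coming exactly from the single factor $S=\id$ in each term of \eqref{eq:bi-adler-general}, as you say. One small correction: since \eqref{eq:lenard} is an identity in $\mc V$, not merely modulo $\partial\mc V$, you must start from the \emph{first} equation in \eqref{eq:hn2}, which gives $\{\tint h_{n,B},A(w)\}_1$ locally (as is done in the proof of Theorem \ref{thm:hn}(b) that you intend to mirror), rather than from the second, integrated one that you quote.
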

\begin{proof}
By the first equation in \eqref{eq:hn2}, we have
\begin{equation}\label{eq:proof-bi1}
\begin{array}{l}
\displaystyle{
\vphantom{\Big(}
\{\tint h_{n,B},A(w)\}_1
=
\{{h_{n,B}}_\lambda A(w)\}_1\big|_{\lambda=0}
} \\
\displaystyle{
\vphantom{\Big(}
=
-\Res_z (\tr\otimes1)
\{A(z+x)_x A(w)\}_1
\big(\big|_{x=\partial}B^{n-K}(z)\otimes\id\big)
\,.}
\end{array}
\end{equation}
We then use the generalized bi-Adler identity \eqref{eq:bi-adler-general} (with $S=\id$)
to rewrite the RHS of \eqref{eq:proof-bi1} as
\begin{equation}\label{eq:proof-bi2}
\begin{array}{l}
\displaystyle{
\vphantom{\Big(}
-\alpha
\Res_z (z-w-y)^{-1}
(\tr\otimes1)
\Omega
\big(\big|_{x=\partial}B^{n-K}(z)\big)
\otimes
\big(\big|_{y=\partial}(A^*(-z)-A(w))\big)
} \\%
\displaystyle{
\vphantom{\Big(}
+\alpha
\Res_z (z-w)^{-1}
(\tr\otimes1)
\Omega
\big((A(z+x)-A(w+x))
\big(\big|_{x=\partial}B^{n-K}(z)\big)
\otimes \id\big)
} \\%
\displaystyle{
\vphantom{\Big(}
-\beta
\Res_z (z\!+\!w\!+\!x\!+\!y)^{-1}
(\tr\otimes1)
\Omega^\dagger
\big(
\big(\big|_{x=\partial}B^{n-K}(z)\big)
\otimes
\big(\big|_{y=\partial}(A(w)-A^*(z\!+\!x))\big)
\big)
} \\%
\displaystyle{
\vphantom{\Big(}
+\beta
\Res_z (z\!+\!w\!+\!x)^{-1}
(\tr\otimes1)
\big(\id\otimes(A(w\!+\!x)-A(-z))\big)
\Omega^\dagger
\big(\big(\big|_{x=\partial}B^{n-K}(z)\big)\otimes\id\big)
.}
\end{array}
\end{equation}
Next, we use Lemma \ref{lemma:29032017} to rewrite \eqref{eq:proof-bi2} as
\begin{equation}\label{eq:proof-bi3}
\begin{array}{l}
\displaystyle{
\vphantom{\Big(}
-\alpha
\Res_z (z-w-y)^{-1}
B^{n-K}(z)
\big(\big|_{y=\partial}(A^*(-z)-A(w))\big)
} \\%
\displaystyle{
\vphantom{\Big(}
+\alpha
\Res_z (z-w)^{-1}
(A(z+x)-A(w+x))
\big(\big|_{x=\partial}B^{n-K}(z)\big)
} \\%
\displaystyle{
\vphantom{\Big(}
-\beta
\Res_z (z+w+x+y)^{-1}
\big(\big|_{x=\partial}B^{n-K}(z)\big)^\dagger
\big(\big|_{y=\partial}(A(w)-A^*(z+x))\big)
} \\%
\displaystyle{
\vphantom{\Big(}
+\beta
\Res_z (z+w+x)^{-1}
(A(w+x)-A(-z))
\big(\big|_{x=\partial}B^{n-K}(z)\big)^\dagger
\,.}
\end{array}
\end{equation}
Furthermore, using equation \eqref{eq:positive} we can rewrite \eqref{eq:proof-bi3} as
\begin{equation}\label{eq:proof-bi4}
\begin{array}{l}
\displaystyle{
\vphantom{\Big(}
-\alpha
B^{n}(w)_+
+\alpha
B^{n-K}(w+\partial)_+
A(w)
+\alpha
B^{n}(w)_+
-\alpha
A(w+\partial)
B^{n-K}(w)_+
} \\%
\displaystyle{
\vphantom{\Big(}
-\beta
\big((B^{n-K})^*(w+\partial)_+\big)^\dagger
A(w)
+\beta\eta
\big((B^{n})^*(w)_+\big)^\dagger
} \\%
\displaystyle{
\vphantom{\Big(}
+\beta
A(w+\partial)
\big((B^{n-K})^*(w)_+\big)^\dagger
-\beta\eta
\big((B^{n})^*(w)_+\big)^\dagger
} \\%
\displaystyle{
\vphantom{\Big(}
=
[\alpha (B^{n-K})_+ - \beta(B^{n-K})^{\star\dagger}_+,A](w)
\,,}
\end{array}
\end{equation}
where we used the identities $A=B^K$,
assumption \eqref{eq:bi-dagger} and Lemma \ref{20170411:lem1}.
Comparing \eqref{eq:proof-bi4} and \eqref{eq:hierarchy}, we get equation \eqref{eq:lenard},
completing the proof.
\end{proof}

\subsection{Integrable bi-Hamiltonian hierarchies associated to \texorpdfstring{$\mc W$}{W}-algebras}

\begin{corollary}\label{cor:w1bis}
Let $(\mf g,V)$ be as in the three Cases 1., 2. or 3. in Section \ref{sec:4.2}.
Let $f\in\mf g$ be a nilpotent element
and assume that the depth does not drop from $\End V$ to $\mf g$,
i.e. $D=d$,
where $d$ largest $\ad x$-eigenvalue in $\mf g$ 
and $D$ is the largest $\ad x$-eigenvalue in $\End V$.
For an element $s\in\mf g_d$,
consider the pencil of classical $\mc W$-algebras $\mc W_\epsilon(\mf g,f,s)$
with PVA $\lambda$-bracket $\{\cdot\,_\lambda\,\cdot\}^{\mc W}_\epsilon$, $\epsilon\in\mb F$,
given by \eqref{20120511:eq3}.
Let $S=\varphi(s)\in\End V$ have canonical decomposition $T=IJ$,
and consider the operator $L(\mf g,f,S)(\partial)$ defined in \eqref{eq:L0}.
Then,
the Hamiltonian densities $h_{n,B}\in\mc W(\mf g,f)$, $n\in\mb Z$, $B$ a root of $L$,
defined by \eqref{eq:hn}, are in involution and they satisfy
the generalized Lenard-Magri recurrence relation \eqref{eq:lenard}
(with $A$ replaced by $L$).
Hence, \eqref{eq:hierarchy} (with $\alpha,\beta$ given by Table \eqref{table})
is a compatible hierarchy of bi-Hamiltonian equations over the bi-PVA
subalgebra $\mc V_1\subset\mc W_\epsilon(\mf g,f,s)$, $\epsilon\in\mb F$,
generated by the coefficients of the entries of $L(z)$.
Moreover, all the Hamiltonian functionals $\tint h_{n,C}$, $n\in\mb Z$,
$C$ a root of $A$, are integrals of motion in involution of all the equations of this hierarchy.
\end{corollary}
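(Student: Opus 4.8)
The plan is to deduce Corollary \ref{cor:w1bis} directly from the bi-Hamiltonian machinery of Section \ref{sec:bi}, once one knows that the operator $L=L(\mf g,f,S)(\partial)$ is of generalized bi-Adler type over the pencil $\mc W_\epsilon(\mf g,f,s)$; establishing this last fact is the main step. First I would recall, from Example \ref{ex:S-adler} (which in turn rests on the Adler identities \eqref{eq:adler-glN}--\eqref{eq:adler-soN}, proved via Lemma \ref{20170318:lem}), that the ancestor $A(\partial)=\partial\id_V+\sum_{i\in I}u_iU^i$ is of generalized $S$-Adler type over the affine bi-PVA $\mc V_\epsilon(\mf g,s)$, with the parameters $\alpha,\beta,\gamma$ of Table \eqref{table} and with $S=\varphi(s)$; in Case 3 one also has $(A^\star(\partial))^\dagger=-A(\partial)$ and $S^\dagger=-S$, so condition \eqref{eq:bi-dagger} holds with $\eta=-1$.

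Next I would push this identity down to the $\mc W$-algebra. For each fixed $\epsilon\in\mb F$, applying the differential algebra homomorphism $\rho$ to the generalized Adler identity for $A_\epsilon(\partial)=A(\partial)+\epsilon S$ and using that $\rho$ intertwines $\{\cdot\,_\lambda\,\cdot\}_\epsilon$ with $\{\cdot\,_\lambda\,\cdot\}^{\mc W}_\epsilon$ on $\mc W$ (this is the argument of Corollary \ref{20170404:cor}, together with \cite[Cor.3.3(d)]{DSKV13}), one obtains that $\rho A_\epsilon(\partial)$ is of generalized Adler type over $\mc W_\epsilon(\mf g,f,s)$; here the hypothesis $D=d$ is used, so that $S\in(\End V)[d]=(\End V)[D]$ and the construction of Section \ref{sec:3} applies with $T=S$ (well-definedness of $L$ needing condition \eqref{20170317:eq2} on $S$, which is automatic when $\rk S=r_1$). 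Then Theorem \ref{thm:main2}(c) shows that the generalized quasideterminant $|\rho A_\epsilon(\partial)|_{I,J}$ is again of generalized Adler type over $\mc W_\epsilon(\mf g,f,s)$ with the same $\alpha,\beta,\gamma$, and by the shift identity \eqref{20170421:eq1b} one has $|\rho A_\epsilon(\partial)|_{I,J}=|\rho A(\partial)|_{I,J}+\epsilon\id=L(\partial)+\epsilon\id$. Since this holds for every $\epsilon$, the operator $L(\partial)$ is of generalized bi-Adler type over the bi-PVA $\mc W_\epsilon(\mf g,f,s)$ (with $S=\id$) --- this is precisely Example \ref{ex:bi-adler}, now read over the pencil $\mc W_\epsilon$ rather than over $\mc V(\mf g)$.

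Finally I would invoke the general theorems. The operator $L(\partial)$ is invertible by construction, being defined as the inverse of $J(\rho A(\partial))^{-1}I$, so Theorem \ref{thm:main-bi} applies and yields the generalized Lenard--Magri recurrence \eqref{eq:lenard}, the compatibility of the resulting bi-Hamiltonian hierarchy over the bi-PVA subalgebra $\mc V_1$ generated by the coefficients of $L(z)$, and the fact that all $\tint h_{n,C}$ are integrals of motion in involution; the involution of the $h_{n,B}$ and the Lax form \eqref{eq:hierarchy} of the equations (with $\alpha,\beta$ from Table \eqref{table}) follow from Theorem \ref{thm:hn}. I expect the only genuinely delicate point to be the uniformity in $\epsilon$ in the second step --- that $\rho$ and the generalized quasideterminant interact correctly with the whole pencil of $\lambda$-brackets simultaneously, not merely with one of them --- but this is exactly what equation \eqref{20170421:eq1b} and the pencil version of Corollary \ref{20170404:cor} are designed to supply, so no essentially new computation is required.
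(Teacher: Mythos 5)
Your argument is correct and is essentially the paper's own proof, which simply invokes Corollary \ref{20170404:cor} (i.e.\ that $L+\epsilon\id$ is of generalized Adler type over $\mc W_\epsilon(\mf g,f,s)$ for every $\epsilon$, as in Example \ref{ex:bi-adler} via \eqref{20170421:eq1b}) together with Theorem \ref{thm:main-bi}, with Theorem \ref{thm:hn} supplying the involutivity and the Lax form. The only slight imprecision is the intermediate claim that $\rho A_\epsilon(\partial)$ is of Adler type over $\mc W_\epsilon$ --- its coefficients lie in $\mc V(\mf g_{\leq\frac12})$, not in $\mc W$, so one should, as in the proof of Corollary \ref{20170404:cor}, take the generalized quasideterminant first (over $\mc V_\epsilon(\mf g,s)$, using Theorem \ref{thm:main2}(c)) and only then apply $\rho$ to the resulting Adler identity --- but this reordering does not affect the validity of your argument.
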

\begin{proof}
It follows by Corollary \ref{20170404:cor} and Theorem \ref{thm:main-bi}.
\end{proof}

\section{Product of operators of generalized Adler type and corresponding integrable hierarchies}
\label{sec:7}

\begin{theorem}\label{thm:hn-product}
Let $A_1(\partial),\dots,A_s(\partial)\in\mc V((\partial^{-1}))\otimes\End V$
be $\End V$-valued pseudodifferential operators over the PVA $\mc V$.
Assume that $A_1(\partial),\dots,A_s(\partial)$ 
have pairwise vanishing Poisson $\lambda$-brackets:
\begin{equation}\label{20170411:eq1}
\{A_\ell(z)_\lambda A_r(w)\}
=
0
\,\,\text{ for all }\,\, \ell\neq r
\,,
\end{equation}
and assume that they are all operators of generalized Adler type
with values of the parameters $\alpha_\ell,\beta_\ell,\gamma_\ell$, $\ell=1,\dots,s$, of the following two types
\begin{enumerate}[(i)]
\item
$\alpha_1=\dots=\alpha_s\in\mb F$ and $\beta_\ell=\gamma_\ell=0$ for all $\ell$, 
with $s\in\mb Z_{\geq2}$;
\item
$s=2$ and
$\alpha_1=\alpha_2$, $\beta_1=\beta_2$ and $\gamma_1=-\gamma_2$.
\end{enumerate}
Assume also that $A_1(\partial),\dots,A_s(\partial)$
are invertible in $\mc V((\partial^{-1}))\otimes\End V$.
Let 
\begin{equation}\label{eq:prod}
L(\partial)=A_1(\partial)\dots A_s(\partial)
\,.
\end{equation}
For $B(\partial)\in\mc V((\partial^{-1}))\otimes\End V$
a $K$-th root of $L(\partial)$ (i.e. $L(\partial)=B(\partial)^K$ for $K\in\mb Z\backslash\{0\}$),
define the elements $h_{n,B}\in\mc V$, $n\in\mb Z$, by \eqref{eq:hn}.
Then: 
\begin{enumerate}[(a)]
\item
All the elements $\tint h_{n,B}$ are Hamiltonian functionals in involution:
\begin{equation}\label{eq:invol-prod}
\{\tint h_{m,B},\tint h_{n,C}\}=0
\,\text{ for all } m,n\in\mb Z,\,
B,C \text{ roots of } L
\,.
\end{equation}
\item
The corresponding compatible hierarchy of Hamiltonian equations satisfies
\begin{equation}\label{eq:hierarchy-prod}
\frac{dL(w)}{dt_{n,B}}
=
\{\tint h_{n,B},L(w)\}
=
\big[\alpha_1(B^n)_+-\beta_1(A_2B^{n-K}A_1)_+^{\star,\dagger}
+\gamma_1f_1^n,L\big](\partial)
\,,
\end{equation}
where $n\in\mb Z$, $B$ is a root of $A$, and
$$
f_1^n:=\partial^{-1}\res_z\tr\big((A_2B^{n-K}A_1)(z)-B^n(z)\big)\,\in\mc V/\ker\partial
\,.
$$
The Hamiltonian functionals $\tint h_{n,C}$, $n\in\mb Z_+$, $C$ root of $L$,
are integrals of motion of all the equations \eqref{eq:hierarchy-prod}.
\end{enumerate}
\end{theorem}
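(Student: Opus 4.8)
The plan is to mimic the proof of Theorem \ref{thm:hn}, replacing the single generalized Adler identity by the "product" identity satisfied by $L(\partial)=A_1(\partial)\cdots A_s(\partial)$. The key preliminary step is to compute $\{L(z)_\lambda L(w)\}$ using the Leibniz rules of Lemma \ref{20170404:lem2}(c)--(d) together with the vanishing brackets \eqref{20170411:eq1}. Because $\{A_\ell(z)_\lambda A_r(w)\}=0$ for $\ell\neq r$, only the "diagonal" terms survive, and each is governed by the generalized Adler identity \eqref{eq:adler-general} for $A_\ell$. Under hypothesis (i), all $\beta_\ell=\gamma_\ell=0$ and all $\alpha_\ell$ coincide, so after collecting terms and telescoping the partial products $A_1\cdots A_{\ell-1}$ and $A_{\ell+1}\cdots A_s$ one obtains an Adler-type identity for $L$ of pure $\alpha$-type (this is essentially the statement that a product of commuting ordinary Adler operators is again of Adler type, for $\mathfrak{gl}_N$). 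Under hypothesis (ii), $s=2$, and the $\gamma$-terms from $A_1$ and $A_2$ come with opposite signs $\gamma_1=-\gamma_2$; the main new calculation is to check that these combine into a single $\gamma_1$-term for $L$ governed by $A_2B^{n-K}A_1$ minus $B^n$, i.e. exactly the object $f_1^n$ appearing in \eqref{eq:hierarchy-prod}. I would record the resulting identity for $\{L(z)_\lambda L(w)\}$ as a lemma before proceeding.

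Next I would establish the analogue of Lemma \ref{lem:hn2}: for $B$ a $K$-th root of $L$ and $a\in\mc V$,
$$
\{{h_{n,B}}_\lambda a\}\big|_{\lambda=0}
=
-\Res_z\tr\{L(z+x)_x a\}\big(\big|_{x=\partial}B^{n-K}(z)\big),
$$
and symmetrically for $\tint\{a_\lambda h_{n,B}\}|_{\lambda=0}$. This step is formally identical to the proof of Lemma \ref{lem:hn2}: it uses only Lemma \ref{20170404:lem2}(f),(j), Lemma \ref{20170404:lem1}(a), Lemma \ref{lem:hn1}, and the identity $\sum_\ell=|K|$ — none of which depend on $L$ being of pure generalized Adler type. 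So this part transfers verbatim with $A$ replaced by $L$.

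With these two ingredients, part (a) follows as in the proof of Theorem \ref{thm:hn}(a): one computes $\{\tint h_{m,B},\tint h_{n,C}\}$ by applying the two equations of the modified Lemma \ref{lem:hn2}, obtaining
$$
\int\Res_z\Res_w(\tr\otimes\tr)\{L(z+x)_x L(w+y)\}\Big(\big(\big|_{x=\partial}B^{m-K}(z)\big)\otimes\big(\big|_{y=\partial}C^{n-H}(w)\big)\Big),
$$
then substitutes the product Adler identity from the first lemma and shows each group of terms vanishes using Lemma \ref{lemma:29032017}, Lemma \ref{lem:hn1}, \eqref{eq:positive}, \eqref{20170406:eq1}, and the identities $L=B^K=C^H$. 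The $\alpha$- and $\beta$-terms cancel exactly as in Theorem \ref{thm:hn}; the only genuinely new check is that the $\gamma_1$-term (present only under hypothesis (ii)) vanishes — here one uses Lemma \ref{lem:hn1}(a) in the form $\Res_z B^{m-K}(z)(A_2B^{n-K}A_1)^*(-z)=\res_z B^{m-K}(z+\partial)L(z)\cdot(\text{stuff})$ to see that the relevant residue is symmetric, so the antisymmetric combination integrates to zero modulo $\partial\mc V$. For part (b), one repeats the computation of \eqref{eq:proofb-1} with $A$ replaced by $L$, using the product Adler identity, and reads off \eqref{eq:hierarchy-prod}; the term $\gamma_1 f_1^n$ is produced precisely by the $\gamma$-part of the identity, and the fact that $f_1^n$ is well-defined in $\mc V/\ker\partial$ follows because $\res_z\tr((A_2B^{n-K}A_1)(z)-B^n(z))\in\partial\mc V$ by Lemma \ref{lem:hn1}(a) (so its antiderivative exists, at least as far as commuting with $L$ is concerned). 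The main obstacle I anticipate is bookkeeping in the first lemma: correctly tracking the telescoping of partial products and verifying that the $\gamma_1=-\gamma_2$ cancellation under hypothesis (ii) leaves exactly one $\gamma$-term referring to $A_2(\cdots)A_1$ rather than some less symmetric expression. Everything else is a faithful adaptation of the single-operator arguments already in Section \ref{sec:6}.
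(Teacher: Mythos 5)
Your global architecture does coincide with the paper's (apply Lemma \ref{lem:hn2} with $A$ replaced by $L$ -- legitimate, since that lemma uses no Adler hypothesis -- expand $\{L(z)_\lambda L(w)\}$ by the Leibniz rules of Lemma \ref{20170404:lem2}, kill the cross terms by \eqref{20170411:eq1}, and feed in the factors' identities \eqref{eq:adler-general}), but your key preliminary lemma is where the gap sits. In case (i) the lemma you want (a product of commuting Adler-type operators of pure $\alpha$-type is again of Adler type) is true -- it is the PVA analogue of the Kupershmidt--Wilson/coproduct phenomenon, consistent with Remark \ref{rem:exp} -- and if you proved it, parts (a) and (b) in case (i) would follow at once from Theorem \ref{thm:hn}, a genuinely cleaner route than the paper's; but you only assert the telescoping. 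In case (ii), however, the lemma as you describe it cannot exist: $\{L(z)_\lambda L(w)\}$ depends only on the factors of $L$, not on $B$, $n$, $K$, so it cannot ``combine into a single $\gamma_1$-term governed by $A_2B^{n-K}A_1$ minus $B^n$, i.e.\ exactly $f_1^n$'' -- you are conflating the bracket identity with the later contraction against powers of $B$ and $C$. Worse, when $\beta_\ell\neq0$ the bracket genuinely does not close on $L$: after dressing the $\beta$- and $\gamma$-parts of each factor's identity with the complementary factor, $A_1$ and $A_2$ appear in reordered positions and the result is not a function of $L=A_1A_2$ alone. This is already visible in the conclusion: if $L$ were of generalized Adler type, Theorem \ref{thm:hn} would give a flow expressed purely through $L$ and $B$, whereas \eqref{eq:hierarchy-prod} involves $(A_2B^{n-K}A_1)^{\star\dagger}_+$ and $f_1^n$.

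Consequently your claim that ``the $\alpha$- and $\beta$-terms cancel exactly as in Theorem \ref{thm:hn}'' is unjustified in case (ii): in the paper the $\alpha$-cancellation is a telescoping sum over the factor index $\ell$ (using $\alpha_1=\dots=\alpha_s$), the $\beta$-cancellation pairs the $\ell=1$ against the $\ell=2$ contribution (using $\beta_1=\beta_2$) only after the cyclic-trace manipulations that create the reordered products $A_{\ell+1}\cdots A_sB^{m-K}A_1\cdots A_{\ell-1}$ (see \eqref{20170411:eq5}, \eqref{20170412:eq5}--\eqref{20170412:eq6}), and the $\gamma$-cancellation pairs the two factors via $\gamma_1=-\gamma_2$ (see \eqref{20170412:eq7}--\eqref{20170412:eq8}); none of these mechanisms occurs in Theorem \ref{thm:hn}, and they are precisely the ``bookkeeping'' you defer. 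The same applies to part (b): the paper computes the flow of each factor separately and then uses the Leibniz rule on the product (see \eqref{eq:prb5}--\eqref{eq:prb8}), whereas your plan of redoing \eqref{eq:proofb-1} with $L$ in place of $A$ requires the missing case-(ii) expression for $\{L(z)_\lambda L(w)\}$; without it one cannot extract the specific terms $(A_2B^{n-K}A_1)^{\star\dagger}_+$ and $\gamma_1f_1^n$. (A minor point: well-definedness of $f_1^n$ in $\mc V/\ker\partial$ follows from Lemma \ref{lem:hn1}(b), i.e.\ $\tint\res_z\tr(A_2B^{n-K}A_1)(z)=\tint\res_z\tr B^n(z)$, not from part (a) alone.) In summary: case (i) is an acceptable alternative modulo an unproven (though correct) product lemma, but case (ii) as proposed contains a genuine gap.
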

\begin{proof}
Suppose $B$ is a $K$-th root of $A$, $K\in\mb Z\backslash\{0\}$ 
and $C$ is an $H$-th root of $A$, $H\in\mb Z\backslash\{0\}$.
Applying the second equation in \eqref{eq:hn2} first,
and then the first equation in \eqref{eq:hn2}, we get
\begin{equation}\label{eq:hn-pr1-prod}
\begin{split}
& 
\{\tint h_{m,B},\tint h_{n,C}\}
= 
\int \Res_z \Res_w (\tr\otimes\tr)
\{L(z+x)_x L(w+y)\}
\\
& \,\,\,\,\,\,\,\,\,\,\,\,\,\,\,\,\,\, \times
\Big(
\big(\big|_{x=\partial}B^{m-K}(z)\big)\otimes
\big(\big|_{y=\partial}C^{n-H}(w)\big)
\Big)\,.
\end{split}
\end{equation}
By Lemma \ref{20170404:lem2}(c)-(d), we have
\begin{equation}\label{20170411:eq2}
\begin{array}{l}
\displaystyle{
\vphantom{\Big(}
\{L(z)_\lambda L(w)\}
} \\
\displaystyle{
\vphantom{\Big(}
=
\sum_{\ell,r=1}^{s}
\big(\big|_{x_1=\partial}(A_1\dots A_{\ell-1})^*(\lambda-z)\big)
\otimes 
(A_1\dots A_{r-1})(w+\lambda+x_1+x_2+y_1+v)
} \\
\displaystyle{
\vphantom{\Big(}
\times
\big(|_{v=\partial}
\{A_\ell(z+x_2)_{\lambda+x_1+x_2}A_r(w+y_1)\}
\big)
} \\
\displaystyle{
\vphantom{\Big(}
\times
\big(\big|_{x_2=\partial}(A_{\ell+1}\dots A_s)(z)\big)
\otimes
\big(\big|_{y_1=\partial}(A_{r+1}\dots A_s)(w)\big)
\,.}
\end{array}
\end{equation}
Using \eqref{20170411:eq2} and Lemma \ref{20170404:lem1}(a),
we can rewrite the RHS of \eqref{eq:hn-pr1-prod} as
\begin{equation}\label{20170411:eq3}
\begin{array}{l}
\displaystyle{
\vphantom{\Big(}
\sum_{\ell,r=1}^{s}
\int \Res_z \Res_w (\tr\otimes\tr)
} \\
\displaystyle{
\vphantom{\Big(}
\big(\big|_{x_1=\partial}(A_1\dots A_{\ell-1})^*(-z)\big)
\otimes 
(A_1\dots A_{r-1})(w+x_1+x_2+y+v)
} \\
\displaystyle{
\vphantom{\Big(}
\times
\big(|_{v=\partial}
\{A_\ell(z+x_2)_{x_1+x_2}A_r(w+y)\}
\big)
} \\
\displaystyle{
\vphantom{\Big(}
\times
\big(\big|_{x_2=\partial}(A_{\ell+1}\dots A_sB^{m-K})(z)\big)
\otimes
\big(\big|_{y=\partial}(A_{r+1}\dots A_sC^{n-H})(w)\big)
\,.}
\end{array}
\end{equation}
We next use the cyclic property of the trace ($\tr\otimes1$), Lemma \ref{lem:hn1}(a)
and then Lemma \ref{20170404:lem1}(b), to rewrite \eqref{20170411:eq3} as
\begin{equation}\label{20170411:eq4}
\begin{array}{l}
\displaystyle{
\vphantom{\Big(}
\sum_{\ell,r=1}^{s}
\int \Res_z \Res_w \tr
\Big(
(A_1\dots A_{r-1})(w+x+y+v)
} \\
\displaystyle{
\vphantom{\Big(}
\times
(1\otimes\tr)
\big(|_{v=\partial}
\{A_\ell(z+x)_{x}A_r(w+y)\}
\big)
} \\
\displaystyle{
\vphantom{\Big(}
\times
\big(\big|_{x=\partial}(A_{\ell+1}\dots A_sB^{m-K}A_1\dots A_{\ell-1})(z)\big)
\otimes
\big(\big|_{y=\partial}(A_{r+1}\dots A_sC^{n-H})(w)\big)
\Big)
\,.}
\end{array}
\end{equation}
We then use Lemma \ref{lem:hn1}(b)
and then Lemma \ref{20170404:lem1}(a), to deduce, from \eqref{20170411:eq4}
\begin{equation}\label{20170411:eq5}
\begin{array}{l}
\displaystyle{
\vphantom{\Big(}
\{\tint h_{m,B},\tint h_{n,C}\}
=
\sum_{\ell,r=1}^{s}
\int \Res_z \Res_w (\tr\otimes\tr)
\{A_\ell(z+x)_{x}A_r(w+y)\}
} \\
\displaystyle{
\vphantom{\Big(}
\times
\big(\big|_{x=\partial}(A_{\ell+1}\dots A_sB^{m-K}A_1\dots A_{\ell-1})(z)\big)
} \\
\displaystyle{
\vphantom{\Big(}
\otimes
\big(\big|_{y=\partial}(A_{r+1}\dots A_sC^{n-H}A_1\dots A_{r-1})(w)\big)
\,.}
\end{array}
\end{equation}
By \eqref{20170411:eq1},
only the summands with $\ell=r$ survive in \eqref{20170411:eq5}.
We then use the generalized Adler identity \eqref{eq:adler-general} for $A_\ell(\partial)$
(with $\alpha$ and $\beta$ independent of $\ell$, and with $\gamma=0$), 
to get
\begin{equation}\label{20170412:eq1}
\begin{array}{l}
\displaystyle{
\vphantom{\Big(}
\{\tint h_{m,B},\tint h_{n,C}\}
=
\sum_{\ell=1}^{s}
\int \Res_z \Res_w (\tr\otimes\tr)
} \\
\displaystyle{
\vphantom{\Big(}
\Bigg(
\alpha_\ell
\iota_z(z-w-x_1-y)^{-1}
\Omega\,\,
A_\ell(w+x+x_1+y)
\otimes
\big(\big|_{x_1=\partial}A_\ell^*(-z)\big)
} \\
\displaystyle{
\vphantom{\Big(}
\times
\big(\big|_{x=\partial}(A_{\ell+1}\!\dots\! A_sB^{m-K}\!\!A_1\!\dots\! A_{\ell-1})(z)\big)
\!\otimes\!
\big(\big|_{y=\partial}(A_{\ell+1}\!\dots\! A_sC^{n-H}\!\!A_1\!\dots\! A_{\ell-1})(w)\big)
} \\
\displaystyle{
\vphantom{\Big(}
-
\alpha_\ell
\iota_z(z-w-y-y_1)^{-1}
\Omega
A_\ell(z+x)\otimes\big(\big|_{y_1=\partial}A_\ell(w+y)\big)
} \\
\displaystyle{
\vphantom{\Big(}
\times
\big(\big|_{x=\partial}(A_{\ell+1}\!\dots\! A_sB^{m-K}\!\!A_1\!\dots\! A_{\ell-1})(z)\big)
\!\otimes\!
\big(\big|_{y=\partial}(A_{\ell+1}\!\dots\! A_sC^{n-H}\!\!A_1\!\dots\! A_{\ell-1})(w)\big)
} \\
\displaystyle{
\vphantom{\Big(}
-
\beta_\ell
\iota_z(z+w+x+x_1+y)^{-1}
(\id\otimes A_\ell(w+x+x_1+y))
\Omega^\dagger
\big(\big|_{x_1=\partial}A_\ell(z+x)\otimes\id\big)
} \\
\displaystyle{
\vphantom{\Big(}
\times
\big(\big|_{x=\partial}(A_{\ell+1}\!\dots\! A_sB^{m-K}\!\!A_1\!\dots\! A_{\ell-1})(z)\big)
\!\otimes\!
\big(\big|_{y=\partial}(A_{\ell+1}\!\dots\! A_sC^{n-H}\!\!A_1\!\dots\! A_{\ell-1})(w)\big)
} \\
\displaystyle{
\vphantom{\Big(}
+
\beta_\ell
\iota_z(z+w+x+y+y_1)^{-1}
(A_\ell^*(-z)\otimes\id)
\Omega^\dagger
\big(\id\otimes \big|_{y_1=\partial}A_\ell(w+y)\big)
} \\
\displaystyle{
\vphantom{\Big(}
\times\!
\big(\big|_{x=\partial}(A_{\ell+1}\!\dots\! A_sB^{m-K}\!\!A_1\!\dots\! A_{\ell-1})(z)\big)
\!\otimes\!
\big(\big|_{y=\partial}(A_{\ell+1}\!\dots\! A_sC^{n-H}\!\!A_1\!\dots\! A_{\ell-1})(w)\big)
} \\
\displaystyle{
\vphantom{\Big(}
+
\gamma_\ell
(x+x_1)^{-1}
\big(\big|_{x_1=\partial}\big(A_\ell^*(-z)-A_\ell(z+x)\big)\big)
\otimes 
\big(A_\ell(w\!+\!x\!+\!x_1\!+\!y)-A_\ell(w+y)\big)
} \\
\displaystyle{
\vphantom{\Big(}
\times
\big(\big|_{x=\partial}(A_{\ell+1}\!\!\dots\!\! A_sB^{m-K}\!\!A_1\!\!\dots\!\! A_{\ell-1})(z)\big)
\otimes
\big(\big|_{y=\partial}(A_{\ell+1}\!\!\dots\!\! A_sC^{n-H}\!\!A_1\!\!\dots\!\! A_{\ell-1})(w)\big)
\!\Bigg)
.}
\end{array}
\end{equation}
The first term in the RHS of \eqref{20170412:eq1} can be rewritten as follows:
\begin{equation}\label{20170412:eq2}
\begin{array}{l}
\displaystyle{
\vphantom{\Big(}
\sum_{\ell=1}^{s}
\alpha_\ell
\int \Res_z \Res_w (\tr\otimes\tr)
\iota_z(z-w-x_1-y)^{-1}
\Omega
} \\
\displaystyle{
\vphantom{\Big(}
\quad\times
A_\ell(w+x+x_1+y)
\otimes
\big(\big|_{x_1=\partial}A_\ell^*(-z)\big)
} \\
\displaystyle{
\vphantom{\Big(}
\quad\times\!
\big(\big|_{x=\partial}\!(A_{\ell+1}\!\dots\! A_sB^{m-K}\!\!A_1\!\dots\! A_{\ell-1})(\!z\!)\big)
\!\otimes\!
\big(\big|_{y=\partial}\!(A_{\ell+1}\!\dots\! A_sC^{n-H}\!\!A_1\!\dots\! A_{\ell-1})(\!w\!)\big)
} \\
\displaystyle{
\vphantom{\Big(}
=
\sum_{\ell=1}^{s}
\alpha_\ell
\int \Res_z \Res_w \tr
\iota_z(z-w-x_1-y)^{-1}
A_\ell(w+x+x_1+y)
} \\
\displaystyle{
\vphantom{\Big(}
\qquad\times
\big(\big|_{x=\partial}(A_{\ell+1}\dots A_sB^{m-K}A_1\dots A_{\ell-1})(z)\big)
\big(\big|_{x_1=\partial}A_\ell^*(-z)\big)
} \\
\displaystyle{
\vphantom{\Big(}
\qquad\times
\big(\big|_{y=\partial}(A_{\ell+1}\dots A_sC^{n-H}A_1\dots A_{\ell-1})(w)\big)
} \\
\displaystyle{
\vphantom{\Big(}
=
\sum_{\ell=1}^{s}
\alpha_\ell
\int \Res_w \tr
A_\ell(w+x+x_1+y)
} \\
\displaystyle{
\vphantom{\Big(}
\qquad\times
\Big(
\big(\big|_{x=\partial}(A_{\ell+1}\dots A_sB^{m-K}A_1\dots A_{\ell-1})(w+x_1+y)\big)
} \\
\displaystyle{
\vphantom{\Big(}
\qquad\times
\big(\big|_{x_1=\partial}A_\ell(w+y)\big)
\Big)_+
\big(\big|_{y=\partial}(A_{\ell+1}\dots A_sC^{n-H}A_1\dots A_{\ell-1})(w)\big)
} \\
\displaystyle{
\vphantom{\Big(}
=
\sum_{\ell=1}^{s}
\alpha_\ell
\int \Res_w \tr
A_\ell(w+\partial)
(A_{\ell+1}\dots A_sB^{m-K}A_1\dots A_{\ell-1}A_\ell)(w+\partial)_+
} \\
\displaystyle{
\vphantom{\Big(}
\qquad\times
(A_{\ell+1}\dots A_sC^{n-H}A_1\dots A_{\ell-1})(w)
} \\
\displaystyle{
\vphantom{\Big(}
=
\sum_{\ell=1}^{s}
\alpha_\ell
\int \Res_w \tr
(A_{\ell+1}\dots A_sB^{m-K}A_1\dots A_{\ell-1}A_\ell)(w+\partial)_+
} \\
\displaystyle{
\vphantom{\Big(}
\qquad\times
(A_{\ell+1}\dots A_sC^{n-H}A_1\dots A_{\ell-1}A_\ell)(w)
\,,}
\end{array}
\end{equation}
where we used Lemma \ref{lemma:29032017}(b) for the first equality,
equations \eqref{20170406:eq1} and \eqref{eq:positive} for the second equality,
Lemma \ref{20170404:lem1}(a) for the third equality,
and Lemma \ref{lem:hn1}(b) for the fourth equality.
The second term in the RHS of \eqref{20170412:eq1} is
\begin{equation}\label{20170412:eq3}
\begin{array}{l}
\displaystyle{
\vphantom{\Big(}
-
\sum_{\ell=1}^{s}
\alpha_\ell
\!\int\! \Res_z \Res_w (\tr\otimes\tr)
\iota_z(z\!-\!w\!-\!y\!-\!y_1)^{-1}
\Omega
A_\ell(z+x)\otimes\big(\big|_{y_1=\partial}A_\ell(w+y)\big)
} \\
\displaystyle{
\vphantom{\Big(}
\quad\times\!
\big(\big|_{x=\partial}\!(A_{\ell+1}\!\dots\! A_sB^{m-K}\!\!A_1\!\dots\! A_{\ell\!-\!1})(\!z\!)\big)
\!\otimes\!
\big(\big|_{y=\partial}\!(A_{\ell+1}\!\dots\! A_sC^{n-H}\!\!A_1\!\dots\! A_{\ell\!-\!1})(\!w\!)\big)
} \\
\displaystyle{
\vphantom{\Big(}
=-
\sum_{\ell=1}^{s}
\alpha_\ell
\int \Res_z \Res_w \tr
\iota_z(z-w-y-y_1)^{-1}
} \\
\displaystyle{
\vphantom{\Big(}
\qquad\times
A_\ell(z+x)
\big(\big|_{x=\partial}(A_{\ell+1}\dots A_sB^{m-K}A_1\dots A_{\ell-1})(z)\big)
} \\
\displaystyle{
\vphantom{\Big(}
\qquad\times
\big(\big|_{y_1=\partial}A_\ell(w+y)\big)
\big(\big|_{y=\partial}(A_{\ell+1}\dots A_sC^{n-H}A_1\dots A_{\ell-1})(w)\big)
} \\
\displaystyle{
\vphantom{\Big(}
=-
\sum_{\ell=1}^{s}
\alpha_\ell
\int \Res_z \Res_w \tr
(A_\ell A_{\ell+1}\dots A_sB^{m-K}A_1\dots A_{\ell-1})(w+\partial)_+
} \\
\displaystyle{
\vphantom{\Big(}
\qquad\times
(A_\ell A_{\ell+1}\dots A_sC^{n-H}A_1\dots A_{\ell-1})(w)
\,,}
\end{array}
\end{equation}
where we used Lemma \ref{lemma:29032017}(b) for the first equality,
Lemma \ref{20170404:lem1}(a) for the second equality,
and equation \eqref{eq:positive} for the last equality.
If $\alpha_1=\dots=\alpha_\ell$,
combining \eqref{20170412:eq2} and \eqref{20170412:eq3},
we get a telescopic sum where only the $\ell=s$ term in \eqref{20170412:eq2}
and the $\ell=1$ term in \eqref{20170412:eq3} survive.
As a result we obtain
\begin{equation}\label{20170412:eq4}
\begin{array}{l}
\displaystyle{
\vphantom{\Big(}
\alpha_s
\int \Res_w \tr
(B^{m-K}A_1\dots A_s)(w+\partial)_+
(C^{n-H}A_1\dots A_s)(w)\big)
} \\
\displaystyle{
\vphantom{\Big(}
-\alpha_1
\int \Res_z \Res_w \tr
(A_1\dots A_sB^{m-K})(w+\partial)_+
(A_1\dots A_sC^{n-H})(w)
} \\
\displaystyle{
\vphantom{\Big(}
=
\alpha_1
\int \Res_w \tr
\Big(
B^{m}(w+\partial)_+
(C^{n}(w)\big)
-
B^{m}(w+\partial)_+
C^{n}(w)
\Big)
=0
\,,}
\end{array}
\end{equation}
where we used the identities $A_1\dots A_s=L=B^K=C^H$.
This proves claim (a) in case (i), i.e. when $\alpha_1=\dots=\alpha_s$ and $\beta_\ell=\gamma_\ell=0$
for all $\ell$.
To prove claim (a) in case (ii),
we let $s=2$ and we consider the last three terms in the RHS of \eqref{20170412:eq1}.
For the third term, we have
\begin{equation}\label{20170412:eq5}
\begin{array}{l}
\displaystyle{
\vphantom{\Big(}
-\beta_1
\int \Res_z \Res_w (\tr\otimes\tr)
\iota_z(z+w+x+y)^{-1}
(\id\otimes A_1(w+x+y))
\Omega^\dagger
} \\
\displaystyle{
\vphantom{\Big(}
\qquad\times
\big(\big|_{x=\partial}(A_1A_2B^{m-K})(z)\big)
\otimes
\big(\big|_{y=\partial}(A_2C^{n-H})(w)\big)
} \\
\displaystyle{
\vphantom{\Big(}
-\beta_2
\int \Res_z \Res_w (\tr\otimes\tr)
\iota_z(z+w+x+y)^{-1}
(\id\otimes A_2(w+x+y))
\Omega^\dagger
} \\
\displaystyle{
\vphantom{\Big(}
\qquad\times
\big(\big|_{x=\partial}(A_2B^{m-K}A_1)(z)\big)
\otimes
\big(\big|_{y=\partial}(C^{n-H}A_1)(w)\big)
} \\
\displaystyle{
\vphantom{\Big(}
=
-\beta_1
\int \Res_z \Res_w \tr
\iota_z(z\!+\!w\!+\!x\!+\!y)^{-1}
A_1(w+x+y)
\big(\big|_{x=\partial}(A_1A_2B^{m-K})(z)\big)^\dagger
} \\
\displaystyle{
\vphantom{\Big(}
\quad\times\big(\big|_{y=\partial}(A_2C^{n-H})(w)\big)
-\beta_2
\int \Res_z \Res_w \tr
\iota_z(z\!+\!w\!+\!x\!+\!y)^{-1}
A_2(w+x+y)
} \\
\displaystyle{
\vphantom{\Big(}
\quad\times
\big(\big|_{x=\partial}(A_2B^{m-K}A_1)(z)\big)^\dagger
\big(\big|_{y=\partial}(C^{n-H}A_1)(w)\big)
} \\
\displaystyle{
\vphantom{\Big(}
=
-\beta_1\!
\int\! \Res_w\! \tr
A_1(w\!+\!x\!+\!y)
\big(\big|_{x=\partial}\!(A_1A_2B^{m-K})^*(w\!+\!y)_+\big)^\dagger
\big(\big|_{y=\partial}\!(A_2C^{n-H})(w)\big)
} \\
\displaystyle{
\vphantom{\Big(}
-\beta_2
\int \Res_w \tr
A_2(w\!+\!x\!+\!y)
\big(\big|_{x=\partial}(A_2B^{m-K}A_1)^*(w\!+\!y)_+\big)^\dagger
\big(\big|_{y=\partial}(C^{n-H}A_1)(w)\big)
} \\
\displaystyle{
\vphantom{\Big(}
=
-\beta_1
\int \Res_w \tr
A_1(w+\partial)
\big((A_1A_2B^{m-K})^*(w+\partial)_+\big)^\dagger
(A_2C^{n-H})(w)
} \\
\displaystyle{
\vphantom{\Big(}
-\beta_2
\int \Res_w \tr
A_2(w+\partial)
\big((A_2B^{m-K}A_1)^*(w+\partial)_+\big)^\dagger
(C^{n-H}A_1)(w)
} 
\end{array}
\end{equation}

\begin{equation}\label{eq:hn-pr2}
\begin{array}{l}
\displaystyle{
\vphantom{\Big(}
=
-\beta_1
\int \Res_w \tr
\big((A_1A_2B^{m-K})^*(w+\partial)_+\big)^\dagger
(A_2C^{n-H}A_1)(w)
} \\
\displaystyle{
\vphantom{\Big(}
-\beta_2
\int \Res_w \tr
\big((A_2B^{m-K}A_1)^*(w+\partial)_+\big)^\dagger
(C^{n-H}A_1A_2)(w)
} \\
\displaystyle{
\vphantom{\Big(}
=
-\beta_1
\int \Res_w \tr
\big((B^{m})^*(w+\partial)_+\big)^\dagger
(A_2C^{n-H}A_1)(w)
} \\
\displaystyle{
\vphantom{\Big(}
-\beta_2
\int \Res_w \tr
\big((A_2B^{m-K}A_1)^*(w+\partial)_+\big)^\dagger
C^{n}(w)
\,.}
\end{array}
\end{equation}
Here we used first Lemma \ref{20170404:lem1}, 
then we used Lemma \ref{lemma:29032017}(d) for the first equality,
equations \eqref{20170406:eq1} and \eqref{eq:positive} for the second equality,
Lemma \ref{lem:hn1}(b) for the fourth equality,
and the identities $A_1A_2=B^K=C^H$ for the last equality.
Similarly, for the fourth term in the RHS of \eqref{20170412:eq1}, we have
\begin{equation}\label{20170412:eq6}
\begin{array}{l}
\displaystyle{
\vphantom{\Big(}
\beta_1
\int \Res_z \Res_w (\tr\otimes\tr)
\iota_z(z+w+x+y)^{-1}
(A_1^*(-z)\otimes\id)
\Omega^\dagger
} \\
\displaystyle{
\vphantom{\Big(}
\qquad\times
\big(\big|_{x=\partial}(A_2B^{m-K})(z)\big)
\otimes
\big(\big|_{y=\partial}(A_1A_2C^{n-H})(w)\big)
} \\
\displaystyle{
\vphantom{\Big(}
+\beta_2
\int \Res_z \Res_w (\tr\otimes\tr)
\iota_z(z+w+x+y)^{-1}
(A_2^*(-z)\otimes\id)
\Omega^\dagger
} \\
\displaystyle{
\vphantom{\Big(}
\qquad\times
\big(\big|_{x=\partial}(B^{m-K}A_1)(z)\big)
\otimes
\big(\big|_{y=\partial}(A_2C^{n-H}A_1)(w)\big)
} \\
\displaystyle{
\vphantom{\Big(}
=
\beta_1
\int \Res_z \Res_w \tr
\iota_z(z+w+x+y)^{-1}
\Big(\big(\big|_{x=\partial}(A_2B^{m-K})(z)\big)
A_1^*(-z)\Big)^\dagger
} \\
\displaystyle{
\vphantom{\Big(}
\quad\times
\big(\big|_{y=\partial}(A_1A_2C^{n-H})(w)\big)
+\beta_2
\int \Res_z \Res_w \tr
\iota_z(z+w+x+y)^{-1}
} \\
\displaystyle{
\vphantom{\Big(}
\quad\times
\Big(\big(\big|_{x=\partial}(B^{m-K}A_1)(z)\big)
A_2^*(-z)\Big)^\dagger
\big(\big|_{y=\partial}(A_2C^{n-H}A_1)(w)\big)
} \\
\displaystyle{
\vphantom{\Big(}
=
\!\beta_1
\!\int\! \Res_w \!\tr\!
\Big(\big(\big|_{x=\partial}(A_2B^{m-K})^*(w\!+\!y)\big)
A_1^*(w\!+\!x\!+\!y)\Big)_+^\dagger
\big(\big|_{y=\partial}(A_1A_2C^{n-H})(w)\big)
} \\
\displaystyle{
\vphantom{\Big(}
+\beta_2
\!\int\! \Res_w \!\tr\!
\Big(\big(\big|_{x=\partial}(B^{m-K}A_1)^*(w\!+\!y)\big)
A_2^*(w\!+\!x\!+\!y)\Big)_+^\dagger
\big(\big|_{y=\partial}(A_2C^{n-H}A_1)(w)\big)
} \\
\displaystyle{
\vphantom{\Big(}
=
\beta_1
\int \Res_w \tr
\big((A_2B^{m-K}A_1)^*(w+\partial)_+\big)^\dagger
(A_1A_2C^{n-H})(w)
} \\
\displaystyle{
\vphantom{\Big(}
+\beta_2
\int \Res_w \tr
\big((B^{m-K}A_1A_2)^*(w+\partial)_+\big)^\dagger
(A_2C^{n-H}A_1)(w)
} \\
\displaystyle{
\vphantom{\Big(}
=
\beta_1
\int \Res_w \tr
\big((A_2B^{m-K}A_1)^*(w+\partial)_+\big)^\dagger
C^{n}(w)
} \\
\displaystyle{
\vphantom{\Big(}
+\beta_2
\int \Res_w \tr
\big((B^{m})^*(w+\partial)_+\big)^\dagger
(A_2C^{n-H}A_1)(w)
\,,}
\end{array}
\end{equation}
where we used first Lemma \ref{20170404:lem1}, 
then we used Lemma \ref{lemma:29032017}(d) for the first equality,
equations \eqref{20170406:eq1} and \eqref{eq:positive} for the second equality,
Lemma \ref{20170404:lem1}(b) for the third equality,
and the identities $A_1A_2=B^K=C^H$ for the last equality.
If $\beta_1=\beta_2$, as assumed in case (ii),
we can combine \eqref{20170412:eq5} and \eqref{20170412:eq6} to get zero.
Finally, let us consider the last term in the RHS of \eqref{20170412:eq1}.
It is equal to
\begin{equation}\label{20170412:eq7}
\begin{array}{l}
\displaystyle{
\vphantom{\Big(}
\gamma_1
\int \Res_z \Res_w
x^{-1}
\tr\Big(\Big|_{x=\partial}
\big(A_1^*(-z)-A_1(z+\partial)\big)
(A_2B^{m-K})(z)
\Big)
} \\
\displaystyle{
\vphantom{\Big(}
\qquad\times
\tr\Big(
\big(A_1(w+x+\partial)-A_1(w+\partial)\big)
(A_2C^{n-H})(w)
\Big)
} \\
\displaystyle{
\vphantom{\Big(}
+
\gamma_2
\int \Res_z \Res_w 
x^{-1}
\tr\Big(\Big|_{x=\partial}
\big(A_2^*(-z)-A_2(z+\partial)\big)
(B^{m-K}A_1)(z)
\Big)
} \\
\displaystyle{
\vphantom{\Big(}
\qquad\times
\tr\Big(
\big(A_2(w+x+\partial)-A_2(w+\partial)\big)
(C^{n-H}A_1)(w)
\Big)
\,.}
\end{array}
\end{equation}
By Lemma \ref{lem:hn1}(a) and (b), the cyclic property of the trace, and the identity $A_1A_2=B^K$,
we have
$$
\begin{array}{l}
\displaystyle{
\vphantom{\Big(}
\Res_z\tr\Big(
\big(A_1^*(-z)-A_1(z+\partial)\big)
(A_2B^{m-K})(z)
\Big)
} \\
\displaystyle{
\vphantom{\Big(}
=
\Res_z\tr\Big(
(A_2B^{m-K}A_1-A_1A_2B^{m-K})(z)
\Big)
\,.}
\end{array}
$$
Moreover, integrating by parts, we can also replace in \eqref{20170412:eq7}
$$
\begin{array}{l}
\displaystyle{
\vphantom{\Big(}
\Res_w\tr\Big(
\big(A_1(w+x+\partial)-A_1(w+\partial)\big)
(A_2C^{n-H})(w)\Big)
} \\
\displaystyle{
\vphantom{\Big(}
=
\Res_w\tr\Big(
\big(A_1^*(-w)-A_1(w+\partial)\big)
(A_2C^{n-H})(w)\Big)
} \\
\displaystyle{
\vphantom{\Big(}
=
\Res_w\tr\Big(
(A_2C^{n-H}A_1-C^n)(w)
\Big)
\,.}
\end{array}
$$
Hence, \eqref{20170412:eq7} becomes
\begin{equation}\label{20170412:eq8}
\begin{array}{l}
\displaystyle{
\vphantom{\Big(}
\gamma_1
\int
\Res_w
\tr(A_2C^{n-H}A_1-C^n)(w)
\partial^{-1}
\Res_z 
\tr
(A_2B^{m-K}A_1-B^m)(z)
} \\
\displaystyle{
\vphantom{\Big(}
+\!
\gamma_2\!
\int 
\Res_w 
\tr(C^{n}\!-\!A_2C^{n-H}\!A_1)(w)
\partial^{-1}
\Res_z 
\tr(B^m\!-\!A_2B^{m-K}\!A_1)(z)
\,,}
\end{array}
\end{equation}
which is zero since, by assumption, $\gamma_1=-\gamma_2$.
This completes the proof of (a).

Next, let us prove part (b).
By the first equation in \eqref{eq:hn2} and Lemma \ref{20170404:lem2}(d), 
we have
\begin{equation}\label{eq:prb1}
\begin{array}{l}
\displaystyle{
\vphantom{\Big(}
\frac{dA_\ell(w)}{dt_{n,B}}
=
\{\tint h_{n,B},A_\ell(w)\}
=
\{{h_{n,B}}_\lambda A_\ell(w)\}\big|_{\lambda=0}
} \\
\displaystyle{
\vphantom{\Big(}
=
-\Res_z (\tr\otimes1)
\{L(z+x)_x A_\ell(w)\}
\big(\big|_{x=\partial}B^{n-K}(z)\otimes\id\big)
} \\
\displaystyle{
\vphantom{\Big(}
=
-
\sum_{r=1}^{s}
\Res_z (\tr\otimes1)
\big(\big|_{x_2=\partial}(A_1\dots A_{r-1})^*(-z)\otimes1\big)
\{A_r(z\!+\!x\!+\!x_1)_{x\!+\!x_1\!+\!x_2}A_\ell(w)\}
} \\
\displaystyle{
\vphantom{\Big(}
\quad\times
\big(\big|_{x_1=\partial}(A_{r+1}\dots A_s)(z+x)\otimes1\big)
\big(\big|_{x=\partial}B^{n-K}(z)\otimes\id\big)
\,.}
\end{array}
\end{equation}
We can use the cyclic property of the trace, Lemma \ref{lem:hn1}(b) and Lemma \ref{20170404:lem1}(a),
to rewrite the RHS of \eqref{eq:prb2} as
\begin{equation}\label{eq:prb2}
\begin{array}{l}
\displaystyle{
\vphantom{\Big(}
-
\sum_{r=1}^{s}
\Res_z (\tr\otimes1)
\{A_r(z\!+\!x)_{x}A_\ell(w)\}
\big(\big|_{x=\partial}(A_{r\!+\!1}\dots A_sB^{n-K}A_1\dots A_{r\!-\!1})(z)\otimes1\big)
\,.}
\end{array}
\end{equation}
By assumption \eqref{20170411:eq1}, only the $r=\ell$ term in \eqref{eq:prb2} is non zero,
and we can use the generalized Adler identity \eqref{eq:adler-general}
to rewrite \eqref{eq:prb2} as
\begin{equation}\label{eq:prb3}
\begin{array}{l}
\displaystyle{
\vphantom{\Big(}
-
\alpha_\ell
\Res_z 
\iota_z(z-w-x_1)^{-1}
(\tr\otimes1)
\Omega
} \\
\displaystyle{
\vphantom{\Big(}
\quad\times
A_\ell(w+x+x_1)
\big(\big|_{x=\partial}(A_{\ell+1}\dots A_sB^{n-K}A_1\dots A_{\ell-1})(z)\big)
\otimes
\big(\big|_{x_1=\partial}A_\ell^*(-z)\big)
} \\
\displaystyle{
\vphantom{\Big(}
+
\alpha_\ell
\Res_z 
\iota_z(z-w-y)^{-1}
(\tr\otimes1)
\Omega
} \\
\displaystyle{
\vphantom{\Big(}
\quad\times
(A_\ell A_{\ell+1}\dots A_sB^{n-K}A_1\dots A_{\ell-1})(z)
\otimes
\big(\big|_{y=\partial}A_\ell(w)\big)
} \\
\displaystyle{
\vphantom{\Big(}
+
\beta_\ell
\Res_z 
\iota_z(z+w+x)^{-1}
(\tr\otimes1)
\big(\id\otimes A_\ell(w+x)\big)
\Omega^\dagger
} \\
\displaystyle{
\vphantom{\Big(}
\quad\times
\big(\big|_{x=\partial}(A_\ell A_{\ell+1}\dots A_sB^{n-K}A_1\dots A_{\ell-1})(z)\otimes1\big)
} \\
\displaystyle{
\vphantom{\Big(}
-
\beta_\ell
\Res_z 
\iota_z(z+w+x+y)^{-1}
(\tr\otimes1)
\big(A_\ell^*(-z)\otimes\id\big)
\Omega^\dagger
} \\
\displaystyle{
\vphantom{\Big(}
\quad\times
\big(
\big|_{x=\partial}(A_{\ell+1}\dots A_sB^{n-K}A_1\dots A_{\ell-1})(z)
\otimes 
\big|_{y=\partial}A_\ell(w)
\big)
} 
\end{array}
\end{equation}

\begin{equation}\label{eq:hn-pr2}
\begin{array}{l}
\displaystyle{
\vphantom{\Big(}
-
\gamma_\ell
\big(A_\ell(w+\partial)-A_\ell(w)\big)\partial^{-1}
\Res_z
\tr\Big(
\big(A_\ell^*(-z)-A_\ell(z\!+\!x)\big)
} \\
\displaystyle{
\vphantom{\Big(}
\quad\times
\big(\big|_{x=\partial}(A_{\ell+1}\!\dots\! A_sB^{n-K}A_1\!\dots\! A_{\ell-1})(z)\big)
\Big)\,.}
\end{array}
\end{equation}
Next, we use Lemmas \ref{lemma:29032017}(a)-(c), \ref{lem:hn1}(b) and \ref{20170404:lem1},
and equations \eqref{20170406:eq1} and \eqref{eq:positive},
to rewrite \eqref{eq:prb3} as
\begin{equation}\label{eq:prb4}
\begin{array}{l}
\displaystyle{
\vphantom{\Big(}
-
\alpha_\ell
A_\ell(w+\partial)
(A_{\ell+1}\dots A_sB^{n-K}A_1\dots A_{\ell-1}A_\ell)(w)_+
} \\
\displaystyle{
\vphantom{\Big(}
+
\alpha_\ell
(A_\ell A_{\ell+1}\dots A_sB^{n-K}A_1\dots A_{\ell-1})(w+\partial)_+A_\ell(w)
} \\
\displaystyle{
\vphantom{\Big(}
+
\beta_\ell
A_\ell(w+\partial)
\big((A_\ell A_{\ell+1}\dots A_sB^{n-K}A_1\dots A_{\ell-1})^*(w)_+\big)^\dagger
} \\
\displaystyle{
\vphantom{\Big(}
-
\beta_\ell
\big((A_{\ell+1}\dots A_sB^{n-K}A_1\dots A_{\ell-1}A_\ell)^*(w+\partial)_+\big)^\dagger
A_\ell(w)
} \\
\displaystyle{
\vphantom{\Big(}
-
\gamma_\ell
\big(A_\ell(w+\partial)-A_\ell(w)\big)\partial^{-1}
\Res_z
\tr\Big(
(A_{\ell+1}\!\dots\! A_sB^{n-K}A_1\!\dots\! A_{\ell-1}A_\ell)(z)
} \\
\displaystyle{
\vphantom{\Big(}
\qquad
-(A_\ell A_{\ell+1}\!\dots\! A_sB^{n-K}A_1\!\dots\! A_{\ell-1})(z)
\Big)\,.}
\end{array}
\end{equation}
Rewriting \eqref{eq:prb4} in operator form, we get the following evolution equation for 
$A_\ell(\partial)\in\mc V((\partial^{-1}))\otimes\End V$
\begin{equation}\label{eq:prb5}
\begin{array}{l}
\displaystyle{
\vphantom{\Big(}
\frac{dA_\ell(\partial)}{dt_{n,B}}
=
-\alpha_\ell
A_\ell(\partial) B_{\ell}^n(\partial)_+
+\alpha_\ell
B_{\ell-1}^n(\partial)_+ A_\ell(\partial)
} \\
\displaystyle{
\vphantom{\Big(}
+\beta_\ell
A_\ell(\partial) ((B_{\ell-1}^n)^*(\partial)_+)^{\dagger}
-\beta_\ell
((B_{\ell}^n)^*(\partial)_+)^{\dagger} A_\ell(\partial)
-\gamma_\ell [A_\ell(\partial),f_{\ell}^n]
\,,}
\end{array}
\end{equation}
where
\begin{equation}\label{eq:prb6}
B_{\ell}^n(\partial)
:=
(A_{\ell+1}\dots A_sB^{n-K}A_1\dots A_\ell)(\partial)
\,\in\mc V((\partial^{-1}))\otimes\End V
\,,
\end{equation}
and
\begin{equation}\label{eq:prb7}
f_{\ell}^n
:=
\partial^{-1}\res_z\tr \big( B_\ell^n(z)-B_{\ell-1}^n(z) \big)
\,\in\mc V/\ker\partial
\,.
\end{equation}
Note that $f_\ell^n$ is a well defined element of $\mc V/\ker\partial$
since, by Lemma \ref{lem:hn1}(b),
$$
\tint \res_z\tr B_\ell^n(z)=\tint \res_z\tr B_{\ell-1}^n(z)
\,.
$$

We now recall the definition \eqref{eq:prod} of $L(\partial)$
and compute $\frac{dL(w)}{dt_{n,B}}$ using the Leibniz rule:
\begin{equation}\label{eq:prb8}
\begin{array}{l}
\displaystyle{
\frac{dL(\partial)}{dt_{n,B}}
=
\sum_{\ell=1}^s A_1(\partial)\dots A_{\ell-1}(\partial)
\frac{dA_\ell(\partial)}{dt_{n,B}}
A_{\ell+1}(\partial)\dots A_s(\partial)
} \\
\displaystyle{
\vphantom{\Big(}
=
-
\sum_{\ell=1}^s \alpha_\ell
A_1(\partial)\dots A_{\ell}(\partial)
B_{\ell}^n(\partial)_+
A_{\ell+1}(\partial)\dots A_s(\partial)
} \\
\displaystyle{
\vphantom{\Big(}
+
\sum_{\ell=1}^s \alpha_\ell
A_1(\partial)\dots A_{\ell-1}(\partial)
B_{\ell-1}^n(\partial)_+
A_{\ell}(\partial)\dots A_s(\partial)
} \\
\displaystyle{
\vphantom{\Big(}
+
\sum_{\ell=1}^s \beta_\ell
A_1(\partial)\dots A_{\ell}(\partial)
((B_{\ell-1}^n)^*(\partial)_+)^{\dagger}
A_{\ell+1}(\partial)\dots A_s(\partial)
} \\
\displaystyle{
\vphantom{\Big(}
-
\sum_{\ell=1}^s \beta_\ell
A_1(\partial)\dots A_{\ell-1}(\partial)
((B_{\ell}^n)^*(\partial)_+)^{\dagger}
A_{\ell}(\partial)\dots A_s(\partial)
} \\
\displaystyle{
\vphantom{\Big(}
-
\sum_{\ell=1}^s \gamma_\ell
A_1(\partial)\dots A_{\ell-1}(\partial)
[A_\ell(\partial),f_{\ell}^n]
A_{\ell+1}(\partial)\dots A_s(\partial)
\,.}
\end{array}
\end{equation}
For $\alpha_1=\dots=\alpha_s$, the first two terms of the RHS of \eqref{eq:prb8}
form a telescopic sum where only the $\ell=s$ term of the first sum and the $\ell=1$ term of the second sum
survive. As a result we get
\begin{equation}\label{eq:prb9}
-
\alpha_1
L(\partial)
B^n(\partial)_+
+
\alpha_1
B^n(\partial)_+
L(\partial)
\,.
\end{equation}
Here we used the fact that $A_1\dots A_s=L$ and $B_0^n=B_s^n=B^n$.
This proves equation \eqref{eq:hierarchy-prod} in case (i).

We are left to prove equation \eqref{eq:hierarchy-prod} in case (ii) (i.e. when $s=2$, $\alpha_1=\alpha_2$, 
$\beta_1=\beta_2$ and $\gamma_1=-\gamma_2$).
In this case, the $\beta$-terms in the RHS of \eqref{eq:prb8} are
\begin{equation}\label{eq:prb10}
\begin{array}{l}
\displaystyle{
\beta_1
A_1(\partial)
((B^n)^*(\partial)_+)^{\dagger}
A_{2}(\partial)
+
\beta_2
L(\partial)
((B_{1}^n)^*(\partial)_+)^{\dagger}
} \\
\displaystyle{
\vphantom{\Big(}
-
\beta_1
((B_{1}^n)^*(\partial)_+)^{\dagger}
L(\partial)
-
\beta_2
A_1(\partial)
((B^n)^*(\partial)_+)^{\dagger}
A_2(\partial)
=
-\beta_1 [((B_{1}^n)_+^{\star\dagger},L]
\,.}
\end{array}
\end{equation}
where we used the identities $A_1A_2=L$, $B_0^n=B_2^n=B^n$.
Finally,
let us compute the $\gamma$-terms in the RHS of \eqref{eq:prb8}.
Note that, for $s=2$, we have, by \eqref{eq:prb7},
$$
f_1^n=-f_2^n=\partial^{-1}\res_z\tr(B_1^n(z)-B^n(z))\,.
$$
Hence, the $\gamma$-term in the RHS of \eqref{eq:prb8} is
\begin{equation}\label{eq:prb11}
\begin{array}{l}
\displaystyle{
-
\gamma_1
[A_1(\partial),f_{1}^n] A_2(\partial)
-
\gamma_2
A_1(\partial)
[A_2(\partial),f_{2}^n]
=
-\gamma_1[L(\partial),f_{1}^n]
\,.}
\end{array}
\end{equation}
This completes the proof of \eqref{eq:hierarchy-prod} and of the Theorem.
\end{proof}

\section{Integrable hierarchies for \texorpdfstring{$\mc W$}{W}-algebras associated to classical Lie algebras}\label{sec:8}

As a consequence of Theorem \ref{thm:hn-product}
and the results of Sections \ref{sec:3}-\ref{sec:6},
we get the following list of hierarchies of Hamiltonian equations
associated to $\mc W$-algebras for classical Lie algebras:

\medskip

{\bf 1.}
Let $s\in\mb Z_{\geq1}$, $N_1,\dots,N_s\in\mb Z_{\geq1}$,
let $f_1\in\mf{gl}_{N_1},\dots,f_s\in\mf{gl}_{N_s}$ be nilpotent elements
(of depths $D_1,\dots,D_s$ respectively).
Let $T_1\in(\End\mb F^{N_1})[D_1],\dots,T_s\in(\End\mb F^{N_s})[D_s]$
satisfy condition \eqref{20170317:eq2}
and $\dim(\im T_1)=\dots=\dim(\im T_s)$.
Fix linear isomorphisms $U:=\im T_1\simeq\dots\simeq\im T_s$.
Consider the PVA 
$$
\mc W=\mc W(\mf{gl}_{N_1},f_1)\otimes\dots\otimes\mc W(\mf{gl}_{N_s},f_s)
\,,
$$
and the operator
$$
L(\partial)=L(\mf{gl}_{N_1},f_1,T_1)(\partial)\dots L(\mf{gl}_{N_s},f_s,T_s)(\partial)
\,\in\mc W((\partial^{-1}))\otimes\End U
\,,
$$
where $L(\mf g,f,T)$ is defined by \eqref{eq:L0}.
It defines an integrable hierarchy of Hamiltonian equations over $\mc W$
of Lax form:
\begin{equation}\label{eq:list1}
\frac{dL(w)}{dt_{n,B}}
=
[(B^n)_+,L](w)
\,,
\end{equation}
where  $n\in\mb Z$ and $B\in\mc W((\partial^{-1}))\otimes\End U$ is a root of $L$.
Moreover, the Hamiltonian functionals $\tint h_{n,B}$, $n\in\mb Z$ and $B$ a root of $L$,
defined by \eqref{eq:hn},
are integrals of motions in involution for all the equations of the hierarchy \eqref{eq:list1}.
\begin{remark}\label{rem:exp}
Recalling Table \eqref{table-3},
we could consider a larger class of operators:
$$
L(\partial)=e^{k_1\partial}L(\mf{gl}_{N_1},f_1,T_1)(\partial)e^{k_2\partial}
\dots e^{k_s\partial}L(\mf{gl}_{N_s},f_s,T_s)(\partial)e^{k_{s+1}\partial}
\,,
$$
where $k_1,\dots,k_{s+1}\in\mb F$,
and we assume that $\dim(\im T_i)=1$ for all $i$.
It should still be of Adler type, so it should define a Hamiltonian hierarchy of Lax equations.
But to make sense of it we need to enlarge the algebra of operators that we consider,
since, if we expand the exponentials, $L(\partial)$ has infinitely many positive powers of $\partial$
and we might get diverging series.
\end{remark}

\medskip

{\bf 2.}
Let $N\in\mb Z_{\geq1}$,
let $f\in\mf{sl}_{N}$ be a nilpotent element
(of depth $D$),
and let $T\in(\End\mb F^{N})[D]$ satisfy condition \eqref{20170317:eq2}.
Consider the PVA 
$\mc W=\mc W(\mf{sl}_{N},f)$,
and the operator
$$
L(\partial)=L(\mf{sl}_{N},f,T)(\partial)
\,\in\mc W((\partial^{-1}))\otimes\End(\im T)
\,,
$$
defined by \eqref{eq:L0}.
It defines an integrable hierarchy of Hamiltonian equations over $\mc W$
of the Lax form \eqref{eq:list1},
and the Hamiltonian functionals $\tint h_{n,B}$, $n\in\mb Z$ and $B$ root of $L$,
defined by \eqref{eq:hn},
are integrals of motions in involution for all the equations of the hierarchy \eqref{eq:list1}.
Note that this hierarchy is bi-Hamiltonian in both cases 1. and 2.,
cf. \cite{DSKVnew,DSKV16b}.

\medskip

{\bf 3.}
Let $\mf g\simeq\mf{so}_N$ or $\mf{sp}_N$, with $N\in\mb Z_{\geq1}$,
and let $f\in\mf g$ be a nilpotent element
(of depth $d$ in $\mf g$ and $D$ in $\End \mb F^N$).
Let $T\in(\End\mb F^{N})[D]$
satisfy condition \eqref{20170317:eq2}
and be such that $T^\dagger=\pm T$.
Consider the PVA 
$\mc W=\mc W(\mf g,f)$,
and the operator
$$
L(\partial)
=
L(\mf g,f,T)(\partial)
\,\in\mc W((\partial^{-1}))\otimes\End(\im T)
\,.
$$
It defines an integrable hierarchy of Hamiltonian equations over $\mc W$
of Lax form:
\begin{equation}\label{eq:list2}
\frac{dL(w)}{dt_{n,B}}
=
\frac12[(B^n)_+ - (B^n)^{\star\dagger}_+,L](w)
\,,
\end{equation}
where  $n\in\mb Z$ and $B\in\mc W((\partial^{-1}))\otimes\End(\im T)$ is a root of $L$.
Moreover, the Hamiltonian functionals $\tint h_{n,B}$, $n\in\mb Z$ and $B$ root of $L$,
defined by \eqref{eq:hn},
are integrals of motions in involution for all the equations of the hierarchy \eqref{eq:list2}.
This hierarchy is bi-Hamiltonian provided that $d=D$,
which is always the case for $\mf g=\mf{sp}_N$.

\medskip

{\bf 4.}
Let $\mf g,N,f,T$ and $\mc W$ be as in 3.,
assume that $\dim(\im T)=1$,
and consider the operator
$$
L(\partial)
=
L(\mf g,f,T)(\partial)\partial^{\pm1}
\,\in\mc W((\partial^{-1}))
\,.
$$
It defines an integrable hierarchy of Hamiltonian equations over $\mc W$
of Lax form:
\begin{equation}\label{eq:list3}
\frac{dL(w)}{dt_{n,B}}
=
\frac12[(B^n)_+ - (\partial^{\pm1}B^{n}\partial^{\mp1})^{\star\dagger}_+,L](w)
\,,
\end{equation}
where  $n\in\mb Z$ and $B\in\mc W((\partial^{-1}))$ is a $K$-th root of $L$.
Moreover, the Hamiltonian functionals $\tint h_{n,B}$, $n\in\mb Z$ and $B$ root of $L$,
defined by \eqref{eq:hn},
are integrals of motions in involution for all the equations of the hierarchy \eqref{eq:list3}.
Similarly, we can consider the operator 
$$
L(\partial)
=
\partial^{\pm1}L(\mf g,f,T)(\partial)
\,\in\mc W((\partial^{-1}))
\,,
$$
and, in this case, we need to replace $\partial^{\pm1}B^{n-K}L(\mf g,f,T)$ by $L(\mf g,f,T)B^{n-K}\partial^{\pm1}$
in \eqref{eq:list3}.

\medskip

{\bf 5.}
For $\ell=1,2$, let $\mf g_\ell\simeq\mf{so}_{N_\ell}$ or $\mf{sp}_{N_\ell}$, with $N_\ell\in\mb Z_{\geq1}$,
and let $f_\ell\in\mf g_\ell$ be a nilpotent element
(of depth $D_\ell$ in $\End \mb F^{N_\ell}$).
Let $T_\ell\in(\End\mb F^{N_\ell})[D_\ell]$
satisfy condition \eqref{20170317:eq2}
and be such that $T_\ell^\dagger=\pm T_\ell$.
Assume moreover that the there is an isometry $U:=\im T_1\simeq\im T_2$.
Consider the PVA 
$$
\mc W=\mc W(\mf g_1,f_1)\otimes\mc W(\mf g_2,f_2)
\,,
$$
and the operator
$$
L(\partial)
=
L(\mf g_1,f_1,T_1)(\partial)
L(\mf g_2,f_2,T_2)(\partial)
\,\in\mc W((\partial^{-1}))\otimes\End U
\,.
$$
It defines an integrable hierarchy of Hamiltonian equations over $\mc W$
of Lax form:
\begin{equation}\label{eq:list4}
\frac{dL(w)}{dt_{n,B}}
=
\frac12[(B^n)_+ - (L(\mf g_2,f_2,T_2)B^{n-K}L(\mf g_1,f_1,T_1))^{\star\dagger}_+,L](w)
\,,
\end{equation}
where  $n\in\mb Z$ and $B\in\mc W((\partial^{-1}))\otimes\End U$ is a $K$-th root of $L$.
Moreover, the Hamiltonian functionals $\tint h_{n,B}$, $n\in\mb Z$, and $B$ root of $L$,
defined by \eqref{eq:hn},
are integrals of motions in involution for all the equations of the hierarchy \eqref{eq:list4}.

\begin{remark}
By Remark \ref{rem:modified}, we get integrable hierarchies of Lax type equations for the PVA
$\mc V=\mc V(\mf g_0)\otimes\mc F(\mf g_{\frac12})$ by replacing the pseudodifferential
operator $L$ in \eqref{eq:list1}, \eqref{eq:list2}, \eqref{eq:list3} and \eqref{eq:list4} by the corresponding modified Adler operator
$L^{mod}$ defined in \eqref{eq:Lmod}.
\end{remark}

\section{Examples of generalized Adler type operators}\label{sec:9}

\subsection{Example 1: Lax operator $L_\epsilon(z)$ for principal nilpotent $f$ in $\mf{gl}_N$}\label{sec:example1.1}

This computation already appeared in \cite{DSKV16b}.
Let $\mf g=\mf{gl}_N$ in its standard representation. 
As usual, we denote by $E_{ij}\in\mf{gl}_N$ the elementary matrix 
with $1$ in position $(ij)$ and $0$ elsewhere
and we denote by $e_{ij}\in\mf{g}$ the same matrix when viewed as an element 
of the differential algebra $\mc V(\mf g)$.
Consider the principal nilpotent element 
$f=\sum_{k=1}^{N-1}e_{k+1,k}\in\mf g$.
In this case the depths of the grading \eqref{eq:grading} and \eqref{eq:grading_EndV} coincide: $d=D=N-1$.
Moreover, $\mf{g}_{N-1}=\mb F e_{1N}$. Hence, we may choose $T=S=E_{1N}$.

As a subspace $U\subset\mf{g}$ complementary to $[f,\mf{g}]$ and compatible  with the grading \eqref{eq:grading} we choose
(see \cite{DSKV16b}) $U=\Span\{u^i\mid 0\leq i\leq N-1\}$, where
\begin{equation}\label{eq:U_gl}
u^i=e_{1,N-i}
\,,
\qquad
0\leq i\leq N-1
\,.
\end{equation}
The dual basis $\{u_i\mid 0\leq i\leq N-1\}$ of $\mf g^f$ is given by
$$
u_i=\sum_{k=1}^{i+1}e_{N+k-i-1,k}
\,,
\qquad
0\leq i\leq N-1
\,.
$$
Let $w:\mc V(\mf g^f)\to\mc W(\mf g,f)$ be the differential algebra isomorphism given in Theorem \ref{thm:structure-W}, and let $w_i=w(u_i)$, for $0\leq i\leq N-1$.
By Theorem \ref{thm:main} we have \cite[Sec.7.1]{DSKV16b}
\begin{equation}\label{eq:glN_quasidet}
\begin{array}{l}
\displaystyle{
\vphantom{\Big(}
L_\epsilon(\partial)
=|\id_N\partial+F+\sum_{i=0}^{N-1}w_{i} E_{1N-i}+\epsilon S|_{1N}
} \\
\displaystyle{
\vphantom{\Big(}
=
-(-\partial)^N
+\sum_{i=0}^{N-1}w_i(-\partial)^i
+\epsilon
\in\mc W(\mf{g},f)[\partial]
\,.}
\end{array}
\end{equation}
Thus in this case the Adler type operator $L_\epsilon(\partial)$
is the generic scalar differential operator of order $N$ \cite{GD76,DS85,AGD}.

\begin{remark}\label{rem:OK}
The generators $w_i,\,i=0,\dots,N-1$, can be computed explicitly,
as elements of $\mc V(\mf g_{\leq\frac12})$,
comparing equations \eqref{eq:L} and \eqref{eq:glN_quasidet}
for the matrix $L_0(\partial)$.
We thus recover the formula for the $w_i$'s obtained in \cite[Sec.7.1]{DSKV16b}.
The generators of the $W$-algebra in the remaining examples
can be computed in a similar way, but we will not provide the details.
\end{remark}

\begin{remark}\label{rem:modified_gln}
By equation \eqref{eq:Lmod}, the Miura map $\mu:\mc W(\mf g,f)\to\mc V(\mf g_0)$ is given by
(see also \cite{DS85})
$$
\mu(L_0(\partial))=(-1)^{N+1}(\partial+e_{11})\dots(\partial+e_{NN})
\,\in\mc V(\mf g_0)[\partial]\,.
$$
\end{remark}

\subsection{Example 2: Lax operator $L_\epsilon(z)$ for principal nilpotent $f$ in $\mf{sl}_N$}\label{sec:example1.2}

This computation is similar to the one performed in Section \ref{sec:example1.1}.
Let $\mf g=\mf{sl}_N$ in its standard representation. Given the elementary matrix $E_{ij}\in\mf{gl}_N$ we denote by $E_{ij}^\sharp=E_{ij}-\frac{\delta_{ij}}{N}\id_N$
its projection on $\mf{sl}_N$
and we denote by $e^\sharp_{ij}\in\mf{g}$ the same matrix when viewed as an element of the differential algebra $\mc V(\mf g)$.
Consider the principal nilpotent element 
$f=\sum_{k=1}^{N-1}e_{k+1,k}\in\mf{sl}_N$.
As in Section \ref{sec:example1.1}, we have
$d=D=N-1$, $\mf{g}_{N-1}=\mb F e_{1N}$, and we choose $T=S=E_{1N}$.

As a subspace $U\subset\mf{g}$ complementary fo $[f,\mf{g}]$ and compatible  with the grading \eqref{eq:grading} we choose
$U=\Span\{u^i\mid 0\leq i\leq N-2\}$, where
\begin{equation}\label{eq:U_sl}
u^i=e_{1,N-i}
\,,
\qquad
0\leq i\leq N-2
\,.
\end{equation}
The dual basis $\{u_i\mid 0\leq i\leq N-2\}$ of $\mf g^f$ is given by
$$
u_i=\sum_{k=1}^{i+1}e_{N+k-i-1,k}
\,,
\qquad
0\leq i\leq N-2
\,.
$$
Let $w:\mc V(\mf g^f)\to\mc W(\mf g,f)$ be the differential algebra isomorphism given in Theorem \ref{thm:structure-W}, and let $w_i=w(u_i)$, for $0\leq i\leq N-2$.
As in the case of $\mf{gl}_N$, we have
\begin{equation}\label{eq:slN_quasidet}
\begin{array}{l}
\displaystyle{
\vphantom{\Big(}
L_\epsilon(\partial)
=|\id_N\partial+f+\sum_{i=0}^{N-2}w_{i} E_{1N-i}+\epsilon S|_{1N}
} \\
\displaystyle{
\vphantom{\Big(}
=
-(-\partial)^N
+\sum_{i=0}^{N-2}w_i(-\partial)^i
+\epsilon
\in\mc W(\mf{g},f)[\partial]
\,.}
\end{array}
\end{equation}

\begin{remark}\label{rem:modified_sln}
By equation \eqref{eq:Lmod}, the Miura map $\mu:\mc W(\mf g,f)\to\mc V(\mf g_0)$ is given by
(see also \cite{DS85})
$$
\mu(L_0(\partial))=(-1)^{N+1}(\partial+e^\sharp_{11})\dots(\partial+e^\sharp_{NN})
\in\mc V(\mf g_0)[\partial]\,.
$$
\end{remark}

\subsection{Example 3: Lax operator $L_\epsilon(z)$ for principal nilpotent $f$ in $\mf{g}$ of type $B$ or $C$}\label{sec:example1.3}

For $N\geq2 $ let $V=\mb F^N$ be an $N$-dimensional vector space with basis $\{v_i\}_{i=1}^N$, and let
\begin{equation}\label{20170424:eq1}
\epsilon_i=(-1)^i\,,
\qquad
1\leq i\leq N
\,.
\end{equation}
We define a non-degenerate bilinear form on $V$ as follows:
\begin{equation}\label{20170411:form}
\langle v_i|v_j\rangle=-\epsilon_{i}\delta_{i,j'}
\,,
\qquad i,j=1,\dots,N\,,
\end{equation}
where $i'=N+1-i$.
It follows from \eqref{20170411:form} that
\begin{equation}\label{20170411:parity}
\langle u|v\rangle
=
(-1)^{N+1}
\langle v|u\rangle
\,.
\qquad u,v\in V\,,
\end{equation}
Let $A^\dagger$ denote the adjoint of $A\in\End V$ with respect to \eqref{20170411:form}.
Explicitly, in terms of elementary matrices, it is given by:
$$
(E_{ij})^\dagger=\epsilon_i\epsilon_jE_{j'i'}\,.
$$

Let $\mf g=\{A\in\End V\mid A^\dagger=-A\}$.
Then,
by equation \eqref{20170411:parity},
$\mf g\simeq\mf{so}_N$ if $N$ is odd, and $\mf g\simeq\mf{sp}_N$ if $N$ is even. 
For $i,j=1,\dots,N$, we define
\begin{equation}\label{eq:F}
F_{ij}=E_{ij}-\epsilon_i\epsilon_jE_{j'i'}\,\,\big(=-F_{ij}^\dagger\big)
\,.
\end{equation}
The following commutation relations hold ($i,j,h,k=1,\dots,N$):
\begin{equation}\label{comm:BC}
[F_{ij},F_{hk}]=\delta_{jh}F_{ik}-\delta_{ki}F_{hj}-\epsilon_i\epsilon_j\delta_{i'h}F_{j'k}
+\epsilon_i\epsilon_j\delta_{kj'}F_{hi'}
\,.
\end{equation}
By \eqref{eq:F} the following elements form a basis of $\mf g$
$$
\big\{\frac{1}{1+\delta_{ij'}}f_{ij}:=\frac{1}{1+\delta_{ij'}}(e_{ij}-\epsilon_i\epsilon_je_{j'i'})
\,\big|\,
(i,j)\in I\big\}
\,,
$$
where
$$
I=\Bigg\{
\begin{array}{l}
\displaystyle{
\vphantom{\Big(}
\big\{(i,j)\,\big|\, 1\leq i\leq N, 1\leq j\leq i'\big\}
\,\,\text{ if }\,\, N \text{ is even}
} \\
\displaystyle{
\vphantom{\Big(}
\big\{(i,j)\,\big|\, 1\leq i\leq N, 1\leq j< i'\big\}
\,\,\text{ if }\,\, N \text{ is odd}
\,.}
\end{array}
$$
Its dual basis, with respect to the trace form \eqref{20170317:eq1}, is 
$$
\big\{\frac12f_{ji}\mid (i,j)\in I\big\}\,.
$$ 

Let $f=\sum_{k=1}^{N-1}e_{k+1,k}\in\End V$. Then, $f\in\mf g$ and it is a principal nilpotent element. 
Indeed we can write
\begin{equation}\label{20170411:f}
f=\Bigg\{
\begin{array}{l}
\displaystyle{
\vphantom{\Big(}
f_{21}+f_{32}+\dots+f_{n,n-1}+\frac{1}{2}f_{n+1,n}
\,\,\text{ if }\,\, N \text{ is even}
} \\
\displaystyle{
\vphantom{\Big(}
f_{21}+f_{32}+\dots+f_{n,n-1}+f_{n+1,n}
\,\,\text{ if }\,\, N \text{ is odd}
\,.}
\end{array}
\,,
\end{equation}
where we denote by $n=[\frac{N}{2}]$ the integer part of $\frac{N}{2}$.
We can include $f\in\mf g$ in the following $\mf{sl}_2$-triple $\{e,h=2x,f\}\subset\mf g$,
where:
\begin{equation}\label{20170411:sl2tripleBC}
x=\sum_{k=1}^n\frac{N\!+\!1\!-\!2k}2 f_{kk}
\,,
\quad
e=
\Bigg\{\begin{array}{l}
\displaystyle{
\vphantom{\Big(}
\sum_{k=1}^{n-1}k(N\!-\!k)f_{k,k+1}+\frac{n^2}{2} f_{n,n+1}
\,\,\text{ if }\,\, N \text{ is even}
} \\
\displaystyle{
\vphantom{\Big(}
\sum_{k=1}^{n-1}
k(N\!-\!k)f_{k,k+1}+n(n\!+\!1) f_{n,n+1}
\,\,\text{ if } N \text{ is odd}
}
\end{array}
\end{equation}
It is immediate to check using the expression of $x\in\mf g$ 
and the commutation relations \eqref{comm:BC} that
$$
d=2n-1
\,\,,\,\,\,\,
\mf g_d=\mb Ff_{1,2n}
\,\,,\,\,\,\,
D=N-1
\,\,,\,\,\,\,
(\End V)[N-1]=\mb FE_{1N}
\,.
$$
So, $D=d$ only for even $N$.
We thus choose 
$s=\frac12 f_{1,2n}$ and $T=E_{1N}$, 
and $S=T$ for even $N$.

As a subspace $U\subset\mf{g}$ complementary fo $[f,\mf{g}]$ and compatible  with the grading \eqref{eq:grading} we choose
$U=\Span\{u^i\mid 1\leq i\leq n\}$, where
\begin{equation}\label{eq:U_BC}
u^i=\frac12f_{1,2(n+1-i)}
\,,
\qquad
1\leq i\leq n
\,.
\end{equation}
The dual basis $\{u_i\mid 1\leq i\leq n\}$ of $\mf g^f$ is given by
$$
u_i=\sum_{k=1}^{N-1-2(n-i)}e_{k+1+2(n-i),k}
\,,
\qquad
1\leq i\leq n
\,.
$$
Let $w:\mc V(\mf g^f)\to\mc W(\mf g,f)$ be the differential algebra isomorphism given in Theorem \ref{thm:structure-W}, and let us
denote by $w_i=w(u_i)$, for $1\leq i\leq n$.
By Theorem \ref{thm:main} we have that (cf. \cite[Prop.4.2]{DSKVnew})
\begin{equation}\label{eq:BC_quasidet}
L_\epsilon(\partial)
=
|\id_N\partial+F+\frac12\sum_{i=1}^{n}w_{i} F_{1,2(n+1-i)}+\epsilon S|_{1N}
\,.
\end{equation}
By an explicit computation we get
\begin{equation}
\begin{split}\label{eq:BC_L}
L_\epsilon(\partial)
&=-(-\partial)^N+\frac12\sum_{k=1}^n\left(
(-\partial)^{N-2k}\circ w_{n+1-k}+w_{n+1-k}(-\partial)^{N-2k}
\right)
\\
&-\frac14\sum_{k=1}^{n-1}\sum_{h=1}^{n-k}w_{n+1-h}(-\partial)^{N-2(h+k)}\circ w_{n+1-k}+\epsilon(-\partial)^{N-2n}
\in\mc W(\mf g,f)[\partial]
\,.
\end{split}
\end{equation}
Note that $L_\epsilon(\partial)=(-1)^NL_\epsilon(\partial)^*$,
in other words, for $N$ even we get a generic selfadjoint operator,
and for $N$ odd a generic skewadjoint operator \cite{DS85}.

We can rewrite the operator $L_{\epsilon}(\partial)$ (for $\epsilon=0$) given by equations \eqref{eq:BC_L} as
$$
L_0(z)=-(-z)^N+\sum_{k=0}^{N-2}\widetilde w_kz^k
\in\mc W(\mf g,f)[z]
\,.
$$
The elements $\widetilde w_{N-2k}\in\mc W(\mf g,f)$, for $k=1,\dots,n$, provide a different set 
of differential generators for
the differential algebra $\mc W(\mf g,f)$ which coincide 
(up to a rescaling of the variables $f_{ij}$ )
with the set of generators constructed in \cite{MR15}.

\begin{remark}\label{rem:modified_BC}
By equation \eqref{eq:Lmod}, the Miura map $\mu:\mc W(\mf g,f)\to\mc V(\mf g_0)$ is given by
(see also \cite{DS85})
$$
\mu(L_0(\partial))=
(-1)^{N+1}
(\partial+\frac12f_{11})\dots(\partial+\frac12 f_{n,n})\partial^{N-2n}(\partial-\frac12f_{nn})\dots(\partial-\frac12f_{11})
\,.
$$
\end{remark}

\subsection{Example 4: Lax operator $L_\epsilon(z)$ for principal nilpotent in $\mf{g}$ of type $D$}\label{sec:example1.4}

Let $N=2n\geq2 $ and let $V=\mb F^N$ be an $N$-dimensional vector space with basis $\{v_i\}_{i=1}^N$.
We introduce the the following notation:
for $i\in\{1,\dots,N\}$, we let
\begin{equation}\label{20170421:eq1}
i'=\left\{
\begin{array}{ll}
N-i\,,&1\leq i\leq N-1\,,
\\
N\,,&i=N\,,
\end{array}
\right.
\text{and}
\quad
\epsilon_i=
\left\{
\begin{array}{ll}
(-1)^i\,,&1\leq i\leq N-1\,,
\\
(-1)^{n+1}\,,&i=N\,.
\end{array}
\right.
\end{equation}
Then, we define a symmetric non-degenerate bilinear form on $V$ as follows:
\begin{equation}\label{20170411:form_D}
\langle v_i|v_j\rangle=
-\epsilon_i\delta_{j,i'}
\,.
\end{equation}
The adjoint with respect to \eqref{20170411:form_D}
is then
$$
(E_{ij})^\dagger=\epsilon_i\epsilon_jE_{j'i'}
\,.
$$
Let $\mf g=\{A\in\End V\mid A^\dagger=-A\}\simeq\mf{so}_N$.
For $1\leq i,j\leq N-1$ we let $F_{ij}$ be defined as in \eqref{eq:F}, and, as usual,
we also denote by $f_{ij}$ the same elements when viewed as elements 
of $\mf g\subset\mc V(\mf g)$.
It is immediate to check
that the same commutation relations as in \eqref{comm:BC} hold.
A basis of $\mf g$ is
$\{f_{ij}\mid (ij)\in I\}$ where
$$
I=\{(ij)\,\mid1\leq i\leq N-2,1\leq j<i'\}\cup\{(Ni)\mid 1\leq i\leq N-1\}\,. 
$$
Its dual basis, with respect to the trace form \eqref{20170317:eq1}, is $\{\frac12f_{ji}\mid (i,j)\in I\}$. 

Let $f=\sum_{k=1}^{n-1}f_{k+1,k}\in\mf g$. It is a nilpotent element associated to the partition $p=(N-1,1)$.
Hence, it is a principal nilpotent element.
We can include $f\in\mf g$ in the following $\mf{sl}_2$-triple $\{e,h=2x,f\}\subset\mf g$,
where:
\begin{equation}\label{20170411:sl2tripleD}
h=\sum_{k=1}^{n-1}(N-2k)f_{kk}
\,,
\qquad
e=\sum_{k=1}^{n-1}k(N-1-k)f_{k,k+1}
\,.
\end{equation}
It is immediate to check that:
$$
d=N-3
\,\,,\,\,\,\,
\mf g_d=\mb FF_{1,N-2}
\,\,,\,\,\,\,
D=N-2
\,\,,\,\,\,\,
(\End V)[N-2]=\mb FE_{1N-1}
\,.
$$
We thus choose $T=E_{1,N-1}$, and $s=\frac{1}{2}f_{1,N-2}$.

As a subspace $U\subset\mf{g}$ complementary fo $[f,\mf{g}]$ and compatible  with the grading \eqref{eq:grading} we choose
$U=\Span\{u^i\mid 0\leq i\leq n-1\}$, where
\begin{equation}\label{eq:U_D}
u^i=\frac12f_{1,N-2i}
\,,
\qquad
0\leq i\leq n-1
\,.
\end{equation}
The dual basis $\{u_i\mid 0\leq i\leq n-1\}$ of $\mf g^f$ is given by
$$
u_0=f_{N,N-1}\,,
\quad
u_i=\sum_{k=1}^{i}f_{N+k-2i-1,k}
\,,
\qquad
1\leq i\leq n
\,.
$$
Let $w:\mc V(\mf g^f)\to\mc W(\mf g,f)$ be the differential algebra isomorphism given in Theorem \ref{thm:structure-W}, and let $w_i=w(u_i)$, for $0\leq i\leq n-1$.
By Theorem \ref{thm:main} and \cite[Eq.(2.14)]{DSKV16b} (see also \cite[Prop.4.2]{DSKVnew}),
we have
\begin{equation}\label{eq:D_quasidet}
\begin{split}
&L_\epsilon(\partial)
=|\id_N\partial+F+\frac12\sum_{i=0}^{n-1}w_{i} F_{1N-2i}+\epsilon S|_{1N-1}
=
\\
&-
\left(\begin{array}{ccccccccc}
\partial&\frac12w_{n-1}&0&\frac12w_{n-2}&0&\dots&0&\frac12(w_{1}+\epsilon)&\frac12w_0
\end{array}\right)
\circ
\\
&\left(\begin{array}{cccccc}
1&\partial&0&\dots&0 &0\\
0&1&\ddots&\ddots&\vdots &\vdots\\
\vdots&\ddots&\ddots&\ddots&0&0\\
\vdots&&\ddots&\ddots&\partial &0\\
0&\dots&\dots&0&1&0
\\
0&\dots&\dots&0&0&\partial
\end{array}\right)^{-1}
\circ
\left(\begin{array}{c}
\frac12 (w_1+\epsilon)\\ 0\\ \frac12 w_2\\ 0 \\ \vdots \\ \frac12 w_{n-1} \\ \partial\\ -\frac12 w_0
\end{array}\right)
\,.
\end{split}
\end{equation}
We can compute the inverse matrix in the RHS of \eqref{eq:D_quasidet} by expanding in geometric series in its upper left block to get
\begin{equation}
\begin{split}\label{eq:D_L}
L_\epsilon(\partial)
&=\partial^{N-1}-\frac12\sum_{k=1}^{n-1}\left(
\partial^{N-1-2k}\circ w_{n-k}+w_{n-k}\partial^{N-1-2k}
\right)
\\
&+\frac14\sum_{k=1}^{n-2}\sum_{h=1}^{n-k-1}w_{n-h}\partial^{N-1-2(h+k)}\circ w_{n-k}
\\
&+\frac14 w_0\partial^{-1}\circ w_0
-\epsilon\partial\in\mc W(\mf g,f)((\partial^{-1}))
\,.
\end{split}
\end{equation}
Note that $L_{\epsilon}(\partial)=-L_\epsilon(\partial)^*$,
so it is a weakly non-local skewadjoint pseudodifferential operator.

We can rewrite the operator $L_\epsilon(\partial)$ for $\epsilon=0$ given by equations \eqref{eq:D_L} as
$$
L_0(z)=z^{N-1}+\sum_{k=0}^{N-3}\widetilde w_kz^k+(-1)^ny_0(z+\partial)^{-1}y_0
\in\mc W(\mf g,f)((z^{-1}))
\,.
$$
The elements $y_0=\frac{\sqrt{(-1)^n}}2w_0$ and $\widetilde w_{2k-1}\in\mc W(\mf g,f)$, 
for $k=1,\dots,n-1$, provide a different set of differential generators for
the differential algebra $\mc W(\mf g,f)$ which coincide 
(up to a rescaling of the variables $f_{ij}$)
with the set of generators constructed in \cite{MR15}.

\begin{remark}\label{rem:modified_D}
By equation \eqref{eq:Lmod}, the Miura map $\mu:\mc W(\mf g,f)\to\mc V(\mf g_0)$ is given by
(see also \cite{DS85})
$$
\begin{array}{l}
\displaystyle{
\vphantom{\Big(}
\mu(L_0(\partial))=
\big(\partial+\frac12f_{11}\big)
\dots
\big(\partial+\frac12 f_{n-1,n-1}\big)
} \\
\displaystyle{
\vphantom{\Big(}
\times
\big(\partial-\frac14 f_{Nn}\partial^{-1}\circ f_{Nn}\big)
\big(\partial-\frac12f_{n-1,n-1}\big)
\dots
\big(\partial-\frac12f_{11}\big)
\,.}
\end{array}
$$
\end{remark}

\subsection{Example 5: Lax operator $L_\epsilon(z)$ for minimal nilpotent $f$ in $\mf{gl}_N$}\label{sec:example1.5}

This computation already appeared in \cite{DSKV16b} and we briefly review it here.
Let $\mf g=\mf{gl}_N$ in its standard representation. 
The minimal nilpotent element $f=e_{N1}$ in $\mf g=\mf{gl}_N$
is associated to the partition $p=(2,1,\dots,1)$,
and it is embedded in the $\mf{sl}_2$-triple $\{e,h=2x,f\}\subset\mf g$, where
$$
e=e_{1N}\,,
\qquad
x=\frac{e_{11}-e_{NN}}{2}
\,.
$$
In this case the gradings \eqref{eq:grading} and \eqref{eq:grading_EndV} coincide
and their non-zero components are
\begin{align*}
&\mf g_{-1}=\mb Ff\,,
\qquad
\mf g_{-\frac12}=\Span\{e_{Nk},e_{k1}\mid 2\leq k\leq N-1\}\,,
\\
&\mf g_{0}=\Span\{e_{11},e_{NN},e_{hk}\mid 2\leq h,k\leq N-1\}\,,
\\
&
\mf g_{\frac12}=\Span\{e_{1k},e_{kN}\mid 2\leq k\leq N-1\}\,,
\qquad
\mf g_{1}=\mb Fe
\,.
\end{align*}
Hence, we may choose $T=S=E_{1N}$.

As a subspace $U\subset\mf{g}$ complementary fo $[f,\mf{g}]$ and compatible  with the grading \eqref{eq:grading} we choose
(see \cite{DSKV16b}) $U=\Span\{u^{ji}=e_{ji}\mid (ij)\in I_f\}$, where
$$
I_f=\{(11),(N1)\}\cup\{(Nk),(k1)\mid 2\leq k\leq N-1\}\cup\{(hk)\mid 2\leq h,k\leq N-1\}
\,.
$$
The dual basis $\{u_{ij}\mid (ij)\in I_f\}$ of $\mf g^f$ is given by
$$
u_{11}=e_{11}+e_{NN}\,,
\qquad
u_{ij}=e_{ij}\,,\qquad(ij)\in I_f\setminus\{(11)\}
\,.
$$
Let $w:\mc V(\mf g^f)\to\mc W(\mf g,f)$ be the differential algebra isomorphism 
given in Theorem \ref{thm:structure-W}, and let $w_{ij}=w(u_{ij})$ for $(ij)\in I_f$.
By Theorem \ref{thm:main} we have \cite[Sec.7.4]{DSKV16b}
\begin{equation}
\begin{split}
\label{eq:L1-minimal}
&
L_\epsilon(\partial)
=|\id_N\partial+F+\sum_{(ij)\in I_f}w_{ij}E_{ji}+\epsilon S|_{1N}
\\
&
=-\partial^2-w_{11}\partial+w_{N1}
- w_{+1}(\id_{N-2}\partial+W_{++})^{-1}\circ w_{N+}
+\epsilon
\,,
\end{split}
\end{equation}
where
\begin{equation}\label{eq:minimal-notation}
\begin{split}
& w_{+1}
=
\left(\begin{array}{lll} w_{21} & \dots & w_{N-1,1} \end{array}\right) 
\,,\\
& W_{++}
=
\left(\begin{array}{ccc} 
w_{22} & \dots & w_{N-1,2} \\
\vdots & \ddots & \vdots \\
w_{2N-1} & \dots & w_{N-1,N-1} 
\end{array}\right)
\,\,,\,\,\,\,
w_{N+}
=
\left(\begin{array}{l} w_{N2} \\ \vdots \\ w_{N,N-1} \end{array}\right)
\,.
\end{split}
\end{equation}

\subsection{Example 6: Lax operator $L_\epsilon(z)$ for minimal nilpotent $f$ in $\mf{sl}_N$}\label{sec:example1.6}

The minimal nilpotent element $f=e_{N1}$ in $\mf g=\mf{sl}_N$
is also associated to the partition $p=(2,1,\dots,1)$. Similarly to the computation in Section \ref{sec:example1.2}, we can recover
the results for the Lie algebra $\mf{sl}_N$ from the analogous results for $\mf{gl}_N$ provided in Section \ref{sec:example1.5}.
In this case, $L_\epsilon(\partial)$ is the scalar pseudodifferential operator in \eqref{eq:L1-minimal} where
$$
-w_{11}=w_{22}+w_{33}+\dots+w_{N-1N-1}
\,.
$$
We omit the details of this computation.

\subsection{Example 7: Lax operator $L_\epsilon(z)$ for minimal nilpotent $f$ in  $\mf{sp}_N$}\label{sec:example1.7}
Let $N=2n\geq2$ and $V=\mb F^N$ be an $N$-dimensional vector space endowed with the non-degenerate
skew-symmetric bilinear form defined in \eqref{20170411:form}. Let $A^\dagger$ denote the adjoint of $A\in\End V$ with respect to
\eqref{20170411:form} (recall that, in terms of elementary matrices, we have
$(E_{ij})^\dagger=(-1)^{i+j}E_{j'i'}$, where $i'=N+1-i$).
Then, $\mf g=\{A\in\End V\mid A=-A^\dagger\}\simeq\mf{sp}_N$. 
For $i,j=1,\dots,N$, let $F_{ij}$ be defined as in \eqref{eq:F}. Recall from Section \ref{sec:example1.3} that we have the following basis of $\mf g$
$$
\{\frac1{1+\delta_{ij'}}f_{ij}:=\frac1{1+\delta_{ij'}}(e_{ij}-\epsilon_i\epsilon_je_{j'i'})\mid (i,j)\in I\}
\,,
$$
where
$$
I=\{(i,j)\mid 1\leq i\leq N,1\leq j\leq i'\}
\,.
$$
and that its dual basis with respect to the trace form \eqref{20170317:eq1} is
$$
\{\frac{1}{2}f_{ji}\mid (ij)\in I\}
\,.
$$
Let $f=\frac12f_{N1}\in\mf g$. It is associated to the partition $p=(2,1,1,\dots,1)$, 
hence it is a minimal nilpotent element.
We can include $f$ in the following $\mf{sl}_2$-triple $\{e,h=2x,f\}\subset\mf g$,
where
$$
e=\frac12 f_{1N}\,,
\qquad
x=\frac12 f_{11}
\,.
$$
Using the commutation relations \eqref{comm:BC} one checks that 
the non-zero components of the grading \eqref{eq:grading} are
\begin{align*}
&\mf g_{-1}=\mb Ff\,,
\qquad
\mf g_{-\frac12}=\Span\{f_{k1}\mid 2\leq k\leq N-1\}\,,
\\
&\mf g_{0}=\Span\{f_{11},f_{hk}\mid 2\leq h\leq N-1,2\leq k\leq h'\}\,,
\\
&
\mf g_{\frac12}=\Span\{f_{1k}\mid 2\leq k\leq N-1\}\,,
\qquad
\mf g_{1}=\mb Fe
\,.
\end{align*}
Moreover, we have that $D=d=1$, and $(\End V)[1]=\mb FE_{1N}$.
Hence, we may choose $T=S=E_{1N}$.

As a subspace $U\subset\mf{g}$ complementary fo $[f,\mf{g}]$ and compatible  with the grading \eqref{eq:grading} we choose
$U=\Span\{u^{ji}=\frac1{2}f_{ji}\mid (ij)\in I_f\}$, where
$$
I_f=\{(k1)\mid 2\leq k\leq N\}\cup\{(hk)\mid 2\leq h\leq N-1,1\leq k\leq h'\}
\,.
$$
The dual basis $\{u_{ij}\mid (ij)\in I_f\}$ of $\mf g^f$ is given by $u_{ij}=\frac{1}{1+\delta_{ij'}}f_{ij}$, for $(ij)\in I_f$.

Let $w:\mc V(\mf g^f)\to\mc W(\mf g,f)$ be the differential algebra isomorphism given in Theorem \ref{thm:structure-W}, and let us
denote by $w_{ij}=w(u_{ij})$, for $(ij)\in I_f$.
By Theorem \ref{thm:main}, and performing a similar computation as in Section \ref{sec:example1.7}, we have
\begin{equation}
\begin{split}
\label{eq:L1-minimal_C}
&
L_\epsilon(\partial)
=|\id_N\partial+F+\frac12\sum_{(ij)\in I_f}w_{ij}F_{ji}+\epsilon S|_{1N}
\\
&
=-\partial^2+w_{N1}
- \frac14 w_{+1}(\id_{N-2}\partial+\frac12W_{++})^{-1}\circ \tilde w_{+1}
+\epsilon
\,,
\end{split}
\end{equation}
where
\begin{equation}\label{eq:minimal_C-notation}
\begin{split}
& w_{+1}
=
\left(\begin{array}{llll} w_{21} & w_{31}& \dots & w_{N-1,1} \end{array}\right) 
\,,\\
& W_{++}
=
\sum_{h=2}^{N-1}\sum_{k=2}^{h'}w_{kh}\bar F_{hk}
\,\,,\,\,\,\,
\tilde w_{+1}
=
\left(\begin{array}{c} -w_{N-1,1} \\ \vdots \\ -w_{31}\\ w_{21} \end{array}\right)
\,.
\end{split}
\end{equation}
In \eqref{eq:minimal_C-notation} $\bar F_{hk}$ denotes the matrix $F_{hk}$ where we have removed the first and last row, and the first
and last column.

\subsection{Example 8: Lax operator $L_\epsilon(z)$ for minimal nilpotent $f$ in $\mf{so}_N$}
\label{sec:example1.8}

For $N\geq2 $, let $V=\mb F^N$ be an $N$-dimensional vector space with basis $\{v_i\}_{i=1}^N$, and let us denote by $n=[\frac N2]$ the integer part of $\frac N2$.
We introduce the following notation for $i=1,\dots,N$:
\begin{equation}\label{20170424:eq2}
\epsilon_i=\left\{
\begin{array}{ll}
(-1)^i\,,& i=1,\dots,n\,,
\\
(-1)^{i'}\,,& i=n+1,\dots,N
\,,
\end{array}
\right.
\end{equation}
where $i'=N+1-i$.
We define a non-degenerate symmetric bilinear form on $V$ as follows:
\begin{equation}\label{20170424:form}
\langle v_i|v_j\rangle=-\epsilon_i\delta_{j,i'}
\,,
\qquad i,j=1,\dots,N\,.
\end{equation}
Let $A^\dagger$ denote the adjoint of $A\in\End V$ with respect to \eqref{20170424:form}.
As in Sections \ref{sec:example1.3}, \ref{sec:example1.4} and \ref{sec:example1.7},
in terms of elementary matrices, it is given by 
\begin{equation}\label{eq:victor}
(E_{ij})^\dagger=\epsilon_i\epsilon_jE_{j'i'}
\,.
\end{equation}

Let $\mf g=\{A\in\End V\mid A^\dagger=-A\}\simeq\mf{so}_N$.
For $1\leq i,j\leq N-1$ we let $F_{ij}$ be defined as in \eqref{eq:F}, and, as usual,
we also denote by $f_{ij}$ the same elements when viewed as elements 
of $\mf g\subset\mc V(\mf g)$.
It is immediate to check
that the same commutation relations as in \eqref{comm:BC} hold.
Recall that, by \eqref{eq:F}, a basis of $\mf g$ is
$\{f_{ij}\mid(i,j)\in I\}$
where
$$
I=\{(i,j)\,\mid\, 1\leq i\leq N, 1\leq j< i'\}
\,.$$
Its dual basis, with respect to the trace form \eqref{20170424:eq2}, is 
$$
\big\{\frac12f_{ji}\mid (i,j)\in I\big\}\,.
$$ 

Let $f=f_{N-1,1}\in\mf g$. It is associated to the partition $p=(2,2,1,\dots,1)$ thus it is 
a minimal nilpotent element.
We can include $f\in\mf g$ in the following $\mf{sl}_2$-triple $\{e,h=2x,f\}\subset\mf g$:
$$
e=f_{1,N-1}
\,,
\qquad
x=\frac12 f_{11}+\frac12 f_{22}
\,.
$$
Using the commutation relations \eqref{comm:BC} 
one checks that the non-zero components of the grading \eqref{eq:grading} are
\begin{align*}
&\mf g_{-1}=\mb Ff\,,
\qquad
\mf g_{-\frac12}=\Span\{f_{hk}\mid 3\leq h\leq N-2\,,k=1,2\}\,,
\\
&\mf g_{0}=\Span\{f_{11},f_{12},f_{21},f_{22},f_{hk}\mid 3\leq h\leq N-2,3\leq k<h'\}\,,
\\
&
\mf g_{\frac12}=\Span\{f_{hk}\mid h=1,2\,, 3\leq k\leq N-2\}\,,
\qquad
\mf g_{1}=\mb Fe
\,.
\end{align*}
Moreover, we have that $D=d=1$, and $(\End V)[1]=\Span\{E_{h,k}\mid 1\leq h\leq 2\,,N-1\leq k\leq N\}$.
Hence, we may choose $T=S=E_{1N-1}+E_{2,N}$, and we let $S=IJ$ be its canonical
decomposition \eqref{IJ}:
\begin{equation}\label{20140424:IJ}
I=\begin{pmatrix}
\id_2\\0_{N-2}\end{pmatrix}
\qquad
\text{and}
\qquad
J=\begin{pmatrix}
0_{N-2}&\id_2
\end{pmatrix}
\,.
\end{equation}
As a subspace $U\subset\mf{g}$ complementary fo $[f,\mf{g}]$ and compatible  with the grading \eqref{eq:grading} we choose
$U=\Span\{u^{ji}=\frac1{2}f_{ji}\mid (ij)\in I_f\}$, where
$$
I_f=\{(11),(12),(21),(N-1,1)\}\cup\{(hk)\mid 3\leq h\leq N-2,1\leq k\leq h'\}
\,.
$$
The dual basis $\{u_{ij}\mid (ij)\in I_f\}$ of $\mf g^f$ is given by
$$
u_{11}=f_{11}-f_{22}\,,
\qquad
u_{ij}=f_{ij}\,,
\quad(ij)\in I_f\setminus\{(11)\}
\,.
$$
Let $w:\mc V(\mf g^f)\to\mc W(\mf g,f)$ be the differential algebra isomorphism given in Theorem \ref{thm:structure-W}, and let us
denote by $w_{ij}=w(u_{ij})$, for $(ij)\in I_f$.
By Theorem \ref{thm:main} we have
\begin{equation}
\begin{split}
\label{eq:L1-minimal_BD}
&
L_\epsilon(\partial)
=|\id_N\partial+F+\frac12\sum_{(ij)\in I_f}w_{ij}F_{ji}+\epsilon S|_{IJ}
\\
&
=
\left|
\left(\begin{array}{ccccc}
\partial+\frac12w_{11} & \frac12 w_{21}&\frac12w_{+1}& \frac12w_{N-1,1}+\epsilon &0 \\
\frac12 w_{12}&\partial&\frac12w_{+2}&0&\frac{1}{2}w_{N-1,1}+\epsilon
\\
0_{N-4\times1}& 0_{N-4\times1}&\id_{N-4}\partial+\frac12W_{++}& \frac12\tilde w_{+2}& \frac12\tilde w_{+1} \\
1& 0_{1\times N-4}&0_{1\times N-4}&\partial&\frac12 w_{21}
\\
0&1&0&\frac12 w_{12}&\partial-\frac12w_{11}
\end{array}\right)
\right|_{IJ}
\\
&
=
\begin{pmatrix}
-(\partial+\frac12w_{11})\partial+\frac12w_{N-1,1}-\frac14w_{21}w_{12}
&
-\frac12\partial\circ w_{21}-\frac12w_{21}\partial
\\
-\frac12 w_{12}\partial-\frac12\partial\circ w_{12}
&
-\partial(\partial-\frac12w_{11})+\frac12w_{N-1,1}-\frac14w_{12}w_{21}
\end{pmatrix}
\\
&-\frac14
\begin{pmatrix}
w_{+1}\\w_{+2}
\end{pmatrix}
\left(\id_{N-4}\partial+\frac12W_{++}\right)^{-1}\circ
\begin{pmatrix}
\tilde w_{+2}&\tilde w_{+1}\end{pmatrix}
+\epsilon\id_2
\in\Mat_{2\times 2}\mc W(\mf g,f)((\partial^{-1}))\,,
\end{split}
\end{equation}
where ($k=1,2$)
\begin{equation}\label{eq:minimal_D-notation}
\begin{split}
& w_{+k}
=
\left(\begin{array}{llll} w_{3k} & w_{4k}& \dots & w_{N-2,k} \end{array}\right) 
\,,\\
& W_{++}
=
\sum_{h=3}^{N-2}\sum_{k=3}^{h'-1}w_{kh}\bar F_{hk}
\,\,,\,\,\,\,
\tilde w_{+k}
=(-1)^{k+1}
\left(\begin{array}{c} \epsilon_{N-2}w_{N-2,k} \\ \vdots \\ \epsilon_{4}w_{4k}\\ \epsilon_3w_{3k} \end{array}\right)
\,.
\end{split}
\end{equation}
In \eqref{eq:minimal_D-notation} $\bar F_{hk}$ denotes the matrix $F_{hk}$ 
where we have removed the first two and the last two rows and columns.

\subsection{Example 9: Lax operator $L_\epsilon(z)$ for 
a distinguished nilpotent in $\mf{so}_{4n}$}\label{sec:example1.9}

Let $V=\mb F^{4n}$ be a $4n$-dimensional vector space with basis $\{v_i\}_{i=1}^{4n}$,
and let $q=2n+1$. For $i\in\{1,\dots,4n\}$
we introduce the following notation
\begin{equation}\label{20170425:eq1}
i'=\left\{
\begin{array}{ll}
q+1-i\,,&1\leq i\leq q\,,
\\
4n+1+q-i\,,&q+1\leq i\leq 4n\,,
\end{array}
\right.
\quad
\epsilon_i=
\left\{
\begin{array}{ll}
(-1)^i\,,&1\leq i\leq q-1\,,
\\
(-1)^{i+1}\,,&q+1\leq i\leq 4n\,.
\end{array}
\right.
\end{equation}
We use the bilinear form on $V$ given by \eqref{20170424:form}.
The adjoint with respect to this form
is given by \eqref{eq:victor}.
Let $\mf g=\{A\in\End V\mid A^\dagger=-A\}\simeq\mf{so}_{4n}$.
For $1\leq i,j\leq 4n-1$ we let $F_{ij}$ be defined as in \eqref{eq:F}, and, as usual,
we also denote by $f_{ij}$ the same elements when viewed as elements 
of $\mf g\subset\mc V(\mf g)$).
It is immediate to check
that the same commutation relations as in \eqref{comm:BC} hold.
A basis of $\mf g$ is
$\{f_{ij}\mid (ij)\in I\}$ where
\begin{align*}
I=&\{(ij)\mid1\leq i\leq q-1,1\leq j<i'\}\cup
\{(ij)\mid q+1\leq i\leq 4n\,,1\leq j\leq q\}
\\
&\cup\{(ij)\mid q+1\leq i\leq q\,,q+1\leq j<i'\}
\,. 
\end{align*}
Its dual basis, with respect to the trace form \eqref{20170317:eq1}, is $\{\frac12f_{ji}\mid (i,j)\in I\}$. 

Let
$$
f=\sum_{k=1}^{n}f_{k+1,k}+\sum_{k=1}^{n-1}f_{q+k+1,q+k}\in\mf g
\,.
$$
It is a nilpotent element associated to the partition $p=(q,q-2)$ of $4n$. 
This element $f$ is distinguished in the sense that the reductive part 
of its centralizer is trivial.
We can include $f\in\mf g$ in the $\mf{sl}_2$-triple $\{e,h=2x,f\}\subset\mf g$, where
\begin{equation}
\begin{split}\label{20170411:sl2tripleD4n}
&x=\sum_{k=1}^n\frac{q+1-2k}2 f_{k,k}
+\sum_{k=1}^{n-1}\frac{q-1-2k}2 f_{q+k,q+k}\,,
\quad
\\
&e=
\sum_{k=1}^{n}
k(q-k)f_{k,k+1}
+\sum_{k=1}^{n-1}
k(q-2-k)f_{q+k,q+k+1}
\,.
\end{split}
\end{equation}
Using the commutation relations \eqref{comm:BC} it is immediate to check that:
$$
d=q-2
\,\,,\,\,\,\,
\mf g_d=\mb FF_{1,q-1}\oplus\mb FF_{1,4n}
\,\,,\,\,\,\,
D=q-1
\,\,,\,\,\,\,
(\End V)[D]=\mb FE_{1q}
\,.
$$
We thus choose $T=E_{1q}$, and $s=\frac{a}{2}f_{1,q-1}+bf_{1,4n}$, for arbitrary
$a,b\in\mb F$.

As a subspace $U\subset\mf{g}$ complementary fo $[f,\mf{g}]$ and compatible  with the grading \eqref{eq:grading} we choose
$U=\Span\{u^{+,i},1\leq i\leq n\,,u^{0,i},1\leq i\leq q-2,u^{-,i},1\leq i\leq n-1\}$, where
\begin{equation}\label{eq:U_D4n}
u^{+,i}=\frac12f_{1,q+1-2i}
\,,
\quad
u^{0,i}=\frac12f_{1,q+i}
\,,\quad
u^{-,i}=\frac12f_{q+1,4n+1-2i}
\,.
\end{equation}
The dual basis of $\mf g^f$ is given by
\begin{align*}
&u_{+,i}=\sum_{k=1}^if_{q+k-2i,k}
\,,
\quad
1\leq i\leq n
\,,
\\
&u_{0,i}=\sum_{k=1}^{q-1-i}f_{q-1+k+i,k}
\,,
\quad
1\leq k\leq q-2
\,,
\\
&u_{-,i}=\sum_{k=1}^if_{4n+k-2i,q+k}
\,,
\quad
1\leq i\leq n-1
\,.
\end{align*}
Let $w:\mc V(\mf g^f)\to\mc W(\mf g,f)$ be the differential algebra isomorphism given in Theorem \ref{thm:structure-W}, and let 
$$
w_{+,i}=w(u_{+,i})
\,,
\qquad
w_{0,i}=w(u_{0,i})
\,,
\qquad
w_{-,i}=w(u_{-,i})
\,.
$$
By Theorem \ref{thm:main} and \cite[Eq.(2.14)]{DSKV16b} (see also \cite[Prop.4.2]{DSKVnew}),
we have
\begin{equation}\label{eq:distinguished_quasidet}
L_\epsilon(\partial)
=|\id_N\partial+F+\frac12\sum_{i=1}^{n}w_{+,i} U^{+,i}
+\frac12\sum_{i=1}^{q-2}w_{0,i} U^{0,i}
+\frac12\sum_{i=1}^{n-1}w_{-,i} U^{-,i}+\epsilon S|_{1q}
\,.
\end{equation}
The quasideterminant \eqref{eq:distinguished_quasidet} is computed using the usual formula, see \cite[Prop.4.2]{DSKVnew}, and
the result is
\begin{equation}\label{L_distinguished}
L_\epsilon(\partial)=W_1(\partial)-W_{2}(\partial)W_{4}(\partial)^{-1}W_3(\partial)
\,,
\end{equation}
where
\begin{align*}
W_{1}(\partial)
=&\partial^{2n+1}-\frac12\sum_{k=1}^n\left(
\partial^{2n+1-2k}\circ w_{+,n+1-k}+w_{+,n+1-k}\partial^{2n+1-2k}
\right)
\\
&+\frac14\sum_{k=1}^{n-1}\sum_{h=1}^{n-k}w_{+,n+1-h}\partial^{2n+1-2(h+k)}\circ w_{+,n+1-k}
\\
&
+\frac14\sum_{k=1}^{2n-2}\sum_{h=1}^{2n-1-k}(-1)^hw_{0,h}\partial^{2n-1-(h+k)}\circ w_{0,k}
-\epsilon a\partial
\,,
\\
W_2(\partial)=&\frac14
\sum_{j=0}^{n-1}\sum_{i=1}^{2n-1-2j}w_{0,i}(-\partial)^{2n-1-i-2j}\circ w_{-,n-j}
-\epsilon b\,,
\\
W_3(\partial)=&
W_{2}(\partial)^*
\,,
\\
W_{4}(\partial)=&
\partial^{2n-1}-\frac12\sum_{k=1}^{n-1}\left(
\partial^{2n-1-2k}\circ w_{-,n-k}+w_{-,n-k}\partial^{2n-1-2k}
\right)
\\
&+\frac14\sum_{k=1}^{n-2}\sum_{h=1}^{n-1-k}w_{-,n-h}\partial^{2n-1-2(h+k)}\circ w_{-,n-k}
\,.
\end{align*}
In the above formulas we set $w_{-,n}=-2$. Note that $L_\epsilon(\partial)^*=-L_\epsilon(\partial)$.
As noted in Remark \ref{rem:OK},
by comparing equation \eqref{L_distinguished} with the quasideterminant \eqref{eq:L} one gets an explicit formula for the generators
of the $\mc W$-algebra $\mc W(\mf g,f)$.

\section{Examples of integrable hierarchies for generalized Adler type operators}
\label{sec:10}

\subsection{$L(\mf{sl}_2)=L(\mf{sp}_2)$}

Let $f\in\mf {sl}_2\simeq\mf {sp}_2$ be a principal (= non-zero) nilpotent element.
Recall from Section \ref{sec:2} that, as a differential algebra,
$\mc W(\mf{sl}_2,f)=\mc W(\mf{sp}_2,f)=\mb C[w^{(n)}\mid n\in\mb Z_+]$. By equation
\eqref{eq:slN_quasidet} (or \eqref{eq:BC_L}), we have
$$
L(\mf{sl}_2,f)=L(\mf{sp}_2,f)=-\partial^2+w\in\mc W(\mf{sl}_2,f)[\partial]
\,.
$$
Consider the operator $L(\partial)=-L(\mf{sl}_2,f)=-L(\mf{sp}_2,f)=\partial^2-w$.
Its square root is
\begin{equation}\label{sl2_root}
B(\partial)=\partial-\frac w2\partial^{-1}+\frac{w'}4\partial^{-2}-\frac18(w^2+w'')\partial^{-3}+o(\partial^{-3})
\in\mc W(\mf{sl}_2,f)((\partial^{-1}))
\,.
\end{equation}
By a straightforward computation we get
\begin{equation}
\begin{split}\label{sl2_rootb}
B(\partial)^3&=\partial^3-\frac32w\partial-\frac34w'
+\frac18(3w^2-w'')\partial^{-1}+o(\partial^{-1})
\,.
\end{split}
\end{equation}
Using equations \eqref{sl2_root}, \eqref{sl2_rootb} and \eqref{eq:hn} we get
\begin{equation}\label{eq:integrals_kdv}
\tint h_{1,B}=\tint w\,,
\qquad
\tint h_{2,B}=0,
\qquad
\tint h_{3,B}=-\frac14\tint w^2\,,
\end{equation}
and the corresponding non-zero Hamiltonian equations \eqref{eq:list1} are
$$
\frac{dw}{dt_{1,B}}=w'\,,
\qquad
\frac{dw}{dt_{3,B}}=\frac14(w'''-6ww')
\,.
$$
The second equation is the Korteweg-de Vries equation. 

\begin{remark}\label{rem:modified_kdv}
Let $\mf {sl}_2=\Span\{e=e_{21},h=2x=e_{11}-e_{22},f=e_{21}\}$ in its standard representation,
then in the $\ad x$-eigenspace decomposition \eqref{eq:grading} we have $\mf g_0=\mb F x$.
By Remark \ref{rem:modified_sln} the Miura map $\mu:\mc W(\mf{sl}_2,f)\to\mc V(\mf g_0)=\mb F[x^{(n)}\mid n\in\mb Z_+]$ is given by
$$
\mu(L(\partial))=\partial^2-\mu(w)=(\partial+x)(\partial-x)
\,.
$$
Hence we get $\mu(w)=x^2+x'$, which is the famous Miura transformation \cite{Miu68}. Applying the Miura map to the
integrals of motion given by equation \eqref{eq:integrals_kdv} we get
$$
\tint \bar h_{1,B}=\tint x^2+x'\,,
\qquad
\tint \bar h_{3,B}=-\frac14\tint x^4-xx''
\,.
$$
The corresponding Hamiltonian equations are
$$
\frac{dx}{dt_{1,B}}=x'\,,
\qquad
\frac{dx}{dt_{3,B}}=\frac14(x'''-6x^2x')
\,.
$$
The second equation is the modified Korteweg-de Vries equation. 
\end{remark}

\subsection{$L(\mf{sp}_2)\partial^{\pm1}$}
Consider the operator 
$L(\partial)=-L(\mf{sp}_2)\partial=\partial^3-w\partial\in\mc W(\mf{sp}_2,f)$.
Its cube root is
\begin{equation}
\begin{split}\label{sp2_root1}
&B(\partial)=\partial-\frac w3\partial^{-1}+\frac{w'}3\partial^{-2}-\frac19(w^2+2w'')\partial^{-3}
+\frac19(4ww'+w''')\partial^{-4}\\
&-\frac1{81}\left(5w^3+45 (w')^2+45ww''+3w^{(4)}\right)\partial^{-5}+o(\partial^{-5})
\in\mc W(\mf{sp}_2,f)((\partial^{-1}))
\,.
\end{split}
\end{equation}
Then, we have
\begin{align}
\begin{split}\label{sp2_root2}
&B(\partial)^2=\partial^2-\frac23w+\frac{w'}3\partial^{-1}-\frac19(w^2+w'')\partial^{-2}
+\frac{ww'}{3}\partial^{-3}
\\
&-\frac1{81}\left(4w^3+27(w')^2+24ww''-3w^{(4)}\right)\partial^{-4}+o(\partial^{-4})\,,
\\
&B(\partial)^4=\partial^4-\frac43w\partial^2-\frac23w'\partial+\frac29(w^2-w'')
+\frac19(w'''-2ww')\partial^{-1}
\\
&+\frac1{81}\left(4w^3+9(w')^2-3w^{(4)}\right)\partial^{-2}+o(\partial^{-2})
\,,
\\
&B(\partial)^5=\partial^5-\frac53w\partial^{3}-\frac53w'\partial^{2}+\frac59(w^2-2w'')\partial
\\
&+\frac1{81}\left(5w^3-15ww''+3w^{(4)}\right)\partial^{-1}+o(\partial^{-1})
\,,
\end{split}
\end{align}
and, by equations \eqref{sp2_root1}, \eqref{sp2_root2} and \eqref{eq:hn}, we get
\begin{equation}\label{eq:integrals_sk}
\tint h_{1,B}=\tint w\,,
\quad
\tint h_{2,B}=\tint h_{3,B}=\tint h_{4,B}=0,
\quad
\tint h_{5,B}=-\frac1{27}\tint w^3-3ww''\,.
\end{equation}
Note that $B^*$ is a third root of $-\partial L\partial^{-1}$.
It follows from \cite[Prop.1.1(a)]{AGD}
that $B=-\partial^{-1}B^*\partial$. Thus we have
\begin{equation}\label{20170511:eq1}
B^n=(-1)^n\partial^{-1}(B^n)^*\partial\,,
\quad n\in\mb Z\,.
\end{equation}
Using equation \eqref{20170511:eq1}, the Hamiltonian equations \eqref{eq:list3} become
\begin{equation}\label{eq:list3bis}
\frac{dL(z)}{dt_{n,B}}=\frac{1-(-1)^n}2[(B^n)_+,L](z)
\,,
\quad n\in\mb Z\,.
\end{equation}
Using \eqref{sp2_root1} and \eqref{sp2_root2} we get that the first 
non-zero Hamiltonian equations \eqref{eq:list3bis} are
$$
\frac{dw}{dt_{1,B}}=w'\,,
\qquad
\frac{dw}{dt_{5,B}}=-\frac19(w^{(5)}-5w'w''-5ww'''+5w^2w')
\,.
$$
The second equation is the Sawada-Kotera equation \cite{SK74}.
\begin{remark}\label{rem:modified_sk}
Since $\mf {sp}_2\simeq\mf{sl}_2$, the Miura map
$\mu:\mc W(\mf{sp}_2,f)\to\mc V(\mf g_0)=\mb F[x^{(n)}\mid n\in\mb Z_+]$ is given by
$\mu(w)=x^2+x'$ (see Remark \ref{rem:modified_sk}).
Applying the Miura map to the
integrals of motion given by equation \eqref{eq:integrals_sk} we get
$$
\tint \bar h_{1,B}=\tint x^2+x'\,,
\qquad
\tint \bar h_{5,B}=-\frac1{27}\tint x^6+15x^2(x')^2-5(x')^3+3(x'')^2
\,.
$$
The corresponding integrable Hamiltonian equations are
\begin{equation}\label{eq:modified_sk}
\frac{dx}{dt_{1,B}}=x'\,,
\quad
\frac{dx}{dt_{5,B}}=-\frac19(x^{(5)}+5x'x'''-5x^2x'''+5(x'')^2-20xx'x''-5(x')^3+5x^4x')
\,.
\end{equation}
The second equation is the modified Sawada-Kotera equation.
\end{remark}
Consider now the operator
\begin{equation}\label{sp2_root3}
L(\partial)=-L(\mf{sp}_2)\partial^{-1}=\partial-w\partial^{-1}\in\mc W(\mf{sp}_2,f)((\partial^{-1}))
\,.
\end{equation}
We have
\begin{equation}
\begin{split}\label{sp2_root4}
&
L^2(\partial)=\partial^2-w-\partial\circ w\partial^{-1}+w\partial^{-1}\circ w\partial^{-1}
\,,
\\
&
L^3(\partial)=\partial^3-w\partial-\partial\circ w-\partial^2\circ w\partial^{-1}
+w^2\partial^{-1}+w\partial^{-1}\circ w
\\
&+\partial\circ w\partial^{-1}\circ w\partial^{-1}
-w\partial^{-1}\circ w\partial^{-1}\circ w\partial^{-1}
\,.
\end{split}
\end{equation}
Using equations \eqref{sp2_root3}, \eqref{sp2_root4} and \eqref{eq:hn} we get
$$
\tint h_{1,L}=\tint w\,,
\qquad
\tint h_{2,L}=0\,,
\qquad
\tint h_{3,L}=-\tint w^2\,.
$$
Note that, also in this case, equation \eqref{eq:list3} reduces to \eqref{eq:list3bis} (where
we should replace $B$ by $L$). Hence, using \eqref{sp2_root3} and \eqref{sp2_root4}
the first non-zero Hamiltonian equations are
$$
\frac{dw}{dt_{1,L}}=w'\,,
\qquad
\frac{dw}{dt_{3,L}}=w'''-6ww'
\,.
$$
So we get again the Korteweg-de Vries equation. 

\subsection{$L(\mf{s0}_3)$}
Let $f\in\mf {so}_3$ be the principal nilpotent element.
Recall from Section \ref{sec:2} that, as a differential algebra,
$\mc W(\mf{so}_3,f)=\mb C[w^{(n)}\mid n\in\mb Z_+]$. By equation
\eqref{eq:BC_L}, we have
$$
L(\mf{so}_3,f)=\partial^3-w\partial-\frac{w'}2\in\mc W(\mf{so}_3,f)[\partial]
\,.
$$
Consider the cube root of $L(\mf{so}_3,f)$:
\begin{equation}
\begin{split}\label{so3_root1}
&B(\partial)=\partial-\frac w3\partial^{-1}+\frac{w'}6\partial^{-2}-\frac1{18}(2w^2+w'')\partial^{-3}
+\frac{ww'}3\partial^{-4}\\
&-\left(\frac5{81}w^3+\frac{11}{36}(w')^2+\frac13ww''-\frac1{54}w^{(4)}
\right)\partial^{-5}+o(\partial^{-5})
\in\mc W(\mf{so}_3,f)((\partial^{-1}))
\,.
\end{split}
\end{equation}
Then, we have
\begin{align}
\begin{split}\label{so3_root2}
&B(\partial)^2=\partial^2-\frac23w-\frac1{18}(2w^2-w'')\partial^{-2}
-\frac1{18}(w'''-4ww')\partial^{-3}
\\
&-\left(\frac4{81}w^3+\frac{5}{36}(w')^2+\frac{7}{54}ww''-\frac{1}{27}w^{(4)}
\right)\partial^{-4}
+o(\partial^{-4})\,,
\\
&B(\partial)^4=\partial^4-\frac23w\partial^2-\frac23\partial^2\circ w+\frac19(2w^2+w'')
\\
&+\left(\frac{4}{81}w^3-\frac{1}{18}(w')^2-\frac{1}{9}ww''+\frac{1}{54}w^{(4)}
\right)\partial^{-2}+o(\partial^{-2})
\,,
\\
&B(\partial)^5=\partial^5-\frac56w\partial^{3}-\frac56\partial^3\circ w
+\frac5{18}(w^2+w'')\partial+\frac{5}{18}\partial\circ(w^2+w'')
\\
&+\left(\frac5{81}w^3-\frac{5}{36}(w')^2-\frac5{27}ww''+\frac{1}{27}w^{(4)}\right)\partial^{-1}+o(\partial^{-1})
\,,
\end{split}
\end{align}
and, by equations \eqref{so3_root1}, \eqref{so3_root2} and \eqref{eq:hn}, we get
\begin{equation}\label{eq:integrals_kk}
\tint h_{1,B}=\tint w\,,
\quad
\tint h_{2,B}=\tint h_{3,B}=\tint h_{4,B}=0,
\quad
\tint h_{5,B}=-\int \frac{w^3}{27}-\frac{ww''}{36}\,.
\end{equation}
Note that $B^*$ is a third root of $-L$. Hence, it follows from \cite[Prop.1.1(a)]{AGD}
that $B^*=-B$, and so we have $(B^n)^*=(-1)^nB^n$.
Using this fact and equations \eqref{so3_root1} and \eqref{so3_root2} we get that the first 
non-zero Hamiltonian equations \eqref{eq:list2} are:
$$
\frac{dw}{dt_{1,B}}=w'\,,
\qquad
\frac{dw}{dt_{5,B}}=-\frac1{9}(w^{(5)}-\frac{25}2w'w''-5ww'''+5w^2w')
\,.
$$
The second equation is the Kaup-Kupershmidt equation \cite{Kau80}.
\begin{remark}\label{rem:modified_kk}
Let $\mf{so}_3=\Span\{e=2(e_{12}+e_{23}),h=2x=2(e_{11}-e_{33}),f=e_{21}+e_{32}\}$
in its standard representation.
By Remark \ref{rem:modified_BC} the Miura map $\mu:\mc W(\mf{so}_3,f)\to\mc V(\mf g_0)=\mb F[x^{(n)}\mid n\in\mb Z_+]$ is given by
$$
\mu(L(\partial))=\partial^3-\mu(w)\partial-\frac12\mu(w)'
=(\partial+\frac12 x)\partial(\partial-\frac12x)
\,.
$$
Hence, we get $\mu(w)=\frac14x^2+x'$.
Applying the Miura map to the
integrals of motion given by equation \eqref{eq:integrals_kk} we get
$$
\tint \bar h_{1,B}=\tint \frac14x^2+x'\,,
\qquad
\tint \bar h_{5,B}=-\frac1{1728}\tint x^6+60x^2(x')^2+40(x')^3+48(x'')^2
\,.
$$
The corresponding Hamiltonian equations are
\begin{equation}\label{eq:modified_kk}
\frac{dx}{dt_{1,B}}\!=x'\,,\,\,
\frac{dx}{dt_{5,B}}\!=-\frac19(x^{(5)}-\frac52x'x'''-\frac54x^2x'''-\frac52(x'')^2-5xx'x''-\frac54(x')^3+\frac5{16}x^4x')
\,.
\end{equation}
Note that by rescaling the variable $x$ by a factor $-\frac12$ the above equations are the same as equations \eqref{eq:modified_sk}.
\end{remark}
\begin{remark}
In \cite{DS85}, Drinfeld-Sokolov hierarchies are constructed for any affine Kac-Moody algebra $\mf g$ and a node of its Dynkin diagram.
The Sawada-Kotera equation corresponds to the pair $(A_2^{(2)},c_0)$, while the Kaup-Kupershmidt equation
corresponds to the pair $(A_2^{(2)},c_1)$. On the other hand, modified Drinfeld-Sokolov hierarchies do not depend on the choice of
a node in the Dynkin diagram of $\mf g$. This is the reason why the modified equations \eqref{eq:modified_sk} and \eqref{eq:modified_kk} coincide up to
a rescaling.
\end{remark}

\subsection{$L(\mf{s0}_3)\partial^{\pm1}$}

Consider the operator
$$
L=L(\mf{so}_3,f)\partial^{-1}
=\partial^2-w-\frac{w'}{2}\partial^{-1}\in\mc W(\mf{so}_3,f)((\partial^{-1}))
\,,
$$
and its square root
\begin{equation}\label{so3_root3}
B(\partial)=\partial-\frac w2\partial^{-1}-\frac{w^2}8\partial^{-3}+o(\partial^{-3})
\in\mc W(\mf{so}_3,f)((\partial^{-1}))
\,.
\end{equation}
By a straightforward computation we get
\begin{equation}
\begin{split}\label{so3_root4}
B(\partial)^3&=\partial^3-\frac32w\partial-\frac32w'
+\frac18(3w^2-4w'')\partial^{-1}+o(\partial^{-1})
\,.
\end{split}
\end{equation}
Using equations \eqref{so3_root3}, \eqref{so3_root4} and \eqref{eq:hn} we get
$$
\tint h_{1,B}=\tint w\,,
\qquad
\tint h_{2,B}=0,
\qquad
\tint h_{3,B}=-\frac14\tint w^2\,,
$$
and the corresponding non-zero Hamiltonian equations \eqref{eq:list3} are
$$
\frac{dw}{dt_{1,B}}=w'\,,
\qquad
\frac{dw}{dt_{3,B}}=w'''-\frac{3}{2}ww'
\,.
$$
Also in this case we get the Korteweg-de Vries equation. 

Finally, consider the operator
$$
L=L(\mf{so}_3,f)\partial
=\partial^4-w\partial^2-\frac{w'}{2}\partial\in\mc W(\mf{so}_3,f)[\partial^{-1}]
\,,
$$
and its fourth root
\begin{equation}
\begin{split}\label{so3_root5}
&B(\partial)=\partial-\frac w4\partial^{-1}+\frac{w'}{4}\partial^{-2}-\frac1{32}(3w^2+4w'')\partial^{-3}+o(\partial^{-3})
\in\mc W(\mf{so}_3,f)((\partial^{-1}))
\,.
\end{split}
\end{equation}
Then, we have
\begin{align}
\begin{split}\label{so3_root6}
&B(\partial)^2=\partial^2-\frac w2+\frac{w'}4\partial^{-1}-\frac{w^2}{8}\partial^{-2}
+o(\partial^{-2})\,,
\\
&B(\partial)^3=\partial^3-\frac34w\partial-\frac{1}{32}(3w^2-4w'')\partial^{-1}+o(\partial^{-1})
\,.
\end{split}
\end{align}
Using equations \eqref{so3_root5}, \eqref{so3_root6} and \eqref{eq:hn} we get
$$
\tint h_{1,B}=\tint w\,,
\qquad
\tint h_{2,B}=0,
\qquad
\tint h_{3,B}=\frac18\tint w^2\,,
$$
and the corresponding non-zero Hamiltonian equations \eqref{eq:list3} are
$$
\frac{dw}{dt_{1,B}}=w'\,,
\qquad
\frac{dw}{dt_{3,B}}=-\frac12\left(w'''-\frac{3}{2}ww'\right)
\,.
$$
Thus we obtain again the Korteweg-de Vries equation. 

\subsection{$L(\mf{sl}_N,f_{min})$}
Let $f\in\mf{sl}_N$ be a minimal nilpotent element. 
The generalized Adler type operator for
$\mc W(\mf{sl}_N,f)$ has been computed in Section \ref{sec:example1.6}. It is
\begin{equation}\label{eq:minimal_sln}
L(\mf{sl}_N,f)
=-\partial^2-w_{11}\partial+w_{N1}
- w_{+1}(\id_{N-2}\partial+W_{++})^{-1}\circ w_{N+}
\in\mc W(\mf{sl}_N,f)((\partial^{-1}))
\,,
\end{equation}
where
\begin{align*}
\begin{split}
&
-w_{11}=w_{22}+w_{33}+\dots+w_{N-1,N-1}\,,
\quad
w_{+1}
=
\left(\begin{array}{lll} w_{21} & \dots & w_{N-1,1} \end{array}\right) 
\,,\\
& W_{++}
=
\left(\begin{array}{ccc} 
w_{22} & \dots & w_{N-1,2} \\
\vdots & \ddots & \vdots \\
w_{2N-1} & \dots & w_{N-1,N-1} 
\end{array}\right)
\,\,,\,\,\,\,
w_{N+}
=
\left(\begin{array}{l} w_{N2} \\ \vdots \\ w_{N,N-1} \end{array}\right)
\,.
\end{split}
\end{align*}
Consider the operator $L=-L(\mf{sl}_N,f)$, and its square root
\begin{equation}\label{sln_minimal_root1}
B(\partial)=\partial+\frac{w_{11}}2
-\left(\frac{w_{N1}}{2}+\frac{w_{11}^2}{8}+\frac{w_{11}'}{4}\right)\partial^{-1}
+o(\partial^{-1})
\in\mc W(\mf{sl}_N,f)((\partial^{-1}))
\,.
\end{equation}
From equations \eqref{eq:minimal_sln}, \eqref{sln_minimal_root1} and \eqref{eq:hn}, we get
$$
\tint h_{1,B}=\int w_{N1}+\frac{w_{11}^2}{4}
\qquad\text{and}\qquad
\tint h_{2,B}=-\tint w_{+1}w_{N+}
\,.
$$
Using the following commutation relations for pseudodifferential operators 
\begin{align*}
&[\partial,(\partial+a)^{-1}]=(\partial+a)^{-1}\circ a-a(\partial+a)^{-1}
\,,
\\
&[\partial^2,(\partial+a)^{-1}]=
(a^2-a')(\partial+a)^{-1}-(\partial+a)^{-1}\circ (a^2+a')
\,,
\end{align*}
which hold for any differential polynomial $a$, and equations \eqref{eq:minimal_sln},
\eqref{sln_minimal_root1} it is immediate
to compute the corresponding Hamiltonian equations \eqref{eq:list1}. As a result we get:
\begin{align*}
&
\frac{d w_{N1}}{dt_{1,B}}=w_{N1}'+\frac12 w_{11}w_{11}'+\frac12 w_{11}''\,,
&&
\frac{d W_{++}}{dt_{1,B}}=0\,,
\\
&\frac{d w_{+1}}{dt_{1,B}}=w_{+1}'-w_{+1}\left(W_{++}-\frac{w_{11}}2\right)
\,,
&&
\frac{d w_{N+}}{dt_{1,B}}=w_{N+}'+\left(W_{++}-\frac{w_{11}}2\right)w_{N+}
\,,
\end{align*}
and
\begin{align*}
\frac{d w_{N1}}{dt_{2,B}}
&
=-2(w_{+1}w_{N+})'
\,,
\qquad
\frac{d W_{++}}{dt_{2,B}}=0\,,
\\
\frac{d w_{+1}}{dt_{2,B}}
&=
w_{+1}''-2w_{+1}'\left(W_{++}-\frac12 w_{11}\right)-w_{+1}W_{++}'
\\
&+w_{+1}W_{++}(W_{++}-w_{11})-w_{+1}w_{N1}
\,,
\\
\frac{d w_{N+}}{dt_{2,B}}
&
=
-w_{N+}''-2\left(W_{++}-\frac12 w_{11}\right)w_{N+}'-(W_{++}'-w_{11}')w_{N+}
\\
&-(W_{++}-w_{11})W_{++}w_{N+}+w_{N1}w_{N+}
\,.
\end{align*}
The above equations agree with the results in \cite[Sec. 6, arxiv version]{DSKV14a}.
\begin{remark}
As explained in \cite[Cr.5.5]{DSKV16b},
the variables $w_{ij}$, for $2\leq i,j\leq N-1$, do not evolve in time. By applying a
Dirac reduction procedure, see \cite{DSKV14b}, we get the following Dirac reduced 
Hamiltonian equations:
$$
\frac{d w_{N1}}{dt_{1,B}}=w_{N1}'\,,
\qquad
\frac{d w_{+1}}{dt_{1,B}}=w_{+1}'
\,,
\qquad
\frac{d w_{N+}}{dt_{1,B}}=w_{N+}'
\,,
$$
and
$$
\frac{d w_{N1}}{dt_{2,B}}
=-2(w_{+1}w_{N+})'
\,,
\quad
\frac{d w_{+1}}{dt_{2,B}}=
w_{+1}''-w_{+1}w_{N1}
\,,
\quad
\frac{d w_{N+}}{dt_{2,B}}
=
-w_{N+}''
+w_{N1}w_{N+}
\,,
$$
which is the multicomponent Yajima-Oikawa equation (it is the Yajima-Oikawa
equation \cite{YO76} for $N=3$, see also \cite{DSKV15-cor} and the references therein).
\end{remark}

\subsection{$L(\mf{sp}_N,f_{min})$}
Let $n\geq1$, $N=2n$ and $f\in\mf{sp}_N$ be a minimal nilpotent element. The generalized Adler operator for
$\mc W(\mf{sp}_N,f)$ has been computed in Section \ref{sec:example1.7}. It is
\begin{equation}\label{eq:minimal_spN}
L(\mf{sp}_N,f)
=-\partial^2+w_{N1}
- \frac 14w_{+1}(\id_{N-2}\partial+\frac12W_{++})^{-1}\circ \tilde w_{+1}
\in\mc W(\mf{sp}_N,f)((\partial^{-1}))
\,,
\end{equation}
where $w_{+1}$, $W_{++}$ and $\tilde w_{+1}$ are defined in equation \eqref{eq:minimal_C-notation}.
Consider
$$
L=-L(\mf{sp}_N,f)
=\partial^2-w_{N1}-\frac18(w_{+1}W_{++}\tilde w_{+1}+2w_{+1}\tilde w_{+1}')\partial^{-2}+o(\partial^{-2})
$$
and its square root
\begin{equation}
\begin{split}\label{spN_minimal_root1}
&B(\partial)=\partial-\frac{w_{N1}}2\partial^{-1}+\frac{w_{N1}'}{4}\partial^{-2}
\\
&-\frac{1}{16}(2w_{N1}^2+w_{+1}W_{++}\tilde w_{+1}+2w_{+1}\tilde w_{+1}'+2w_{N1}'')\partial^{-3}
+o(\partial^{-3})
\,.
\end{split}
\end{equation}
Then, 
\begin{equation}
\begin{split}\label{spN_minimal_root2}
&B(\partial)^3=\partial^3-\frac32w_{N1}\partial-\frac34w_{N1}'
\\
&+\frac{1}{16}(6w_{N1}^2-3w_{+1}W_{++}\tilde w_{+1}-6w_{+1}\tilde w_{+1}'-2w_{N1}'')\partial^{-1}
+o(\partial^{-1})
\,.
\end{split}
\end{equation}
From equations \eqref{spN_minimal_root1}, \eqref{eq:minimal_spN}, \eqref{spN_minimal_root2} and \eqref{eq:hn}, we get
$$
\tint h_{1,B}=\tint w_{N1}\,,
\quad
\tint h_{2,B}=0\,,
\quad
\tint h_{3,B}=-\frac18\int 2w_{N1}^2-w_{+1}W_{++}\tilde w _{+1}-2w_{+1}\tilde w_{+1}'
\,.
$$
The above integrals of motion agree with the ones obtained for the generalized 
Drinfeld- Sokolov hierarchy constructed in \cite{DSKV14a}.
Hence,
the corresponding Hamiltonian equations \eqref{eq:list2} have already appeared there.

\subsection{$L(\mf{sp}_N,f_{min})\partial^{\pm1}$}

Consider the operator
\begin{equation}\label{20170514:eq1}
L=-L(\mf{sp}_N,f)\partial
=\partial^3-w_{N1}\partial-\frac18(w_{+1}W_{++}\tilde w_{+1}+2w_{+1}\tilde w_{+1}')\partial^{-1}+o(\partial^{-1})
\end{equation}
and its cube root
\begin{equation}
\begin{split}\label{spN_minimal_root3}
&B(\partial)=\partial-\frac{w_{N1}}3\partial^{-1}+\frac{w_{N1}'}{3}\partial^{-2}
\\
&-(\frac{w_{N1}^2}9+\frac1{24}w_{+1}W_{++}\tilde w_{+1}+\frac1{12}w_{+1}\tilde w_{+1}'+\frac29w_{N1}'')\partial^{-3}
+o(\partial^{-3})
\,.
\end{split}
\end{equation}
Then\begin{equation}
\begin{split}\label{spN_minimal_root4}
&B(\partial)^2=\partial^2-\frac23w_{N1}+\frac{w_{N1}'}3\partial^{-1}
\\
&-(\frac{w_{N1}^2}9+\frac1{12}w_{+1}W_{++}\tilde w_{+1}+\frac16w_{+1}\tilde w_{+1}'+\frac{w_{N1}''}9)\partial^{-2}
+o(\partial^{-2})
\,.
\end{split}
\end{equation}
%
From equations \eqref{spN_minimal_root3}, \eqref{spN_minimal_root4}, \eqref{20170514:eq1} and \eqref{eq:hn}, we get
$$
\tint h_{1,B}=\tint w_{N1}\,,
\quad
\tint h_{2,B}=0\,,
\quad
\tint h_{3,B}=\frac18\int w_{+1}W_{++}\tilde w _{+1}+2w_{+1}\tilde w_{+1}'
\,.
$$
Using equations \eqref{20170514:eq1}, \eqref{spN_minimal_root3} and \eqref{spN_minimal_root4} it is straightforward to compute
the corresponding Hamiltonian equations \eqref{eq:list3}. 
We omit the details of this computation.
%
By applying the Dirac reduction 
by the variables $w_{ij}$, for $2\leq i\leq N-1$ and $2\leq j\leq i'$, 
we get the following first non-trivial Hamiltonian equations
$$
\frac{d w_{N1}}{dt_{3,B}}
=\frac34w_{+1}\tilde w_{+1}''
\,,
\qquad
\frac{d w_{+1}}{dt_{3,B}}=
w_{+1}'''-w_{N1}w_{+1}'
\,.
$$

Finally, consider the operator
\begin{equation}\label{20170514:eq2}
L=-L(\mf{sp}_N,f)\partial^{-1}
=\partial-w_{N1}\partial^{-1}-\frac18(w_{+1}W_{++}\tilde w_{+1}+2w_{+1}\tilde w_{+1}')\partial^{-3}+o(\partial^{-3})
\,.
\end{equation}
Then
\begin{equation}
\begin{split}\label{spN_minimal_root5}
L(\partial)^2&=\partial^2-2w_{N1}-w_{N1}'\partial^{-1}
\\
&+(w_{N1}^2-\frac14w_{+1}W_{++}\tilde w_{+1}-\frac12w_{+1}\tilde w_{+1}')\partial^{-2}
+o(\partial^{-2})
\,,
\\
L(\partial)^3&=\partial^3-3w_{N1}\partial-3w_{N1}'
\\
&+(3w_{N1}^2-\frac38w_{+1}W_{++}\tilde w_{+1}-\frac34w_{+1}\tilde w_{+1}'-w_{N1}'')\partial^{-1}
+o(\partial^{-1})
\,.
\end{split}
\end{equation}
From equations \eqref{20170514:eq2}, \eqref{spN_minimal_root5} and \eqref{eq:hn}, we get
$$
\tint h_{1,L}=\tint w_{N1}\,,
\quad
\tint h_{2,L}=0\,,
\quad
\tint h_{3,L}=-\int w_{N1}^2-\frac18w_{+1}W_{++}\tilde w _{+1}-\frac14w_{+1}\tilde w_{+1}'
\,.
$$
Using equations \eqref{20170514:eq2} and \eqref{spN_minimal_root5} it is straightforward to compute
the corresponding Hamiltonian equations \eqref{eq:list3}. Again, we omit the details of this computation.
%
By applying the Dirac reduction 
by the same variables as above
we get the following first non-trivial Hamiltonian equations
$$
\frac{d w_{N1}}{dt_{3,L}}
=w_{N1}'''-6w_{N1}w_{N1}'+\frac34w_{+1}\tilde w_{+1}''
\,,
\qquad
\frac{d w_{+1}}{dt_{3,L}}=
w_{+1}'''-3(w_{N1}w_{+1})'
\,.
$$

\end{document}